\renewcommand*\env@matrix[1][*\c@MaxMatrixCols c]{%
  \hskip -\arraycolsep
  \let\@ifnextchar\new@ifnextchar
  \array{#1}}
\DeclareSymbolFont{symbols2}{LS1}{stixfrak}{m}{n}
\DeclareMathSymbol{\typecolon}{\mathbin}{symbols2}{"25}
\newcommand{\C}{\mathbb{C}}
\newcommand{\R}{\mathbb{R}}
\newcommand{\N}{\mathbb{N}}
\newcommand{\Z}{\mathbb{Z}}
\newcommand{\F}{\mathbb{F}}
\newcommand{\mf}[1]{\mathfrak{#1}}
\begin{document}
\title{Non-Chiral Vertex Operator Algebra Associated To Lorentzian Lattices And Narain CFTs}
\author[a]{Ranveer Kumar Singh,} \author[b]{Madhav Sinha}
\affiliation[a]{New High Energy Theory Center, Department of Physics and Astronomy, Rutgers University, 126
Frelinghuysen Rd., Piscataway NJ 08855, USA}
\affiliation[b]{Department of Physics and Astronomy, Rutgers University, 126
Frelinghuysen Rd., Piscataway NJ 08855, USA}
\emailAdd{ranveersfl@gmail.com}
\emailAdd{ms3066@physics.rutgers.edu}
\abstract{Frenkel, Lepowsky, and Meurman constructed a vertex operator algebra (VOA) associated to any even, integral, Euclidean lattice.  In the language of physics, these are examples of chiral conformal field theories (CFT). In this paper, we define non-chiral vertex operator algebra and some associated notions. We then give a construction of a non-chiral VOA associated to an even, integral, Lorentzian lattice and construct their irreducible modules. We obtain the moduli space of such modular invariant non-chiral CFTs based on even, self-dual Lorentzian lattices of signature $(m,n)$ assuming the validity of a technical result about automorphisms of the lattice. We finally show that Narain conformal field theories in physics are examples of non-chiral VOA. Our formalism helps us to identify the chiral algebra of Narain CFTs in terms of a particular sublattice and give us the decomposition of its partition function into sum of characters. } 
\maketitle
\section{Introduction}
The study of vertex (operator) algebras started with the work of Borcherds \cite{Borcherds:1983sq} and Frenkel, Lepowsky, and Meurman \cite{Frenkel:1988xz} in relation to Monstrous Moonshine and two dimensional conformal field theory. The first non-trivial examples of the theory were constructed starting from an even, integral Euclidean lattice, see \cite{Dolan:1994st,Moore:2023zmv} for a more physical construction. These are examples of what are called \textit{chiral conformal field theory} in physics, see Subsection \ref{subsec:cftphy} for an introduction. There are ample examples of non-chiral conformal field theories which cannot be described mathematically in the language of vertex operator algebra. One large class of such theories is the Narain CFTs based on even, self-dual Lorentzian lattices.\par In this paper, we define the notion of non-chiral vertex operator algebra and study various related notions. Our definition is based on the notion\footnote{See also \cite{Kapustin:2000aa,rosellen2004opealgebras} for some earlier, related but different, discussion on non-chiral VOAs.} of \textit{full field algebras} introduced by Huang and Kong in \cite{Huang:2005gz} and \textit{full vertex algebras} introduced by Moriwaki in \cite{Moriwaki:2020cxf}. We use formal calculus as well as complex analysis to formulate our axioms. We replace the Jacobi identity axiom of vertex (operator) algebras by a \textit{locality} axiom which is general enough to imply the \textit{duality} and hence operator product expansion of vertex operators. Various well-known examples of non-chiral conformal field theories in physics are examples of our definition. More concretely, we construct a class of examples of non-chiral VOAs based on Lorentzian lattices which cover the moduli space of Narain CFTs as examples of our definition. We then define modules and intertwiners of non-chiral VOAs on the lines of  \cite{Frenkel:1993xz}. In the rest of this section, we describe a dictionary between non-chiral VOAs of this paper and the notion of (non-chiral) conformal field theory in physics. 
\subsection{The dictionary from non-chiral VOA to non-chiral CFTs}\label{subsec:dictionary}
In this section, we give a dictionary between conformal field theories as generally understood in physics and the non-chiral vertex operator algebra definition in this paper. This dictionary, for the case of chiral conformal field theory, is well-known to experts. We extend it to the case of non-chiral VOA. We start by describing the defining data of a conformal field theory in physics. 
\subsubsection{Physical definition of a CFT}\label{subsec:cftphy}
We begin with the Belavin-Polyakov-Zamalodchikov (BPZ) definition of a conformal field theory \cite{Belavin:1984vu}. We follow \cite{DiFrancesco:1997nk,Moore:1988qv,Moore:1989vd, Blumenhagen:2009zz, Gaberdiel:1999mc, Mukhi:2022bte} for this exposition. \par A bosonic CFT is an inner product space $\mathscr{H}$ which\footnote{Physically, one always has a Hilbert space.}  is decomposable as a direct sum of tensor product 
\begin{equation}\label{eq:bpzhilsp}
\mathscr{H}=\bigoplus_{\substack{h,\Bar{h}\\h-\Bar{h}\in\Z}}V(h,c)\otimes\overline{V}(\Bar{h},\Bar{c})~,    
\end{equation}
of irreducible highest weight modules of $\text{Vir}_c\times\overline{\text{Vir}}_{\bar{c}}$, where where $\text{Vir}_c \text{ and } \overline{\text{Vir}}_{\bar{c}}$ are two copies of the Virasoro algebra with central charge $c,\Bar{c}$ :
\begin{equation}
\begin{split}
    &\left[L_m,L_n\right]=(m-n) L_{m+n}+\frac{c}{12} m\left(m^{2}-1\right) \delta_{m+n,0} \, ,
    \\&\left[\bar{L}_m,\bar{L}_n\right]=(m-n) \bar{L}_{m+n}+\frac{\bar{c}}{12} m\left(m^{2}-1\right) \delta_{m+n,0} \, , 
    \\&\left[L_m,\bar{L}_n\right]=0 \, ,
\end{split}
\label{eq:viraalgph}
\end{equation}
such that the following are satisfied :
\begin{enumerate}
    \item \textit{Identity Property} : There is a unique vector $|0\rangle\in V(0,c)\otimes\overline{V}(0,\Bar{c})$ which is invariant under the $\mf{sl}(2) \times \overline{\mf{sl}(2)}$-subalgebra of $\text{Vir}_c\times\overline{\text{Vir}}_{\bar{c}}$ generated by $L_0,\bar{L}_0,L_{\pm 1}$, and $\bar{L}_{\pm 1}$.
    \item \label{prop:st-opcorcft} For each vector $\alpha \in \mathscr{H}$ there is an operator $\phi_\alpha(z,\bar{z})$ acting on $\mathscr{H}$, parameterized by $z \in \C$. Also, for every operator $\phi_\alpha$ there exists a conjugate operator $\phi_{\alpha^{\vee}}$, partially characterized by the requirement that the operator product expansion (OPE) of $\phi_\alpha$ and  $\phi_{\alpha^\vee}$ contains a descendant of the identity operator. 
\item \textit{$L_n$ Property} :  For $\alpha\in V(h,c)\otimes\overline{V}(\Bar{h},\Bar{c})$ a highest weight state of the $(\text{Vir}_c\times\overline{\text{Vir}}_{\bar{c}})$-action, we have 
\begin{equation}
\begin{split}
    \left[L_n, \phi_\alpha(z, \bar{z})\right]=\left(z^{n+1} \frac{d}{d z}+h(n+1) z^n\right) \phi_\alpha(z,\Bar{z}) \, ,\\\left[\bar{L}_n, \phi_\alpha(z, \bar{z})\right]=\left(\bar{z}^{n+1} \frac{d}{d \bar{z}}+\bar{h}(n+1) \bar{z}^n\right) \phi_\alpha(z,\Bar{z}) \, , 
\end{split}    
\end{equation}
for real numbers $h$ and $\bar{h}$.
\item\label{item:dualitycft} \textit{Duality Property} : The inner products $\left\langle 0\left|\phi_{\alpha_1}\left(z_{1}, \bar{z}_{1}\right) \ldots \phi_{\alpha_n}\left(z_{n}, \bar{z}_{n}\right)\right| 0\right\rangle$ exist for $\left|z_{1}\right|>\ldots>\left|z_{n}\right|>0$ and admit an unambiguous real-analytic continuation, independent of ordering\footnote{For Fermionic fields, one has to keep track of signs while commuting them past each other.}, to $\C^n\setminus\{z_1,\dots,z_n=0,\infty;z_i=z_j\}$.
\item \textit{Modular invariance property :} The torus partition function and correlation functions, given in terms of traces exist and are modular invariant.
\end{enumerate}
These axioms do not characterise a CFT but are necessary for a well defined CFT. A particular subset of operators which only depend only on $z$ or $\bar{z}$ are of interest. Such operators are called \textit{chiral} ($z$ dependent) and \textit{anti-chiral} ($\bar{z}$ dependent) operators. The \textit{maximal} set of chiral and anti-chiral operators form an algebra which we denote by $\mathscr{A}$ and $\overline{\mathscr{A}}$ respectively\footnote{\label{foot:chiralCFT} Note that there can be subsets of $\mathscr{A}$ and $\overline{\mathscr{A}}$ which closes among themselves. Notable example is the set with identity operator and the stress tensor. In the following, whenever we speak of (anti-)chiral algebra, we mean the maximal algebra of (anti-)holomorphic fields. In mathematical terminology, $\mathscr{A},\overline{\mathscr{A}}$, with some added structure, are \textit{vertex operator algebras} and the theory of $(\overline{\mathscr{A}}) ~\mathscr{A}$ and its irreducible representations is called a (\textit{anti}-)\textit{chiral conformal field theory}.}. For any CFT, $\mathds{1}\in\mathscr{A}\otimes\overline{\mathscr{A}}$,  $T(z)\in\mathscr{A}$ and $\overline{T}(\bar{z})\in\overline{\mathscr{A}}$ where $T$ and $\overline{T}$ are the holomorphic and anti-holomorphic stress tensor with modes 
$L_n$ and $\Bar{L}_n$ respectively. Let $\{\mathcal{O}^i(z)\}$ be a basis of $\mathscr{A}$. The OPE of chiral operators takes the form 
\begin{equation}
   \mathcal{O}^i(z) \mathcal{O}^j(w)  =\sum_k \frac{c_{i j k}}{(z-w)^{h_{i j k}}} \mathcal{O}^k(w) \, , 
\end{equation}
for some coefficients $c_{ijk}$ where $h_{ijk}=h_i+h_j-h_k$. By the usual contour integral manipulations we can write the above OPE as the algebra of modes:
\begin{equation}\label{eq:chiralalgebracft}
\left[\mathcal{O}_n^i, \mathcal{O}_m^j\right] =\sum_k c_{i j k}(n, m) \mathcal{O}_{n+m}^k   \, \, , \end{equation}
where 
\begin{equation}
 \mathcal{O}^j(z)=\sum_m\mathcal{O}_m^jz^{-m-h_j} \, .   
\end{equation}
Similar OPE holds for anti-chiral operators. The algebra \eqref{eq:chiralalgebracft} of modes of operators in $\mathscr{A}$ is called the \textit{chiral algebra} and the algebra of modes of operators in $\overline{\mathscr{A}}$ is called the \textit{anti-chiral algebra} of the CFT. Since the OPE of (anti-)holomorphic operators is equivalent to the algebra of their modes, we will use the same symbol $(\overline{\mathscr{A}})~ \mathscr{A}$ for the (anti-)chiral algebra and the algebra of (anti-)holomorphic operators.  In the following, we will only speak of the chiral algebra but all statements hold for anti-chiral algebra equally well. The chiral algebra of any CFT contains the (universal enveloping algebra of) Virasoro algebra since the stress tensor is always a chiral operator. Other  
examples of chiral algebra include the affine Kac-Moody algebra \cite{Goddard:1986bp} and the $W_3$ algebra \cite{Zamolodchikov:1985wn}. Only the zero mode
the of chiral operator $\mathcal{O}(z)$ commute with the Hamiltonian $(L_0+\bar{L}_0)$ and are hence called the \textit{symmetry-generating algebra}. In the inner product space \eqref{eq:bpzhilsp}, one can talk about subspaces $\mathscr{H}_i$ which form irreducible representations of the chiral algebra $\mathscr{A}$. For this reason the full chiral algebra is sometimes also called the \textit{spectrum-generating algebra}. We can thus decompose the physical Hilbert space $\mathscr{H}$ as:
\begin{equation}\label{eq:bpzchipridecom}
\mathscr{H}=\bigoplus_{i, \bar{i}} N_{i, \bar{i}} \mathscr{H}_i \otimes \overline{\mathscr{H}}_{\bar{i}} \, , 
\end{equation}
where $\mathscr{H}_i,\overline{\mathscr{H}}_{\bar{i}}$ are irreducible representations of $\mathscr{A},\overline{\mathscr{A}}$ respectively and $N_{i, \bar{i}} \in \N_{0} =\N\cup\{0\}$, the set of non-negative integers, is the number of times $\mathscr{H}_{i} \otimes \overline{\mathscr{H}}_{\bar{i}}$ appears in $\mathscr{H}$. For the index value $i,\Bar{i}=0$, we take $\mathscr{H}_0$ and $\overline{\mathscr{H}}_0$ to be the subspace of $\mathscr{H}$ which contains states corresponding to $\mathscr{A}$ and $\overline{\mathscr{A}}$ respectively modulo the null states. Hence, $N_{i, 0}=\delta_{i, 0}, N_{0, \bar{i}}$ $=\delta_{0, \bar{i}}$. \footnote{This is again not true if we take $\mathscr{A}$ and $\overline{\mathscr{A}}$ to be some strictly smaller subalgebra of the maximal chiral algebra. For example, in the $c=\frac{8}{10}$ three state Potts CFT , if we take the chiral algebra to be Virasoro, then there is a conformal dimension (3,0) and (0,3) representation of the Virasoro algebra present in the theory and hence $N_{3, 0}=1, N_{0, 3}=1$ \cite{Chang:2018iay}. But since these fields are holomorphic and antiholomorphic respectively, according to our definition, they are included in $\mathscr{A},\overline{\mathscr{A}}$ respectively and the chiral algebra is extended to $W_3$ algebra, first constructed in \cite{Zamolodchikov:1985wn}. Several such examples are constructed in \cite{Cappelli:1987xt}.}
\\\\Let us now describe the relation between the two decompositions \eqref{eq:bpzhilsp} and \eqref{eq:bpzchipridecom}. A general state $|h,\bar{h}\rangle\in \mathscr{H}$ is called a \textit{Virasoro primary} of conformal dimension $(h,\bar{h})$  if 
\begin{equation}
\begin{split}
&L_0|h,\bar{h}\rangle=h|h,\bar{h}\rangle\quad\bar{L}_0|h,\bar{h}\rangle=\bar{h}|h,\Bar{h}\rangle\\& L_n|h,\bar{h}\rangle=\bar{L}_n|h,\bar{h}\rangle=0,\quad n>0.
\end{split}   
\end{equation}
This follows from the OPE 
\begin{equation}\label{eq:opeTphi}
\begin{split}
    &T(z)\phi(w,\bar{w})=\frac{h}{(z-w)^2}\phi(w,\bar{w})+\frac{\partial_w\phi(w,\bar{w})}{z-w}+\text{ hol. }, \\&\overline{T}(\bar{z})\phi(w,\bar{w})=\frac{\bar{h}}{(\bar{z}-\bar{w})^2}\phi(w,\bar{w})+\frac{\partial_{\bar{w}}\phi(w,\bar{w})}{\bar{z}-\bar{w}}+\text{ anti-hol. }, 
\end{split}   
\end{equation}
where $\phi(w,\Bar{w})$ is the operator corresponding to the $|h,\bar{h}\rangle$. One can identify the state $|h,\Bar{h}\rangle$ with 
\begin{equation}
|h,\bar{h}\rangle\equiv \lim_{z,\bar{z}\to 0}\phi(z,\Bar{z})|0\rangle~.    
\end{equation}
We will consider $|h,\Bar{h}\rangle$ as the tensor product of states $|h\rangle,|\Bar{h}\rangle$: $|h,\Bar{h}\rangle\equiv |h\rangle\otimes|\Bar{h}\rangle$. The \textit{Verma modules} $V(h,c)$ and $\overline{V}(\bar{h},\bar{c})$ is given by
\begin{equation}
\begin{split}
&V(h,c):=\text{Span}_{\C}\{L_{-n_1}\dots L_{-n_k}|h\rangle: n_1,\dots,n_k>0,k\in\N\} \, , \\&\overline{V}(\bar{h},\bar{c}):=\text{Span}_{\C}\{\bar{L}_{-n_1}\dots \bar{L}_{-n_k}|\Bar{h}\rangle: n_1,\dots,n_k>0,k\in\N\} \, .
\end{split}    
\end{equation}
The commutators \eqref{eq:viraalgph} and the Virasoro primary condition  makes the Verma module $V(h,c)\otimes\overline{V}(\Bar{h},\bar{c})$ into a $(\text{Vir}_c\times\overline{\text{Vir}}_{\bar{c}})$-representation. In general, this representation is \textit{reducible} and one has to quotient out \textit{singular} or \textit{null states} \footnote{A singular or null state is a vector which is not a highest weight vector but is annihilated by $L_n,\bar{L}_n,~n>0$, see \cite[Section 7.1.3]{di1996conformal} for a detailed discussion.} from these Verma modules to make them irreducible. We will assume that this has already been done and that $V(h,c)\otimes\overline{V}(\Bar{h},\bar{c})$ is an irreducible $(\text{Vir}_c\times\overline{\text{Vir}}_{\bar{c}})$-module.  Note that from the OPE 
\eqref{eq:opeTphi} we must have $\bar{h}=0$ ($h=0$) for a chiral (anti-chiral field). Now we can identify
$\mathscr{H}_0 \otimes \overline{\mathscr{H}}_{0}$ as 
\begin{equation}\label{eq:chirstH0}
\mathscr{H}_0 \otimes \overline{\mathscr{H}}_{0}\cong \left(\bigoplus_{h\in\cal{S}}V(h,c)\right)\otimes\left(\bigoplus_{\bar{h}\in\overline{\cal{S}}}\overline{V}(\bar{h},\bar{c})\right)\subset \mathscr{H},    
\end{equation}
where 
\begin{equation}
\begin{split}
\mathcal{S}:=\{h\in\Z:|h\rangle\otimes |0\rangle\equiv|h,0\rangle\in\mathscr{H}\}~,\\ \overline{\mathcal{S}}:=\{ \bar{h}\in \Z:|0\rangle\otimes|\bar{h}\rangle\equiv|0,\bar{h}\rangle\in\mathscr{H}\}~.   
\end{split}
\end{equation}
A \textit{chiral primary} is a Virasoro primary $|h,\bar{h}\rangle$ satisfying 
\begin{equation}\label{eq:chiral_primary_modes}
\mathcal{O}_m |h,\bar{h}\rangle=\overline{\mathcal{O}}_m |h,\bar{h}\rangle=0,\quad m>0 \, ,
\end{equation}
and is an eigenstate of $\mathcal{O}_0$ and $\overline{\mathcal{O}}_0$ for every $\mathcal{O}\in\mathscr{A}$ and $\overline{\mathcal{O}}\in\overline{\mathcal{A}}$, the operator corresponding to it will be called the \textit{chiral primary field}. Then each irreducible factor in \eqref{eq:bpzchipridecom} is a subspace of a Verma module constructed over a chiral primary by the modes of every chiral and anti-chiral operator. \\\\One can then talk about characters of the CFT defined for each irreducible factor $\mathscr{H}_i \otimes \overline{\mathscr{H}}_{\Bar{i}}$:
\begin{equation}\label{eq:chiralcharacter}
    \chi_{i,\Bar{i}}(\tau,\Bar{\tau})=\text{Tr}_{\mathscr{H}_i \otimes \overline{\mathscr{H}}_{\Bar{i}}}q^{L_0-\frac{c}{24}}\bar{q}^{\bar{L}_0-\frac{\bar{c}}{24}}, \quad q=e^{2\pi i\tau},~~\bar{q}=e^{-2\pi i\bar{\tau}}~,
\end{equation}
where $\tau\in\mathbb{H}:=\{x+iy\in\C:y>0\}$ is in the upper half plane. 
From the fact that $\mathscr{H}_i \otimes \overline{\mathscr{H}}_{\Bar{i}}$ is built over some highest vector $|h_i,\bar{h}_i\rangle$, we see that the character has the form 
\begin{equation}
\chi_{i,\Bar{i}}(\tau,\Bar{\tau})  =q^{h_i-\frac{c}{24}}\bar{q}^{\bar{h}_i-\frac{\bar{c}}{24}}\sum_{n,m=0}^\infty a(n)\bar{a}(m)q^{n}\Bar{q}^m~,  
\end{equation}
for some integers $a(n),\bar{a}(m)$. Thus we can separate the character into \textit{chiral} and \textit{anti-chiral characters}:
\begin{equation}
\chi_{i,\Bar{i}}(\tau,\Bar{\tau})=\chi_i(\tau) \overline{\chi}_{\Bar{i}}(\bar{\tau})   
\end{equation}
where 
\begin{equation}
    \chi_i(\tau)=q^{h_i-\frac{c}{24}}\sum_{n=0}^\infty a(n)q^n,\quad \overline{\chi}_{\bar{i}}(\bar{\tau})=\bar{q}^{\bar{h}_i-\frac{\bar{c}}{24}}\sum_{m=0}^\infty \bar{a}(m)\Bar{q}^m~. 
\end{equation}
The partition function of the CFT is then given by sum over chiral characters 
\begin{equation}
    Z(\tau,\bar{\tau}):=\text{Tr}_{\mathscr{H}}q^{L_0-\frac{c}{24}}\bar{q}^{\bar{L}_0-\frac{\bar{c}}{24}}=\sum_{i,\bar{i}}N_{i,\bar{i}}\chi_i(\tau)\overline{\chi}_{\bar{i}}(\bar{\tau})~.
\end{equation}
Modular invariance of the partition function implies that the chiral characters form a weight-zero weakly holomorphic vector valued modular form \footnote{See for example \cite{Choie2019} for the definition of vector valued modular forms.}. Using this property of chiral characters, one can classify CFTs with finitely many primary fields and given central charge \cite{Mukhi:2022bte,Gaberdiel:2016zke,Das:2022uoe,Das:2023qns,Hampapura:2015cea,Mathur:1988gt,Mathur:1988na,Mathur:1988rx}. A CFT is said to admit a \textit{holomorphic factorization} if the partition function and its correlation functions decompose into a product of an analytic function of $\tau$ and an analytic function of $\bar{\tau}$. If a CFT admits a holomorphic factorization, then the CFT is a tensor product of a chiral and an anti-chiral CFT (see Footnote \ref{foot:chiralCFT}). A non-chiral conformal field theory does not admit a holomorphic factorization and hence cannot be factorized as a tensor product of a chiral and an anti-chiral CFT. Thus to give a mathematical formulation of such CFTs, we want to treat the chiral and anti-chiral algebras on equal footing. This motivates the terminology \textit{non-chiral vertex operator algebras} defined in this paper.  
\subsubsection{Non-chiral vertex operator algebras}
We now briefly describe the notion of non-chiral vertex operator algebras studied in this paper. We refer to Section \ref{sec:nchvoadef} for precise and detailed discussions. Just as vertex operator algebras, we start with a vector space $V$ which is $(\R\times\R)$-graded. We have  the vertex operator map $Y_V:V\longrightarrow \text{End}(V)\{x,\bar{x}\}$ where $x,\bar{x}$ are two formal variables. We require the existence of a vacuum vector \textbf{1} and conformal vectors $\omega,\bar{\omega}$ such that  $Y_V(\textbf{1},x,\bar{x})=\mathds{1}$ and the coefficients of the formal series for $Y_V(\omega,x, \bar{x}),Y_V(\bar{\omega},x,\bar{x})$ satisfies two copies of the Virasoro algebra. The vertex operators are required to satisfy certain translation and $L(0)$ axioms similar to VOAs. It turns out that that the Jacobi identity for VOAs is difficult to formulate for non-chiral VOAs \cite{Moriwaki:2020cxf}. The appropriate locality axiom is motivated from full field algebras of \cite{Huang:2005gz}. Roughly speaking, locality of vertex operators says that the matrix elements of the product $Y_V(v,z_1,\bar{z}_1)Y_V(w,z_2,\bar{z}_2)$ and $Y_V(v,z_1,\bar{z}_1)Y_V(w,z_2,\bar{z}_2)$, defined when $|z_1| > |z_2|$ and $|z_2| > |z_1|$ respectively, are equal to the same function, which is multi-valued and analytic in $z_1, \bar{z}_1, z_2, \bar{z}_2$ and single valued when $\bar{z}_1,\bar{z}_2 $ are complex-conjugates of $z_1,z_2$, defined on  $\C^{4} $ minus a diagonal subset. This version of locality turns into a statement of analytic continuation of matrix elements for chiral vertex operators and allows us to use contour integration for manipulating the modes of the vertex operators. Moreover, locality allows us to prove \textit{duality} of vertex operators which in turn gives us the operator product expansion of the product of two vertex operators. Modules and intertwiners of a non-chiral VOA are then defined analogous to modules of a VOA. 
\subsubsection{The Dictionary}
Let us now describe the dictionary between non-chiral VOA and its modules and a non-chiral CFT. Given a bosonic CFT, its chiral and anti-chiral algebra is a non-chiral VOA according to our definition. Note that our notion of non-chiral VOA allows for more general structure in the sense that the chiral algebra of a CFT always has the structure of a tensor product $\mathscr{H}_0 \otimes \overline{\mathscr{H}}_{0}$ as described above but non-chiral VOAs are allowed to have more general vector spaces. The  chiral and anti-chiral operators are identified with the vertex operators of the VOA. For general non-chiral VOA, chiral and anti-chiral vertex operators form only a subsector of the set of vertex operators, see Definition \ref{def:chirverop} and Theorem \ref{thm:chiralmodescomm} below. 
Next, the irreducibles $\mathscr{H}_i \otimes \overline{\mathscr{H}}_{\Bar{i}}$ must be identified with modules of the VOA. Again in our generic construction we allow for the modules to have more general structure rather than a tensor product. The chiral primary field corresponding to an irreducible $\mathscr{H}_i \otimes \overline{\mathscr{H}}_{\Bar{i}}$ is identified with the intertwiner $\mathcal{Y}^{~(i,\Bar{i})}_{(i,\bar{i})(0,0)}(w,x,\bar{x})$ of type ${(i,\Bar{i})\choose (i,\bar{i})(0,0)}$ where $(0,0)$ indicates the VOA as a module for itself. 
The state-operator correspondence for the space \eqref{eq:bpzchipridecom} corresponds to the vertex operator map for the VOA and its modules along with the intertwining operators of type ${(i,\Bar{i})\choose (i,\bar{i})(0,0)}$. The full dictionary is summarised in Table \ref{tab:dictionary}. We hope to expand the dictionary to include fusion rules and rationality on the CFT side to the notion of tensor product of modules and (strong) regularity of non-chiral VOAs in a future publication. 
\begin{landscape}
\begin{table}[]
    \centering
    \begin{tabular}{|c|c|}
    \hline\hline
      Non-chiral CFT   & Non-chiral VOA\\\hline  
        Chiral space of states $\mathscr{H}_0\otimes\overline{\mathscr{H}}_0$ \eqref{eq:chirstH0} & $(\R\times\R)$-graded vector space $V$ \eqref{eq:VRxRgrad}\\\hline State-operator map $\alpha\mapsto\phi_\alpha$ \eqref{prop:st-opcorcft} on Page 3& Vertex operator map $v\mapsto Y_V(v,x,\bar{x})$ \eqref{eq:veropmap}\\\hline Stress tensor $T(z),\overline{T}(\bar{z})$ & Conformal vertex operators $Y_V(\omega,x),Y_V(\bar{\omega},\Bar{x})$ \eqref{eq:sttenvoa}\\\hline Duality of operators \eqref{item:dualitycft} on Page 3 & Locality of vertex operators Def. \ref{def:nvoa}  \eqref{item:locprop} \\\hline Chiral and anti-chiral algebra \eqref{eq:chiralalgebracft} & Chiral and anti-chiral algebra \eqref{eq:chiralalgebra} \\\hline Irreducible representation $\mathscr{H}_i\otimes\overline{\mathscr{H}}_{\bar{i}}$ of $\mathscr{A}\otimes\overline{\mathscr{A}}$& Irreducible module $(W,Y_W)$ of non-chiral VOA $(V,Y_V)$ Def. \ref{def:modofncft}\\\hline Characters $\chi_{i,\bar{i}}(\tau,\Bar{\tau})$ \eqref{eq:chiralcharacter} & Graded dimension $\chi_W(\tau,\Bar{\tau})$ \eqref{eq:graddimchiW}\\\hline Chiral primary operator $\phi_\alpha$ for $\alpha\in \mathscr{H}_i\otimes\overline{\mathscr{H}}_i$& Intertwiner $\mathcal{Y}^{~(i,\Bar{i})}_{(i,\bar{i})(0,0)}(w,x,\bar{x})$ of type ${(i,\Bar{i})\choose (i,\bar{i})(0,0)}$ Def. \ref{defn:intertwiners} \\
         \hline\hline
    \end{tabular}
    \caption{Dictionary between a CFT and non-chiral VOA}
    \label{tab:dictionary}
\end{table}
\end{landscape}

\subsection{Lorentzian Lattice Vertex Operator Algebra (LLVOA)}
One of the main constructions of this paper is a concrete example of a non-chiral VOA based on an even integral Lorentzian lattice. The construction is similar to Euclidean lattice VOAs but differs in that it also has anti-holomorphic part. 
We call it the Lorentzian lattice vertex operator algebra (LLVOA). The modules for a non-chiral VOA introduced in this paper can be defined analogous to modules for usual VOAs. In fact, we are able to construct modules for the LLVOA in one to one correspondence with certain cosets of the lattice. 
Collecting the VOA and its modules, we define the partition function of the non-chiral CFT (see Definition \ref{def:nonchcft}) thus obtained. We call such CFTs \textit{Lorentzian lattice conformal field theory} (LLCFT). We find that the modules of the LLVOA constructed using the cosets of the lattice give rise to a modular invariant partition function. We further attempt to classify all possible modular invariant LLCFTs based on even, integral self-dual Lorentzian lattices of a given signature. This leads us to a conjecture about automorphisms of Lorentzian lattices which we prove for signature $(m,m)$ but are unable to prove for general signature $(m,n)$ with $m\neq n$. Following the physics terminology, we call the equivalence classes of LLCFTs based on Lorentzian lattices as the moduli space of LLCFTs. As expected from physical arguments, the moduli space in signature $(m,n)$ is given by (see Theorem \ref{thm:modspmn})
\begin{equation}\label{eq:modllvoaint}
\mathcal{M}_{m,n}\cong\frac{\mathrm{O}(m,n,\R)}{\mathrm{O}(m,\R)\times \mathrm{O}(n,\R)\times \mathrm{O}(m,n,\Z)}.    
\end{equation} 
\par
The paper is organised as follows: in Section \ref{sec:nchvoadef}, we introduce the notion of non-chiral VOA and prove some elementary consequences of the definition. Then in Section \ref{sec:llvoa}, we construct the LLVOA and prove that it is an example of a non-chiral VOA. In Section \ref{sec:modintop} we define the notion of modules and intertwining operators and prove some important consequences and results required later. In Section \ref{sec:modintopllvoa} we construct  the modules of LLVOAs. We formulate a precise conjecture about automorphism of Lorentzian lattices and under that assumption, prove that the moduli space of LLCFTs in signature $(m,n)$ is given by \eqref{eq:modllvoaint}. Finally in Section \ref{sec:naraincft} we review the physical construction of Narain CFTs and comment on their relation to LLVOAs. Appendix \ref{app:centext} deals with the independence of central extensions of lattices on the chosen basis of the lattice. Appendix \ref{app:heisalgreps} contains some technical results about modules of Heisenberg algebras and Appendix \ref{app:conjproof} contains the proof of Conjecture \ref{conj:} for the special case $m=n$. \newpage
\section{Non-Chiral Vertex Operator Algebra}\label{sec:nchvoadef}
\subsection{Formal calculus}
We begin by collecting some notatations about formal calculus. The reader is referred to \cite[Chapter 3,8]{Frenkel:1988xz} and \cite[Chapter 2]{lepowsky2012introduction} for more details. \par 
Let $x$ be a formal variable. For a vector space $V$, we define the following: 
\begin{equation}
    V[x] = \left\{ \, \sum_{n \in \N_{0} } v_n x^{n} : v_n \in V, \text{ where only finitely many $v_{n} \neq 0$} \right\} \, , 
\end{equation}
\begin{equation}
    V[[x]] = \left\{ \, \sum_{n \in \N_{0} } v_n x^{n} : v_n \in V \right\} \, ,  
\end{equation}

\begin{equation}
    V\{ x  \} = \left\{ \sum_{n \in \mathbb{F}} v_n x^{n} \,  : \, v_n \in V \right\} \, , 
\end{equation}
where $\N_{0}=\N\cup\{0\}$ and $\F$ is an arbitrary field of characteristic not 2. We will mostly be interested in the case $\F=\R$ or $\C$. For a complex number $s\in\mathbb{C}$ and formal variables $x_1,x_2$, we will define 
\begin{equation}\label{eq:binexp}
    (x_1+x_2)^s:=\sum_{n=0}^\infty {s\choose n}x_1^{s-n}x_2^{n}~,
\end{equation}
where the binomial coefficient is defined as 
\begin{equation}
{s\choose n}=\frac{s(s-1)\cdots (s-n+1)}{n!}.    
\end{equation}
Note, in a series like this, we will always have non-negative integral powers of the second variable. 
With this formula, one can check that 
\begin{equation}
\label{eq:multx1-x2}
\begin{split}
 \left(1-\frac{x_1}{x_2}\right)^sx_2^s & =(x_2-x_1)^s.
\end{split}
\end{equation}
If we replace $x_1,x_2$ by complex variables $z_1,z_2$ then by definition \cite{stein2010complex} 
\begin{equation}
\begin{split}
    &(z_1-z_2)^s:=\exp(s\log(z_1-z_2)),\\&(z_2-z_1)^s:=\exp(s\log(z_2-z_1)).
\end{split}
\end{equation}
Using the fact that
\begin{equation}
    \log(1-z)=-\sum_{n=0}^\infty\frac{z^n}{n},\quad |z|<1,
\end{equation}
and the identity \footnote{This identity can be proven by using the relation $(1-x)^s=\exp(s\log (1-x))$ for any real $x$ with $|x|<1$ and $s\in\C$.} 
\begin{equation}
    (-1)^k{s\choose k}=\sum_{\ell=1}^{k}\frac{(-s)^\ell}{\ell!}\sum_{\substack{n_1+\dots+n_\ell=k\\n_1,\dots,n_\ell\geq 1}}\frac{1}{n_1\cdots n_\ell},
\end{equation}
it is easy to see that \footnote{Another way of proving this identity is to use Taylor's theorem.} 
\begin{equation}\label{eq:logz1-z2}
\begin{split}
    &(z_1-z_2)^s=\exp(s\log(z_1-z_2))=\sum_{n\geq 0}{s\choose n}(-1)^nz_1^{s-n}z_2^n,\quad |z_1|>|z_2|~,\\&(z_2-z_1)^s=\exp(s\log(z_2-z_1))=\sum_{n\geq 0}{s\choose n}(-1)^nz_2^{s-n}z_1^n,\quad |z_2|>|z_1|~,
\end{split}
\end{equation}
which is consistent with the definition \eqref{eq:binexp}. 
Let $f(x) = \sum v_n x^n \, \in V [[x, x^{-1}]] $, then we have the formal version of Taylor's theorem:
\begin{equation}\label{eq:TaylorTHM}
    \mathrm{e}^{x_0 \, \frac{d}{dx}} \,  f(x)  =  f(x + x_0) \, , 
\end{equation}
One can prove this by expanding both sides and  comparing terms of equal power of $x,x_0$. As before, we need to expand the RHS in non-negative integral powers of $x_0$. 
\subsection{Definition of non-chiral VOA}
Let 
\begin{equation}\label{eq:VRxRgrad}
V=\coprod\limits_{(h,\bar{h})\in\R\times\R}V_{(h,\bar{h})} \, ,
\end{equation} 
be an $\R\times\R$-graded complex vector space vector space. Let 
\begin{equation}
    \overline{V}=\prod\limits_{(h,\bar{h})\in\R\times\R}V_{(h,\bar{h})} \, ,
\end{equation}
denote the algebraic completion of $V$. Let 
\begin{equation}
    V'=\coprod\limits_{(h,\bar{h})\in\R\times\R}V'_{(h,\bar{h})} \, ,
\end{equation}
be the contragradient of $V$ where $V'_{(h,\bar{h})}$ is the dual of $V_{(h,\bar{h})}$. A series $\sum f_n$ in $\overline{V}$ is said to be absolutely convergent if for every $f'\in V'$ the series $\sum |\langle f',f_n\rangle|$ is convergent. Here, $\langle f',f_n\rangle=f'(f_n)\in\C$ is just the action of the linear functional on $f'$ on $f_n$.
\begin{defn}\label{def:nvoa}
A \textit{non-chiral vertex operator algebra} of \textit{central charge} $(c,\bar{c})$ is a quintuple $(V,Y_{V},\omega,\bar{\omega},\mathbf{1})$ where $V$ is an $\R\times \R$-graded complex vector space and $Y_V$ is a linear map, called the \textit{vertex operator map},  
\begin{equation}\label{eq:veropmap}
\begin{split}
    Y_V: \, &V\otimes V\longrightarrow V\{x,\bar{x}\} \, , \\&u\otimes v\longmapsto Y_V(u,x,\bar{x})v \, ,
\end{split}
\end{equation}
or equivalently a map 
\begin{equation}
\begin{split}
    Y_V: \, &\C^\times\times \C^\times \longrightarrow \text{Hom}(V\otimes V,\overline{V})\\&(z,\bar{z})\longmapsto Y_V(\cdot,z,\Bar{z}):u\otimes v\longmapsto Y_V(u,z,\bar{z})v,
\end{split}
\end{equation}
which is multi-valued and analytic if $z,\Bar{z}$ are independent complex variables and single valued when $\Bar{z}$ is the complex conjugate of $z$. The vertex operator $Y_V(u,x,\bar{x})$ is expanded as a formal power series  
\begin{equation}\label{eq:genveropexp}
    Y_V(u,x,\bar{x})=\sum_{m,n\in\R}u_{m,n}x^{-m-1}\bar{x}^{-n-1}\in\text{End}(V)\{x^{\pm 1},\bar{x}^{\pm 1}\} \, , 
\end{equation}
and when $u\in V_{(h,\bar{h})}$, it can also be expanded as
\begin{equation}\label{eq:genveropexphom}
    Y_V(u,x,\bar{x})=\sum_{m,n\in\R}x_{m,n}(u)x^{-m-h}\bar{x}^{-n-\bar{h}}\in\text{End}(V)\{x^{\pm 1},\bar{x}^{\pm 1}\} \, , 
\end{equation}
so that 
\begin{equation}
    x_{m,n}(u)=u_{m+h-1,n+\Bar{h}-1},\quad m,n\in\Z \, .
\end{equation}
We call $x_{m,n}(u)$ the \textit{modes} of the vertex operators $Y_V(u,x,\bar{x})$. 
The degree $(h,\bar{h})$ is called the \textit{conformal weight} of $u\in V_{(h,\bar{h})}$ and we write 
\begin{equation}
\text{wt}(u)=h,\quad\overline{\text{wt}}(u)=\bar{h} \, .    
\end{equation}
The vector
$\mathbf{1}\in V_{(0,0)}$ is called the \textit{vacuum vector} and $\omega\in V_{(2,0)},\bar{\omega}\in V_{(0,2)}$ are \textit{chiral} and \textit{anti-chiral conformal vectors} respectively. This data is required to satisfy the following properties:
\begin{enumerate}
\item\label{item:idprop} \textit{Identity property:} The vertex operator corresponding to the vacuum vector acts as identity, i.e.
\begin{equation}
    Y_V(\mathbf{1}, x, \bar{x}) u = u, \quad \forall~ u \in V.
\end{equation}
\item\label{item:gradres} \textit{Grading-restriction property:} For every $(h,\Bar{h})\in\R\times \R$, 
\begin{equation}
   \text{dim}(V_{(h,\bar{h})})<\infty \, ,
\end{equation}
and there exists $M\in\R$, such that 
\begin{equation}\label{eq:lowerM}
V_{(h,\bar{h})}=0,\quad\text{ for } h<M \text{ or }\Bar{h}<M.    
\end{equation}
\item\label{item:singvalprop} \textit{Single-valuedness property:} For every homogenous subspace $V_{(h,\bar{h})}$ 
\begin{equation}
    h-\bar{h}\in\mathbb{Z}.
\end{equation}
\item \label{item:creatprop}\textit{Creation property:} For any $v\in V$ 
\begin{equation}\label{eq:creatprop}
    \lim_{x,\bar{x}\to 0}Y_V(v,x,\bar{x})\mathbf{1}=v~,
\end{equation}
that is $Y_V(v,x,\bar{x})\mathbf{1}$ involves only non-negative powers of $x,\bar{x}$ and the constant term is $v$. 
\item\label{item:virprop} \textit{Virasoro property:} The vertex operators $Y_V(\omega,x,\bar{x})$ and $Y_V(\bar{\omega},x,\bar{x})$, called \textit{conformal vertex operators}, have Laurent series in $x,\bar{x}$ given by 
    \begin{equation}\label{eq:sttenvoa}
     \begin{split}
&T(x):=Y_V(\omega,x,\bar{x})=\sum_{n\in\Z}L(n)x^{-n-2},\\
&\bar{T}(\Bar{x}):=Y_V(\bar{\omega},x,\bar{x})=\sum_{n\in\Z}\bar{L}(n)\bar{x}^{-n-2},    
     \end{split}   
    \end{equation}
where $L(n),\bar{L}(n)$ are operators which satisfy the \textit{Virasoro algebra} with central charge $c,\bar{c}$ respectively:
\begin{equation}
\begin{split}
    &\left[L(m),L(n)\right]=(m-n) L(m+n)+\frac{c}{12} m\left(m^{2}-1\right) \delta_{m+n,0},\\&\left[\bar{L}(m),\bar{L}(n)\right]=(m-n) \bar{L}(m+n)+\frac{\bar{c}}{12} m\left(m^{2}-1\right) \delta_{m+n,0},
\end{split}
\label{eq:viraalg}
\end{equation}
and 
\begin{equation}
    \left[L(m),\bar{L}(n)\right]=0.
\end{equation}
\item\label{item:gradprop} \textit{Grading property:} The operator $(L(0),\bar{L}(0))$ is the grading  operator on $V$, that is for $v\in V_{(h,\bar{h})}$
\begin{equation}
    L(0)v=hv,\quad \Bar{L}(0)v=\bar{h}v.
\end{equation}
\item\label{item:l0prop} \textit{$L(0)$-property} : 
    \begin{equation}
    \begin{split}
        &[L(0), Y_{V}(u , x, \bar{x})]=x \frac{\partial}{\partial x}Y_{V}(u ,x, \bar{x})+Y_{V}(L(0) u , x, \bar{x}),\\&[\bar{L}(0), Y_{V}(u , x, \bar{x})]=\bar{x} \frac{\partial}{\partial \bar{x}}Y_{V}(u ,x, \bar{x})+Y_{V}(\bar{L}(0) u , x, \bar{x}).
    \end{split}
\end{equation}
\item\label{item:transprop} \textit{Translation property:} For any $u\in V$

\begin{equation}
\begin{split}
& {\left[L(-1), Y_{V}(u , x, \bar{x})\right]=Y_{V}\left(L(-1) u , x, \bar{x}\right)=\frac{\partial}{\partial x} Y_{V}(u , x, \bar{x}),} \\
& {\left[\bar{L}(-1), Y_{V}(u , x, \bar{x})\right]=Y_{V}\left(\bar{L}(-1) u , x, \bar{x}\right)=\frac{\partial}{\partial \bar{x}} Y_{V}(u , x, \bar{x})}. \\
&
\end{split}
\end{equation}
\item \label{item:locprop}\textit{Locality property:} 
For $u_1, \dots, u_n\in V$, there is an operator-valued  function \footnote{We thank Yi-Zhi Huang for discussions on this point.}
$$m_n (u_1, ... , u_n; z_1, \bar{z}_1, ... , z_n, \bar{z}_n)~,$$ defined on \footnote{The matrix elements of this operator are called \textit{correlation functions} in Physics.}
\begin{equation}
    \{(z_1,\dots,z_n,\bar{z}_1,\dots, \bar{z}_n) \in \C^{2n} \, \lvert \, z_i,\bar{z}_i\neq 0, z_i \neq z_j,\bar{z}_i\neq\bar{z}_j\} \, ,
\end{equation} which is multi-valued and analytic
when $\bar{z}_1, ..., \bar{z}_n$ are viewed as  independent variables and is single-valued when $\bar{z}_1, ..., \bar{z}_n$ are equal to the complex conjugates of $z_1, ... , z_n$. Moreover, for any permutation $\sigma\in S_n$, the product of vertex operators 
\begin{equation}\label{eq:veropprodvoa}
Y_V\left(u_{\sigma(1)} , z_{\sigma(1)}, \bar{z}_{\sigma(1)}\right)\cdots  Y_V\left(u_{\sigma(n)} , z_{\sigma(n)}, \bar{z}_{\sigma(n)}\right)~,     
\end{equation}
is the expansion of $m_n\left(u_1,\dots,u_n;z_1, \bar{z}_1,\dots, z_n, \bar{z}_n \right)$
in the domain
$\left|z_{\sigma(1)}\right|>\left|z_{\sigma(2)}\right|>\dots>|z_{\sigma(n)}|>0$. Here, $\bar{z}_{\sigma(1)}, ..., \bar{z}_{\sigma(n)}$ are complex conjugates of $z_{\sigma(1)}, ... , z_{\sigma(n)}$ respectively. If a function $m_n$ satisfying above properties exists, we say that the vertex operators
\\$Y_V\left(u_1 , z_1, \bar{z}_1\right),\dots, Y_V\left(u_n , z_n, \bar{z}_n\right)$ are mutually \textit{local} with respect to each other. 
\end{enumerate}
We will often denote the non-chiral VOA by $(V,Y_V)$ or simply by $V$. 
\end{defn}
\begin{remark}
Let us compare the definition of non-chiral VOA with full vertex algebra of \cite{Moriwaki:2020cxf}. The definition of full vertex algebra in \cite{Moriwaki:2020cxf} requires the locality of only 2-point correlation function and invokes \textit{bootstrap equations} for the locality of multi-point correlation functions. In principal, requiring bootstrap equations imply the locality of multi-point correlation functions. We take the latter approach to be completely explicit. 
\end{remark}
\begin{remark}\label{rem:veropexpm-ninZ}
For a homogeneous vector $u\in V_{(h,\bar{h})}$, the sum in \eqref{eq:genveropexp} and \eqref{eq:genveropexphom} runs only over the set $\{(m,n)\in\R^2\lvert~ m-n\in\Z\}$. To see this, first note that the $L(0)$-property \ref{item:l0prop} implies the commutator 
\begin{equation}\label{eq:commL0vmn}
\begin{split}
    &\left[L(0),x_{m,n}(u)\right]=-mx_{m,n}(u),\\&\left[\bar{L}(0),x_{m,n}(u)\right]=-nx_{m,n}(u).
    \end{split}
\end{equation}
Equivalently, 
\begin{equation}
\begin{split}
    &\left[L(0),u_{m,n}\right]=(h-m-1)u_{m,n},\\&\left[\bar{L}(0),u_{m,n}\right]=(\bar{h}-n-1)u_{m,n}.
    \end{split}    
\end{equation}
This implies that 
\begin{equation}\label{eq:wtofmodes}
\begin{split}
    &\text{wt}~x_{m,n}(u)=-m,\quad \overline{\text{wt}}~x_{m,n}(u)=-n,\\&\text{wt}~u_{m,n}=h-m-1,\quad \overline{\text{wt}}~u_{m,n}=\bar{h}-n-1.
\end{split}    
\end{equation}
The single-valuedness property \ref{item:singvalprop} implies that $m-n\in\Z$ in both the sums. We will thus write the expansions of the vertex operators as 
\begin{equation}\label{eq:genveropcorrsum}
\begin{split}
    Y_V(u,x,\bar{x})&=\sum_{\substack{m,n\in\R\\(m-n)\in\Z}}u_{m,n}x^{-m-1}\bar{x}^{-n-1}\in\text{End}(V)\{x,\bar{x}\}\\&=\sum_{\substack{m,n\in\R\\(m-n)\in\Z}}x_{m,n}(u)x^{-m-h}\bar{x}^{-n-\bar{h}}.
\end{split}    
\end{equation}
\end{remark}
\begin{remark}
The single-valuedness property \ref{item:singvalprop} implies that the vertex operators \eqref{eq:veropmap} is single-valued. To prove this, we must show that
\begin{equation}
Y_V(u,z,\bar{z})=Y_V(u,e^{2\pi i}z,e^{-2\pi i}\bar{z}).    
\end{equation}
From Remark \ref{rem:veropexpm-ninZ}, we have 
\begin{equation}
\begin{split}
    Y_V(u,e^{2\pi i}z,e^{-2\pi i}\bar{z})&=\sum_{\substack{m,n\in\R\\(m-n)\in\Z}}u_{m,n}z^{-m-1}\bar{z}^{-n-1}e^{2\pi i(-m+n)}\\&=Y_V(u,z,\bar{z}).
\end{split}    
\end{equation}
\end{remark}
\begin{remark}\label{rem:modesveropint}
For $v\in V_{(h,\bar{h})}$, if in the expansion \eqref{eq:genveropexphom} the index runs over $m\in\Z-h,~n\in\Z-\bar{h}$, then assuming that $z,\bar{z}$ are independent complex variables, we can use Cauchy's residue theorem to write
\begin{equation}\label{eq:modesveropint}
    x_{r,s}(v)=\frac{1}{(2\pi i)^2}\oint\oint dzd\bar{z}~Y_V(v,z,\Bar{z})z^{r+h-1}\bar{z}^{s+\bar{h}-1}~,
\end{equation}
where the contour of the integration is a circle around $z=0$ and $\bar{z}=0$ respectively.
\end{remark}
\begin{remark}
    The creation property implies also the \textit{injectivity} condition, i.e. 
    \begin{equation}\label{eq:injectivity}
        Y_V(v, x, \bar{x}) = 0 \text{ implies } v = 0, \text{ for } v \in V \,  . 
    \end{equation}
\end{remark}
\subsection{Some consequences of the definition}
We now prove some consequences of the definition. These are well-known for VOAs, see \cite{Frenkel:1988xz,Frenkel:1993xz}.
\begin{lemma}\label{lemma:expL-1}
For any $v\in V$, we have  
\begin{equation}\label{eq:expL-1}
   Y_{V}(v,x,\bar{x})\mathbf{1}= \mathrm{e}^{\bar{x} \, \bar{L}(-1)}\mathrm{e}^{x \, L(-1)}v \, .
\end{equation}

\begin{proof}
We use the translation property \ref{item:transprop} and Taylor's theorem \eqref{eq:TaylorTHM}. For another formal variable $x_0,\bar{x}_0$, Taylor's theorem gives 
\begin{equation}
    Y_{V}\left(\mathrm{e}^{xL(-1)}\mathrm{e}^{\bar{x}\bar{L}(-1)}v,x_0,\bar{x}_0\right)=\mathrm{e}^{x\frac{d}{dx}}\mathrm{e}^{\bar{x}\frac{d}{d\bar{x}}}Y_{V}(v,x_0,\bar{x}_0)=Y_{V}(v,x+x_0,\bar{x}+\Bar{x}_0) \, .
\end{equation}
Now applying this operator on $\mathbf{1}$, taking limit $x_0, \bar{x}_0 \to 0$ we get 
\begin{equation}
    \lim_{x_0,\bar{x}_0\to 0}Y_{V}\left(\mathrm{e}^{xL(-1)}\mathrm{e}^{\bar{x}\bar{L}(-1)}v,x_0,\bar{x}_0\right)\mathbf{1}=\lim_{x_0,\bar{x}_0\to 0}Y_{V}(v,x+x_0,\bar{x}+\Bar{x}_0)\mathbf{1} \, ,
\end{equation}
and then using \eqref{eq:creatprop} we obtain \eqref{eq:expL-1}.
\end{proof}
\end{lemma}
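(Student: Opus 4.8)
The plan is to obtain \eqref{eq:expL-1} from the translation property \ref{item:transprop}, the formal Taylor theorem \eqref{eq:TaylorTHM}, and the creation property \ref{item:creatprop} --- the same three ingredients that prove the analogous statement for Euclidean lattice VOAs.

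First I would iterate the translation property. Since $[L(-1),Y_V(u,x_0,\bar x_0)] = Y_V(L(-1)u,x_0,\bar x_0) = \tfrac{\partial}{\partial x_0}Y_V(u,x_0,\bar x_0)$, and similarly with $\bar L(-1)$ and $\tfrac{\partial}{\partial\bar x_0}$, repeated application gives $Y_V(L(-1)^j\bar L(-1)^k u,x_0,\bar x_0) = \tfrac{\partial^j}{\partial x_0^j}\tfrac{\partial^k}{\partial\bar x_0^k}Y_V(u,x_0,\bar x_0)$, where we use $[L(-1),\bar L(-1)]=0$ so that the chiral and anti-chiral pieces commute. Summing against two further formal variables $x,\bar x$ and invoking the (two-variable form of the) formal Taylor theorem \eqref{eq:TaylorTHM} yields
\[
Y_V\!\bigl(\mathrm e^{xL(-1)}\mathrm e^{\bar x\bar L(-1)}v,\,x_0,\bar x_0\bigr) \;=\; \mathrm e^{x\frac{\partial}{\partial x_0}}\mathrm e^{\bar x\frac{\partial}{\partial\bar x_0}}Y_V(v,x_0,\bar x_0) \;=\; Y_V(v,\,x+x_0,\;\bar x+\bar x_0),
\]
with $x+x_0$, $\bar x+\bar x_0$ expanded in nonnegative powers of $x_0,\bar x_0$ as in \eqref{eq:binexp}.

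Next I would apply both sides to the vacuum $\mathbf 1$ and pass to the limit $x_0,\bar x_0\to0$. On the right-hand side, the creation property \ref{item:creatprop} tells us that $Y_V(v,x_0,\bar x_0)\mathbf 1$ already lies in $V[[x_0,\bar x_0]]$, so $Y_V(v,x+x_0,\bar x+\bar x_0)\mathbf 1$ is a well-defined formal series and setting $x_0=\bar x_0=0$ leaves exactly $Y_V(v,x,\bar x)\mathbf 1$. On the left-hand side, applying the creation property to the vector $\mathrm e^{xL(-1)}\mathrm e^{\bar x\bar L(-1)}v$ (read off coefficientwise in the formal parameters $x,\bar x$) gives $\mathrm e^{xL(-1)}\mathrm e^{\bar x\bar L(-1)}v$. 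Equating the two limits and then reordering the commuting exponentials produces \eqref{eq:expL-1} in the stated form $\mathrm e^{\bar x\bar L(-1)}\mathrm e^{xL(-1)}v$.

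The binomial and Taylor bookkeeping is routine once one checks that every series being specialized at $x_0=\bar x_0=0$ has only nonnegative powers of those variables, which is guaranteed by the creation property. The only point I expect to need explicit care is the use of the creation property for the $x,\bar x$-dependent vector $\mathrm e^{xL(-1)}\mathrm e^{\bar x\bar L(-1)}v = \sum_{j,k\ge0}\tfrac{x^j\bar x^k}{j!\,k!}L(-1)^j\bar L(-1)^k v$ rather than for a single element of $V$: one should note that each coefficient $\tfrac{1}{j!\,k!}L(-1)^j\bar L(-1)^k v$ is a genuine (finite) vector of $V$, that $Y_V$ and $\lim_{x_0,\bar x_0\to0}$ act coefficientwise in $x,\bar x$, and hence that \ref{item:creatprop} applies term by term; no convergence issue arises because the sum over $(j,k)$ is a formal power series, not an actual infinite sum in $\overline V$.
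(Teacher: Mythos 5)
Your proposal is correct and follows essentially the same route as the paper: iterate the translation property to get $Y_V(\mathrm e^{xL(-1)}\mathrm e^{\bar x\bar L(-1)}v,x_0,\bar x_0)=Y_V(v,x+x_0,\bar x+\bar x_0)$ via the formal Taylor theorem, then apply to $\mathbf 1$, let $x_0,\bar x_0\to 0$, and invoke the creation property on both sides. Your extra remark that the creation property is applied coefficientwise in $x,\bar x$ (each coefficient $\tfrac{1}{j!k!}L(-1)^j\bar L(-1)^k v$ being a genuine vector of $V$) is a sensible clarification of a point the paper leaves implicit, but it is the same argument.
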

\begin{lemma}\label{lemma:transprop}
For any $v\in V$ we have
    \begin{equation}
    {\rm e}^{x_2L(-1)}{\rm e}^{\bar{x}_2\bar{L}(-1)}Y_V(v,x_1,\bar{x}_1){\rm e}^{-x_2L(-1)}{\rm e}^{-\bar{x}_2\bar{L}(-1)}=Y_V(v,x_1+x_2,\bar{x}_1+\bar{x_2}) \, .
\end{equation}
\begin{proof}
Using the BCH formula \eqref{eq:bch} and translation property \ref{item:transprop}, we have 
\begin{equation}
\begin{split}
{\rm e}^{x_2L(-1)}Y_V(v,x_1,\bar{x}_1){\rm e}^{-x_2L(-1)}& = \sum_{n=0}^{\infty}\frac{\left[(x_2 \,  L(-1))^n , Y_V(v,x_1,\bar{x}_1)\right]}{n!}\, , \\&=\sum_{n=0}^{\infty}\frac{1}{n!}x_2^n\frac{\partial^n}{\partial x_1^n}Y_V(v,x_1,\bar{x}_1)\, , \\&={\rm e}^{x_2\frac{\partial}{\partial x_1}}Y_V(v,x_1,\bar{x}_1) \, , \\&=Y_V(v,x_1+x_2,\bar{x}_1) \, , 
\end{split}
\end{equation}
where in the last step we used Taylor's theorem \eqref{eq:TaylorTHM}. Similarly we have 
\begin{equation}
{\rm e}^{\bar{x}_2\bar{L}(-1)}Y_V(v,x_1,\bar{x}_1){\rm e}^{-\bar{x}_2\bar{L}(-1)}=Y_V(v,x_1,\bar{x}_1+\bar{x}_2).    
\end{equation}
Since $L(-1)$ and $\bar{L}(-1)$ commute, the result follows.
\end{proof}
\end{lemma}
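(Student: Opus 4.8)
The plan is to turn conjugation by $\mathrm{e}^{x_2 L(-1)}$ into the formal Taylor shift $x_1 \mapsto x_1 + x_2$, using the translation property together with the Hadamard (BCH) expansion $\mathrm{e}^{A} B \mathrm{e}^{-A} = \sum_{n \ge 0} \tfrac{1}{n!}(\mathrm{ad}_A)^n(B)$, where $\mathrm{ad}_A(B) = [A,B]$. First I would apply this with $A = x_2 L(-1)$ and $B = Y_V(v,x_1,\bar{x}_1)$. The key input is the translation property \ref{item:transprop}, which gives $[L(-1), Y_V(v,x_1,\bar{x}_1)] = \tfrac{\partial}{\partial x_1} Y_V(v,x_1,\bar{x}_1)$; since $\tfrac{\partial}{\partial x_1}$ commutes with $[L(-1), \cdot\,]$, an easy induction shows $(\mathrm{ad}_{L(-1)})^n Y_V(v,x_1,\bar{x}_1) = \tfrac{\partial^n}{\partial x_1^n} Y_V(v,x_1,\bar{x}_1)$. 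Feeding this into the Hadamard expansion produces $\mathrm{e}^{x_2 L(-1)} Y_V(v,x_1,\bar{x}_1)\mathrm{e}^{-x_2 L(-1)} = \mathrm{e}^{x_2 \partial/\partial x_1} Y_V(v,x_1,\bar{x}_1)$, and then the formal Taylor theorem \eqref{eq:TaylorTHM} identifies the right-hand side with $Y_V(v,x_1+x_2,\bar{x}_1)$, where $x_1 + x_2$ is expanded in nonnegative powers of $x_2$ as in \eqref{eq:binexp}.

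Next I would run the identical argument with $\bar{L}(-1)$, $\bar{x}_2$ in place of $L(-1)$, $x_2$, using the antiholomorphic half of the translation property, to obtain $\mathrm{e}^{\bar{x}_2 \bar{L}(-1)} Y_V(v,x_1,\bar{x}_1)\mathrm{e}^{-\bar{x}_2 \bar{L}(-1)} = Y_V(v,x_1,\bar{x}_1+\bar{x}_2)$. Finally, since the Virasoro property \ref{item:virprop} includes $[L(m),\bar{L}(n)] = 0$, in particular $L(-1)$ and $\bar{L}(-1)$ commute, so the operators $\mathrm{e}^{x_2 L(-1)}$ and $\mathrm{e}^{\bar{x}_2 \bar{L}(-1)}$ commute and the two conjugations may be composed independently; chaining them gives exactly $Y_V(v,x_1+x_2,\bar{x}_1+\bar{x}_2)$, which is the claim. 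One could alternatively obtain the combined statement in one shot by applying the Hadamard expansion to $A = x_2 L(-1) + \bar{x}_2 \bar{L}(-1)$, again using commutativity.

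The step that needs the most care — and the only real obstacle — is justifying that all of these are honest identities in $\mathrm{End}(V)\{x_1^{\pm 1}, \bar{x}_1^{\pm 1}\}[[x_2,\bar{x}_2]]$ rather than merely formal shuffles. Concretely: the Hadamard series converges coefficientwise because, on a fixed homogeneous vector $v \in V_{(h,\bar{h})}$ and into a fixed graded piece of $V$, the operator $L(-1)$ shifts the $L(0)$-grading by $+1$, so only finitely many $n$ contribute to any given coefficient of $x_2$; and the substitution $x_1 \mapsto x_1 + x_2$ must be read as the expansion \eqref{eq:binexp} in nonnegative powers of $x_2$, which is precisely the convention under which \eqref{eq:TaylorTHM} holds. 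This is the same bookkeeping already used in the proof of Lemma \ref{lemma:expL-1}, so once that is in place the present lemma is a routine elaboration.
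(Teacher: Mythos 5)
Your proposal is correct and follows essentially the same route as the paper: the Hadamard/BCH expansion combined with the translation property turns conjugation by ${\rm e}^{x_2 L(-1)}$ into ${\rm e}^{x_2\partial/\partial x_1}$, the formal Taylor theorem \eqref{eq:TaylorTHM} gives the shift, the antiholomorphic half is identical, and commutativity of $L(-1)$ with $\bar{L}(-1)$ lets the two conjugations compose. The extra remarks on coefficientwise convergence and reading $x_1+x_2$ via \eqref{eq:binexp} are sound bookkeeping that the paper leaves implicit.
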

We now prove \textit{skew-symmetry} which will be useful in proving the \textit{duality} of vertex operators. 
\begin{lemma}\label{lemma:sksym}
For any $u,v\in V$, we have 
\begin{equation}
Y_V(u,z,\bar{z})v=e^{zL(-1)}e^{\Bar{z}\bar{L}(-1)}Y_V(v,-z,-\bar{z})u.    
\end{equation}
\begin{proof}
Using Lemma locality property \ref{item:locprop}, \ref{lemma:expL-1} and Lemma \ref{lemma:transprop}, we have 
\begin{equation}
\begin{split}
Y_V(u,z,\bar{z})Y_V(v,z',\bar{z}')\textbf{1}&\sim Y_V(v,z',\bar{z}')Y_V(u,z,\bar{z})\textbf{1}\\&=Y_V(v,z',\bar{z}')\mathrm{e}^{\bar{z} \, \bar{L}(-1)}\mathrm{e}^{z \, L(-1)}u\\&=  \mathrm{e}^{\bar{z} \, \bar{L}(-1)}\mathrm{e}^{z \, L(-1)}Y_V(v,z'-z,\bar{z}'-\bar{z})u~.
\end{split}    
\end{equation}
Now taking $z',\Bar{z}'\to 0$ and using the creation property, we obtain the required result.
\end{proof}
\end{lemma}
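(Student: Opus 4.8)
The plan is to exploit the same trick used repeatedly in the preceding lemmas: evaluate a product of two vertex operators on the vacuum $\mathbf{1}$, use locality to reorder, and then take a limit. First I would start from the product $Y_V(u,z,\bar z)Y_V(v,z',\bar z')\mathbf{1}$, valid as an expansion of $m_2$ in the domain $|z|>|z'|>0$ (with $\bar z,\bar z'$ the complex conjugates). By the locality property \ref{item:locprop}, this is the expansion of a single function $m_2(u,v;z,\bar z,z',\bar z')$ which, in the domain $|z'|>|z|>0$, equals $Y_V(v,z',\bar z')Y_V(u,z,\bar z)\mathbf{1}$; I will write $\sim$ for this equality of analytic continuations as in the statement of Lemma \ref{lemma:sksym}. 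Next I would apply Lemma \ref{lemma:expL-1} to rewrite $Y_V(u,z,\bar z)\mathbf{1}=\mathrm{e}^{\bar z\,\bar L(-1)}\mathrm{e}^{z\,L(-1)}u$, and then Lemma \ref{lemma:transprop} to pull the translation operators $\mathrm{e}^{\bar z\,\bar L(-1)}\mathrm{e}^{z\,L(-1)}$ out to the left past $Y_V(v,z',\bar z')$, which shifts its arguments: $Y_V(v,z',\bar z')\mathrm{e}^{\bar z\,\bar L(-1)}\mathrm{e}^{z\,L(-1)}u = \mathrm{e}^{\bar z\,\bar L(-1)}\mathrm{e}^{z\,L(-1)}Y_V(v,z'-z,\bar z'-\bar z)u$. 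This is exactly the chain of equalities already displayed in the excerpt.

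The final step is to send $z',\bar z'\to 0$. On the left side, by the creation property \ref{item:creatprop}, $Y_V(v,z',\bar z')\mathbf{1}\to v$, so $Y_V(u,z,\bar z)Y_V(v,z',\bar z')\mathbf{1}\to Y_V(u,z,\bar z)v$; on the right side the limit gives $\mathrm{e}^{\bar z\,\bar L(-1)}\mathrm{e}^{z\,L(-1)}Y_V(v,-z,-\bar z)u$. Equating the two limits yields
\begin{equation}
Y_V(u,z,\bar z)v=\mathrm{e}^{zL(-1)}\mathrm{e}^{\bar z\bar L(-1)}Y_V(v,-z,-\bar z)u,
\end{equation}
where I have used that $L(-1)$ and $\bar L(-1)$ commute to write the two exponentials in either order. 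This matches the claimed identity.

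The main point requiring care — the genuine obstacle — is justifying the $z',\bar z'\to 0$ limit at the level of the analytically continued function $m_2$ rather than merely at the level of one of its domain-expansions: locality only tells us the two operator products are expansions of $m_2$ in disjoint domains $|z|>|z'|$ and $|z'|>|z|$, neither of which contains $z'=0$, so one must argue that $m_2(u,v;z,\bar z,z',\bar z')$ extends continuously (indeed analytically, by the grading-restriction property \ref{item:gradres} which bounds the negative powers of $z'$ that can appear) to $z'=0$, and that its value there is computed by either expansion. Concretely, the creation property forces $Y_V(v,z',\bar z')\mathbf{1}$ to involve only non-negative powers of $z',\bar z'$, so the relevant matrix elements are already holomorphic at $z'=0$ and the limit is unambiguous; once this is granted, everything else is the routine bookkeeping of pushing exponentials around via Lemma \ref{lemma:transprop}. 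I would state this continuity-at-the-boundary point explicitly but not belabor it, since it is the standard mechanism by which creation plus locality yield skew-symmetry in the VOA setting.
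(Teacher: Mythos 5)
Your proposal is correct and follows essentially the same route as the paper's proof: evaluate $Y_V(u,z,\bar z)Y_V(v,z',\bar z')\mathbf{1}$, reorder via locality, rewrite the vacuum action with Lemma \ref{lemma:expL-1}, pull the exponentials through with Lemma \ref{lemma:transprop}, and take $z',\bar z'\to 0$ using the creation property. Your additional remark on justifying the limit at the level of the analytically continued function $m_2$ is a sensible point of care that the paper leaves implicit, but it does not change the argument.
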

The following proposition shows the uniqueness of vertex operators. The proof is on the lines of \cite{Dolan:1994st}.
\begin{prop}\label{prop:Goddard-Uniqueness}
Let $U:V\longrightarrow V\{x,\Bar{x}\}$ be a linear operator which is local with respect to every other vertex operator, in the sense of Property \ref{item:locprop}, and satisfies 
\begin{equation}
U(x,\bar{x})\mathbf{1}= \mathrm{e}^{\bar{x} \, \bar{L}(-1)}\mathrm{e}^{x \, L(-1)}v,   
\end{equation}
for some $v\in V$, then 
\begin{equation}
    U(z,\bar{z})=Y_{V}(v,z,\bar{z}) \, ,
\end{equation}
for a non-zero complex number $z$. 
\begin{proof}
For any $w\in V$, from Lemma \ref{lemma:expL-1}  and locality property \ref{item:locprop} we have 
\begin{equation}
\begin{split}      U(z_1,\bar{z}_1)\mathrm{e}^{\bar{z}_2 \, \bar{L}(-1)}\mathrm{e}^{z_2 \, L(-1)}w&=U(z_1,\bar{z}_1)Y_{V}(w,z_2,\bar{z}_2)\mathbf{1}\, , \\&\sim Y_{V}(w,z_2,\bar{z}_2)U(z_1,\bar{z}_1)\mathbf{1}\, , \\&=Y_{V}(w,z_2,\bar{z}_2)\mathrm{e}^{\bar{z}_1 \, \bar{L}(-1)}\mathrm{e}^{z_1 \, L(-1)}v \, , \\&=Y_{V}(w,z_2,\bar{z}_2)Y_{V}(v,z_1,\bar{z}_1)\mathbf{1}\, ,  \\&\sim Y_{V}(v,z_1,\bar{z}_1)Y_{V}(w,z_2,\bar{z}_2)\mathbf{1}~,
\end{split}
\end{equation}
where $\sim$ indicates equality up to analytic extension in the sense of Property \ref{item:locprop}.  
Now taking $z_2,\bar{z}_2\to 0$ we obtain, 
\begin{equation}\label{eq:uniqueness_DGM}
U(z_1,\bar{z}_1)w =Y_{V}(v,z_1,\bar{z}_1)w.  
\end{equation}
As the two operators in \eqref{eq:uniqueness_DGM} are equal for all $w \in V$, they are equal as operators.
\end{proof}
\end{prop}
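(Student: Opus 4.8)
The plan is to adapt the Goddard uniqueness argument to the non-chiral setting, using the locality property \ref{item:locprop} in place of the Jacobi identity and the creation property \ref{item:creatprop} to strip off the vacuum at the end. First I would fix an arbitrary $w\in V$ and examine the action of $U(z_1,\bar{z}_1)$ on the state $\mathrm{e}^{\bar{z}_2\bar{L}(-1)}\mathrm{e}^{z_2 L(-1)}w$, which by Lemma \ref{lemma:expL-1} is precisely $Y_V(w,z_2,\bar{z}_2)\mathbf{1}$. So I am looking at $U(z_1,\bar{z}_1)Y_V(w,z_2,\bar{z}_2)\mathbf{1}$, which is the expansion of a well-defined function in the domain $|z_1|>|z_2|>0$.

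Next I would invoke the hypothesis that $U$ is local with respect to $Y_V(w,\cdot)$ in the sense of property \ref{item:locprop}: there is a single function whose expansion in $|z_1|>|z_2|$ is $U(z_1,\bar{z}_1)Y_V(w,z_2,\bar{z}_2)$ and whose expansion in $|z_2|>|z_1|$ is $Y_V(w,z_2,\bar{z}_2)U(z_1,\bar{z}_1)$. Applying the latter to $\mathbf{1}$ and using the given formula $U(z_1,\bar{z}_1)\mathbf{1}=\mathrm{e}^{\bar{z}_1\bar{L}(-1)}\mathrm{e}^{z_1 L(-1)}v$ together with Lemma \ref{lemma:expL-1} again, this becomes $Y_V(w,z_2,\bar{z}_2)Y_V(v,z_1,\bar{z}_1)\mathbf{1}$. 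Now I use the locality of the genuine vertex operators $Y_V(v,\cdot)$ and $Y_V(w,\cdot)$ to identify this, in turn, with the expansion in $|z_1|>|z_2|$ of a common function that there reads $Y_V(v,z_1,\bar{z}_1)Y_V(w,z_2,\bar{z}_2)\mathbf{1}$. Chaining the analytic continuations, $U(z_1,\bar{z}_1)Y_V(w,z_2,\bar{z}_2)\mathbf{1}$ and $Y_V(v,z_1,\bar{z}_1)Y_V(w,z_2,\bar{z}_2)\mathbf{1}$ are expansions of the same function of $z_1,\bar{z}_1,z_2,\bar{z}_2$.

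Finally I would take the limit $z_2,\bar{z}_2\to 0$. By the creation property \ref{item:creatprop}, $Y_V(w,z_2,\bar{z}_2)\mathbf{1}\to w$, so the left side limits to $U(z_1,\bar{z}_1)w$ and the right side to $Y_V(v,z_1,\bar{z}_1)w$, giving $U(z_1,\bar{z}_1)w=Y_V(v,z_1,\bar{z}_1)w$ for every $w\in V$, hence $U(z_1,\bar{z}_1)=Y_V(v,z_1,\bar{z}_1)$ as operators for nonzero $z_1$.

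The main obstacle is making the limit $z_2,\bar{z}_2\to 0$ rigorous: one must check that this limit may be taken inside the common function — equivalently, that after pairing with the vacuum and acting on $w$ the series in $z_2,\bar{z}_2$ involves only non-negative integral powers, which is exactly what the creation property and the lower-truncation part of the grading-restriction property \ref{item:gradres} guarantee. One must also verify that the chain of ``$\sim$'' identifications genuinely lands on the same single-valued (or multi-valued, in the independent-variable regime) function rather than a merely formal term-by-term equality, i.e.\ that the three invocations of locality are compatible on overlapping domains. A secondary subtlety is confirming that ``$U$ is local with respect to every vertex operator'' supplies precisely the two-sided analytic-continuation statement used in the second step, in the same sense as property \ref{item:locprop}; everything else is bookkeeping with $\mathrm{e}^{xL(-1)}$ and $\mathrm{e}^{\bar{x}\bar{L}(-1)}$.
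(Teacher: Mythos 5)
Your proposal follows exactly the paper's argument: act on $\mathrm{e}^{\bar{z}_2\bar{L}(-1)}\mathrm{e}^{z_2L(-1)}w=Y_V(w,z_2,\bar{z}_2)\mathbf{1}$, use the assumed locality of $U$ to pass it across $Y_V(w,z_2,\bar{z}_2)$, apply the hypothesis on $U(z_1,\bar{z}_1)\mathbf{1}$ together with Lemma \ref{lemma:expL-1}, use locality of the genuine vertex operators to reorder, and finally take $z_2,\bar{z}_2\to 0$ via the creation property. This is correct and essentially identical to the proof in the paper.
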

We now prove the \textit{duality} of vertex operators. For VOA, this is proved in \cite{Dolan:1994st,Huang:2015eda}. 
\begin{prop}\label{prop:duality}
For any $v,w\in V$ we have 
\begin{equation}
    Y_V(v,z_1,\bar{z}_1)Y_V(w,z_2,\Bar{z}_2)=Y_V(Y_V(v,z_1-z_2,\bar{z}_1-\bar{z}_2)w,z_2,\bar{z}_2) \, ,
\end{equation}
in the domain $|z_1|>|z_2|>|z_1-z_2|>0$,
where the RHS is defined by
\begin{equation}\label{eq:opeexp}
Y_V(Y_V(v,z_1-z_2,\bar{z}_1-\bar{z}_2)w,z_2,\bar{z}_2)=\sum_{\substack{m,n\in\R\\(m-n)\in\Z}}Y_V(v_{m,n}\cdot w,z_2,\bar{z}_2)(z_1-z_2)^{-m-1}(\bar{z}_1-\bar{z}_2)^{-n-1}   \, . 
\end{equation}
\begin{proof}
The proof is on the lines of \cite[Page 23]{Huang:2015eda}. 
For any $u\in V,$ we have 
\begin{equation}
\begin{split}     Y_{V}(v,z_1,\bar{z}_1)&Y_{V}(w,z_2,\bar{z}_2)\mathrm{e}^{\bar{z}_3 \, \bar{L}(-1)}\mathrm{e}^{z_3 \, L(-1)}u\\&=Y_{V}(v,z_1,\bar{z}_1)Y_{V}(w,z_2,\bar{z}_2)Y_{V}(u,z_3,\bar{z}_3)\mathbf{1} \, , \\&\sim Y_{V}(u,z_3,\bar{z}_3)Y_{V}(v,z_1,\bar{z}_1)Y_{V}(w,z_2,\bar{z}_2)\mathbf{1}\, , \\&=Y_{V}(u,z_3,\bar{z}_3)Y_{V}(v,z_1,\bar{z}_1)\mathrm{e}^{\bar{z}_2 \, \bar{L}(-1)}\mathrm{e}^{z_2 \, L(-1)}w \, , \\&=Y_{V}(u,z_3,\bar{z}_3)\mathrm{e}^{\bar{z}_2 \, \bar{L}(-1)}\mathrm{e}^{z_2 \, L(-1)}Y_{V}(v,z_1-z_2,\bar{z}_1-\Bar{z}_2)w \, , \\&=Y_{V}(u,z_3,\bar{z}_3)Y_{V}\left(Y_{V}(v,z_1-z_2,\bar{z}_1-\Bar{z}_2)w,z_2,\bar{z}_2\right)\mathbf{1} \, ,  \\&\sim Y_{V}\left(Y_{V}(v,z_1-z_2,\bar{z}_1-\Bar{z}_2)w,z_2,\bar{z}_2\right)Y_{V}(u,z_3,\bar{z}_3)\mathbf{1},
\end{split}
\end{equation}
where we used Lemma \ref{lemma:expL-1}, Lemma \ref{lemma:transprop}, and Locality property \ref{item:locprop}. Now taking $z_3,\Bar{z}_3\to 0$ and using Proposition \ref{prop:Goddard-Uniqueness}, we obtain the duality relation. Note that the sum on the RHS of \eqref{eq:opeexp} converges. Indeed for any $u\in V$, using skew-symmetry \footnote{We thank Yi-Zhi Huang for clarification on this point.}\\ (Lemma \ref{lemma:sksym}) we have 
\begin{equation}
    \begin{split}
        Y_V(Y_V(v,&z_1-z_2,\bar{z}_1-\bar{z}_2)w,z_2,\bar{z}_2)u\\&=\sum_{\substack{m,n\in\R\\(m-n)\in\Z}}Y_V(v_{m,n}\cdot w,z_2,\bar{z}_2)(z_1-z_2)^{-m-1}(\bar{z}_1-\bar{z}_2)^{-n-1}u \\&=\sum_{\substack{m,n\in\R\\(m-n)\in\Z}}\mathrm{e}^{\bar{z}_2 \, \bar{L}(-1)}\mathrm{e}^{z_2 \, L(-1)}Y_V(u,-z_2,-\bar{z}_2)v_{m,n}\cdot w(z_1-z_2)^{-m-1}(\bar{z}_1-\bar{z}_2)^{-n-1} \\&=\mathrm{e}^{\bar{z}_2 \, \bar{L}(-1)}\mathrm{e}^{z_2 \, L(-1)}Y_V(u,-z_2,-\bar{z}_2)\sum_{\substack{m,n\in\R\\(m-n)\in\Z}}v_{m,n}\cdot w(z_1-z_2)^{-m-1}(\bar{z}_1-\bar{z}_2)^{-n-1}\\&=\mathrm{e}^{\bar{z}_2 \, \bar{L}(-1)}\mathrm{e}^{z_2 \, L(-1)}Y_V(v,-z_2,-\bar{z}_2)Y_V(u,z_1-z_2,\bar{z}_1-\bar{z}_2)w~.
    \end{split}
\end{equation}
Since the RHS of the last line is well defined in $|z_2|>|z_1-z_2|$, the operator $Y_V(Y_V(v,z_1-z_2,\bar{z}_1-\bar{z}_2)w,z_2,\bar{z}_2)$ is well defined in $|z_2|>|z_1-z_2|$. 
\end{proof}
\end{prop}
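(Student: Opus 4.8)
The plan is to run the same creation, locality, and uniqueness argument that produced skew-symmetry (Lemma~\ref{lemma:sksym}), now with three vertex operators rather than two. The guiding observations are that inserting a third operator $Y_V(u,z_3,\bar{z}_3)$ acting on the vacuum turns an arbitrary vector into $e^{\bar{z}_3\bar{L}(-1)}e^{z_3L(-1)}u$ (Lemma~\ref{lemma:expL-1} together with the creation property~\ref{item:creatprop}), and that a shift $z_1\mapsto z_1-z_2$ of the argument of a vertex operator is exactly conjugation by $e^{z_2L(-1)}e^{\bar{z}_2\bar{L}(-1)}$ (Lemma~\ref{lemma:transprop}); the locality axiom~\ref{item:locprop} then lets us shuffle the three operators into whichever order makes these two facts applicable.

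Concretely, I would apply both sides of the asserted identity to $Y_V(u,z_3,\bar{z}_3)\mathbf{1}$ for arbitrary $u\in V$. On the left, three-fold locality rewrites $Y_V(v,z_1,\bar{z}_1)Y_V(w,z_2,\bar{z}_2)Y_V(u,z_3,\bar{z}_3)\mathbf{1}$, up to analytic continuation, as $Y_V(u,z_3,\bar{z}_3)Y_V(v,z_1,\bar{z}_1)Y_V(w,z_2,\bar{z}_2)\mathbf{1}$; next $Y_V(w,z_2,\bar{z}_2)\mathbf{1}=e^{\bar{z}_2\bar{L}(-1)}e^{z_2L(-1)}w$ by Lemma~\ref{lemma:expL-1}, and pulling these exponentials to the left through $Y_V(v,z_1,\bar{z}_1)$ via Lemma~\ref{lemma:transprop} produces $e^{\bar{z}_2\bar{L}(-1)}e^{z_2L(-1)}Y_V(v,z_1-z_2,\bar{z}_1-\bar{z}_2)w$, which by Lemma~\ref{lemma:expL-1} again equals $Y_V\big(Y_V(v,z_1-z_2,\bar{z}_1-\bar{z}_2)w,z_2,\bar{z}_2\big)\mathbf{1}$. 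Reordering once more by locality brings $Y_V(u,z_3,\bar{z}_3)$ back to the outside of this expression, so that both sides, evaluated on $Y_V(u,z_3,\bar{z}_3)\mathbf{1}$, are expansions of one and the same multivalued function; letting $z_3,\bar{z}_3\to 0$ (legitimate by the creation property and justified via Proposition~\ref{prop:Goddard-Uniqueness}) strips off $u$ and yields the operator identity in the domain $|z_1|>|z_2|>|z_1-z_2|>0$.

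Two points demand care. First, the double sum defining the right-hand side in \eqref{eq:opeexp} runs over $m,n\in\R$ and is not manifestly convergent; I would prove convergence, and simultaneously identify the domain $|z_2|>|z_1-z_2|$, by a skew-symmetry detour (Lemma~\ref{lemma:sksym}): rewrite $Y_V\big(Y_V(v,z_1-z_2,\bar{z}_1-\bar{z}_2)w,z_2,\bar{z}_2\big)u$ as $e^{\bar{z}_2\bar{L}(-1)}e^{z_2L(-1)}Y_V(v,-z_2,-\bar{z}_2)Y_V(u,z_1-z_2,\bar{z}_1-\bar{z}_2)w$, which is a manifestly well-defined, convergent object whenever $|z_2|>|z_1-z_2|$. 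Second, the repeated ``$\sim$'' steps must be arranged so that the reordered three-operator expression really is the expansion, in a definite domain, of the common correlation function $m_3(v,w,u;z_1,\bar{z}_1,z_2,\bar{z}_2,z_3,\bar{z}_3)$ supplied by the locality axiom~\ref{item:locprop}, and so that this function is regular at $z_3=\bar{z}_3=0$, so that the limit may indeed be taken; this is precisely where the uniqueness statement of Proposition~\ref{prop:Goddard-Uniqueness} does the real work. I expect this analytic-continuation bookkeeping --- keeping the multivalued dependence on $(z_i,\bar{z}_i)$ consistent with the formal-calculus expansion conventions of \eqref{eq:logz1-z2} while passing to the $z_3\to 0$ limit --- to be the main obstacle; once it is in place, the algebra is a routine chain of applications of the three lemmas.
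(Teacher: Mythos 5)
Your proposal follows essentially the same route as the paper's own proof: applying both sides to $Y_V(u,z_3,\bar{z}_3)\mathbf{1}$, commuting with locality, using Lemma~\ref{lemma:expL-1} and Lemma~\ref{lemma:transprop} to produce the shifted operator, taking $z_3,\bar{z}_3\to 0$ via Proposition~\ref{prop:Goddard-Uniqueness}, and justifying convergence of \eqref{eq:opeexp} in $|z_2|>|z_1-z_2|$ through skew-symmetry (Lemma~\ref{lemma:sksym}). It is correct and matches the paper step for step.
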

Proposition \ref{prop:duality} shows that a product of two vertex operators can be written as a sum of single vertex operator:
\begin{equation}
Y_V(v,z_1,\bar{z}_1)Y_V(w,z_2,\Bar{z}_2)= \sum_{\substack{m,n\in\R\\(m-n)\in\Z}}Y_V(v_{m,n}\cdot w,z_2,\bar{z}_2)(z_1-z_2)^{-m-1}(\bar{z}_1-\bar{z}_2)^{-n-1}.   
\end{equation}
In physics, we usually ignore the non-singular terms in the expansion above and call it the \textit{operator product expansion}. 
\begin{remark}
The sum in the operator product expansion has finitely many terms with negative powers of $(z_1-z_2)$ and $(\bar{z}_1-\bar{z}_2)$. To see this, we first expand the vertex operator $Y_V(v,x,\bar{x})$ for $v\in V_{(h,\bar{h})}$ as
\begin{equation}
Y_V(v,x,\bar{x})=\sum_{\substack{m,n\in\R\\(m-n)\in\Z}}x_{m,n}(v)x^{-m-h}\bar{x}^{-n-\bar{h}},    
\end{equation}
Since 
\begin{equation}
    \text{wt}~x_{m,n}(v)=-m,\quad \overline{\text{wt}}~x_{m,n}(v)=-n,
\end{equation}
for $w\in V_{(h',\Bar{h}')}$ we have 
\begin{equation}
    x_{m,n}(v)\cdot w\in V_{(h'-m,\Bar{h}'-n)}.
\end{equation}
Due to the grading-restriction property \ref{item:gradres}, there exists $M\in\mathbb{Z}$ such that 
\begin{equation}
    x_{m,n}(v)\cdot w=0,\quad m,n>M.
\end{equation}
Thus the operator product expansion is upper truncated.
\end{remark}
\begin{prop}
The operator product expansion of the conformal vertex operator  $T(x)$ with itself is given by
\begin{equation}
 \begin{split}   &T(x_1)T(x_2)=\frac{c}{2}\frac{1}{(x_1-x_2)^4}+\frac{2 \, T(x_2)}{(x_1-x_2)^2}+\frac{1}{(x_1-x_2)}\frac{\partial}{\partial x_2}T(x_2)+G_1(x_1,x_2)\\&\Bar{T}(\bar{x}_1)\bar{T}(\bar{x}_2)=\frac{\bar{c}}{2}\frac{1}{(\bar{x}_1-\bar{x}_2)^4}+\frac{2 \, \bar{T}(\bar{x}_2)}{(\bar{x}_1-\bar{x}_2)^2}+\frac{1}{(\bar{x}_1-\bar{x}_2)}\frac{\partial}{\partial \bar{x}_2}\bar{T}(\Bar{x}_2)+G_2(\bar{x}_1,\bar{x}_2)\\&T(x_1)\bar{T}(\bar{x}_2)=G_3(x_1,\bar{x}_2),
    \end{split}
\end{equation}
where $G_1(x_1,x_2),G_3(x_1,x_2)\in\mathrm{End}(V)[[x_2^{\pm 1},(x_1-x_2)]],G_2(\bar{x}_1,\bar{x}_2)\in\mathrm{End}(V)[[\bar{x}_2^{\pm 1},(\bar{x}_1-\bar{x}_2)]].$ 
\begin{proof}
The proof is straightforward using the Virasoro algebra \eqref{eq:viraalg}, see \cite[Chapter 3]{Frenkel:2004jn} for more details.    
\end{proof}
\end{prop}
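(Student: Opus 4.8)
The plan is to derive the three operator product expansions directly from the general duality relation in Proposition \ref{prop:duality}, specialized to $v=w=\omega$, $v=w=\bar\omega$, and $v=\omega$, $w=\bar\omega$ respectively. By Proposition \ref{prop:duality},
\begin{equation}
Y_V(\omega,z_1,\bar z_1)Y_V(\omega,z_2,\bar z_2)=\sum_{\substack{m,n\in\R\\ (m-n)\in\Z}}Y_V(\omega_{m,n}\cdot\omega,z_2,\bar z_2)(z_1-z_2)^{-m-1}(\bar z_1-\bar z_2)^{-n-1},
\end{equation}
so the entire task reduces to computing the vectors $\omega_{m,n}\cdot\omega$. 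The first step is to unpack the index conventions: since $\omega\in V_{(2,0)}$, the Virasoro property \ref{item:virprop} gives $Y_V(\omega,x,\bar x)=\sum_n L(n)x^{-n-2}$ with no $\bar x$-dependence, which means $x_{m,n}(\omega)=0$ unless $n=0$, and $x_{m,0}(\omega)=L(m)$; translating to the $u_{m,n}$ modes via $x_{m,n}(u)=u_{m+h-1,n+\bar h-1}$ gives $\omega_{m,n}=0$ unless $n=-1$, and $\omega_{m,-1}=L(m-1)$.

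The second step is to evaluate $L(m-1)\cdot\omega$ for the relevant values of $m$. Using the Virasoro property together with the identification $\omega = L(-2)\mathbf{1}$ (which follows from the creation property applied to $Y_V(\omega,x,\bar x)\mathbf{1}$), one computes $L(k)\omega$ for $k\ge -1$ by acting $L(k)$ on $L(-2)\mathbf 1$ and commuting through using \eqref{eq:viraalg}, noting $L(j)\mathbf 1=0$ for $j\ge -1$. This yields $L(1)\omega=0$, $L(0)\omega=2\omega$, $L(-1)\omega=\partial$-descendant (it equals the vector whose vertex operator is $\partial_x T$), $L(2)\omega=\frac{c}{2}\mathbf 1$, and $L(k)\omega=0$ for $k\ge 3$. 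Plugging these back: the $(z_1-z_2)^{-4}$ term comes from $\omega_{3,-1}\cdot\omega = L(2)\omega = \tfrac{c}{2}\mathbf 1$ with $Y_V(\mathbf 1,z_2,\bar z_2)=\mathds 1$; the $(z_1-z_2)^{-2}$ term from $\omega_{1,-1}\cdot\omega=L(0)\omega=2\omega$ giving $2T(x_2)$; and the $(z_1-z_2)^{-1}$ term from $\omega_{0,-1}\cdot\omega=L(-1)\omega$, whose vertex operator is $\partial_{x_2}T(x_2)$ by the translation property \ref{item:transprop}. All terms with $m\le -1$ contribute non-negative powers of $(z_1-z_2)$ and are absorbed into $G_1$; since every $\bar x$-power on both sides is forced to be $(\bar z_1-\bar z_2)^{-n-1}$ with $n=-1$, i.e.\ $(\bar z_1-\bar z_2)^0$, there is no anti-holomorphic singularity, consistent with the stated form $G_1\in\mathrm{End}(V)[[x_2^{\pm1},(x_1-x_2)]]$. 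The anti-chiral OPE is identical with bars everywhere, using $\bar\omega\in V_{(0,2)}$ and the $\bar L(n)$ Virasoro algebra. For the mixed OPE $T(x_1)\bar T(\bar x_2)$, the same computation gives terms $Y_V(\omega_{m,n}\cdot\bar\omega,z_2,\bar z_2)$ with $\omega_{m,n}$ again nonzero only for $n=-1$, $\omega_{m,-1}=L(m-1)$, and since $[L(k),\bar L(j)]=0$ one has $L(k)\bar\omega = L(k)\bar L(-2)\mathbf 1 = \bar L(-2)L(k)\mathbf 1 = 0$ for all $k\ge -1$; hence only non-negative powers of $(z_1-z_2)$ appear and everything is regular, giving $T(x_1)\bar T(\bar x_2)=G_3(x_1,\bar x_2)$.

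I do not expect a genuine obstacle here — this is the standard stress-tensor OPE computation, and all the needed machinery (duality, translation property, Virasoro algebra, creation property) is already in place. The only mild care-points are (i) keeping the two sets of mode indices $x_{m,n}$ versus $u_{m,n}$ straight through the shift by conformal weight, and (ii) checking that the domain $|z_1|>|z_2|>|z_1-z_2|>0$ of Proposition \ref{prop:duality} is where these identities are asserted, with the $G_i$ being the genuinely convergent (regular) remainders. Accordingly, the proof as written can simply cite \eqref{eq:viraalg} and \cite[Chapter 3]{Frenkel:2004jn} for the explicit mode evaluations, which is exactly what the one-line proof in the excerpt does.
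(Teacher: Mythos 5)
Your proposal is correct and fills in exactly the standard computation that the paper's one-line proof delegates to the Virasoro algebra \eqref{eq:viraalg} and \cite[Chapter 3]{Frenkel:2004jn}: identifying $\omega_{m,n}$ as $L(m-1)\delta_{n,-1}$, evaluating $L(k)\omega=L(k)L(-2)\mathbf{1}$ (giving $\tfrac{c}{2}\mathbf 1$, $0$, $2\omega$, $L(-1)\omega$ for $k=2,1,0,-1$ and $0$ for $k\ge 3$), and feeding these into the duality expansion of Proposition \ref{prop:duality}, with the mixed case killed by $[L(k),\bar L(j)]=0$. This is the same (standard) approach the authors intend, so no further comparison is needed.
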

Of particular interest are the \textit{chiral} and \textit{anti-chiral} vertex operators. 
\begin{defn}\label{def:chirverop}
A vector $u\in V$ is called a \textit{chiral} (\textit{anti-chiral}) vector if the corresponding vertex operator $Y_V(u,x,\bar{x})$ belongs in End$(V)\{x^{\pm 1} \}$ (End$(V)\{\bar{x}^{\pm 1}\}$) or equivalently only depends on $z$ ($\bar{z} $). Such vertex operators will be called chiral (anti-chiral) vertex operators.    
\end{defn}
\begin{remark}
From the translation property \ref{item:transprop} we see that the vertex operator corresponding to $v$ is chiral if and only if $\bar{L}(-1)v=0$ and anti-chiral if and only if $L(-1)v=0$. The algebra of the modes of chiral (anti-chiral) vertex operators is called the \textit{chiral (anti-chiral) algebra} in physics, see Corollary \ref{rem:chiralalg}.
\end{remark}
\begin{remark}
In the locality property \ref{item:locprop} involving a chiral (resp. anti-chiral) vertex operator $Y_V(u_1,z_1)(\text{resp. }Y_V(u_1,\bar{z}_1))$ and another vertex operator $Y_V(u_2,z_2,\bar{z}_2)$, 
we will often denote the function $m$ by 
\begin{equation}
R(Y_V(u_1,z_1)Y_V(u_2,z_2,\bar{z}_2))\quad(\text{resp. }R(Y_V(u_1,\bar{z}_1)Y_V(u_2,z_2,\bar{z}_2)))    
\end{equation}
so that 
\begin{equation}
R(Y_V(u_1,z_1)Y_V(u_2,z_2,\bar{z}_2))=\begin{cases}
Y_V\left(u_1 , z_1\right) Y_V\left(u_2 , z_2, \bar{z}_2\right) &\text{for }|z_1|>|z_2|,\\Y_V\left(u_2 , z_2, \bar{z}_2\right) Y_V\left(u_1 , z_1\right) &\text{for }|z_2|>|z_1|    
\end{cases} 
\end{equation}
and 
\begin{equation}
R(Y_V(u_1,\bar{z}_1)Y_V(u_2,z_2,\bar{z}_2))=\begin{cases}
Y_V\left(u_1 , \bar{z}_1\right) Y_V\left(u_2 , z_2, \bar{z}_2\right) &\text{for }|z_1|>|z_2|,\\Y_V\left(u_2 , z_2, \bar{z}_2\right) Y_V\left(u_1 , \bar{z}_1\right) &\text{for }|z_2|>|z_1|    
\end{cases} 
\end{equation}
respectively. 
In physics, this is called \textit{radial ordering}. Here $z_2,\bar{z}_2$ are complex conjugates of each other.
\end{remark}
\begin{lemma}\label{lemma:hintchiral}
Let $u,v$ be homogeneous chiral and  anti-chiral vector. Then the associated chiral and anti-chiral vertex operator has an expansion of the form 
\begin{equation}\label{eq:chiantchiveropexp}
    \begin{split}
        &Y_V(u,x)=\sum_{n\in\Z}x_n(u)x^{-n-(\mathrm{wt}\,u-\overline{\mathrm{wt}}\,u)}\in\mathrm{End}(V)[[x^{\pm 1}]]\, ,\\&Y_V(v,\bar{x})=\sum_{n\in\Z}\bar{x}_n(v)\bar{x}^{-n-(\overline{\mathrm{wt}}\,v-\mathrm{wt}\,v)}\in\mathrm{End}(V)[[\bar{x}^{\pm 1}]]\, ,   
    \end{split}
    \end{equation} 
where 
\begin{equation}
    x_n(u):=x_{n-\overline{\mathrm{wt}}\,u,-\overline{\mathrm{wt}}\,u}(u),\quad \bar{x}_n(v):=x_{-\mathrm{wt}\,v,n-\mathrm{wt}\,v}(v).
\end{equation}
\begin{proof}
From the expansion \eqref{eq:genveropcorrsum},  we see that $Y_V(u,x,\bar{x})$ will be independent of $\bar{x}$ if and only if 
\begin{equation}
    x_{m,n}(u)=0 \text{ unless } n=-\overline{\mathrm{wt}}\,u \, .
\end{equation}
But as $m-n\in\Z$ we then have 
\begin{equation}
x_{m,n}(u)=0 \text{ unless } n=-\overline{\mathrm{wt}}\,u,~m\in\Z - \overline{\mathrm{wt}}\,u.    
\end{equation}
This gives us the required expansion. 
The proof for anti-chiral vector $v$ is similar. The fact that $Y_V(u,x)\in\mathrm{End}(V)[[x^{\pm 1}]],Y_V(v,\bar{x})\in\mathrm{End}(V)[[\bar{x}^{\pm 1}]]$ follows from the single-valuedness property \ref{item:singvalprop}.   
\end{proof}
\end{lemma}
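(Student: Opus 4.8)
The plan is to start from the homogeneous expansion of the vertex operator and simply read off which modes survive under the hypothesis of chirality. Concretely, write $h=\mathrm{wt}\,u$ and $\bar h=\overline{\mathrm{wt}}\,u$ and recall from \eqref{eq:genveropcorrsum} that
\begin{equation}
Y_V(u,x,\bar x)=\sum_{\substack{m,n\in\R\\(m-n)\in\Z}}x_{m,n}(u)\,x^{-m-h}\bar x^{-n-\bar h}\,.
\end{equation}
Saying that $u$ is chiral means this formal series contains no nontrivial power of $\bar x$. I would group the sum by the exponent of $\bar x$: for each fixed value of $-n-\bar h$ the coefficient is $\sum_{m}x_{m,n}(u)\,x^{-m-h}$, and independence of $\bar x$ forces all of these to vanish except the one with $-n-\bar h=0$. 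Hence $x_{m,n}(u)=0$ unless $n=-\bar h$.

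Next I would feed in the constraint $m-n\in\Z$ (Remark \ref{rem:veropexpm-ninZ}): with $n=-\bar h$ this reads $m+\bar h\in\Z$, so $m$ ranges over $\Z-\bar h$. Re-indexing the surviving sum by $n\in\Z$ through $m=n-\bar h$ gives
\begin{equation}
Y_V(u,x)=\sum_{n\in\Z}x_{n-\bar h,\,-\bar h}(u)\,x^{-(n-\bar h)-h}=\sum_{n\in\Z}x_{n-\bar h,\,-\bar h}(u)\,x^{-n-(h-\bar h)}\,,
\end{equation}
which is exactly \eqref{eq:chiantchiveropexp} with $x_n(u):=x_{n-\bar h,\,-\bar h}(u)$. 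Finally the single-valuedness property \ref{item:singvalprop} gives $h-\bar h\in\Z$, so every exponent $-n-(h-\bar h)$ appearing is an integer and therefore $Y_V(u,x)\in\mathrm{End}(V)[[x^{\pm1}]]$ rather than merely lying in $\mathrm{End}(V)\{x^{\pm1}\}$.

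The anti-chiral case is completely symmetric: independence of $x$ forces $x_{m,n}(v)=0$ unless $m=-\mathrm{wt}\,v$, the integrality $m-n\in\Z$ then confines $n$ to $\Z-\mathrm{wt}\,v$, and re-indexing produces the stated expansion with $\bar x_n(v):=x_{-\mathrm{wt}\,v,\,n-\mathrm{wt}\,v}(v)$. I do not anticipate any genuine obstacle here; the only point deserving a word of care is the passage in the first step from ``the series is independent of $\bar x$'' to ``all coefficients of nonzero powers of $\bar x$ vanish,'' which is legitimate precisely because equality in $V\{x,\bar x\}$ is tested coefficientwise in the two formal variables.
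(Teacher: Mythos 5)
Your proof is correct and follows essentially the same route as the paper's: chirality kills all modes except those with $n=-\overline{\mathrm{wt}}\,u$, the constraint $m-n\in\Z$ then pins $m$ to $\Z-\overline{\mathrm{wt}}\,u$, and single-valuedness makes the surviving exponents integral. Your extra remark about coefficientwise equality in $V\{x,\bar x\}$ is a fine (if implicit in the paper) justification, but it does not change the argument.
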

\begin{remark}
By the above lemma, for chiral and anti-chiral vertex operators, the requirements in Remark \ref{rem:modesveropint} is satisfied and hence we can write
\begin{equation}\label{eq:modeschirantchiint}
\begin{split}
    &x_n(u)= \frac{1}{2\pi i}\oint dz~Y_V(u,z)z^{n+(\mathrm{wt}\,u-\overline{\mathrm{wt}}\,u)-1} \, ,
    \\&\bar{x}_n(v)= \frac{1}{2\pi i}\oint d\bar{z}~Y_V(v,\bar{z})\bar{z}^{n+(\overline{\mathrm{wt}}\,v-\mathrm{wt}\,v)-1}\, ,
\end{split}  
\end{equation}
where $u,v$ are chiral and anti-chiral vectors respectively and the contour of integration is a circle around $z=0,\bar{z}=0$ respectively.
\end{remark}
We now derive the commutator of the modes of two vertex operators and the \textit{Borcherd's identity} \cite{Borcherds:1983sq}. 
\begin{thm}\label{thm:chiralmodescomm}
Let $u_i\in V_{(h_i,\bar{h}_i)}$ and $u_j \in V_{(h_j,h_j')}$    $( v_i\in V_{(h_i',\bar{h}'_i)}$ and $v_j\in V_{(h_j',\bar{h}'_j)})$ be  homogeneous chiral (resp. anti-chiral) vectors with corresponding vertex operators
    \begin{equation}
    \begin{split}
        &Y_V(u_i,x)=\sum_{n\in\Z}x_n(u_i)x^{-n-(h_i-\bar{h}_i)}, \quad \quad Y_V(u_j,x)=\sum_{n\in\Z}x_n(u_j)x^{-n-(h_j-\bar{h}_j)}, \\ & Y_V(v_i,\bar{x})=\sum_{n\in\Z}\bar{x}_n(v_i)\bar{x}^{-n-(\bar{h}'_i-h'_i)}, \quad \quad 
        Y_V(v_j,\bar{x})=\sum_{n\in\Z}\bar{x}_n(v_j)\bar{x}^{-n-(\bar{h}'_j-h'_j)}.  
    \end{split}
    \end{equation}
Then the vectors  $x_p(u_i)\cdot u_j$ and  $\bar{x}_p(v_k)\cdot v_\ell$ are chiral and anti-chiral vectors respectively.
Further, we have 
    \begin{equation}\label{eq:chiralalgebra}
    \begin{split}
        &[x_n(u_i),x_{k}(u_j)]=\sum_{\substack{p\geq -(h_i-\bar{h}_i)+1}}{n+(h_i-\bar{h}_i)-1\choose p+(h_i-\bar{h}_i)-1}x_{k+n}(x_p(u_i)\cdot u_j)\, , \\&[\bar{x}_n(v_i),\bar{x}_{k}(v_j)]=\sum_{\substack{p\geq -(\bar{h}'_i-h'_i)+1}}{n+(\bar{h}'_i-h'_i)-1\choose p+(\bar{h}'_i-h'_i)-1}\bar{x}_{k+n}(\bar{x}_p(v_i)\cdot v_j) \, , \\&[x_n(u_i),\bar{x}_k(v_j)]=0 \, .
    \end{split}    
    \end{equation}
In particular,     \begin{equation}\label{eq:L0commmodesgenver}
    \begin{split}
        &[L(n),x_{k}(u_i)]=\sum_{\substack{p\geq -1}}{n+1\choose p+1}x_{k+n}(L(p)\cdot u_i)\, ,\\&[L(n),\bar{x}_{k}(v_i)]=0 \, ,\\&[\bar{L}(n),\bar{x}_{k}(v_i)]=\sum_{\substack{p\geq -1}}{n+1\choose p+1}\bar{x}_{k+n}(\bar{L}(p)\cdot v_i) \, , \\&[\bar{L}(n),x_{k}(u_i)]=0 \, .
    \end{split}
    \end{equation}    
More generally, for $m\in\Z \, \, \text{ and } m_{+}\in\Z_{\geq 0}$ we have the Borcherd's identity:
\begin{equation}
\begin{split}
\sum_{r\geq 0}{m\choose r} \Big( (-1)^r&x_{n+m-r}(u_i)x_{k+r}(u_j)-(-1)^{m+r}x_{k+m-r}(u_j)x_{n+r}(u_i) \Big) \\&=\sum_{p\geq 1-(h_i-\Bar{h}_i)}{n+(h_i-\Bar{h}_i)-1\choose p+(h_i-\Bar{h}_i) -1}x_{k+n+m + \bar{h}_{i} - \bar{h}_{j}}(x_{p+m}(u_i)\cdot u_j) ,
\end{split}
\end{equation}
\begin{equation}
\begin{split}
\sum_{r\geq 0}{m\choose r} \Big( (-1)^r&\bar{x}_{n+m-r}(v_i)\bar{x}_{k+r}(v_j)-(-1)^{m+r}\bar{x}_{k+m-r}(v_j)\bar{x}_{n+r}(v_i) \Big) \\&=\sum_{p\geq 1-(\bar{h}'_i-h'_i)}{n+(\bar{h}'_i-h'_i)-1\choose p+(\bar{h}'_i-h'_i)-1}\bar{x}_{k+n + h'_i - h'_j}(\bar{x}_{p+m}(v_i)\cdot v_j), 
\end{split}
\end{equation}
\begin{equation}
\sum_{r\geq 0}{m_{+}\choose r} \Big( (-1)^rx_{n+m_{+}-r}(u_i)\bar{x}_{k+r}(v_j)-(-1)^{m_{+}+r}\bar{x}_{k+m_{+}-r}(v_j)x_{n+r}(u_i) \Big) = 0.    
\end{equation}
\begin{proof}
We first show that $x_p(u_i)\cdot u_j$ and  $\bar{x}_p(v_k)\cdot v_\ell$ are chiral and anti-chiral vectors respectively. Indeed by the translation property \ref{item:transprop} 
\begin{equation}
    [\bar{L}(-1),x_p(u_i)]=0\, ,
\end{equation}
which implies that 
\begin{equation}
    \bar{L}(-1)\cdot (x_p(u_i)\cdot u_j)=x_p(u_i)\cdot \bar{L}(-1)u_j=0 \, .
\end{equation}
Similarly 
\begin{equation}
L(-1)\cdot (\bar{x}_p(v_k)\cdot v_\ell)=0 \, .   
\end{equation}
\\
Now, we will follow the usual contour integration procedure, see for example \cite[Section 3.3.10]{Frenkel:2004jn}. First note that 
\begin{equation}
Y_V(u_i,z_1)Y_V(u_j,z_2),Y_V(u_j,z_2)Y_V(u_i,z_1), R(Y_V(u_i,z_1)Y_V(u_j,z_2))   \, , 
\end{equation}
are single-valued and analytic in $z_1,z_2$ since their partial derivative with respect to $\bar{z}_1, \bar{z}_2$ is zero. So we can use Cauchy's residue theorem to integrate over $z_1,z_2$ on any contour. Now
let $r_1>r_2>r_3>0$ be real numbers. Let $C^a_i(z)$ denote a contour in the variable $z_i$, in counterclockwise direction, of radius $a$ and centered around $z$. Further,  $C_{i}^{r} := C_{i}^{r}(0)$. Let $f(z_1,z_2)$ be a rational function analytic in $z_1,z_2$ with poles only at $z_1=0,z_2=0,z_1=z_2$. The integrals     \begin{equation}
\begin{split}  \oint_{C^{r_2}_{2}}dz_2\oint_{C^{r_1}_{1}}dz_1 ~Y_V(u_i,z_1)Y_V(u_j,z_2)f(z_1,z_2)\quad\text{and}\\\oint_{C^{r_2}_{2}}dz_2\oint_{C^{r_3}_{1}}dz_1 ~Y_V(u_j,z_2)Y_V(u_i,z_1)f(z_1,z_2) \, ,
\end{split}  
\end{equation}
are well-defined. By the locality property \ref{item:locprop} and the OPE \eqref{eq:opeexp}, we see that 
\begin{equation}\label{eq:contintforalg}
\begin{split}  &\oint_{C^{r_2}_{2}}dz_2\oint_{C^{r_1}_{1}}dz_1 ~Y_V(u_i,z_1)Y_V(u_j,z_2)f(z_1,z_2)\\&-\oint_{C^{r_2}_{2}}dz_2\oint_{C^{r_3}_{1}}dz_1~Y_V(u_j,z_2)Y_V(u_i,z_1)f(z_1,z_2)\\&=\oint_{C^{r_2}_{2}}dz_2\oint_{C^{r_1}_{1}-C^{r_3}_{1}}dz_1 ~R(Y_V(u_i,z_1)Y_V(u_j,z_2))f(z_1,z_2)\\&=\oint_{C^{r_2}_{2}}dz_2\oint_{C^{\delta}_{1}(z_2)}dz_1 ~Y_V(Y_V(u_i,z_1-z_2)u_j,z_2)f(z_1,z_2)\\&=\oint_{C^{r_2}_{2}}dz_2\oint_{C^{\delta}_{1}(z_2)}dz_1 ~\sum_{p\in\Z}Y_V(x_p(u_i)\cdot u_j,z_2)(z_1-z_2)^{-p-(h_i-\bar{h}_i)}f(z_1,z_2)\, , 
\end{split}      
\end{equation}
where $\delta$ is some small real number, see \cite[Section 3.3.10]{Frenkel:2004jn} for details of the change in contour. If we now choose 
\begin{equation}
    f=z_1^{n+(h_i-\bar{h}_i)-1}z_2^{k+(h_j-\bar{h}_j)-1} \, ,
\end{equation} 
then using \eqref{eq:modeschirantchiint} the LHS is\footnote{There is a factor of $(2\pi i)^2$ which cancels on both sides, so we ignore it.} 
$[x_n(u_i),x_{k}(u_j)]$ while
Cauchy's residue theorem gives the RHS to be  
\begin{equation}
\begin{split}
\oint_{C^{r_2}_{2}}dz_2~\sum_{p\geq -(h_i-\bar{h}_i)+1}{n+(h_i-\bar{h}_i)-1\choose p+(h_i-\bar{h}_i)-1}&Y_V(x_p(u_i)\cdot u_j,z_2) z_2^{k+(h_j-\bar{h}_j)+n-p-1} \, ,
\end{split}
\end{equation}
where we used the identity 
\begin{equation}\label{eq:cauchyid}
    \oint_{C^{\delta}_{1}(z_2)}dz_1~\frac{z_1^{n+(h_i-\bar{h}_i)-1}}{(z_1-z_2)^{p+(h_i-\bar{h}_i)}}={n+(h_i-\bar{h}_i)-1\choose p+(h_i-\bar{h}_i)-1}z_2^{n-p}.
\end{equation}
Note, that it is necessary that $(h_i-\bar{h}_i)\in\Z$, which is true by the single valuedness property \ref{item:singvalprop}, for \eqref{eq:cauchyid} to hold.
Finally, using \eqref{eq:modeschirantchiint} and the fact that 
\begin{equation}
x_p(u_i).u_j \in V_{h_j - p  + \bar{h}_i, \bar{h}_i + \bar{h}_j  },     
\end{equation}
the RHS becomes 
\begin{equation}
\sum_{\substack{p\geq -(h_i-\bar{h}_i)+1}}{n+(h_i-\bar{h}_i)-1\choose p+(h_i-\bar{h}_i)-1}x_{k+n}(x_p(u_i)\cdot u_j).    
\end{equation}
The second commutator is similar. To prove the third commutator, note that since
\begin{equation}
\partial_{\bar{z}_1}R(Y_V(u_i,z_1)Y_V(v_j,\bar{z}_2))= \partial_{z_2}R(Y_V(u_i,z_1)Y_V(v_j,\bar{z}_2))= 0,   
\end{equation}
$R(Y_V(u_i,z_1)Y_V(v_j,\bar{z}_2))$ cannot have any dependence on $(z_1-z_2)$ or $(\Bar{z}_1-\Bar{z}_2)$. Moreover, from the proof of the OPE in Proposition \ref{prop:duality}, we see that it cannot also have $(z_1-\bar{z}_2)$ dependence as well. This implies that the contour integral on the RHS of \eqref{eq:contintforalg} vanishes and we get 
\begin{equation}
    [x_{m}(u_i),\bar{x}_n(v_j)]=0 \, .
\end{equation}
The three Borcherd's identity follow by using 
\begin{equation}
\begin{split}
    &f_1=z_1^{n+(h_i-\bar{h}_i)-1}z_2^{k+(h_j-\bar{h}_j)-1}(z_1-z_2)^m \, , \\&f_2=\bar{z}_1^{n+(\bar{h}'_i-h'_i)-1}\bar{z}_2^{k+(\bar{h}'_i-h'_i)-1}(\bar{z}_1-\bar{z}_2)^m \, , \\&f_3=z_1^{n+(h_i-\bar{h}_i)-1}\bar{z}_2^{k+(\bar{h}_j-h'_j)-1}(z_1-\bar{z}_2)^m\, , 
\end{split}    
\end{equation}
where for the second Borcherd's identity, we need to integrate against $d\bar{z}_1,d\bar{z}_2$ on the curves $C_{\bar{z}_1}^{r_1},C_{\bar{z}_2}^{r_2}$ respectively and for the third Borcherd's identity, we need to integrate against $dz_1,d\bar{z}_2$ on the curves $C_{z_1}^{r_1},C_{\bar{z}_2}^{r_2}$ respectively.
\end{proof}
\end{thm}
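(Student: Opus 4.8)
The plan is to split the statement into (i) the chirality of the composite vectors $x_p(u_i)\cdot u_j$ and $\bar{x}_p(v_i)\cdot v_j$ and (ii) the mode relations, all of which I would extract from the duality/OPE of Proposition \ref{prop:duality} by the classical contour-integration argument, taking care that everything in sight is genuinely single-valued and holomorphic (resp.\ antiholomorphic) in the relevant variables. For (i), recall from the discussion after Definition \ref{def:chirverop} that a vector $u$ is chiral precisely when $\bar{L}(-1)u=0$; then the translation property \ref{item:transprop} gives $[\bar{L}(-1),Y_V(u,x)]=\partial_{\bar{x}}Y_V(u,x)=0$ and simultaneously $[\bar{L}(-1),Y_V(u,x)]=Y_V(\bar{L}(-1)u,x)=0$, so $[\bar{L}(-1),x_p(u)]=0$ for every mode. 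Hence if $u_i$ and $u_j$ are both chiral, $\bar{L}(-1)\big(x_p(u_i)\cdot u_j\big)=x_p(u_i)\cdot\bar{L}(-1)u_j=0$, so $x_p(u_i)\cdot u_j$ is chiral; the anti-chiral case is the mirror image with $L(-1)$.

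For the chiral commutator in \eqref{eq:chiralalgebra}, the key point is that by Lemma \ref{lemma:hintchiral} the products $Y_V(u_i,z_1)Y_V(u_j,z_2)$, the opposite ordering, and the radially-ordered $R\big(Y_V(u_i,z_1)Y_V(u_j,z_2)\big)$ are all single-valued holomorphic functions of $z_1,z_2$, so Cauchy's theorem applies freely in those variables. I would take nested contours $|z_1|=r_1>|z_2|=r_2>|z_1|=r_3$, integrate the difference of the two orderings against $f(z_1,z_2)=z_1^{n+(h_i-\bar{h}_i)-1}z_2^{k+(h_j-\bar{h}_j)-1}$, and invoke locality (Property \ref{item:locprop}) to replace $\oint_{|z_1|=r_1}-\oint_{|z_1|=r_3}$ by an integral of $R(\,\cdot\,)$ over a small loop around $z_2$, on which the OPE of Proposition \ref{prop:duality} gives $Y_V\big(Y_V(u_i,z_1-z_2)u_j,\,z_2\big)=\sum_p Y_V(x_p(u_i)\cdot u_j,z_2)(z_1-z_2)^{-p-(h_i-\bar{h}_i)}$. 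The left side is $[x_n(u_i),x_k(u_j)]$ by \eqref{eq:modeschirantchiint}, and the right side is evaluated by the residue identity
\[
\oint_{|z_1-z_2|=\delta}\frac{z_1^{n+(h_i-\bar{h}_i)-1}}{(z_1-z_2)^{p+(h_i-\bar{h}_i)}}\,dz_1=\binom{n+(h_i-\bar{h}_i)-1}{p+(h_i-\bar{h}_i)-1}z_2^{n-p},
\]
after which the leftover $z_2$-integral reassembles as $x_{k+n}(x_p(u_i)\cdot u_j)$ once one uses the weight bookkeeping $x_p(u_i)\cdot u_j\in V_{(h_j-p+\bar{h}_i,\ \bar{h}_i+\bar{h}_j)}$. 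The anti-chiral commutator is identical with bars, and \eqref{eq:L0commmodesgenver} is the special case $u_i=\omega$ (so $x_n(\omega)=L(n)$, $h_i-\bar{h}_i=2$, $x_p(\omega)\cdot u_i=L(p)\cdot u_i$), resp.\ $u_i=\bar{\omega}$.

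The mixed commutator $[x_m(u_i),\bar{x}_n(v_j)]=0$ comes from the same machinery: $R\big(Y_V(u_i,z_1)Y_V(v_j,\bar{z}_2)\big)$ depends on $z_1$ only through the chiral operator and on $\bar{z}_2$ only through the anti-chiral one, hence carries no $(z_1-z_2)$ or $(\bar{z}_1-\bar{z}_2)$ dependence, and re-examining the derivation of the OPE in Proposition \ref{prop:duality} also excludes a $(z_1-\bar{z}_2)$ singularity, so the deformed $z_1$-loop around $z_2$ encloses nothing. The three Borcherds identities follow by rerunning the contour argument with the test functions carrying an extra factor $(z_1-z_2)^m$, $(\bar{z}_1-\bar{z}_2)^m$ for $m\in\Z$, or $(z_1-\bar{z}_2)^{m_+}$ for $m_+\in\Z_{\ge 0}$: expanding this factor differently in the two orderings produces the alternating binomial sums on the left, the residue evaluation produces the shifted modes on the right (with the extra weight shift $\bar{h}_i-\bar{h}_j$, resp.\ $h'_i-h'_j$, and no shift in the mixed case), and in the mixed case the nonnegativity of $m_+$ keeps $(z_1-\bar{z}_2)^{m_+}$ a polynomial, so again there is no singularity and the right side is $0$.

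I expect the genuinely delicate points to be two. First, the mixed case: one must argue carefully that $R\big(Y_V(u_i,z_1)Y_V(v_j,\bar{z}_2)\big)$ is regular in \emph{all three} of $z_1-z_2$, $\bar{z}_1-\bar{z}_2$, and $z_1-\bar{z}_2$, so that the contour deformation really yields zero rather than an unexpected residue — this uses the internal structure of the duality proof, not merely its statement. Second, the residue identity above, and hence the entire mode-extraction, requires the exponent shift $p+(h_i-\bar{h}_i)$ to be an integer, i.e.\ $h_i-\bar{h}_i\in\Z$, which is exactly the single-valuedness property \ref{item:singvalprop}; without it the binomial expansion of $(z_1-z_2)^{-(p+h_i-\bar{h}_i)}$ would be an infinite series and the commutator would fail to close on finitely many modes. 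Everything else is routine weight bookkeeping and standard contour manipulation along the lines of \cite{Frenkel:2004jn}.
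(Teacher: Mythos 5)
Your proposal is correct and follows essentially the same route as the paper's proof: the chirality of $x_p(u_i)\cdot u_j$ via the translation property, the nested-contour argument with the test function $z_1^{n+(h_i-\bar h_i)-1}z_2^{k+(h_j-\bar h_j)-1}$ combined with locality and the OPE, the residue identity requiring $h_i-\bar h_i\in\Z$ from single-valuedness, the regularity argument in all of $(z_1-z_2)$, $(\bar z_1-\bar z_2)$, $(z_1-\bar z_2)$ for the vanishing mixed commutator, and the modified test functions $(z_1-z_2)^m$, $(\bar z_1-\bar z_2)^m$, $(z_1-\bar z_2)^{m_+}$ for the Borcherds identities. The two delicate points you flag are exactly the ones the paper addresses, so nothing is missing.
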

\begin{remark}
For $n=0,-1$ in \eqref{eq:L0commmodesgenver} we obtain the $L(0)$-property \ref{item:l0prop} and the translation property \ref{item:transprop} of chiral vertex operators. Note that we already used these properties in proving the OPE.    
\end{remark}
\begin{remark}\label{rem:chiralalg}
The commutator of the modes of chiral and anti-chiral vertex operators is closed. The algebra in \eqref{eq:chiralalgebra} thus obtained
is called the \textit{chiral} and \textit{anti-chiral algebra} respectively of the non-chiral VOA $(V,Y_V)$. 
\end{remark}
\begin{defn}
Let $(V,Y_V)$ be a non-chiral VOA with central charge $(c,\bar{c})$. The \textit{graded dimension} or \textit{character} of $V$ is defined by
\begin{equation}
\chi_V(\tau,\bar{\tau}):=\text{Tr}_V\,q^{L(0)-\frac{c}{24}}\bar{q}^{\Bar{L}(0)-\frac{\bar{c}}{24}}=\sum_{(h,\bar{h})\in \R\times\R}\left(\text{dim}\,V_{(h,\bar{h})}\right)q^{h-\frac{c}{24}}\bar{q}^{\bar{h}-\frac{\bar{c}}{24}}\, ,
\end{equation}
where $q=e^{2\pi i\tau},~\bar{q}=e^{-2\pi i\bar{\tau}}$ and $\tau\in\mathbb{H}:=\{\tau=x+iy : y>0\}.$
\end{defn}
Note that the single-valuedness property implies that  $\chi_V(\tau+1,\Bar{\tau}+1)=\chi_V(\tau,\bar{\tau})$ if  
\begin{equation}
    c-\bar{c}=24k \, ,
\end{equation}
for some integer $k$.  \\\\
Let $(V_1,Y_{V_1},\omega_{V_1},\bar{\omega}_{V_1},\mathbf{1}_{V_1}), (V_2,Y_{V_2},\omega_{V_2},\bar{\omega}_{V_2},\mathbf{1}_{V_2})$ be two non-chiral VOAs with the same central charge.  Then a  map $f: V_1 \rightarrow V_2$ is called a non-chiral VOA homomorphism if it is a grading-preserving linear map such that
\begin{equation}\label{eq:voahomoYV}
f(Y_{V_1}(u, x,\bar{x}) v)=Y_{V_2}(f(u), x,\bar{x}) f(v) \text { for } \, u, v \in V_1 \, ,
\end{equation}
or equivalently,
\begin{equation}\label{eq:voahomonumn}
f\left(u_{n,m}\cdot v\right)=f(u)_{n,m} f(v) \text { for } u, v \in V_1, \quad n,m \in \mathbb{R} \, ,
\end{equation}
and such that
\begin{equation}\label{eq:voahomon1}
f(\mathbf{1}_{V_1})=\mathbf{1}_{V_2},\quad f(\omega_{V_1})=\omega_{V_2},\quad f(\bar{\omega}_{V_1})=\bar{\omega}_{V_2} \, .
\end{equation}
An isomorphism of non-chiral VOAs is a bijective homomorphism. An endomorphism of a non-chiral VOA $V$ is a homomorphism from $V$ to itself, and an automorphism of $V$ is a bijective endomorphism. In particular, an automorphism can be defined as a linear isomorphism $f : V \to V$ such that
\begin{equation}
\begin{gathered}
f \circ Y_V(v, x,\bar{x}) \circ f^{-1}=Y_V(f(v), x,\bar{x}) \text { for } v \in V \, , \\
f(\omega)=\omega,\quad f(\bar{\omega})=\bar{\omega} \, .  
\end{gathered}
\end{equation}
It follows that $f$ is grading-preserving and $f(\mathbf{1}_{V})=\mathbf{1}_{V}$. 

\bigskip 

It is easy to see that the graded dimension of isomorphic non-chiral VOAs are identical. 
\section{Lorentzian Lattice Vertex Operator Algebra (LLVOA)}\label{sec:llvoa}
In this section, we will construct a non-chiral vertex operator algebra corresponding to an even, integral Lorentzian lattice
$\Lambda \subset \R^{m,n}$. In the first subsection, we recall some basic facts about Lorentzian lattices and set up the notations for the rest of the paper. We also record some results we will need later. In the next subsection, we gather the ingredients needed to construct a non-chiral vertex operator algebra, i.e. we will construct a vector space $V_{\Lambda}$ associated to the lattice, a vertex operator map $Y_{V_{\Lambda}}$ for this vector space, a vacuum $\mathbf{1}$, and conformal vectors $\omega_L$, $\omega_R$. In the last subsection, we will prove that  $(V_{\Lambda}, Y_{V_{\Lambda}},\omega_L, \omega_{R}, \mathbf{1} )$ is a non-chiral VOA, which we will call the Lorentzian lattice vertex operator algebra (LLVOA).
\subsection{Lorentzian lattices}
We begin with some basic definitions. Let $\R^m$ be the Euclidean space equipped with a symmetric bilinear form $\langle\cdot,\cdot\rangle_m$. Let $\mathbb{R}^{m,n}$ denote the $(m+n)$-dimensional vector space $\R^{m+n}$ equipped with the symmetric bilinear form
\begin{equation}
\bm{x}\circ \bm{x}':=\langle \vec{x},\vec{x}'\rangle_m-\langle \vec{y},\vec{y}'\rangle_n    \, , 
\end{equation}
where 
\begin{equation}
\bm{x}=(x^1,\dots,x^m,y^1,\dots,y^n)\equiv(\vec{x},\Vec{y})   \, ,  
\end{equation}
and similarly $\bm{x}'$. We will omit the subscript on $\langle\cdot,\cdot\rangle_m$ to make the notation lighter. 
\begin{defn}
\begin{enumerate}
    \item A $d=(m+n)$-dimensional \textit{Lorentzian  lattice} of signature $(m,n)$ is a subset $\Lambda\subset\R^{m,n}$ which is also a free $\mathbb{Z}$-module spanned by $m+n$ vectors $\lambda_{j}\in\mathbb{R}^{m,n}, 1 \leq j \leq m+n$, linearly independent in $\mathbb{R}^{m,n}$.  More explicitly 
    \begin{equation}\label{eq:lorlateigen}
        \Lambda=\left\{\sum_{j=1}^{m+n} n_{j} \lambda_{j}: n_{j} \in \mathbb{Z}\right\}.
    \end{equation}
$\{\lambda_{j}\}_{j=1}^{m+n}$ is called an \textit{integral basis} of $\Lambda$.    
When $n=0$ we call $\Lambda$ a \textit{Euclidean lattice}. We will simply refer them as lattices when we do not need to specify their signature.
\item The \textit{dual lattice} of a lattice $\Lambda$, denoted by $\Lambda^\star$, is defined as 
\begin{equation}\label{eq:duallatt}
 \Lambda^{\star}=\{\bm{x}'\in\R^{m,n}: \bm{x} \circ \bm{x}' \in \mathbb{Z}~ \forall~ \bm{x} \in \Lambda\} \, .   
\end{equation}
The lattice $\Lambda$ is said to be \textit{integral} if $\Lambda\subseteq\Lambda^\star$, i.e. $\bm{x} \circ \bm{y} \in \mathbb{Z}$ for all $\bm{x}, \bm{x}' \in \Lambda$ and \textit{self-dual} if $\Lambda=\Lambda^\star$. The lattice $\Lambda$ is said to be \textit{even} if 
\begin{equation}
\bm{x}\circ\bm{x}=||\vec{x}||^2-||\vec{y}||^2\in2\Z   \, ,  
\end{equation}
for all $\bm{x}=(\vec{x},\vec{y}) \in \Lambda$, where $||\vec{x}||^2:=\langle\Vec{x},\Vec{x}\rangle$. 
\item A generator matrix for $\Lambda$ is an $(m+n)\times(m+n)$ matrix such that the $\Z$-span of its rows is $\Lambda$.  
\item A \textit{lattice homomorphism} of two lattices $f:\Lambda\longrightarrow\tilde{\Lambda}$ of the same signature is simply a $\Z$-module morphism which also preserves the bilinear form:
\begin{equation}
f(\bm{x})\circ f(\bm{x}')=  \bm{x}\circ\bm{x}',\quad \forall~~\bm{x},\bm{x}'\in\Lambda \, .  
\end{equation}
A bijective lattice homomorphism is called a lattice isomorphism.
Two lattices are said to be isomorphic if there exists a lattice isomorphism between them.
\item An \textit{automorphism of the lattice} $\Lambda$ is a lattice isomorphism from the $\Lambda$ to itself. The group of all automorphisms (the group operation being composition) is called the automorphism group of $\Lambda$ and denoted by Aut$(\Lambda)$.
\end{enumerate}
\end{defn}
\noindent A generator matrix for the lattice $\Lambda$ in \eqref{eq:lorlateigen} is given by 
\begin{equation}
\mathcal{G}_\Lambda=\begin{pmatrix}
    \lambda_1^1&\lambda_1^2&\cdots&\lambda_1^{m+n}\\\vdots&\vdots&\cdots&\vdots\\\lambda_{m+n}^1&\lambda_{m+n}^1&\cdots&\lambda_{m+n}^{m+n}
\end{pmatrix}    \, , 
\end{equation}
where $\lambda_i=(\lambda_i^1,\dots,\lambda_i^{m+n})$ is a basis vector of $\Lambda$. It is not hard to show that two generator matrices $\mathcal{G}_\Lambda,\mathcal{G}'_\Lambda$ generate the same lattice if and only if they are related by an $(m+n)\times (m+n)$ \textit{unimodular matrix}\footnote{A matrix $U$ is called unimodular if det$(U)=\pm 1$.} $U\in\mathrm{GL}(m+n,\Z)$:
\begin{equation}\label{eq:}
\mathcal{G}_\Lambda=U\mathcal{G}'_\Lambda \, .    
\end{equation}
Indeed $U$ is the change of basis matrix between the primed and unprimed generator matrices since it is invertible and since it is also integral, it preserves the lattice.
If we take the symmetric bilinear form $\langle\cdot,\cdot\rangle$ on $\R^m,\R^n$ to be the standard inner product, that is,
\begin{equation}
    \langle \Vec{x},\Vec{x}'\rangle=\sum_{i=1}^mx^ix'^i \, ,
\end{equation}
where $\Vec{x}=(x^1,\dots,x^m)\in\R^m$ and similarly $\Vec{x}'$ and analogous inner product on $\R^n$, then a lattice isomorphism between lattices of signature $(m,n)$ can be identified with an element of $\mathrm{O}(m,n,\R)$ where $\mathrm{O}(m,n,\R)$ is the group of matrices, $A$, satisfying 
\begin{equation}
A^Tg_{m,n}A=g_{m,n},\quad g_{m,n}=\begin{pmatrix}
    \mathds{1}_m&0\\0&-\mathds{1}_n
\end{pmatrix} \, .    
\end{equation}
We have the following theorem:
\begin{thm}\label{thm:classlorlat}\emph{\cite[Chapter V]{serre1973course}} An even, self-dual lattice of signature $(m,n)$ exists if and only if $(m-n)\equiv 0\bmod 8$. Moreover, there is a unique such lattice when $n\geq 1$ up to an $\mathrm{O}(m,n,\R)$ transformation.
    
\end{thm}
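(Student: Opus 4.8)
The plan is to split the statement into an existence/nonexistence part and a uniqueness part, and to handle both via the standard classification theory of quadratic forms over $\Z$ (as in Serre, \emph{A Course in Arithmetic}, Chapter V). First I would recall that a Lorentzian lattice $\Lambda \subset \R^{m,n}$ with its bilinear form $\bm{x}\circ\bm{x}'$ determines, upon choosing an integral basis, a symmetric integer matrix $G$ (the Gram matrix $G_{ij} = \lambda_i \circ \lambda_j$) whose signature as a real quadratic form is $(m,n)$; evenness means the diagonal entries of $G$ are even, and self-duality means $\det G = \pm 1$, i.e. $G$ is unimodular. So the theorem reduces to the classification of indefinite even unimodular integral quadratic forms. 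For the obstruction direction, I would invoke the fact that an even unimodular lattice has signature divisible by $8$: this follows, for instance, from the $p$-adic classification (the Hasse--Minkowski / $p$-adic symbol computation), or most cleanly from the fact that an even unimodular lattice over $\Z$ has even rank and, reducing the form modulo $8$ and using that $E_8$ is the unique positive-definite even unimodular lattice of rank $8$, one gets $\mathrm{sign} = m - n \equiv 0 \bmod 8$. (Alternatively: van der Blij's lemma states $\mathrm{sign}(\Lambda) \equiv \sum_i x_i^2 \bmod 8$ for any characteristic vector, and for an even lattice $0$ is characteristic, giving $\mathrm{sign} \equiv 0 \bmod 8$.)

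For existence when $m - n \equiv 0 \bmod 8$ and $\min(m,n) \geq 1$, I would exhibit the lattice explicitly: writing $m - n = 8k$, take $\Lambda \cong II_{1,1}^{\oplus n} \oplus E_8^{\oplus k}$ when $m > n$ (and the analogous construction with $E_8$ replaced by $E_8(-1)$ when $n > m$), where $II_{1,1}$ is the even unimodular hyperbolic plane with Gram matrix $\left(\begin{smallmatrix} 0 & 1 \\ 1 & 0 \end{smallmatrix}\right)$. One checks directly that this is even (diagonal entries are $0$ and the $E_8$ Cartan-type matrix has even diagonal), unimodular (each summand has determinant $\pm 1$), and of signature $(m,n)$. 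The hypothesis $n \geq 1$ guarantees at least one hyperbolic summand, which is exactly what makes the form indefinite.

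The main work, and the main obstacle, is the uniqueness statement: any two even unimodular lattices of the same indefinite signature are isomorphic. This is a deep theorem whose proof I would not reproduce — I would cite it (Serre, Chapter V, Théorème 6, building on Eichler's theorem on indefinite forms; see also Milnor--Husemoller, \emph{Symmetric Bilinear Forms}). The essential mechanism is that for \emph{indefinite} forms the genus coincides with the isomorphism class (strong approximation / Eichler's theorem kills the class number), so it suffices to check that all even unimodular forms of signature $(m,n)$ lie in the same genus, i.e. are equivalent over $\R$ (automatic, same signature) and over every $\Z_p$ (which follows from the local classification of unimodular forms: over $\Z_p$ with $p$ odd, unimodular forms are classified by rank and discriminant, and the even unimodular condition pins down the relevant invariant; over $\Z_2$ the even unimodular forms of given rank form a single class once the signature mod $8$ is fixed). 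Translating back to the $\mathrm{O}(m,n,\R)$ language of the paper: a lattice isomorphism $\Lambda \to \tilde\Lambda$ between two realizations in $\R^{m,n}$ that both preserve the standard form $g_{m,n}$ extends to a real linear map preserving $g_{m,n}$, hence lies in $\mathrm{O}(m,n,\R)$, which is the stated form of the conclusion. I expect the verification of the local (genus) equivalence and the invocation of Eichler's theorem to be the only non-routine inputs; everything else is bookkeeping with Gram matrices.
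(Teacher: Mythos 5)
Your proposal is correct, and it coincides with what the paper does: the paper gives no proof of this theorem but simply cites Serre's Chapter V, and your sketch (signature $\equiv 0 \bmod 8$ via van der Blij / mod-8 reduction, existence via $\mathrm{II}_{1,1}^{\oplus \min(m,n)} \oplus E_8(\pm 1)^{\oplus k}$, and uniqueness in the indefinite case via one-class-per-genus / Eichler, then extending a lattice isomorphism $\R$-linearly to an element of $\mathrm{O}(m,n,\R)$) is exactly the standard argument being cited. No discrepancy to report.
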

The canonical choice of an even, self-dual lattice of signature $\R^{m,n}$, denoted by $\mathrm{II}_{m,n}$, is 
\begin{equation}
\mathrm{II}_{m,n}=\left\{(a_1,\dots,a_{m+n})\in\R^{m,n}:a_i\in\Z\text{ or }a_i\in\Z+\frac{1}{2},~\sum_{i=1}^{m+n}a_i\in2\Z\right\} \, .    
\end{equation}
For $m+n\in 4\Z$, a generator matrix for this lattice is 
\begin{equation}\label{eq:genmatIImn}
    \mathcal{G}_{\mathrm{II}_{m,n}}=\begin{pmatrix}[cccc|cccccc]
        1&0&\cdots&0 & 0&0 &\cdots&0&0&-1\\0&1&\cdots&0 & 0 &0 &\cdots&0&0&-1\\\vdots&\vdots&\cdots&\vdots&\vdots&\vdots&\cdots&\vdots&\vdots&\vdots\\0&0&\cdots& 1 & 0 &0 &\cdots&0&0&-1\\\hline 0&0&\cdots&0 & 1&0 &\cdots&0&0&-1\\0&0&\cdots&0 & 0&1 &\cdots&0&0&-1\\\vdots&\vdots&\cdots&\vdots&\vdots&\vdots&\cdots&\vdots&\vdots&\vdots\\0&0&\cdots&0 & 0&0 &\cdots&1&0&-1\\0&0&\cdots&0 & 0&0 &\cdots&0&0&2\\\frac{1}{2}&\frac{1}{2}&\cdots&\frac{1}{2} & \frac{1}{2}&\frac{1}{2} &\cdots&\frac{1}{2}&\frac{1}{2}&\frac{1}{2}
    \end{pmatrix} \, .
\end{equation}
We will use this lattice to elucidate many of the notations which we now introduce. 
Consider a $d$-dimensional even, integral, Lorentzian lattice $\Lambda\subset \mathbb{R}^{m,n}$ with Lorentzian inner product, denoted as before by $\circ$, where $m+n = d$. We will often write a vector $\lambda\in\Lambda$ as $\lambda=(\alpha^{\lambda},\beta^{\lambda})$, where $\alpha^{\lambda} \in \R^{m}$ and $\beta^{\lambda} \in \R^{n}$. Then we can write 
\begin{equation}
    \lambda_1\circ\lambda_2=\langle\alpha^{\lambda_1},\alpha^{\lambda_2}\rangle-\langle\beta^{\lambda_1},\beta^{\lambda_2}\rangle\in\mathbb{Z}.
\end{equation}
Note that in general $\langle\alpha^{\lambda_1},\alpha^{\lambda_2}\rangle,\langle\beta^{\lambda_1},\beta^{\lambda_2}\rangle\not\in\mathbb{Z}$. 
We define the $\Z$-modules \begin{equation}\label{eq:lamb12fromlamb}
\begin{split}
     &\Lambda_1=\{\alpha^\lambda \,  \lvert \,  \lambda=(\alpha^\lambda,\beta^\lambda)\in\Lambda\text{ for some }\beta^\lambda\in\R^n\}\subset \R^m,\\&\Lambda_2=\{\beta^\lambda \,  \lvert \,  \lambda=(\alpha^\lambda,\beta^\lambda)\in\Lambda\text{ for some }\alpha^\lambda\in\R^m\}\subset \R^n.
\end{split}
\end{equation}
Let $\{\lambda_i \equiv (\alpha^{\lambda_i}, \beta^{\lambda_i})\}_{i=1}^d$ be a basis of $\Lambda$. Then it is easy to see that 
\begin{equation}
    \Lambda_1=\text{Span}_{\Z}\{\alpha^{\lambda_i}\}_{i=1}^d,\quad\Lambda_2=\text{Span}_{\Z}\{\beta^{\lambda_i}\}_{i=1}^d.
\end{equation}
Note that in general $\Lambda_1$ and $\Lambda_2$ are not lattices, they are just finitely generated $\Z$ modules possibly with non-trivial torsion. For the lattice $\mathrm{II}_{m,n}$ in \eqref{eq:genmatIImn}, it is easy to see that
\begin{equation}
\begin{split}
(\mathrm{II}_{m,n})_1=\Z^m\bigcup\left(\Z+\frac{1}{2}\right)^m, \\(\mathrm{II}_{m,n})_2=\Z^n\bigcup\left(\Z+\frac{1}{2}\right)^n. 
\end{split}
\end{equation}
We further identify even, integral, Euclidean sublattices of $\Lambda$ as follows:
\begin{equation}\label{eq:Lambi0}
\begin{split}
    &\Lambda_1^0:=\{ (\alpha,0)\in\Lambda \, \lvert \, \alpha \in \R^{m} \},\\& \Lambda_2^0:=\{ (0,\beta)\in\Lambda \, \lvert \, \beta \in \R^{n} \} \, .
    \end{split}
\end{equation} 
These can be identified naturally with submodules of $\Lambda_1$ and $\Lambda_2$ respectively. Clearly, $\Lambda_1^0,\Lambda_2^0$  are sublattices of $\Lambda$ since any submodule of a finitely generated free module over a principal ideal domain is free.
We also introduce the notation 
\begin{equation}
    \Lambda_{0} :=\Lambda_1^{0} \oplus  \Lambda_2^{0}\, .
\end{equation}
Note that the direct sum of $\Lambda_1^{0}$ and $\Lambda_2^{0}$ is meaningful as they are two $\mathbb{Z}$-modules. For the lattice $\mathrm{II}_{m,n}$ in \eqref{eq:genmatIImn}, $(\mathrm{II}_{m,n})_0=(\mathrm{II}_{m,n})_1^0\oplus(\mathrm{II}_{m,n})_2^0$ is easily seen to be generated by 
\begin{equation}\label{eq:genmatIImn0}
\mathcal{G}_{(\mathrm{II}_{m,n})_0}:=\left(\begin{array}{cc}
\mathcal{G}_{m} & 0_{m\times n}\\0_{n\times m}&\mathcal{G}_{n} \, ,   
\end{array}\right)    
\end{equation}
where 
\begin{equation}
    \mathcal{G}_{m}:=\begin{pmatrix}
    1 & 0 & 0 & \cdots & 0 & -1 \\ 0 & 1 & 0 & \cdots & 0 & -1 \\ \vdots & \vdots & \vdots & \cdots & \vdots & \vdots \\ 0 & 0 & 0 & \cdots & 1 & -1 \\ 0 & 0 & 0 & \cdots & 0 & 2
    \end{pmatrix}_{m\times m} \, , 
\end{equation}
and $\mathcal{G}_{n}$ is defined similarly.  
It is useful to characterize the automorphisms of $\Lambda$. We will take the symmetric bilinear form on $\R^m,\R^n$ to be the standard inner product for brevity. We have the following important result.
\begin{thm}\label{thm:autlamb}
Let $\Lambda\in\R^{m,n}$ be an integral Lorentzian lattice. Then  $\mathrm{Aut}(\Lambda)\cong\mathrm{O}_\Lambda(m,n,\Z)$ where 
\begin{equation}
\mathrm{O}_\Lambda(m,n,\Z):=\{A\in\mathrm{GL}(m+n,\Z):\mathcal{G}_\Lambda^{-1} A\mathcal{G}_\Lambda\in\mathrm{O}(m,n,\R)\} \, .     
\end{equation}
\begin{proof}
Choose an integral basis $\{\lambda_i\}$ of $\Lambda$. Then the group of $\Z$-module automorphisms of $\Lambda$ can be identified with $\mathrm{GL}(m+n,\Z)$. Now given any $\lambda,\lambda'\in\Lambda$ there exists coulumn vectors $\Vec{n},\Vec{n}'\in\Z^{m+n}$ such that $\lambda=\Vec{n}^T\mathcal{G}_\Lambda$ and $\lambda'=\Vec{n}^{'T}\mathcal{G}_\Lambda$. Any module automorphism $A\in\mathrm{GL}(m+n,\Z)$ acts by 
\begin{equation}
A(\lambda)=\Vec{n}^TA\mathcal{G}_\Lambda \, .   
\end{equation}
For $A$ to preserve inner product, we must have 
\begin{equation}
\begin{split}
    A(\lambda)\circ A(\lambda')&=\Vec{n}^TA\mathcal{G}_\Lambda g_{m,n}\mathcal{G}_\Lambda^TA^T\Vec{n}' \, , \\&=\Vec{n}^T\mathcal{G}_\Lambda g_{m,n}\mathcal{G}_\Lambda^T\Vec{n}' \, , 
\end{split}     
\end{equation}
which implies 
\begin{equation}
A\mathcal{G}_\Lambda g_{m,n}\mathcal{G}_\Lambda^TA^T=\mathcal{G}_\Lambda g_{m,n}\mathcal{G}_\Lambda^T \, .\end{equation}
Since $\{\lambda_i\}$ is a basis for $\R^{m,n}$ and $A$ must preserve the inner products of $\lambda_i$'s, we must have that $A=\mathcal{G}_\Lambda O\mathcal{G}_\Lambda^{-1}$ for some $O\in\mathrm{O}(m,n,\R)$. 
\end{proof}
\end{thm}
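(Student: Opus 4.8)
The plan is to fix an integral basis $\{\lambda_i\}_{i=1}^{m+n}$ of $\Lambda$, with generator matrix $\mathcal{G}_\Lambda$ (an invertible real $(m+n)\times(m+n)$ matrix, since the $\lambda_i$ are $\R$-linearly independent), and to describe automorphisms through the resulting identification $\Lambda\cong\Z^{m+n}$ of free abelian groups. First I would recall the elementary fact that the $\Z$-module automorphism group of a free module of rank $m+n$ is $\mathrm{GL}(m+n,\Z)$: writing $\lambda=\vec{n}^T\mathcal{G}_\Lambda$ for the coordinate row vector $\vec{n}\in\Z^{m+n}$, an invertible integer matrix $A$ acts by $\lambda\mapsto\vec{n}^T A\,\mathcal{G}_\Lambda$, and this assignment is a bijection onto $\mathrm{GL}(m+n,\Z)$. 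By definition a lattice automorphism is exactly such a $\Z$-module automorphism that in addition preserves the bilinear form $\circ$, so the task reduces to singling out those $A$ that do so.

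Next I would convert the form-preservation requirement into a matrix equation. Using $\lambda\circ\lambda'=\vec{n}^T\bigl(\mathcal{G}_\Lambda g_{m,n}\mathcal{G}_\Lambda^T\bigr)\vec{n}'$, the condition $A(\lambda)\circ A(\lambda')=\lambda\circ\lambda'$ for all $\lambda,\lambda'\in\Lambda$ becomes $\vec{n}^T A\,\mathcal{G}_\Lambda g_{m,n}\mathcal{G}_\Lambda^T A^T\vec{n}'=\vec{n}^T\mathcal{G}_\Lambda g_{m,n}\mathcal{G}_\Lambda^T\vec{n}'$ for all $\vec{n},\vec{n}'\in\Z^{m+n}$; since both sides are bilinear in $\vec{n},\vec{n}'$ and $\Z^{m+n}$ contains a basis of $\R^{m+n}$, this upgrades to the genuine matrix identity $A\,\mathcal{G}_\Lambda g_{m,n}\mathcal{G}_\Lambda^T A^T=\mathcal{G}_\Lambda g_{m,n}\mathcal{G}_\Lambda^T$. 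Setting $O:=\mathcal{G}_\Lambda^{-1}A\,\mathcal{G}_\Lambda$, so that $A=\mathcal{G}_\Lambda O\,\mathcal{G}_\Lambda^{-1}$ and $A^T=(\mathcal{G}_\Lambda^T)^{-1}O^T\mathcal{G}_\Lambda^T$, substitution makes the interior factors $\mathcal{G}_\Lambda^{-1}\mathcal{G}_\Lambda$ collapse, leaving $\mathcal{G}_\Lambda\,O g_{m,n}O^T\,\mathcal{G}_\Lambda^T=\mathcal{G}_\Lambda g_{m,n}\mathcal{G}_\Lambda^T$, i.e. $O g_{m,n}O^T=g_{m,n}$; for invertible $O$ this is equivalent to $O^T g_{m,n}O=g_{m,n}$, that is $O\in\mathrm{O}(m,n,\R)$. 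Conversely, any $A\in\mathrm{GL}(m+n,\Z)$ with $\mathcal{G}_\Lambda^{-1}A\,\mathcal{G}_\Lambda\in\mathrm{O}(m,n,\R)$ preserves $\circ$ by reversing this computation, and since it is integral and unimodular it maps $\Lambda$ bijectively onto itself, hence is a lattice automorphism. This identifies the underlying sets of $\mathrm{Aut}(\Lambda)$ and $\mathrm{O}_\Lambda(m,n,\Z)$.

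It then remains to see that the identification respects the group law. With the right-action convention above, composing $A_1$ followed by $A_2$ sends $\vec{n}^T\mathcal{G}_\Lambda\mapsto\vec{n}^T A_1A_2\,\mathcal{G}_\Lambda$, so the assignment $A\mapsto f_A$ is an anti-isomorphism; post-composing with inversion (legitimate because $\mathrm{O}_\Lambda(m,n,\Z)$ is a group — conjugation $O\mapsto\mathcal{G}_\Lambda O\mathcal{G}_\Lambda^{-1}$ is a homomorphism and $\mathrm{O}(m,n,\R)$ is closed under inverses) produces a genuine group isomorphism $\mathrm{Aut}(\Lambda)\cong\mathrm{O}_\Lambda(m,n,\Z)$. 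Equivalently, letting $A$ act on column coordinate vectors from the left makes the map an isomorphism directly; I would use whichever convention the rest of the paper adopts. I would also note that switching to another integral basis replaces $\mathcal{G}_\Lambda$ by $U\mathcal{G}_\Lambda$ with $U\in\mathrm{GL}(m+n,\Z)$ and conjugates $\mathrm{O}_\Lambda(m,n,\Z)$ by $U$, so the right-hand group is well defined up to isomorphism regardless of the chosen basis.

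Essentially everything here is routine linear algebra, and integrality of $\Lambda$ is not even used. The only two points requiring a little care are the passage from ``$A$ preserves all pairwise products of the $\lambda_i$'' to the clean matrix identity — which works precisely because the $\lambda_i$ span $\R^{m,n}$, so a bilinear identity valid on the $\Z$-lattice is valid everywhere — and keeping the left/right action convention straight so that composition of automorphisms corresponds to matrix multiplication without an accidental order reversal. Neither is a real obstacle, so I expect the write-up to be short and the ``hard part'' to be nothing more than this bookkeeping.
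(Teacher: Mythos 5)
Your proposal is correct and follows essentially the same route as the paper: identify $\Z$-module automorphisms with $\mathrm{GL}(m+n,\Z)$ via an integral basis, express form-preservation as $A\mathcal{G}_\Lambda g_{m,n}\mathcal{G}_\Lambda^T A^T=\mathcal{G}_\Lambda g_{m,n}\mathcal{G}_\Lambda^T$, and conclude $A=\mathcal{G}_\Lambda O\mathcal{G}_\Lambda^{-1}$ with $O\in\mathrm{O}(m,n,\R)$. Your extra care with the converse, the left/right action convention for the group law, and basis-independence only tidies up bookkeeping the paper leaves implicit.
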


\subsection{Construction of the LLVOA}
Let $\Lambda\subset\R^{m,n}$ be a $d=(m+n)$-dimensional Lorentzian lattice.
We denote by $\C [\Lambda]$ the group algebra of the lattice $\Lambda$ and denote the element $\lambda\in\Lambda$ embedded in $\C[\Lambda]$ by ${\rm e}^{\lambda}$. The multiplication in $\C[\Lambda]$ is defined\footnote{Technically speaking, $\C[\Lambda]$ is the group algebra of the formal group ${\rm e}^{\Lambda}:=\{{\rm e}^{\lambda}:\lambda\in\Lambda\}$ with group multiplication given by ${\rm e}^{\lambda_1} \cdot {\rm e}^{\lambda_2} = {\rm e}^{\lambda_1 + \lambda_2}$, i.e. $\C [\Lambda] = \C [{\rm e}^{\Lambda}]$} by 
\begin{equation}
    {\rm e}^{\lambda_1}\cdot {\rm e}^{\lambda_2}={\rm e}^{\lambda_1+\lambda_2} \, .
\end{equation}
Define the vector space 
\begin{equation}\label{eq:defmfhi}
    \mf{h}_i:=\Lambda_i\otimes_{\Z}\C,\quad i=1,2 \, , 
\end{equation}
and extend the bilinear form on $\Lambda_i$ to  $\mf{h}_i$ $\C$-linearly. Here $\Lambda_i$ is defined as in \eqref{eq:lamb12fromlamb}.
We define the Lie algebra 
\begin{equation}\label{eq:defhatmfhi}
\begin{split}
\hat{\mf{h}}&:=\left( \bigoplus_{\substack{r, s\in\Z}}(\mf{h}_1\otimes t^r)\oplus(\mf{h}_2\otimes \bar{t}^{\,s})\right)\oplus(\C \textbf{k} \oplus\C\bar{\textbf{k}}).
\end{split}      
\end{equation}
Introduce the notation 
\begin{equation}
    \alpha(r):=\alpha\otimes t^r,\quad \beta(s):=\beta\otimes\bar{t}^{\,s},\quad \alpha\in\mf{h}_1, \beta\in\mf{h}_2.
\end{equation}
The non-zero Lie bracket on $\hat{\mf{h}}$ is 
\begin{equation}\label{liebracketmodes}
\begin{split}
[\, \alpha(r_1), \alpha'(r_2) \, ] &=  r_1 \, \left\langle \alpha, \alpha' \right\rangle  \, \delta_{r_1+r_2,0} \, \textbf{k}, 
\\
[ \, \beta(s_1), \beta'(s_2) \, ]  &=  s_1 \, \left\langle \beta, \beta' \right\rangle  \, \delta_{s_1+s_2,0} \, \bar{\textbf{k}}  \  .    
\end{split}
\end{equation}
Note that 
\begin{equation}
\hat{\mf{h}}=\hat{\mf{h}}_1^\star\oplus\hat{\mf{h}}_2^\star\oplus\hat{\mf{h}}_1^0\oplus\hat{\mf{h}}_2^0     \, , 
\end{equation}
where $\hat{\mf{h}}_1^\star,\hat{\mf{h}}_2^\star$ are the standard Heisenberg algebras associated to the abelian Lie algebras $\mf{h}_1,\mf{h}_2$ respectively \cite[Chapter 1]{Frenkel:1988xz} and 
\begin{equation}
\hat{\mf{h}}_1^0:=\mf{h}_1\otimes t^0\cong \mf{h}_1,\quad \hat{\mf{h}}_2^0:=\mf{h}_2\otimes \bar{t}^0\cong \mf{h}_2.     
\end{equation}
Define 
\begin{equation}
\begin{split} \hat{\mf{h}}^{-}:=\left(\bigoplus_{r,s < 0}(\mf{h}_1\otimes t^r)\oplus(\mf{h}_2\otimes\bar{t}^{\,s})\right),& \quad \hat{\mf{h}}^{0}:=(\mf{h}_1\otimes t^0)\oplus (\mf{h}_2\otimes \bar{t}^0) \oplus\C \textbf{k}\oplus\C\bar{\textbf{k}} \, , \\
\hat{\mf{h}}^{+}:=\Bigg{(} \bigoplus_{r,s > 0}(\mf{h}_1  &  \otimes t^r)    \oplus(\mf{h}_2\otimes\bar{t}^{\,s}) \Bigg{)}.
\end{split}
\end{equation}
Note that 
\begin{equation}
    \hat{\mf{h}}=\hat{\mf{h}}^{-}\oplus\hat{\mf{h}}^{0}\oplus\hat{\mf{h}}^{+}.
\end{equation} 
We now define the space 
\begin{equation}\label{FockSpace}
\begin{split}
    V_{\Lambda}&:=   S(\hat{\mf{h}}^{-}) \otimes \C[\Lambda_1^0\oplus\Lambda_2^0] \, , 
\end{split}
\end{equation}
where $\Lambda_1^{0}$ and $\Lambda_2^{0}$ is defined as in \eqref{eq:Lambi0} and for any Lie algebra $\mf{g}$,  $S(\mf{g})$ is the symmetric algebra for $\mf{g}$, and $\C[\Lambda_1^0\oplus\Lambda_2^0] \equiv \C[\Lambda_0]$ is considered as a subspace of $\C[\Lambda]$. 
The space $V_{\Lambda}$ is generated by elements of the form 
\begin{equation}\label{eq:genvect}
\begin{split}
  & \left( \alpha_{1}(-m_1)\cdot\alpha_{2}(-m_2)\cdots\alpha_{k}(-m_k) \cdot \beta_{1}(-\bar{m}_1)\cdot\beta_{2}(-\bar{m}_2)\cdots\beta_{\bar{k}}(-\bar{m}_{\bar{k}}) \right)  \otimes {\rm e}^{(\alpha,\beta)} 
\end{split}
\end{equation}
for $m_i, \, \bar{m}_i> 0,\, k,\bar{k}\geq 0,(\alpha,\beta)\in\Lambda_0$, $\alpha_i \in \mf{h}_1$, and $\beta_i \in \mf{h}_2$. The space 
$V_\Lambda$ is a natural module of $\hat{\mf{h}}^{-}$. We define the  action of $\hat{\mf{h}}^{0}$ on $\C [\Lambda ]$, and hence on $\C[\Lambda_0]$, by 
\begin{equation}\label{eq:alp0act}
\begin{split}
    \alpha' (0) \, e^\lambda   &=  \langle \alpha' , \alpha^\lambda \rangle   \, e^{\lambda}\, , \\ 
    \beta'(0) \, e^\lambda  &=   \langle \beta ' , \beta^\lambda \rangle  \, e^{\lambda} \, , 
\end{split}
\end{equation}
where $\alpha'(0) \in \hat{\mf{h}}_1^0,~\beta'(0) \in \hat{\mf{h}}_2^0$. The central elements $\textbf{k}$ and $\bar{\textbf{k}}$ act on $\C [\Lambda] $ as identity. 
Let $\hat{\mf{h}}^{+}$ act on $\C [\Lambda]$ by  0. We can extend the action of these subspaces of $\hat{\mf{h}}$ to $V_\Lambda$ by using the Lie bracket given in \eqref{liebracketmodes}. This makes $V_{\Lambda}$  into  an $\hat{\mf{h}}$-module. 

\bigskip 

We define a $\mathbb{Z}$-bilinear map $\epsilon: \Lambda \times \Lambda \rightarrow \mathbb{Z}$, which acts on the basis  as \cite{huang}

\begin{equation}\label{epsilonaction}
\epsilon\left(\lambda_i, \lambda_j\right)= \begin{cases} \lambda_i \circ \lambda_j  & i>j \, ,  \\ 0 & i \leq j \, ,\end{cases}  \  \    \\\end{equation}
where $\{\lambda_i\}_{i=1}^{m+n}$ is an integral basis of $\Lambda$.
The action of $\epsilon$ on general vectors is defined by the $\Z$-bilinearity of $\epsilon$.
Consider  $\hat{\Lambda}=\Z_2\times\Lambda$, with the multiplication on it given by 
\begin{equation}\label{cocyclelattice}
\begin{split}
    (\theta, \lambda) \cdot (\tau, \lambda') = & \left(  \theta \tau  \,  (-1)^{ \epsilon(\lambda, \lambda')}, \lambda + \lambda' \right).  
\end{split}
\end{equation}
We now consider the $\Z_2$ central extension of the lattice $\Lambda$:
\begin{equation}\label{eq:centextoflamb}
    0\longrightarrow\Z_2\longrightarrow\hat{\Lambda}\longrightarrow\Lambda\longrightarrow 0 \ .
\end{equation}
 We denote elements $(1,\lambda),(\theta,0)\in \hat{\Lambda}$ by ${\rm e}_{\lambda}=(1,\lambda)$ and $\theta=(\theta,0)$ respectively. Then it is easy to check that  
\begin{equation}\label{eq:elambdatheta}
    (\theta,\lambda)=\theta \, {\rm e}_{\lambda}={\rm e}_{\lambda}\theta \, ,
\end{equation}
and 
\begin{equation}\label{eq:elamemu}
    {\rm e}_{\lambda}{\rm e}_{\mu}=(-1)^{\epsilon(\lambda,\mu)} \, {\rm e}_{\lambda+\mu} \ .
\end{equation}
Using the above relation, it can be shown that (see Lemma \ref{lemma:commfunccenext} for proof) 
\begin{equation}\label{eq:commutelm}{\rm e}_{\lambda}{\rm e}_{\mu}=(-1)^{\lambda\circ\mu} \, {\rm e}_{\mu} \, {\rm e}_{\lambda} \,.
\end{equation}
This property requires that the lattice be even and integral. Note that we chose an integral basis of $\Lambda$ to define the central extension. In Appendix \ref{app:centext} we show that a cocycle $\tilde{\epsilon}$ defined analogous to \eqref{epsilonaction} for a different choice of basis is cohomologous to $\epsilon$, and hence gives rise to an isomorphic central extension.  
$\hat{\Lambda}$ acts on $\C[\Lambda]$ as follows
\begin{equation}\label{elambdaaction}
    (\theta,\lambda') \, e^{\lambda}=\theta(-1)^{\epsilon(\lambda',\lambda)} \, e^{\lambda + \lambda'}  \, .
\end{equation}
In particular for $ (\theta,\lambda') = (1, \lambda' ) = {\rm e}_{\lambda'}$, we have 
\begin{equation}\label{eq:eloweraction}
     {\rm e}_{\lambda'} \, {\rm e}^{\lambda}= (-1)^{\epsilon(\lambda',\lambda)} \, {\rm e}^{\lambda + \lambda'}  \, .
\end{equation}
Note that the same cocycle $\epsilon$ restricted to $\Lambda_0$
\begin{equation}
    \epsilon:\Lambda_0\times\Lambda_0\longrightarrow\Z \, , 
\end{equation}
defines a central extension $\hat{\Lambda}_0:=\Z_2\times\Lambda_0\subset\hat{\Lambda}$:
\begin{equation}
0\longrightarrow\Z_2\longrightarrow\hat{\Lambda}_0\longrightarrow\Lambda_0\longrightarrow 0 \ .
\end{equation}
Moreover, the action \eqref{elambdaaction} restricted to $\hat{\Lambda}_0$ makes $\C[\Lambda_0]$ into a $\hat{\Lambda}_0$-module. 
This makes $V_{\Lambda}$ into a $\hat{\Lambda}_0$-module where $\hat{\Lambda}_0$ acts only on $\C[\Lambda_0]$. Let $x,\bar{x}$ be formal variables.
For any vector $\lambda = (\alpha^\lambda, \beta^\lambda)$, define the operators $x^{\alpha^\lambda} , \bar{x}^{\beta^\lambda} $ by the following actions 
\begin{equation}\label{eq:xpoweraction}
\begin{split}
    x^{\alpha^\lambda} (u \otimes {\rm e}^{\lambda ' })  = x^{\langle \alpha^\lambda, \, \alpha^{\lambda'} \rangle} (u \otimes {\rm e}^{\lambda'}) \, ,  \\ 
    \bar{x}^{\beta^\lambda} (u \otimes {\rm e}^{\lambda'})  =  \bar{x}^{\langle \beta^\lambda, \, \beta^{\lambda'} \rangle}  (u \otimes {\rm e}^{\lambda'}) \, ,  
\end{split} 
\end{equation}
where $u\in S(\hat{\mf{h}}^-),\lambda'\in\Lambda_0 $. Note that $x^{\alpha^\lambda},\bar{x}^{\beta^\lambda}$ acts as $x^{\alpha^\lambda(0)},\bar{x}^{\beta^\lambda(0)}$.
For $\lambda=(\alpha^\lambda,\beta^\lambda) \in \Lambda_{0} $, define the vertex operators 
\begin{equation}\label{eq:vertexoperatordef}
\begin{split}
    Y_{V_{\Lambda}}({\rm e}^{\lambda},x,\bar{x}):=&\left[\exp\left(-\sum_{r<0}\frac{\alpha^\lambda(r)}{r}x^{-r}\right)\exp\left(-\sum_{r>0}\frac{\alpha^\lambda(r)}{r}x^{-r}\right)\right.\\&\left.\exp\left(-\sum_{r<0}\frac{\beta^\lambda(r)}{r}\bar{x}^{-r}\right)\exp\left(-\sum_{r>0}\frac{\beta^\lambda(r)}{r}\bar{x}^{-r}\right)\right]{\rm e}_{\lambda}x^{\alpha^\lambda}\bar{x}^{\beta^\lambda} \ .
\end{split}
\end{equation}
From the Lie bracket in \eqref{liebracketmodes}, it is easy to show 
\begin{equation} \label{twomodescommute}
    [\alpha^{\lambda}(r), \beta^{\lambda}(s)] = 0
\end{equation}
for all $r,s \in \mathbb{Z}$, so that the order of exponentials with $\alpha^\lambda(r)$ and $\beta^\lambda(r)$ does not matter.
For a formal variable $x$, we introduce the notation 
\begin{equation}
    \alpha^\lambda(x) =  \underbrace{ \sum_{r > 0} \alpha^\lambda(r) x ^{-r-1}}_{\alpha^\lambda(x)^{+}} + \underbrace{\sum_{r < 0}\alpha^\lambda(r) x ^{-r-1}}_{\alpha^\lambda(x)^{-}}   + \,    \alpha^\lambda(0) x^{-1}  \ , 
\end{equation}
Similarly, we can also define $\beta^\lambda(\bar{x})$. We define the formal integration as the map by
\begin{equation}
    \int dx~x^r=\frac{x^{r+1}}{r+1},\quad n\neq -1 \, .
\end{equation}
We can then write the vertex operator as 
\begin{equation}\label{eq:veropelamb}
\begin{split}
    Y_{V_{\Lambda}}({\rm e}^{\lambda},x,\Bar{x})=\exp\left(\int dx~\alpha^{\lambda}(x)^-\right)&\exp\left(\int dx~\alpha^{\lambda}(x)^+\right)  \\ \times &\exp\left(\int d\bar{x}~\beta^{\lambda}(\Bar{x})^-\right)\exp\left(\int d\bar{x}~\beta^{\lambda}(\Bar{x})^+\right){\rm e}_{\lambda} \, x^{\alpha^{\lambda}}\Bar{x}^{\beta^{\lambda}}.
    \end{split}
\end{equation}
For a general vector $v$ of the form \eqref{eq:genvect}, the vertex operator is defined as 
\begin{equation}\label{eq:genvertopdef}
Y_{V_{\Lambda}}(v,x,\bar{x})=\typecolon \prod_{r=1}^{k}\prod_{s=1}^{\bar{k}}\left(\frac{1}{(m_r-1)!}\frac{d^{m_r - 1 }\alpha_{r}(x)}{dx^{m_r-1}}\right)\left(\frac{1}{(\bar{m}_s-1)!}\frac{d^{\bar{m}_s-1}\beta_{s}(\bar{x})}{d\bar{x}^{\bar{m}_s-1}}\right) Y_{V_{\Lambda}}({\rm e}^\lambda,x,\bar{x})\typecolon \, , 
\end{equation}
where the normal ordering $\typecolon\typecolon$ is defined as 
\begin{equation}\label{eq:normord}
\begin{split}
&\typecolon \alpha^\lambda(p) \,  \alpha^{\lambda'}(q)\typecolon = \typecolon \alpha^{\lambda'}(q) \,  \alpha^\lambda(p)\typecolon =\begin{cases}
\alpha^\lambda(p) \, \alpha^{\lambda'}(q)& p\leq q \, , \\ \alpha^{\lambda'}(q) \,  \alpha^\lambda(p)& p\geq q \, ,
\end{cases}
\\&\typecolon \alpha^\lambda(p)\mathrm{e}_{\lambda'}\typecolon =\typecolon \mathrm{e}_{\lambda'} \,  \alpha^\lambda(p)\typecolon = \mathrm{e}_{\lambda'} \,  \alpha^\lambda(p) \, ,\\&\typecolon x^{\alpha^{\lambda}} \, {\rm e}_{\lambda'} \typecolon = \typecolon {\rm e}_{\lambda'} \, x^{\alpha^{\lambda}}\typecolon = {\rm e}_{\lambda'}  \, x^{\alpha^{\lambda}} \, , 
\end{split}
\end{equation} 
and similarly for $\beta^{\lambda}$ and $\bar{x}^{\beta^\lambda}$. The vertex operator for general vectors in $V_\Lambda$ is defined by linear extension to all of $V_{\Lambda}$.
\begin{remark}\label{rem:genvertop}
Using the central extension \eqref{eq:centextoflamb},  \eqref{eq:genvertopdef} can be used to define vertex operators even if ${\rm e}^\lambda\in\C[\Lambda]$. These vertex operators will act on vectors of the form \eqref{eq:genvect} with ${\rm e}^\lambda\in\C[\Lambda]$ rather than $\C[\Lambda_0]$. This will be crucial when we construct module vertex operators and intertwining operators on the modules of $V_{\Lambda}$.    
\end{remark}
The vacuum vector is given by $\mathbf{1}=\mathrm{e}^0$. The conformal vector is constructed below, see \eqref{eq:confvectLambda}. \subsection{Proof of axioms}\label{sec:llvoaproofs}
We now prove that $(V_{\Lambda}, Y_{V_{\Lambda}},\omega_L, \omega_{R}, \mathbf{1} )$ is a non-chiral VOA. 
\\\\
\textit{Proof of identity property \ref{item:idprop}:} From the definition \eqref{eq:vertexoperatordef}, it is clear that $Y_{V_{\Lambda}}(\mathbf{1},x,\bar{x})={\rm e}_0=\mathds{1}.$\\\\
\textit{Proof of grading-restriction property \ref{item:gradres}:} The grading on $V_{\Lambda}$ is given by defining the conformal weight of vector $v$ of the form \eqref{eq:genvect} by
\begin{equation}\label{eq:gradingformula}
    h_v=\frac{\langle\alpha,\alpha\rangle}{2}+\sum_{i=1}^{k}m_i,\quad \Bar{h}_v=\frac{\langle\beta,\beta\rangle}{2}+\sum_{j=1}^{\bar{k}}\bar{m}_j \, ,
\end{equation}
where $m_i$ and $\bar{m}_j$ are positive integers appearing in \eqref{eq:genvect}. 
Note that for ${\rm e}^\lambda$ with $\lambda=(\alpha,\beta)\in\Lambda_0$, we have $(\alpha,0) \in \Lambda$, $(\alpha,0) \circ (\alpha,0) = \langle \alpha, \alpha \rangle \in 2 \Z_{+}$. Then we have that $h_v,\Bar{h}_v\geq 0$ so that $V_{(h,\bar{h})}=0$ for $h$ or $\Bar{h}<0$, i.e. $M=0$ in \eqref{eq:lowerM}.\\
 Similarly, $\langle \beta, \beta \rangle \in 2 \Z_{+}$, which implies that both $h_v$ and $\Bar{h}_v$ are
positive integers \footnote{Note that this argument also works for a general ${\rm e}^\lambda\in\C[\Lambda]$ with $\lambda\in\Lambda$ since $\langle\alpha,\alpha\rangle,\langle\beta,\beta\rangle \geq 0$ even in this case.}.\\

We will now show that dim$(V_{(h,\bar{h})})< \infty$. 
Note that $\Lambda_1^{0}$ and $\Lambda_2^{0}$ are lattices. It suffices to show that there exist only finitely many vectors of the form \eqref{eq:genvect}, satisfying the conditions in \eqref{eq:gradingformula}. We first show that for any $h,\bar{h}\in\mathbb{R}$ the number of distinct $\lambda = (\alpha, \beta) \in \Lambda_{0}$ satisfying 
\begin{equation}\label{eq:alpbethhbar}
   \langle\alpha,\alpha\rangle \leq 2 h \text{ and } \langle\beta,\beta\rangle \leq 2 \bar{h} \,  , \text{ where } \alpha \in \Lambda_1^{0}, \beta \in \Lambda_2^{0}\, ,
\end{equation}
can be only finitely many.  Consider the sets
\begin{equation}
 X_1 = \{  \alpha \in \Lambda_1^{0} ~\lvert~ \langle \alpha, \alpha  \rangle \leq 2h  \}, 
\end{equation}
    \begin{equation}
X_2 =  \{  \beta\in\Lambda_2^{0}~ \lvert ~\langle \beta, \beta  \rangle \leq 2\bar{h} \}, 
\end{equation}
which have finite cardinality, say $N_1$ and $N_2$, due to the fact that $\Lambda_1^0$ and $\Lambda_2^0$ are discrete. Then the set 
\begin{equation}
X =  \{  \lambda=(\alpha,\beta)\in\Lambda_0~ \lvert~ \langle \alpha, \alpha  \rangle \leq 2h, ~\langle \beta, \beta  \rangle \leq 2\bar{h} \} 
\end{equation}
is finite because the map
\begin{equation}
\begin{split}
    &X\longrightarrow X_1\times X_2\\&\lambda=(\alpha,\beta)\longmapsto (\alpha,\beta)
\end{split}    
\end{equation}
is injective. More precisely $\# X\leq N_1N_2$. 
Now, as there are only finitely many combinations of positive integers $\{ m_i \}_{i = 1}^{k}$ and $\{ \bar{m}_i \}_{i = 1}^{\bar{k}}$ such that 
\begin{equation}
        h_v - \frac{\langle\alpha,\alpha\rangle}{2} = \sum_{i=1}^{k}m_i,\quad \Bar{h}_v - \frac{\langle\beta,\beta\rangle}{2} = \sum_{j=1}^{\bar{k}}\bar{m}_j \, ,
\end{equation}
hence there are only finitely many generating vectors possible, which implies that dim$(V_{h, \bar{h}}) < \infty$. 
\\\\
\textit{Proof of single-valuedness property \ref{item:singvalprop}:} For the general vector $v$ of the form \eqref{eq:genvect} we have 
\begin{equation}
\begin{split}
    h_v-\bar{h}_v&=\frac{\langle\alpha,\alpha\rangle-\langle\beta,\beta\rangle}{2}+\sum_{i=1}^{k}m_i-\sum_{j=1}^{\bar{k}}\bar{m}_j\\&=\frac{\lambda\circ\lambda}{2}+\sum_{i=1}^{k}m_i-\sum_{j=1}^{\bar{k}}\bar{m}_j\in\mathbb{Z},
    \end{split}
\end{equation}
where we used the fact that $\Lambda$ is an even Lorentzian lattice. \\\\
\textit{Proof of creation property \ref{item:creatprop}:} 
We want to show that for any state $v \in V_{\Lambda}$, 
    \begin{equation}\label{eq:vertopon1}
       \lim_{x, \, \bar{x} \to 0} Y_{V_{\Lambda}}(v,x, \bar{x}) \, 
       \mathbf{1} = v.
    \end{equation}
Let us first consider the case when $v = {\rm e}^{\lambda}$, then the $Y_{V_{\Lambda}}$ operator is given in \eqref{eq:vertexoperatordef}. One then has to expand the exponentials, we ignore the terms when $\alpha^{\lambda}(n)$ and $\beta^{\lambda}(n)$ have $n > 0$, as they annihilate $\C[\Lambda_0]$. The two exponentials that remain will only have positive powers of $x$ and  $\bar{x}$, which vanish when we take the limit. Hence 
\begin{equation}
  \lim_{x , \, \bar{x} \to 0}  Y_{V_{\Lambda}}({\rm e}^{\lambda} , x, \bar{x}) \, \mathbf{1} = {\rm e}_{\lambda}\cdot \mathbf{1}.
\end{equation}
Here, we used the fact that $\mathbf{1}={\rm e}^{0}$ so that the action of $x^{\alpha^{\lambda}} (\bar{x}^{\beta^{\lambda}})$ on this is by identity, since $\langle \alpha^{\lambda}, 0 \rangle=  0 \left(\langle \beta^{\lambda}, 0 \rangle=0 \right)  $. Hence 
\begin{equation}
    \lim_{x , \, \bar{x} \to 0}  Y_{V_{\Lambda}}({\rm e}^{\lambda} , x, \bar{x}) \, \mathbf{1} =   {\rm e}_{\lambda} \,  {\rm e}^{0} = (-1)^{\epsilon (0,\lambda)}  {\rm e}^{\lambda} = {\rm e}^{\lambda} \, ,
\end{equation}
where we have used \eqref{epsilonaction}, \eqref{elambdaaction} and that ${\rm e}_{\lambda} = (1, \lambda)$. 
We now prove \eqref{eq:vertopon1} for a general vector $v$ of the form \eqref{eq:genvect}. The normal ordering in the definition \eqref{eq:genvertopdef} and the fact that $\hat{\mf{h}}^+$ annihilates $\C[\Lambda_0]$ forces the product to take the form
\begin{equation}
   \frac{d^{m_r-1}\alpha_r(x)}{dx^{m_r-1}}\to \sum_{p_r\leq -m_r}(m_r-1)!\alpha_r(p_r)x^{-p_r-m_r}, 
\end{equation}
and 
\begin{equation}
   \frac{d^{\bar{m}_s-1}\beta_s(\bar{x})}{d\bar{x}^{\bar{m}_s-1}}\to \sum_{q_s\leq -\bar{m}_s}(\bar{m}_s-1)!\beta_s(q_s)\bar{x}^{-q_s-\bar{m}_s}. 
\end{equation}
Thus we have
\begin{equation}
\begin{split}
&\typecolon\prod_{r=1}^k\prod_{s=1}^{\bar{k}} \left(\frac{1}{(m_r-1)!}\frac{d^{m_r-1}\alpha_{r}(x)}{dx^{m_r-1}}\right)\left(\frac{1}{(\bar{m}_s-1)!}\frac{d^{\bar{m}_s-1}\beta_{s}(\bar{x})}{d\bar{x}^{\bar{m}_s-1}}\right) Y_{V_{\Lambda}}({\rm e}^\lambda,x,\bar{x})\typecolon\mathbf{1} \\&\to\sum_{\substack{p_1\leq -m_1\\...\\p_k\leq -m_k}} \typecolon\alpha_1(p_1)\alpha_2(p_2)\dots\alpha_k(p_k)x^{-(p_1+\dots p_k)-(m_1+\dots+m_k)}\\&\times\sum_{\substack{q_1\leq -\bar{m}_1\\...\\q_{\bar{k}}\leq -\bar{m}_{\bar{k}}}} \beta_1(q_1)\beta_2(q_2)\dots\beta_{\bar{k}}(q_{\bar{k}})\bar{x}^{-(q_1+\dots q_{\bar{k}})-(\bar{m}_1+\dots+\bar{m}_{\bar{k}})}Y_{V_{\Lambda}}({\rm e}^\lambda,x,\bar{x})\typecolon\mathbf{1}.
\end{split}
\end{equation}
When we take $x,\bar{x}\to 0$ only the $p_r=-m_r,q_s=-\bar{m}_s$ terms in the sum survives. Combining this fact with the proof of \eqref{eq:vertopon1} for $v={\rm e}^\lambda$, we get  
\begin{equation}
\begin{split}
\lim_{x,\bar{x}\to 0}&Y_{V_{\Lambda}}(v,x,\bar{x})\mathbf{1}\\&=\left(\alpha_{1}(-m_1)\cdot\alpha_{2}(-m_2)\cdots\alpha_{k}(-m_k) \, \beta_{1}(-\Bar{m}_1)\cdot\beta_{2}(-\Bar{m}_2)\cdots\beta_{\ell}(-\Bar{m}_{\Bar{k}})\right)  \otimes {\rm e}^{\lambda} =v, 
\end{split}
\end{equation}
where we also use the fact that $Y_{V_{\Lambda}}({\rm e}^\lambda,x,\bar{x})$ can only contribute terms with $x^{n}$ and $\bar{x}^m$, where $n$ and $m$ are greater than 0.\\\\ 
\textit{Proof of Virasoro property \ref{item:virprop}:} The  conformal vector is given by
\begin{equation}\label{eq:confvectLambda}
    \omega_{\Lambda} :=   \frac{1}{2}\sum_{i = 1}^{\text{dim}(\mf{h}_1)} \left( u_{i}(-1)^2 \right)\otimes \mathbf{1} \, +  \frac{1}{2}\sum_{i = 1}^{\text{dim}(\mf{h}_2)}  \left( v_{i}(-1)^2\right)\otimes \mathbf{1} \equiv \omega_L + \omega_R,
\end{equation}
where $u_i\in\Lambda_1\otimes_{\Z}\C, v_i\in\Lambda_2\otimes_{\Z}\C$ are orthonormal basis of $\mf{h}_1$ and $\mf{h}_2$ respectively\footnote{Note that a different choice of orthonormal basis will give isomorphic LLVOAs. Indeed if $\{u_i'\}$ and $\{v_j'\}$ are orthonormal bases of $\mf{h}_1$ and $\mf{h}_2$ respectively, different from $\{u_i\}$ of and $\{v_j\}$. Then the map $f:V_\Lambda\longrightarrow V_{\Lambda}$ which acts trivially on $\C[\Lambda_0]$ and maps $u_i\mapsto u_i',~v_i\mapsto v_i'$ is a non-chiral VOA isomorphism.}: 
\begin{equation}
    \langle u_i,u_j\rangle=\delta_{i,j},\quad  \langle v_i,v_j\rangle=\delta_{i,j}.
\end{equation}
Since an integral basis of $\Lambda$ is also a basis of $\R^{m,n}$, it is clear that $\text{dim}(\mf{h}_1)=m$ and $\text{dim}(\mf{h}_2)=n$.
One can check that  
the conformal vertex operator is given by 
\begin{equation}
\begin{split}
Y_{V_{\Lambda}}(\omega,x,\bar{x}) &= Y_{V_{\Lambda}}(\omega_L,x,\bar{x})+Y_{V_{\Lambda}}(\omega_R,x,\bar{x})\\&=\sum_{p\in\Z}L_{\Lambda}(p)x^{-p-2}+\sum_{p\in\Z}\Bar{L}_{\Lambda}(p)\bar{x}^{-p-2},
\end{split}
\end{equation}
where the Virasoro generators are given by \cite[Section 8.7]{Frenkel:1988xz} 
\begin{equation}\label{eq:Virgenlamb}
\begin{split}
    L_{\Lambda}(p)=\frac{1}{2}\sum_{i=1}^{m}\sum_{k\in\Z}\typecolon u_i(k)u_i(p-k)\typecolon 
    \\
    \bar{L}_{\Lambda}(p)=\frac{1}{2}\sum_{i=1}^{n}\sum_{k\in\Z}\typecolon v_i(k)v_i(p-k)\typecolon.
\end{split}
\end{equation}
Using the Lie brackets  
\begin{equation}\label{eq:heisalg}
\begin{split}
\left[ \, u_i(p), u_j(q) \, \right] &=  p \, \delta_{i,j}  \, \delta_{p+q,0} \, \textbf{k}, 
\\
\left[\, v_i(p), v_j(q) \, \right]  &=  p \, \delta_{i,j}  \, \delta_{p+q,0} \, \bar{\textbf{k}},     
\end{split}
\end{equation}
one can show that the Virasoro generators indeed satisfy the Virasoro algebra \eqref{eq:viraalg} with central charge $m=\text{dim}(\mf{h}_1), n=\text{dim}(\mf{h}_2)$ respectively, see \cite[Chapter 2]{Frenkel:1988xz}. 
\\\\
\textit{Proof of grading property \ref{item:gradprop}:} From \eqref{eq:Virgenlamb} and normal ordering \eqref{eq:normord}, we have 
\begin{equation}
\begin{split}
&L_{\Lambda}(0)=\frac{1}{2}\sum_{i=1}^{m}\sum_{r\in\Z}\typecolon u_i(r)u_i(-r)\typecolon= \sum_{i=1}^{m}\sum_{r>0}u_i(-r)u_i(r)+\frac{1}{2}u_i(0)^2,
    \\
    &\bar{L}_{\Lambda}(0)=\frac{1}{2}\sum_{i=1}^{n}\sum_{r\in\Z}\typecolon v_i(r)v_i(-r)\typecolon= \sum_{i=1}^{n}\sum_{r>0}v_i(-r)v_i(r)+\frac{1}{2}v_i(0)^2. 
\end{split}    
\end{equation}
Then for $v\in V_{\Lambda}$ of the form \eqref{eq:genvect}, using the Lie bracket \eqref{eq:heisalg} and the action \eqref{eq:alp0act}, we have 
\begin{equation}
\begin{split}
L_\Lambda(0)v&=\sum_{j=1}^k\left[\cdots\left(m_j\sum_{i=1}^{m}\langle u_i,\alpha_j\rangle u_i(-m_j)\right)\cdots\right]\otimes {\rm e}^\lambda+\frac{1}{2}\sum_{i=1}^{m}\langle u_i,\alpha\rangle^2v  \, , \\&=\sum_{j=1}^km_j\left[\cdots\alpha_j(-m_j)\cdots\right]\otimes {\rm e}^\lambda+\frac{1}{2}\sum_{i=1}^{m}\langle \langle u_i,\alpha\rangle u_i,\alpha\rangle v  \, ,
\\&=\left[\sum_{j=1}^km_j+\frac{\langle\alpha,\alpha\rangle}{2}\right]v   \, ,
\end{split}    
\end{equation}
where we used the fact that $\{u_i\}$ is an orthonormal basis of $\mf{h}_1$. Similarly 
\begin{equation}
\bar{L}_\Lambda(0)v=\left[\sum_{\bar{j}=1}^{\bar{k}}\bar{m}_{\bar{j}}+\frac{\langle\beta,\beta\rangle}{2}\right]v \, .    
\end{equation}
\\\\
\textit{Proof of $L(0)$-property \ref{item:l0prop}:} Let $(\alpha, \beta) \in \Lambda_0$. Then
\begin{equation}
\begin{aligned}
{\left[L_{\Lambda}(0), \alpha(x)^{ \pm}\right]}& =  \frac{1}{2} \sum_{r \in \mathbb{Z}_{ \pm}} \sum_{s \in \mathbb{Z}} \sum_{i=1}^{m}\left[\typecolon u_i(s) u_i(-s)\typecolon, \alpha(r)\right] x^{-r-1} \\
&=  \frac{1}{2} \sum_{r \in \mathbb{Z}_{ \pm}} \sum_{i=1}^{m}\left(\sum_{s \geq 1}\left[u_i(-s) u_i(s), \alpha(r)\right]+\left[u_i(0)^2, \alpha(r)\right]\right. \\
& \left.\hspace{5cm}+\sum_{s \leq-1}\left[u_i(s) u_i(-s), \alpha(r)\right]\right) x^{-r-1} \\
&=  \frac{1}{2} \sum_{r \in \mathbb{Z}_{ \pm}} \sum_{s \neq 0}\left(s \alpha(-s) \delta_{s+r, 0}-n \alpha(s)\right) \delta_{r-s, 0} x^{-r-1} \\
&=  \sum_{r \in \mathbb{Z}_{ \pm}} \sum_{s \neq 0} n \alpha(-s) \delta_{s+r, 0} x^{-r-1} \\
&=  -\sum_{r \in \mathbb{Z}_{ \pm}} r \alpha(r) x^{-r-1},
\end{aligned}
\end{equation}
where we used
\begin{equation}
\begin{aligned}
\sum_{i=1}^{m}\left[u_i(-s) u_i(s), \alpha(r)\right] & =\sum_{i=1}^{m} u_i(-s)\left[u_i(s), \alpha(r)\right]+\left[u_i(-s), \alpha(r)\right] u_i(s) \\
& =\sum_{i=1}^{m}\left(s\left\langle u_i, \alpha\right\rangle \delta_{s+r, 0} u_i(-s)-s\left\langle u_i, \alpha\right\rangle \delta_{r-s, 0} u_i(s)\right) \\
& =s \alpha(-s) \delta_{s+r, 0}-s \alpha(s) \delta_{r-s, 0} .
\end{aligned}
\end{equation}
Rearranging terms, we get 
\begin{equation}\label{eq:commL0alphaxpm}
\left[L_{\Lambda}(0), \alpha(x)^\pm\right]=    x \frac{d \alpha(x)^{ \pm}}{d x}+\alpha(x)^{ \pm}.
\end{equation}
Similarly
\begin{equation}\label{eq:commL0betabarxpm}
\left[\bar{L}_{\Lambda}(0), \beta(\bar{x})^{\pm}\right]=\bar{x} \frac{d}{d \bar{x}} \beta(\bar{x})^{ \pm}+\beta(\bar{x})^{ \pm} .
\end{equation}
Note that the same proof also shows that 
\begin{equation}\label{eq:commL0alphabeta}
\begin{split}
&\left[L_{\Lambda}(0), \alpha(x)\right]=x \frac{d}{d x} \alpha(x)+\alpha(x)\\&
 \left[\bar{L}_{\Lambda}(0), \beta(\bar{x})\right]=\bar{x} \frac{d}{d \bar{x}} \beta(\bar{x})+\beta(\bar{x}). 
\end{split}
\end{equation}
Next using \eqref{eq:commL0alphaxpm} we have 
\begin{equation}\label{eq:commL0intalphaxpm}
\begin{aligned}
{\left[L_{\Lambda}(0), \int d x~ \alpha(x)^{ \pm}\right] } & =\int d x\left[L_{\Lambda}(0), \alpha(x)^{ \pm}\right] \\
& =\int d x\left(x \frac{d}{d x} \alpha(x)^{ \pm}+\alpha(x)^{ \pm}\right) \\
& =x \alpha(x)^{ \pm},
\end{aligned}
\end{equation}
where we used integration by parts for the formal integration.
Similarly
\begin{equation}\label{eq:commL0intbetabarxpm}
\left[\bar{L}_{\Lambda}(0), \int d \bar{x} ~\beta(\bar{x})^{ \pm}\right]=\bar{x} \beta(\bar{x})^{ \pm}.
\end{equation}
By $\mathrm{BCH}$ formula \eqref{eq:bch} we have
\begin{equation}
\begin{aligned}
L_{\Lambda}(0) \exp \left(\int d x~ \alpha(x)^{ \pm}\right) & =\exp \left(-\int d x~ \alpha(x)^{ \pm}\right) \sum_{n=0}^{\infty} \frac{1}{n !}\left[\left(\int d x ~\alpha(x)^{ \pm}\right)^n, L_{\Lambda}(0)\right] \\
& =\exp \left(\int d x~ \alpha(x)^{ \pm}\right) L_{\Lambda}(0)  +\sum_{n=1}^{\infty} \frac{1}{n !}\left[\left(-\int d x~ \alpha(x)^{ \pm}\right)^n, L_{\Lambda}(0)\right].
\end{aligned}
\end{equation}
By \eqref{eq:commL0intalphaxpm} and the fact that $[\alpha(r), \alpha(s)]=0$ for $r, s \leq 0$ or $r, s \geq 0$, we get
\begin{equation}\label{eq:commL0expintalphaxpm}
\left[L_{\Lambda}(0), \exp \left(\int d x~ \alpha(x)^{ \pm}\right)\right]=x \exp \left(\int d x~ \alpha(x)^{ \pm}\right) \alpha(x)^{ \pm}.
\end{equation}
Similarly
\begin{equation}\label{eq:commL0intbetabarxpm}
\left[\bar{L}_{\Lambda}(0), \exp \left(\int d \bar{x}~ \beta(x)^{ \pm}\right)\right]=\bar{x} \exp \left(\int d \bar{x}~ \beta(\bar{x})^{ \pm}\right) \beta(\bar{x})^{ \pm} .
\end{equation}
Finally, it is clear that for $\lambda^{\prime}=\left(\alpha^{\prime}, \beta^{\prime}\right)$ and $u\in S\left(\mf{h}^{-}\right)$ we have \begin{equation}
\begin{aligned}
{\left[L_{\Lambda}(0), {\rm e}_\lambda x^\alpha\right]\left(u \otimes {\rm e}^{\lambda^{\prime}}\right) } & =(-1)^{\epsilon\left(\lambda, \lambda^{\prime}\right)}\left(\frac{\left\langle\alpha+\alpha^{\prime}, \alpha+\alpha^{\prime}\right\rangle}{2}-\frac{\left\langle\alpha^{\prime}, \alpha^{\prime}\right\rangle}{2}\right) x^{\left\langle\alpha, \alpha^{\prime}\right\rangle}\left(u \otimes {\rm e}^{\lambda+\lambda^{\prime}}\right) \\
& =\left(\frac{\langle\alpha, \alpha\rangle}{2}+\left\langle\alpha, \alpha^{\prime}\right\rangle\right)(-1)^{\epsilon(\lambda,\lambda^{\prime})}x^{\left\langle\alpha, \alpha^{\prime}\right\rangle}\left(u \otimes {\rm e}^{\lambda+\lambda^{\prime}}\right)  \\
& =\left(\frac{\langle\alpha, \alpha\rangle}{2} {\rm e}_\lambda x^\alpha+{\rm e}_\lambda x \frac{d}{d x} x^\alpha\right)\left(u \otimes {\rm e}^{\lambda^{\prime}}\right)
\end{aligned}
\end{equation}
Putting all this together, we obtain
\begin{equation}
\begin{aligned}
{\left[L_{\Lambda}(0), Y_{V_{\Lambda}}\left({\rm e}^\lambda, x, \bar{x}\right)\right]=} & x \frac{d}{d x}\left[\exp \left(\int dx~\alpha(x)^{-}\right) \exp \left(\int dx~\alpha(x)^{+}\right)\right] \\
& \times \exp \left(\int d\bar{x}~\beta(\bar{x})^{-}\right) \exp \left(\int d\bar{x}~\beta(\bar{x})^{+}\right) {\rm e}_\lambda x^\alpha \bar{x}^\beta \\
& +\exp \left(\int dx~\alpha(x)^{-}\right) \exp \left(\int dx~\alpha(x)^{+}\right) \\
& \times \exp \left(\int d\bar{x}~\beta(\bar{x})^{-}\right) \exp \left(\int d\bar{x}~\beta(\bar{x})^{+}\right) \\
& \times\left(\frac{\langle\alpha, \alpha\rangle}{2} {\rm e}_\lambda x^\alpha+{\rm e}_\lambda x \frac{d}{d x} x^\alpha\right) \\
& =x \frac{d}{d x} Y_{V_{\Lambda}}\left({\rm e}^\lambda, x, \bar{x}\right)+\frac{\left\langle\alpha, \alpha\right\rangle}{2} Y_{V_{\Lambda}}\left({\rm e}^\lambda, x, \bar{x}\right) .
\end{aligned}
\end{equation}
Similarly
\begin{equation}
\left[\bar{L}_{\Lambda}(0), Y_{V_{\Lambda}}\left({\rm e}^\lambda, x, \bar{x}\right)\right]=\bar{x} \frac{d}{d \bar{x}} Y_{V_{\Lambda}}\left({\rm e}^\lambda, x, \bar{x}\right)+\frac{\left\langle\beta, \beta\right\rangle}{2} Y_{V_{\Lambda}}\left({\rm e}^\lambda, x, \bar{x}\right) .
\end{equation}
For general vertex operators, we observe that \begin{equation}
\begin{aligned}
{\left[L_{\Lambda}(0), \frac{d^r}{d x^r} \alpha(x)\right] } & =\frac{d^r}{d x^r}\left(x \frac{d}{d x} \alpha(x)+\alpha(x)\right) \\
& =\frac{d^{r-1}}{d x^{r-1}}\left(\frac{d}{d x} \alpha(x)+x \frac{d^2}{d x^2} \alpha(x)\right)+\frac{d^r}{d x^r} \alpha(x) \\
& =x \frac{d^r}{d x^r} \alpha(x)+(r+1) \frac{d^r}{d x^r} \alpha(x).
\end{aligned}
\end{equation}
This implies that for a general vector of the form \eqref{eq:genvect} we have
\begin{equation}
\begin{aligned}
{\left[L_{\Lambda}(0), Y_{V_{\Lambda}}(v, x, \bar{x})\right] } & =x \frac{d}{d x} Y_{V_{\Lambda}}(v, x, \bar{x})+\left(\sum_{i=1}^k m_i+\frac{\langle\alpha, \alpha\rangle}{2}\right) Y_{V_{\Lambda}}(v, x, \bar{x}) \\
& =x \frac{d}{d x} Y_{V_{\Lambda}}(v, x, \bar{x})+Y_{V_{\Lambda}}\left(L_{\Lambda}(0) v, x, \bar{x}\right) .
\end{aligned}
\end{equation}
Similarly
\begin{equation}
\left[\bar{L}_{\Lambda}(0), Y_{V_{\Lambda}}(v, x, \bar{x})\right]=\bar{x} \frac{d}{d \bar{x}} Y_{V_{\Lambda}}(v, x, \bar{x})+Y_{V_{\Lambda}}\left(\bar{L}_{\Lambda}(0) v, x, \bar{x}\right) .
\end{equation}
\\\\
\textit{Proof of translation property \ref{item:transprop}:} Observe that
\begin{equation}
\begin{split}
 \left[L_{\Lambda}(-1),\alpha(-r)\right] =&  \frac{1}{2}\sum_{i=1}^{m}\sum_{s\in\Z} [ \typecolon u_i(s)u_i(-1-s)\typecolon , \alpha(-r) ] \\
= & \frac{1}{2}\sum_{i=1}^{m}\sum_{s\in\Z}  [ u_i( -1 - s) \, u_{i}(s) , \alpha(-r)]   \\
 = &  \frac{1}{2}\sum_{i=1}^{m}\sum_{s\in\Z} \Bigl( u_i( -1 - s) [ u_{i}(s) , \alpha(-r)] + [ u_i( -1 - s) , \alpha(-r)] \, u_{i}(s) \Bigr) \\
  = & \frac{1}{2}\sum_{i=1}^{m}\sum_{s\in\Z} \Bigl(  s \delta_{s - r, 0}   \langle u_i , \alpha \rangle \,  u_i( -1 - s)  - (1 + s ) \delta_{r + s + 1, 0} \langle u_i , \alpha \rangle \,  u_{i}(s)  \Bigr) \\
  = & \frac{1}{2} \Bigl(  r \,  \alpha(-1 - r)  - (- r) \, \alpha( - 1 - r) \Bigr) = r \, \alpha(- \, r  - 1).  
\end{split}    
\end{equation}
Using the above commutator, it is easy to see that
\begin{equation}
\begin{split}
&L_\Lambda(-1)\left( \alpha_{1}(-m_1)\cdots\alpha_{k}(-m_k) \, \beta_{1}(-\bar{m}_1)\cdot\beta_{2}(-\bar{m}_2)\cdots\beta_{\bar{k}}(-\bar{m}_{\bar{k}}) \right)  \otimes {\rm e}^{(\alpha,\beta)} \\&=\sum_{i=1}^{m}\left( \alpha_{1}(-m_1)\cdots\alpha_{k}(-m_k) \, \beta_{1}(-\bar{m}_1)\cdots\beta_{\bar{k}}(-\bar{m}_{\bar{k}})\right)\otimes L_{\Lambda}(-1) {\rm e}^{(\alpha,\beta)} \\&+\sum_{i=1}^km_i\left( \alpha_{1}(-m_1)\cdots\alpha_i(-1-m_i)\cdots\alpha_{k}(-m_k) \, \beta_{1}(-\bar{m}_1)\cdots\beta_{\bar{k}}(-\bar{m}_{\bar{k}}) \right)  \otimes {\rm e}^{(\alpha,\beta)}.    
\end{split}    
\end{equation}
Now since 
\begin{equation}
\begin{split}
L_{\Lambda}(-1) {\rm e}^{(\alpha,\beta)}&=\frac{1}{2}\sum_{i=1}^{m}\sum_{s\in\Z}\typecolon u_i(s)u_i(-1-s)\typecolon {\rm e}^{(\alpha,\beta)}\\&=\sum_{i=1}^{m}u_i(0)u_i(-1) {\rm e}^{(\alpha,\beta)}\\&=\sum_{i=1}^{m}\langle u_i,\alpha\rangle u_i(-1) {\rm e}^{(\alpha,\beta)}\\&=\alpha(-1){\rm e}^{(\alpha,\beta)},
\end{split}
\end{equation}
hence we get the action of $L_{\Lambda}(-1)$ on generating vectors $v$ of the form \eqref{eq:genvect} to be:
\begin{equation}
\begin{split}
L_{\Lambda}(-1)v&=\sum_{i=1}^km_i\left( \alpha_{1}(-m_1)\cdots\alpha_i(-1-m_i)\cdots\alpha_{k}(-m_k) \, \beta_{1}(-\bar{m}_1)\cdots\beta_{\bar{k}}(-\bar{m}_{\bar{k}}) \right)  \otimes {\rm e}^{(\alpha,\beta)}\\&+\left(\alpha(-1) \alpha_{1}(-m_1)\cdots\alpha_i(-1-m_i)\cdots\alpha_{k}(-m_k) \, \beta_{1}(-\bar{m}_1)\cdots\beta_{\bar{k}}(-\bar{m}_{\bar{k}}) \right)  \otimes {\rm e}^{(\alpha,\beta)}. 
\end{split}
\end{equation}
The proof of the translation property now follows from exact same calculation as in \cite[Proposition 2.2]{huang}.
\subsubsection{Proof of locality of vertex operators}\label{sec:localproof}
In this section, we will prove the locality of two vertex operators and defer the proof of product of multiple vertex operators to Appendix \ref{app:genloc}. 
\begin{prop}\label{latticevectorlocality}
The vertex operators $Y_{V_{\Lambda}}({\rm e}^{\lambda},x,\bar{x})$ for $\lambda\in\Lambda_0$ satisfy the locality property \ref{item:locprop}. More precisely there exists, multi-valued, operator-valued functions $f(z_1, z_2 )$ and $g(\bar{z}_1, \bar{z}_2)$ analytic in $z_1,z_2$ and $\bar{z}_1, \bar{z}_2$ respectively with possible singularities at $\{(z_1,z_2) \in \C^2 \, \lvert \, z_1, z_2 \neq 0, z_1 \neq z_2\} $, such that $f(z_1, z_2) g(\bar{z}_1, \bar{z}_2)$ is single-valued when $\bar{z}_1, \bar{z}_2$ are the complex conjugates of $z_1,z_2$ respectively and equals 
\begin{equation}
\begin{split}
    Y_{V_{\Lambda}}({\rm e}^\lambda, z_1, \bar{z}_1) Y_{V_{\Lambda}}({\rm e}^{\lambda'}, z_2, \bar{z}_2) \  {\textrm when } \  \lvert z_1 \rvert > \lvert z_2 \rvert~, \\
Y_{V_{\Lambda}}({\rm e}^{\lambda'}, z_2, \bar{z}_2) Y_{V_{\Lambda}}({\rm e}^{\lambda}, z_1, \bar{z}_1) \  {\textrm when } \  \lvert z_2 \rvert > \lvert z_1 \rvert~.
\end{split}
\end{equation}
\begin{proof}\label{Ecomutators}
We will closely follow the proofs of results in \cite[Section 2]{huang}. We begin by proving that  
\begin{equation}
\begin{split}    \left[\alpha(x_1),\alpha'(x_2)\right]=\langle \alpha,\alpha'\rangle\left[(x_2-x_1)^{-2}-(-x_2+x_1)^{-2}\right]\\\left[\beta(\bar{x}_1),\beta'(\bar{x}_2)\right]=\langle \beta,\beta'\rangle\left[(\bar{x}_2-\bar{x}_1)^{-2}-(-\bar{x}_2+\bar{x}_1)^{-2}\right]
\end{split}
\label{eq:alpalp'br}
\end{equation}
where $\lambda=(\alpha,\beta),\lambda'=(\alpha',\beta')$.
We have
\begin{equation}
\begin{aligned}
{\left[\alpha(x_1), \alpha'(x_2)\right] } & =\sum_{r, s \in \mathbb{Z}}[\alpha(r), \alpha'(s)] x_1^{-r-1} x_2^{-s-1} \\
& =\sum_{r, s \in \mathbb{Z}}\langle\alpha, \alpha'\rangle \,  r \, \delta_{r+s, 0} \, x_1^{-r-1} x_2^{-s-1} \\
& =-\langle\alpha, \alpha'\rangle \sum_{s \in \mathbb{Z}} s \, x_1^{s-1} x_2^{-s-1} \\
& =-\langle\alpha, \alpha'\rangle \frac{\partial}{\partial x_1} \sum_{s \in \mathbb{Z}} x_1^s x_2^{-s-1} \\
& =-\langle\alpha, \alpha'\rangle \frac{\partial}{\partial x_1}\left(\left(x_1-x_2\right)^{-1}-\left(-x_2+x_1\right)^{-1}\right) \\
& =\langle\alpha, \alpha'\rangle\left(\left(x_1-x_2\right)^{-2}-\left(-x_2+x_1\right)^{-2}\right),
\end{aligned}
\end{equation}
Note that this commutator is also true for complex variables $x_1=z_1,x_2=z_2$. Indeed from \eqref{eq:logz1-z2}
\begin{equation}
\begin{split}    \alpha(z_1)\alpha'(z_2)&=\typecolon\alpha(z_1)\alpha(z_2)\typecolon +\langle\alpha,\alpha'\rangle\sum_{s\in\Z}sz_1^{-s-1}z_2^{s-1}\\&=\typecolon\alpha(z_1)\alpha(z_2)\typecolon +\frac{\langle\alpha,\alpha'\rangle}{(z_1-z_2)^2},\quad |z_1|>|z_2|,\\&
\end{split}
\end{equation}
and 
\begin{equation}
    \begin{split}
     \alpha'(z_2)\alpha(z_1)&=\typecolon\alpha'(z_2)\alpha(z_1)\typecolon -\langle\alpha',\alpha\rangle\sum_{s\in\Z}sz_2^{-s-1}z_1^{s-1}\\&=\typecolon\alpha'(z_2)\alpha(z_1)\typecolon -\frac{\langle\alpha',\alpha\rangle}{(z_2-z_1)^2},\quad |z_2|>|z_1|.   
    \end{split}
\end{equation}
It is easy to see that 
\begin{equation}
  \typecolon\alpha(z_1)\alpha'(z_2)\typecolon=  \typecolon\alpha'(z_2)\alpha(z_1)\typecolon~,
\end{equation}
which gives us the commutator. In particular
\begin{equation}
\left[\alpha(z_1), \alpha'(z_2)\right]=0.    
\end{equation}
The other Lie bracket in \eqref{eq:alpalp'br} can be proved similarly. 
Using \eqref{eq:alpalp'br}, we can show that\footnote{we will use $\exp$ and ${\rm e}$ interchangeably. }   
\begin{equation}\label{eq:alpha-+comm}
\left[\alpha' \left(x_2\right)^{-}, {\rm e}^{\int \alpha\left(x_1\right)^{+} d x_1}\right]=\frac{\langle\alpha, \alpha'\rangle}{x_1-x_2} {\rm e}^{\int \alpha\left(x_1\right)^{+} d x_1}~.
\end{equation}
Integrating both sides of \eqref{eq:alpha-+comm} gives us 
\begin{equation}
\begin{aligned}
\left[-\int\alpha'\right.&\left. \left(x_2\right)^{-}dx_2, {\rm e}^{\int \alpha\left(x_1\right)^{+} d x_1}\right]\\& =\left(\langle\alpha, \alpha'\rangle \log \left(x_1-x_2\right)-\langle \alpha, \alpha'\rangle \log x_1\right) {\rm e}^{\int \alpha\left(x_1\right)^{+} d x_1} \\
& = \langle \alpha, \alpha' \rangle \left( \,  \log(x_1 - x_2) - \log(x_1) \, \right) {\rm e}^{\int \alpha\left(x_1\right)^{+} d x_1} = \langle \alpha, \alpha'\rangle \log \left(1-\frac{x_2}{x_1}\right) {\rm e}^{\int \alpha\left(x_1\right)^{+} d x_1} .
\end{aligned}
\end{equation}
One can write analogous formulas for $[\beta'(\bar{x}_1)^{\pm },\exp(\int\beta(x_2)^{\mp})]$.
Using the BCH identity 
\begin{equation}\label{eq:bch}
\exp(X) Y \exp(-X) = \sum_{s=0}^{\infty} \frac{\left[(X)^s, Y\right]}{s !},
\end{equation}
where 
\begin{equation}
    [X^{s}, Y] = [  \underbrace{X \ldots ,  [X, [X}_{s \text{ times}} ,Y]] \ldots ], \quad [X^{0}, Y ] \equiv Y,
\end{equation}
we get 
\begin{equation}\label{eq:exexalpalp+-}
\begin{split}
    \exp\left(-\int dx_2~\alpha'(x_2)^-\right)\exp\left(\int dx_1~\alpha(x_1)^+\right)&\exp\left(\int dx_2~\alpha'(x_2)^-\right)\\&=\left(1-\frac{x_2}{x_1}\right)^{\langle\alpha,\alpha'\rangle}\exp\left(\int dx_1~\alpha(x_1)^+\right).
\end{split}
\end{equation}
To show the locality of vertex operator, we will also require the identities 
\begin{equation}\label{eq:xealp}
\begin{split}
    x_1^{\alpha} {\rm e}_{\lambda' } = x_{1}^{\langle \alpha, \alpha' \rangle}{\rm e}_{\lambda '}x_{1}^{\alpha}, \\
    \bar{x}_2^{\beta} {\rm e}_{\lambda' } = \bar{x}_{2}^{\langle \beta, \beta' \rangle}{\rm e}_{\lambda '}\bar{x}_{2}^{\beta},
\end{split}
\end{equation}
the first of which is shown below
\begin{equation}
\begin{split}
    x_1^{\alpha} \, {\rm e}_{\lambda'} \,  (u \otimes {\rm e}^{\lambda '' })  =   (-1)^{\epsilon(\lambda', \, \lambda'')} \, x_{1}^{\alpha} (u \otimes \, {\rm e}^{\lambda' + \lambda''}) = & (-1)^{\epsilon(\lambda', \, \lambda'')} \, x_1^{\langle \alpha, \alpha' \rangle + \langle \alpha, \alpha'' \rangle} (u \otimes \, {\rm e}^{\lambda' + \lambda'' })  \\ 
   {\rm e}_{\lambda'} \,   x_1^{\alpha} \, (u \otimes {\rm e}^{\lambda '' })  = x_1^{\langle \alpha, \alpha'' \rangle}  {\rm e}_{\lambda'} \,  (u \otimes {\rm e}^{\lambda '' })  = &(-1)^{\epsilon(\lambda', \lambda'')}  x_1^{\langle \alpha, \alpha'' \rangle}  \,  (u \otimes {\rm e}^{\lambda' + \lambda '' }). 
\end{split}
\end{equation}
We now have\footnote{We will often write $\int dx~ \alpha(x)=\int\alpha(x)$ to simplify the expressions.} 
\begin{equation}
\begin{split}
 &Y_{V_\Lambda}({\rm e}^\lambda,x_1,\bar{x}_1)Y_{V_\Lambda}({\rm e}^{\lambda ' },x_2,\bar{x}_2)\\   
&=\exp\left(\int\alpha(x_1)^-\right)\exp\left(\int\alpha(x_1)^+\right)\exp\left(\int\beta(\bar{x}_1)^-\right)\exp\left(\int\beta(\bar{x}_1)^+\right){\rm e}_\lambda x_1^\alpha\Bar{x}_1^\beta \\&\quad\exp\left(\int\alpha'(x_2)^-\right)\exp\left(\int\alpha'(x_2)^+\right)\exp\left(\int\beta'(\bar{x}_2)^-\right)\exp\left(\int\beta'(\bar{x}_2)^+\right) {\rm e}_{\lambda'}x_2^{\alpha'}\Bar{x}_2^{\beta'} \, .
\end{split}
\end{equation}
Now, utilizing \eqref{twomodescommute} and the fact that ${\rm e_{\lambda}}$, $x_{1}^{\alpha},$ and $\bar{x}_{1}^{\beta}$  commute with exponential of integrals
\begin{equation}
    \begin{split}
        &=\exp\left(\int\alpha(x_1)^-\right)\left[\exp\left(\int\alpha(x_1)^+\right)\exp\left(\int\alpha'(x_2)^-\right)\right]\exp\left(\int\alpha'(x_2)^+\right)\\&\quad\exp\left(\int\beta(\bar{x}_1)^-\right)\left[\exp\left(\int\beta(\bar{x}_1)^+\right)\exp\left(\int\beta'(\bar{x}_2)^-\right)\right]\exp\left(\int\beta'(\bar{x}_2)^+\right) {\rm e}_\lambda x_1^\alpha\Bar{x}_1^\beta {\rm e}_{\lambda'}x_2^{\alpha'}\Bar{x}_2^{\beta'} \ .
    \end{split}
\end{equation}
After which we use \eqref{eq:exexalpalp+-} and \eqref{eq:xealp} to write
\begin{equation}
    \begin{split}
        &=\left(1-\frac{x_2}{x_1}\right)^{\langle\alpha,\alpha'\rangle}\left(1-\frac{\Bar{x}_2}{\bar{x}_1}\right)^{\langle\beta,\beta'\rangle}x_1^{\langle\alpha,\alpha'\rangle}\Bar{x}_1^{\langle\beta,\beta'\rangle}\exp\left(\int\alpha(x_1)^-\right)\exp\left(\int\alpha'(x_2)^-\right)\\&\quad\exp\left(\int\alpha(x_1)^+\right)\exp\left(\int\alpha'(x_2)^+\right)\exp\left(\int\beta(\bar{x}_1)^-\right)\exp\left(\int\beta'(\bar{x}_2)^-\right)\\&\quad\exp\left(\int\beta(\bar{x}_1)^+\right)\exp\left(\int\beta'(\bar{x}_2)^+\right) {\rm e}_\lambda {\rm e}_{\lambda'}x_1^\alpha\Bar{x}_1^\beta x_2^{\alpha'}\Bar{x}_2^{\beta'} \ .
    \end{split}
\end{equation}
Finally we use \eqref{eq:multx1-x2} to collect terms to get
\begin{equation}\label{eq:veropprodnormord}
    \begin{split}
       & =\left(x_1-x_2\right)^{\langle\alpha,\alpha'\rangle}\left(\Bar{x}_1-\Bar{x}_2\right)^{\langle\beta,\beta'\rangle}\exp\left(\int\alpha(x_1)^-\right)\exp\left(\int\alpha'(x_2)^-\right)\exp\left(\int\alpha(x_1)^+\right)\\&\quad\exp\left(\int\alpha'(x_2)^+\right)\exp\left(\int\beta(\bar{x}_1)^-\right)\exp\left(\int\beta'(\bar{x}_2)^-\right)\exp\left(\int\beta(\bar{x}_1)^+\right)\exp\left(\int\beta'(\bar{x}_2)^+\right)\\&\quad {\rm e}_\lambda {\rm e}_{\lambda'}x_1^\alpha\Bar{x}_1^\beta x_2^{\alpha'}\Bar{x}_2^{\beta'}\\& \equiv \left(x_1-x_2\right)^{\langle\alpha,\alpha'\rangle}\left(\Bar{x}_1-\Bar{x}_2\right)^{\langle\beta,\beta'\rangle}F(x_1, x_2) \bar{F}(\bar{x}_1, \bar{x}_2 ), 
    \end{split}
\end{equation}
where we used \eqref{twomodescommute}, \eqref{eq:exexalpalp+-} and \eqref{eq:xealp} and $F(x_1, x_2) \bar{F}(\bar{x}_1, \bar{x}_2 )$ contains the operator part of $Y_{V_\Lambda}({\rm e}^\lambda,x_1,\bar{x}_1)Y_{V_\Lambda}({\rm e}^{\lambda ' },x_2,\bar{x}_2)$. Similarly we have 
\begin{equation}
\begin{split}
&Y_{V_\Lambda}({\rm e}^{\lambda'},x_2,\bar{x}_2)Y_{V_\Lambda}({\rm e}^\lambda,x_1,\bar{x}_1)\\&=\left(x_2-x_2\right)^{\langle\alpha,\alpha'\rangle}\left(\Bar{x}_2-\Bar{x}_1\right)^{\langle\beta,\beta'\rangle}\exp\left(\int\alpha'(x_2)^-\right)\exp\left(\int\alpha(x_1)^-\right)\exp\left(\int\alpha'(x_2)^+\right)\\&\quad\exp\left(\int\alpha(x_1)^+\right)\exp\left(\int\beta'(\bar{x}_2)^-\right)\exp\left(\int\beta(\bar{x}_1)^-\right)\exp\left(\int\beta'(\bar{x}_2)^+\right)\exp\left(\int\beta(\bar{x}_1)^+\right)\\&\quad (-1)^{\lambda\circ\lambda'} {\rm e}_\lambda {\rm e}_{\lambda'}x_1^\alpha\Bar{x}_1^\beta x_2^{\alpha'}\Bar{x}_2^{\beta'}  
\\
& = (-1)^{\lambda\circ\lambda'} \left(x_2-x_1\right)^{\langle\alpha,\alpha'\rangle}\left(\Bar{x}_2-\Bar{x}_1\right)^{\langle\beta,\beta'\rangle} F(x_1, x_2) \bar{F}(\bar{x}_1, \bar{x}_2 ),
  \end{split}  
\end{equation}
where we used \eqref{eq:commutelm}. Note that $(-x_1+x_2)^{\langle\alpha,\alpha'\rangle}=(x_2-x_1)^{\langle\alpha,\alpha'\rangle}$ when\footnote{Recall that when $s\in\C$, $(-x_1+x_2)^s$ is to be expanded in positive integral powers of $x_2$ as in \eqref{eq:binexp}.}  $\langle\alpha,\alpha'\rangle\geq 0$. To prove locality, we take complex variables $x_1=z_1,x_2=z_2$ and  $\bar{x}_1=\Bar{z}_1,\bar{x}_2=\Bar{z}_2$. Note that when we plug complex variable in place of formal variable, we must consider $(x_1-x_2)^s$ as a formal series so that 
\begin{equation}
\begin{split}
    Y_{V_\Lambda}({\rm e}^\lambda,z_1,\bar{z}_1)Y_{V_\Lambda}({\rm e}^{\lambda ' },z_2,\bar{z}_2)=\left(\sum_{p\geq 0}(-1)^pz_1^{\langle\alpha,\alpha'\rangle-p}z_2^p\right)&\left(\sum_{q\geq 0}(-1)^q\bar{z}_1^{\langle\beta,\beta'\rangle-q}\bar{z}_2^q\right)\\&\quad \times F(z_1, z_2) \bar{F}(\bar{z}_1, \bar{z}_2 ),
\end{split}    
\end{equation}
and similarly $Y_{V_\Lambda}({\rm e}^{\lambda ' },z_2,\bar{z}_2)Y_{V_\Lambda}({\rm e}^\lambda,z_1,\bar{z}_1)$. To complete the proof of locality,
consider the operator valued functions 
\begin{equation}
\begin{split}
    f(z_1,z_2)=\exp\left(\langle\alpha,\alpha'\rangle\log(z_1-z_2)\right)F(z_1,z_2) ,\\ 
    g(\bar{z}_1,\bar{z}_2)=\exp\left(\langle\beta,\beta'\rangle\log(\bar{z}_1-\bar{z}_2)\right)\bar{F}(\bar{z}_1,\bar{z}_2) \, .
\end{split}
\end{equation}
Then by \eqref{eq:logz1-z2} for $|z_1|>|z_2|$ we see that 
\begin{equation}
 f(z_1,z_2) g(\bar{z}_1,\bar{z}_2)=\left(\sum_{p\geq 0}(-1)^pz_1^{\langle\alpha,\alpha'\rangle-p}z_2^p\right)\left(\sum_{q\geq 0}(-1)^q\bar{z}_1^{\langle\beta,\beta'\rangle-q}\bar{z}_2^q\right) F(z_1, z_2) \bar{F}(\bar{z}_1, \bar{z}_2 ). 
\end{equation}
For $|z_2|>|z_1|$ we have 
\begin{equation}
\begin{split}
 f(z_1&,z_2) g(\bar{z}_1,\bar{z}_2)=\exp\left(\langle\alpha,\alpha'\rangle\log(-(z_2-z_1))\right)\exp\left(\langle\beta,\beta'\rangle\log(-(\bar{z}_2-\bar{z}_1))\right) \\=&  \, {\rm e}^{i\pi(\langle\alpha,\alpha'\rangle-\langle\beta,\beta'\rangle)}\exp\left(\langle\alpha,\alpha'\rangle\log(z_2-z_1)\right)\exp\left(\langle\beta,\beta'\rangle\log(\bar{z}_2-\bar{z}_1)\right)F(z_1,z_2)\bar{F}(\bar{z}_1,\bar{z}_2) \\=&(-1)^{\lambda\circ\lambda'}  \left(\sum_{p\geq 0}(-1)^pz_2^{\langle\alpha,\alpha'\rangle-p}z_1^p\right)\left(\sum_{q\geq 0}(-1)^q\bar{z}_2^{\langle\beta,\beta'\rangle-q}\bar{z}_1^q\right)F(z_1, z_2) \bar{F}(\bar{z}_1, \bar{z}_2 ), 
\end{split}    
\end{equation}
where we used the fact that in the principal branch of logarithm to write
\begin{equation}
    \log(-z)=\log|z|+i(\pi+\text{Arg}(z)),\quad \log(-\bar{z})=\log|z|-i(\pi+\text{Arg}(z))
\end{equation}
with 
\begin{equation}
    -\pi<\pi+\text{Arg}(z)<\pi.
\end{equation}
\end{proof}
\end{prop}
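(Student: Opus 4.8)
The plan is to reduce the statement to a single reordering computation --- push the product of the two ``vertex exponentials'' into a fully normal-ordered operator times an explicit scalar prefactor --- and then match the branch-cut phase of that prefactor against the cocycle sign $\mathrm{e}_\lambda\mathrm{e}_{\lambda'}=(-1)^{\lambda\circ\lambda'}\mathrm{e}_{\lambda'}\mathrm{e}_\lambda$ of \eqref{eq:commutelm}. First I would record, straight from the Heisenberg brackets \eqref{liebracketmodes}, the commutator of the generating currents $\alpha(x_1),\alpha'(x_2)$ (writing $\lambda=(\alpha,\beta)$, $\lambda'=(\alpha',\beta')$), namely $[\alpha(x_1),\alpha'(x_2)]=\langle\alpha,\alpha'\rangle\big((x_1-x_2)^{-2}-(-x_2+x_1)^{-2}\big)$, together with its $\beta$-analogue; since the chiral and anti-chiral Heisenberg modes commute by \eqref{twomodescommute}, the $x$- and $\bar x$-dependences of each vertex operator decouple and may be manipulated separately.

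Next I would exploit the basic conjugation identity. From the current commutator and the $\mathrm{BCH}$ formula \eqref{eq:bch} one gets $[\alpha'(x_2)^-,\exp(\int\alpha(x_1)^+\,dx_1)]=\tfrac{\langle\alpha,\alpha'\rangle}{x_1-x_2}\exp(\int\alpha(x_1)^+\,dx_1)$; integrating in $x_2$ yields $\langle\alpha,\alpha'\rangle\log(1-x_2/x_1)$ as conjugating factor, hence $\exp(-\int\alpha'(x_2)^-)\exp(\int\alpha(x_1)^+)\exp(\int\alpha'(x_2)^-)=(1-x_2/x_1)^{\langle\alpha,\alpha'\rangle}\exp(\int\alpha(x_1)^+)$. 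Combining this with the elementary zero-mode identity $x_1^{\alpha}\mathrm{e}_{\lambda'}=x_1^{\langle\alpha,\alpha'\rangle}\mathrm{e}_{\lambda'}x_1^{\alpha}$ and moving all annihilation exponentials and cocycle operators to the right, one obtains
\[
Y_{V_\Lambda}(\mathrm{e}^\lambda,x_1,\bar x_1)\,Y_{V_\Lambda}(\mathrm{e}^{\lambda'},x_2,\bar x_2)
=(x_1-x_2)^{\langle\alpha,\alpha'\rangle}(\bar x_1-\bar x_2)^{\langle\beta,\beta'\rangle}\,F(x_1,x_2)\,\bar F(\bar x_1,\bar x_2),
\]
with $F\bar F$ the normal-ordered operator part, manifestly pole-free. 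Repeating the computation with $\lambda$ and $\lambda'$ interchanged and inserting \eqref{eq:commutelm} (which needs $\Lambda$ even and integral) gives the reversed product as $(-1)^{\lambda\circ\lambda'}(x_2-x_1)^{\langle\alpha,\alpha'\rangle}(\bar x_2-\bar x_1)^{\langle\beta,\beta'\rangle}F\bar F$, with the same operator part.

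Then I would pass to complex variables, $x_i=z_i$, $\bar x_i=\bar z_i$, and set $f(z_1,z_2)=\exp(\langle\alpha,\alpha'\rangle\log(z_1-z_2))\,F(z_1,z_2)$ and $g(\bar z_1,\bar z_2)=\exp(\langle\beta,\beta'\rangle\log(\bar z_1-\bar z_2))\,\bar F(\bar z_1,\bar z_2)$, analytic on $z_i\neq 0$, $z_1\neq z_2$. Using the logarithm expansions \eqref{eq:logz1-z2}, the product $f\,g$ expands to $Y_{V_\Lambda}(\mathrm{e}^\lambda,z_1,\bar z_1)Y_{V_\Lambda}(\mathrm{e}^{\lambda'},z_2,\bar z_2)$ for $|z_1|>|z_2|$ and to $Y_{V_\Lambda}(\mathrm{e}^{\lambda'},z_2,\bar z_2)Y_{V_\Lambda}(\mathrm{e}^\lambda,z_1,\bar z_1)$ for $|z_2|>|z_1|$: passing from one region to the other, $\log(z_1-z_2)$ picks up $i\pi$ while $\log(\bar z_1-\bar z_2)$ picks up $-i\pi$, so the holomorphic times antiholomorphic prefactor acquires the phase $\mathrm{e}^{i\pi(\langle\alpha,\alpha'\rangle-\langle\beta,\beta'\rangle)}$, which equals $(-1)^{\lambda\circ\lambda'}$ precisely because $\lambda\circ\lambda'=\langle\alpha,\alpha'\rangle-\langle\beta,\beta'\rangle\in\Z$ --- exactly the cocycle sign relating the two orderings. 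The same integrality makes $f\,g$ single-valued when $\bar z_i=\overline{z_i}$, so $f\,g$ is the common analytic continuation required by Property~\ref{item:locprop}.

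The hard part is this last branch bookkeeping: one must write the principal-branch logarithms explicitly, e.g.\ $\log(-(z_2-z_1))=\log|z_2-z_1|+i(\pi+\mathrm{Arg}(z_2-z_1))$ in the holomorphic factor and its conjugate in the antiholomorphic one, track the accumulated phase through both factors simultaneously, and verify that evenness and integrality of $\Lambda$ turn it into an honest sign coinciding with $(-1)^{\lambda\circ\lambda'}$; without those hypotheses neither \eqref{eq:commutelm} nor the phase cancellation would hold and single-valuedness would fail. Everything before it --- the current commutators, the conjugation identity, and the normal-ordering rearrangement --- is routine formal-calculus manipulation, and the extension to products of more than two operators and to general vectors of the form \eqref{eq:genvect} can be bootstrapped from this two-point case together with duality (Proposition~\ref{prop:duality}).
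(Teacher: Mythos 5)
Your proposal is correct and follows essentially the same route as the paper's proof: the Heisenberg current commutators, the BCH conjugation giving the $\left(1-x_2/x_1\right)^{\langle\alpha,\alpha'\rangle}$ factor, the zero-mode and cocycle identities to normal-order both orderings, and finally the definition of $f(z_1,z_2)g(\bar z_1,\bar z_2)$ with the principal-branch phase $\mathrm{e}^{i\pi(\langle\alpha,\alpha'\rangle-\langle\beta,\beta'\rangle)}=(-1)^{\lambda\circ\lambda'}$ matching the cocycle sign via evenness and integrality of $\Lambda$. No substantive differences from the paper's argument.
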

\begin{remark}\label{rem:veropformloc}
From the calculations above, it is easy to see that the following formal commutativity axiom holds for the vertex operators: there exists $K,\Bar{K}\in\N$ such that 
\begin{equation}
    (x_1-x_2)^K(\Bar{x}_1-\Bar{x}_2)^{\Bar{K}}\left[Y_{V_\Lambda}({\rm e}^\lambda,x_1,\bar{x}_1),Y_{V_\Lambda}({\rm e}^{\lambda ' },x_2,\bar{x}_2)\right]=0.
\end{equation}
Indeed we have 
\begin{equation}
\begin{split}
\left[Y_{V_\Lambda}({\rm e}^\lambda,x_1,\bar{x}_1),\right.&\left.Y_{V_\Lambda}({\rm e}^{\lambda ' },x_2,\bar{x}_2)\right]=\left(\left(x_1-x_2\right)^{\langle\alpha,\alpha'\rangle}\left(\Bar{x}_1-\Bar{x}_2\right)^{\langle\beta,\beta'\rangle} \right.\\&\left.-(-1)^{\lambda\circ\lambda'} \left(x_2-x_1\right)^{\langle\alpha,\alpha'\rangle}\left(\Bar{x}_2-\Bar{x}_1\right)^{\langle\beta,\beta'\rangle}\right)F(x_1, x_2) \bar{F}(\bar{x}_1, \bar{x}_2 ) \\=&\left(\left(x_1-x_2\right)^{\langle\alpha,\alpha'\rangle}\left(\Bar{x}_1-\Bar{x}_2\right)^{\langle\beta,\beta'\rangle} \right.\\&\left.-\left(-x_2+x_1\right)^{\langle\alpha,\alpha'\rangle}\left(-\Bar{x}_2+\Bar{x}_1\right)^{\langle\beta,\beta'\rangle}\right)F(x_1, x_2) \bar{F}(\bar{x}_1, \bar{x}_2 )  
\end{split}    
\end{equation}
Since $(\alpha,\beta),(\alpha',\beta')\in\Lambda_0$, we have  $\langle\alpha,\alpha'\rangle,\langle\beta,\beta'\rangle\in\Z$ and we can choose $K,\bar{K}\in\N$ large enough such that 
\begin{equation}
    K+\langle\alpha,\alpha'\rangle\in\N,\quad\bar{K}+\langle\beta,\beta'\rangle\in\N.
\end{equation}
We then get 
\begin{equation}
\begin{split}
(x_1-x_2)^K(\Bar{x}_1-\Bar{x}_2)^{\Bar{K}}&\left[Y_{V_\Lambda}({\rm e}^\lambda,x_1,\bar{x}_1),Y_{V_\Lambda}({\rm e}^{\lambda ' },x_2,\bar{x}_2)\right]\\=\left((x_1-x_2)^{K+\langle\alpha,\alpha'\rangle}\right.&\left.(\Bar{x}_1-\Bar{x}_2)^{\Bar{K}+\langle\beta,\beta'\rangle}\right.\\-&\left.\left(-x_2+x_1\right)^{K+\langle\alpha,\alpha'\rangle}\left(-\Bar{x}_2+\Bar{x}_1\right)^{\Bar{K}+\langle\beta,\beta'\rangle}\right)F(x_1, x_2) \bar{F}(\bar{x}_1, \bar{x}_2 )\\=\left((x_1-x_2)^{K+\langle\alpha,\alpha'\rangle}\right.&\left.(\Bar{x}_1-\Bar{x}_2)^{\Bar{K}+\langle\beta,\beta'\rangle}\right.\\- &\left.\left(x_1-x_2\right)^{K+\langle\alpha,\alpha'\rangle}\left(\Bar{x}_1-\Bar{x}_2\right)^{\Bar{K}+\langle\beta,\beta'\rangle}\right)F(x_1, x_2) \bar{F}(\bar{x}_1, \bar{x}_2 )\\=0.    
\end{split}    
\end{equation}
\end{remark}
We now prove the locality for general
vertex operators.
\begin{thm}
The vertex operators $Y_{V_{\Lambda}}(v,x,\bar{x})$, where $v$ is the general vector of $V_\Lambda$, satisfy the locality property \ref{item:locprop}. More precisely there exists, multi-valued, operator-valued functions $f(z_1, z_2 )$ and $g(\bar{z}_1, \bar{z}_2)$ analytic in $z_1,z_2$ and $\bar{z}_1, \bar{z}_2$ respectively with possible singularities at $\{(z_1,z_2) \in \C^2 \, \lvert \, z_1, z_2 \neq 0, z_1 \neq z_2\} $, such that $f(z_1, z_2) g(\bar{z}_1, \bar{z}_2)$ is single-valued when $\bar{z}_1, \bar{z}_2$ are the complex conjugates of $z_1,z_2$ respectively and equals 
\begin{equation}\label{productgenvo}
\begin{split}
    Y_{V_{\Lambda}}(v, z_1, \bar{z}_1) Y_{V_{\Lambda}}(w, z_2, \bar{z}_2) \  {\textrm when } \  \lvert z_1 \rvert > \lvert z_2 \rvert, \\
Y_{V_{\Lambda}}(w, z_2, \bar{z}_2) Y_{V_{\Lambda}}(v, z_1, \bar{z}_1) \  {\textrm when } \  \lvert z_2 \rvert > \lvert z_1 \rvert.
\end{split}
\end{equation}
\begin{proof}
We will prove the locality for the spanning set of vectors of the form \eqref{eq:genvect}. Explicitly, we will prove the locality for vertex operators of the form  
\begin{equation}
\begin{split}
&Y_{V_{\Lambda}}(v,x,\bar{x})= \typecolon\prod_{r=1}^k\prod_{s=1}^{\bar{k}} \left(\frac{1}{(m_r-1)!}\frac{d^{m_r-1}\alpha_{r}(x)}{dx^{m_r-1}}\right)\left(\frac{1}{(\bar{m}_s-1)!}\frac{d^{\bar{m}_s-1}\beta_{s}(\bar{x})}{d\bar{x}^{\bar{m}_s-1}}\right) Y_{V_{\Lambda}}({\rm e}^\lambda,x,\bar{x})\typecolon\\&Y_{V_{\Lambda}}(w,x,\bar{x})= \typecolon  \prod_{p=1}^\ell\prod_{q=1}^{\bar{\ell}}\left(\frac{1}{(n_p-1)!}\frac{d^{n_p-1}\alpha_{p}'(x)}{dx^{n_p-1}}\right)\left(\frac{1}{(\bar{n}_q-1)!}\frac{d^{\bar{n}_q-1}\beta_{q}'(\bar{x})}{d\bar{x}^{\bar{n}_q-1}}\right) Y_{V_{\Lambda}}({\rm e}^\lambda,x,\bar{x})\typecolon,
\end{split}
\end{equation}
see \cite{Dolan:1989vr} for a similar calculation. 
Following the exact same steps as in the proof of \cite[Eq. (2.14)]{huang} with appropriate modifications, we can show that
\begin{equation}\label{eq:alpha+-comm}
\left[\alpha' \left(x_1\right)^{+}, {\rm e}^{\int \alpha\left(x_2\right)^{-} d x_2}\right]=\left(\frac{\langle\alpha, \alpha'\rangle}{x_1-x_2}-\frac{\langle\alpha, \alpha'\rangle}{x_1}\right) {\rm e}^{\int \alpha\left(x_1\right)^{-} d x_1}.    
\end{equation}
Differentiating on both the sides of \eqref{eq:alpha+-comm} with respect to $x_1$ we obtain
\begin{equation}\label{normalorderchange1}
    \left[\frac{1}{s!}\frac{d^s\alpha' \left(x_1\right)^{+}}{dx_1^s},{\rm e}^{\int \alpha\left(x_2\right)^{-} d x_2}\right]=(-1)^s\left(\frac{\langle\alpha, \alpha'\rangle}{(x_1-x_2)^{s+1}}-\frac{\langle\alpha, \alpha'\rangle}{x_1^{s+1}}\right) {\rm e}^{\int \alpha\left(x_1\right)^{-} d x_1}.
\end{equation}
Differentiating both sides of \eqref{eq:alpha-+comm} with respect to $x_2$ we obtain
\begin{equation}\label{normalorderchange2}
\left[\frac{1}{s!}\frac{d^s\alpha'\left(x_2\right)^{-}}{dx_2^s} , {\rm e}^{\int \alpha\left(x_1\right)^{+} d x_1}\right]=\frac{\langle\alpha, \alpha'\rangle}{(x_1-x_2)^{s+1}} {\rm e}^{\int \alpha\left(x_1\right)^{+} d x_1}.    
\end{equation}
Analogous formula holds for $[\beta'(\bar{x}_1)^{\pm },\exp(\int\beta(x_2)^{\mp})]$. In addition, we need  
\begin{equation}\label{eq:alph0elamb}
\alpha(0){\rm e}_{\lambda'}x^{\alpha'}=\langle \alpha,\alpha'\rangle {\rm e}_{\lambda'}x^{\alpha'}+{\rm e}_{\lambda'}x^{\alpha'}\alpha(0),\quad \lambda'=(\alpha',\beta').
\end{equation}
This follows from the following calculation: for $u\in S(\hat{\mf{h}}^-)$, $\lambda'=(\alpha',\beta'),\lambda''=(\alpha'',\beta'')$ we have
\begin{equation}
\begin{aligned} & \alpha(0) {\rm e}_{\lambda'} x^{\alpha'}\left(u \otimes {\rm e}^{\lambda''}\right) \\ & \quad=(-1)^{\epsilon(\lambda', \lambda'')} x^{\langle\alpha', \alpha''\rangle}\langle\alpha, \alpha'+\alpha''\rangle \left(u \otimes {\rm e}^{\lambda'+\lambda''}\right) \\ & \quad=(-1)^{\epsilon(\lambda', \lambda'')} x^{\langle\alpha', \alpha''\rangle}\langle \alpha, \alpha'\rangle\left(u \otimes {\rm e}^{\lambda'+\lambda''}\right)+(-1)^{\epsilon(\lambda', \lambda'')} x^{\langle\alpha', \alpha''\rangle}\langle \alpha, \alpha''\rangle\left(u \otimes {\rm e}^{\lambda'+\lambda''}\right) \\ & \quad=\langle \alpha, \alpha'\rangle \,  {\rm e}_{\lambda'} x^{\alpha'}\left(u \otimes {\rm e}^{\lambda''}\right)+{\rm e}_{\lambda'} x^{\alpha'} \alpha(0)\left(u \otimes {\rm e}^{\lambda''}\right) .\end{aligned}
\end{equation}
Analogous formulas  for $\beta(0){\rm e}_{\lambda'}\bar{x}^{\beta'}$ is
\begin{equation}\label{eq:beta0elamb}
\beta(0) {\rm e}_{\lambda'}\bar{x}^{\beta'} =  \langle \beta , \beta'   \rangle \,  {\rm e}_{\lambda '} \bar{x}^{ \beta '}   + {\rm e}_{\lambda'} \bar{x}^{\beta'} \beta(0),  
\end{equation}
which can be proved as follows:
\begin{equation}
\begin{split}
    \beta(0) &{\rm e}_{\lambda'}\bar{x}^{\beta'} \left(u \otimes {\rm e}^{\lambda''}\right) \\
    = & \, (-1)^{\epsilon(\lambda', \lambda'')} \bar{x}^{ \langle\beta', \beta'' \rangle}  \,  \langle \beta , \beta' + \beta''  \rangle \left(u \otimes {\rm e}^{\lambda'+\lambda''}\right) \\ 
    = &  \left(\, (-1)^{\epsilon(\lambda', \lambda'')} \bar{x}^{ \langle \beta', \beta'' \rangle}  \,  \langle \beta , \beta'   \rangle \left(u \otimes {\rm e}^{\lambda'+\lambda''}\right)  
 +  (-1)^{\epsilon(\lambda', \lambda'')} \bar{x}^{ \langle\beta', \beta'' \rangle}  \,  \langle \beta ,  \beta''  \rangle \left(u \otimes {\rm e}^{\lambda'+\lambda''}\right) \right) \\
 = \,  & \langle \beta ,  \beta' \rangle  \, {\rm e}_{\lambda'} \bar{x}^{\beta'} \left(u \otimes {\rm e}^{\lambda''}\right) + {\rm e}_{\lambda'} \bar{x}^{\beta'} \beta(0) \left(u \otimes {\rm e}^{\lambda''}\right) \ .
\end{split} 
\end{equation}
Let us now consider the product of two vertex operators, as in \eqref{productgenvo}. Using the normal ordering from \eqref{eq:normord} we have
\begin{equation}
\begin{split}
 & Y_{V_\Lambda}(v,x_1,\bar{x}_1)Y_{V_\Lambda}(w,x_2,\bar{x}_2)=\exp\left(\int\alpha(x_1)^-\right)\exp\left(\int\beta(\bar{x}_1)^-\right)\\   
&
\times\typecolon\prod_{r=1}^k\prod_{s=1}^{\bar{k}} \left(\frac{1}{(m_r-1)!}\frac{d^{m_r-1}\alpha_{r}(x_1)}{dx_1^{m_r-1}}\right)\left(\frac{1}{(\bar{m}_s-1)!}\frac{d^{\bar{m}_s-1}\beta_{s}(\bar{x}_1)}{d\bar{x}_1^{\bar{m}_s-1}}\right)\typecolon \\
&\times \exp\left(\int\alpha(x_1)^+\right)\exp\left(\int\alpha'(x_2)^-\right)\exp\left(\int\beta(\bar{x}_1)^+\right) \exp\left(\int\beta'(\bar{x}_2)^-\right) {\rm e}_\lambda x_1^\alpha\Bar{x}_1^\beta\\&\times
\typecolon  \prod_{p=1}^\ell\prod_{q=1}^{\bar{\ell}}\left(\frac{1}{(n_p-1)!}\frac{d^{n_p-1}\alpha_{p}'(x_2)}{dx_{2}^{n_p-1}}\right)\left(\frac{1}{(\bar{n}_q-1)!}\frac{d^{\bar{n}_q-1}\beta_{q}'(\bar{x}_2)}{d\bar{x}_2
^{\bar{n}_q-1}}\right) \typecolon \\ 
& \times\exp\left(\int\alpha'(x_2)^+\right)\exp\left(\int\beta'(\bar{x}_2)^+\right) {\rm e}_{\lambda'}x_2^{\alpha'}\Bar{x}_2^{\beta'} \, ,
\end{split}
\end{equation}
where we have used that ${\rm e}_{\lambda} x_1^{\alpha} \bar{x}_1^{\beta}$ commutes with the exponential of integrals and the exponential of $\alpha$ and $\beta$ commute with each other. Now, using \eqref{eq:exexalpalp+-}
we get 
\begin{equation}
\begin{split}
 & Y_{V_\Lambda}(v,x_1,\bar{x}_1)Y_{V_\Lambda}(w,x_2,\bar{x}_2)=\left(1-\frac{x_2}{x_1}\right)^{\langle\alpha,\alpha'\rangle}\left(1-\frac{\bar{x}_2}{\bar{x}_1}\right)^{\langle\beta,\beta'\rangle}\exp\left(\int\alpha(x_1)^-\right)\exp\left(\int\beta(\bar{x}_1)^-\right)\\   
&
\times\typecolon\prod_{r=1}^k\prod_{s=1}^{\bar{k}} \left(\frac{1}{(m_r-1)!}\frac{d^{m_r-1}\alpha_{r}(x_1)}{dx_1^{m_r-1}}\right)\left(\frac{1}{(\bar{m}_s-1)!}\frac{d^{\bar{m}_s-1}\beta_{s}(\bar{x}_1)}{d\bar{x}_1^{\bar{m}_s-1}}\right)\typecolon \\
&\times \exp\left(\int\alpha'(x_2)^-\right)\exp\left(\int\beta'(\bar{x}_2)^-\right) \exp\left(\int\alpha(x_1)^+\right)\exp\left(\int\beta(\bar{x}_1)^+\right) {\rm e}_\lambda x_1^\alpha\Bar{x}_1^\beta\\&\times
\typecolon  \prod_{p=1}^\ell\prod_{q=1}^{\bar{\ell}}\left(\frac{1}{(n_p-1)!}\frac{d^{n_p-1}\alpha_{p}'(x_2)}{dx_{2}^{n_p-1}}\right)\left(\frac{1}{(\bar{n}_q-1)!}\frac{d^{\bar{n}_q-1}\beta_{q}'(\bar{x}_2)}{d\bar{x}_2
^{\bar{n}_q-1}}\right) \typecolon \\ 
& \times\exp\left(\int\alpha'(x_2)^+\right)\exp\left(\int\beta'(\bar{x}_2)^+\right) {\rm e}_{\lambda'}x_2^{\alpha'}\Bar{x}_2^{\beta'} \, .
\end{split}
\end{equation}
Further, using \eqref{normalorderchange1}, \eqref{normalorderchange2}, \eqref{eq:alph0elamb}, and \eqref{eq:beta0elamb} successively on the product in normal order,  \eqref{eq:xealp}, and the formal variable identity \eqref{eq:multx1-x2} we get
\begin{equation}\label{generallocality1}
\begin{split}
 & Y_{V_\Lambda}(v,x_1,\bar{x}_1)Y_{V_\Lambda}(w,x_2,\bar{x}_2)=\left(x_1-x_2\right)^{\langle\alpha,\alpha'\rangle}\left(\bar{x}_1-\bar{x}_2\right)^{\langle\beta,\beta'\rangle}\\&\times\exp\left(\int\alpha(x_1)^-\right)\exp\left(\int\alpha'(x_2)^-\right)\exp\left(\int\beta(\bar{x}_1)^-\right)\exp\left(\int\beta'(\bar{x}_2)^-\right)\\   
&
\times\typecolon\prod_{r=1}^k\left[\frac{1}{(m_r-1)!}\frac{d^{m_r-1}\alpha_{r}(x_1)}{dx_1^{m_r-1}}+(-1)^{m_r-1}\left(\frac{\langle\alpha',\alpha_r\rangle}{(x_1-x_2)^{m_r}}-\frac{\langle\alpha',\alpha_r\rangle}{x_1^{m_r}}\right)\right]\\&\times\prod_{s=1}^{\bar{k}} 
\left[\frac{1}{(\bar{m}_s-1)!}\frac{d^{\bar{m}_s-1}\beta_{s}(\bar{x}_1)}{d\bar{x}_1^{\bar{m}_s-1}}+(-1)^{\bar{m}_s-1}\left(\frac{\langle\beta',\beta_s\rangle}{(\bar{x}_1-\bar{x}_2)^{\bar{m}_s}}-\frac{\langle\beta',\beta_s\rangle}{\bar{x}_1^{\bar{m}_s}}\right)\right]
\typecolon \\&\times  \typecolon\prod_{p=1}^\ell\left[\frac{1}{(n_p-1)!}\frac{d^{n_p-1}\alpha_{p}'(x_2)}{dx_{2}^{n_p-1}}-\frac{\langle\alpha,\alpha_p'\rangle}{(x_1-x_2)^{n_p}}-(-1)^{n_p-1}\frac{\langle\alpha,\alpha_p'\rangle}{x_2^{n_p}}\right]\\&\times\prod_{q=1}^{\bar{\ell}} \left[\frac{1}{(\bar{n}_q-1)!}\frac{d^{\bar{n}_q-1}\beta_{q}'(\bar{x}_2)}{d\bar{x}_2
^{\bar{n}_q-1}}-\frac{\langle\beta,\beta_q'\rangle}{(\bar{x}_1-\bar{x}_2)^{\bar{n}_q}}-(-1)^{\bar{n}_q-1}\frac{\langle\beta,\beta_q'\rangle}{\bar{x}_2^{\bar{n}_q}}\right]\typecolon \\&\times\exp\left(\int\alpha(x_1)^+\right) \exp\left(\int\beta(\bar{x}_1)^+\right)\exp\left(\int\alpha'(x_2)^+\right)\exp\left(\int\beta'(\bar{x}_2)^+\right)\\ 
& \times {\rm e}_\lambda{\rm e}_{\lambda'} x_1^\alpha\Bar{x}_1^\beta x_2^{\alpha'}\Bar{x}_2^{\beta'} \, .
\end{split}
\end{equation}
Next we have 
\begin{equation}\label{generallocality2}
\begin{split}
 & Y_{V_\Lambda}(w,x_2,\bar{x}_2)Y_{V_\Lambda}(v,x_1,\bar{x}_1)=\left(x_2-x_1\right)^{\langle\alpha,\alpha'\rangle}\left(\bar{x}_2-\bar{x}_1\right)^{\langle\beta,\beta'\rangle}(-1)^{\lambda\circ\lambda'}\\&\times\exp\left(\int\alpha(x_1)^-\right)\left(\int\alpha'(x_2)^-\right)\exp\left(\int\beta(\bar{x}_1)^-\right)\exp\left(\int\beta'(\bar{x}_2)^-\right)\\   
&
\times\typecolon\prod_{p=1}^\ell\left[\frac{1}{(n_p-1)!}\frac{d^{n_p-1}\alpha_{p}'(x_2)}{dx_{2}^{n_p-1}}+(-1)^{n_p-1}\left(\frac{\langle\alpha',\alpha_p\rangle}{(x_2-x_1)^{n_p}}-\frac{\langle\alpha',\alpha_p\rangle}{x_2^{n_p}}\right)\right]\\
&\times\prod_{q=1}^{\bar{\ell}} 
\left[\frac{1}{(\bar{n}_q-1)!}\frac{d^{\bar{n}_q-1}\beta_{q}'(\bar{x}_2)}{d\bar{x}_2
^{\bar{n}_q-1}}+(-1)^{\bar{n}_q-1}\left(\frac{\langle\beta',\beta_q\rangle}{(\bar{x}_2-\bar{x}_1)^{\bar{n}_q}}-\frac{\langle\beta',\beta_q\rangle}{\bar{x}_2^{\bar{n}_q}}\right)\right]
\typecolon \\
&\times  \typecolon\prod_{r=1}^k\left[\frac{1}{(m_r-1)!}\frac{d^{m_r-1}\alpha_{r}(x_1)}{dx_1^{m_r-1}}-\frac{\langle\alpha',\alpha_r\rangle}{(x_2-x_1)^{m_r}}-(-1)^{m_r-1}\frac{\langle\alpha',\alpha_r\rangle}{x_1^{m_r}}\right]\\
&\times\prod_{s=1}^{\bar{k}} \left[\frac{1}{(\bar{m}_s-1)!}\frac{d^{\bar{m}_s-1}\beta_{s}(\bar{x}_1)}{d\bar{x}_1^{\bar{m}_s-1}}-\frac{\langle\beta',\beta_s\rangle}{(\bar{x}_2-\bar{x}_1)^{\bar{m}_s}}-(-1)^{\bar{m}_s-1}\frac{\langle\beta',\beta_s\rangle}{\bar{x}_1^{\bar{m}_s}}\right]\typecolon \\&\times\exp\left(\int\alpha(x_2)^+\right) \exp\left(\int\beta(\bar{x}_2)^+\right)\exp\left(\int\alpha'(x_1)^+\right)\exp\left(\int\beta'(\bar{x}_1)^+\right)\\ 
& \times {\rm e}_\lambda{\rm e}_{\lambda'} x_1^\alpha\Bar{x}_1^\beta x_2^{\alpha'}\Bar{x}_2^{\beta'} \, ,
\end{split}
\end{equation}
where we used \eqref{eq:commutelm}. Now, let us take $x_1, x_2, \bar{x}_1$ and $\bar{x}_2$ to be complex numbers $z_1, z_2, \bar{z}_1$ and $\bar{z}_2$ respectively. Then we can rewrite \eqref{generallocality2} as
\begin{equation}
    \begin{split}
        & Y_{V_\Lambda}(w,z_2,\bar{z}_2)Y_{V_\Lambda}(v,z_1,\bar{z}_1)=\left(z_2-z_1\right)^{\langle\alpha,\alpha'\rangle}\left(\bar{z}_2-\bar{z}_1\right)^{\langle\beta,\beta'\rangle}(-1)^{\lambda\circ\lambda'}\\&\times\exp\left(\int\alpha(z_1)^-\right)\left(\int\alpha'(z_2)^-\right)\exp\left(\int\beta(\bar{z}_1)^-\right)\exp\left(\int\beta'(\bar{z}_2)^-\right)\\   
&\times  \typecolon\prod_{r=1}^k\left[\frac{1}{(m_r-1)!}\frac{d^{m_r-1}\alpha_{r}(z_1)}{dz_1^{m_r-1}}-\frac{\langle\alpha',\alpha_r\rangle}{(z_2-z_1)^{m_r}}-(-1)^{m_r-1}\frac{\langle\alpha',\alpha_r\rangle}{z_1^{m_r}}\right]\\
&\times\prod_{s=1}^{\bar{k}} \left[\frac{1}{(\bar{m}_s-1)!}\frac{d^{\bar{m}_s-1}\beta_{s}(\bar{z}_1)}{d\bar{z}_1^{\bar{m}_s-1}}-\frac{\langle\beta',\beta_s\rangle}{(\bar{z}_2-\bar{z}_1)^{\bar{m}_s}}-(-1)^{\bar{m}_s-1}\frac{\langle\beta',\beta_s\rangle}{\bar{z}_1^{\bar{m}_s}}\right]\typecolon \\
&
\times\typecolon\prod_{p=1}^\ell\left[\frac{1}{(n_p-1)!}\frac{d^{n_p-1}\alpha_{p}'(z_2)}{dz_{2}^{n_p-1}}+(-1)^{n_p-1}\left(\frac{\langle\alpha',\alpha_p\rangle}{(z_2-z_1)^{n_p}}-\frac{\langle\alpha',\alpha_p\rangle}{z_2^{n_p}}\right)\right]\\
&\times\prod_{q=1}^{\bar{\ell}} 
\left[\frac{1}{(\bar{n}_q-1)!}\frac{d^{\bar{n}_q-1}\beta_{q}'(\bar{z}_2)}{d\bar{z}_2
^{\bar{n}_q-1}}+(-1)^{\bar{n}_q-1}\left(\frac{\langle\beta',\beta_q\rangle}{(\bar{z}_2-\bar{z}_1)^{\bar{n}_q}}-\frac{\langle\beta',\beta_q\rangle}{\bar{z}_2^{\bar{n}_q}}\right)\right]
\typecolon \\
&\times\exp\left(\int\alpha(z_2)^+\right) \exp\left(\int\beta(\bar{z}_2)^+\right)\exp\left(\int\alpha'(z_1)^+\right)\exp\left(\int\beta'(\bar{z}_1)^+\right)\\ 
& \times {\rm e}_\lambda{\rm e}_{\lambda'} z_1^\alpha\Bar{z}_1^\beta z_2^{\alpha'}\Bar{z}_2^{\beta'} \, .
    \end{split}
\end{equation}
Here it is important that we understand $(z_1-z_2)^s$ as the power series since we obtained it by replacing $x_1\to z_1,x_2\to z_2$ in $(x_1-x_2)^s$ which is a formal series.  
In this step we have used the fact that the two normal ordered products commute. To see this, note that the normal ordered product can be written as the product without normal order plus a multiple of the central element $\textbf{k},\bar{\textbf{k}}$ using \eqref{liebracketmodes}. Then since $\alpha(z_1)$ and $\alpha'(z_2)$ commute by \eqref{eq:alpalp'br}, hence their derivatives and normal ordered products commute too. 

\bigskip

Now the operators in \eqref{generallocality1} and \eqref{generallocality2} are the same. Thus locality follows if we can show that the functions appearing in \eqref{generallocality1} and \eqref{generallocality2} are the expansions of a single smooth function in the domains $|z_1|>|z_2|$ and $|z_2|>|z_1|$ respectively. We have already proved in Proposition \ref{latticevectorlocality} that the functions $\left(z_1-z_2\right)^{\langle\alpha,\alpha'\rangle}\left(\bar{z}_1-\bar{z}_2\right)^{\langle\beta,\beta'\rangle}$ and 
$(-1)^{\lambda\circ\lambda'} \left(z_2-z_1\right)^{\langle\alpha,\alpha'\rangle}\left(\bar{z}_2-\bar{z}_1\right)^{\langle\beta,\beta'\rangle}$, understood as power series as explained above, are the expansion of the function 
\begin{equation}
\exp\left(\langle\alpha,\alpha'\rangle\log(z_1-z_2)\right)\exp\left(\langle\beta,\beta'\rangle\log(\bar{z}_1-\bar{z}_2)\right).    
\end{equation}
It remains to prove that the functions appearing in the normal ordered products are also expansions of a single smooth function. It can easily be checked that the functions 
\begin{equation}\label{eq:funcz1>z2}
   (-1)^{m_r-1}\left(\frac{\langle\alpha',\alpha_r\rangle}{(z_1-z_2)^{m_r}}-\frac{\langle\alpha',\alpha_r\rangle}{z_1^{m_r}}\right),\quad |z_1|>|z_2|,
\end{equation}
and 
\begin{equation}\label{eq:funcz2>z1}
-\frac{\langle\alpha',\alpha_r\rangle}{(z_2-z_1)^{m_r}}-(-1)^{m_r-1}\frac{\langle\alpha',\alpha_r\rangle}{z_1^{m_r}},\quad |z_2|>|z_1| \, ,
\end{equation}
are the expansions of the function 
\begin{equation}
(-1)^{m_r-1}\left(\frac{\langle\alpha',\alpha_r\rangle}{\exp(m_r\log(z_1-z_2))}-\frac{\langle\alpha',\alpha_r\rangle}{z_1^{m_r}}\right)   \, ,  
\end{equation}
in the respective domains 
except for poles at $z_1=z_2$ and $z_1=0$.
\end{proof}

\end{thm}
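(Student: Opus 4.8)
The plan is to reduce to the spanning vectors of the form \eqref{eq:genvect} and then repeat, now with the extra derivative-of-current factors, the normal-ordering computation that proved Proposition \ref{latticevectorlocality}. By bilinearity of $Y_{V_{\Lambda}}$ it suffices to establish the claim when $v$ and $w$ are of the form \eqref{eq:genvect}, so that their vertex operators are the normal-ordered products \eqref{eq:genvertopdef}. First I would compute the formal series $Y_{V_{\Lambda}}(v,x_1,\bar{x}_1)Y_{V_{\Lambda}}(w,x_2,\bar{x}_2)$ by commuting all annihilation-type exponentials $\exp(\int\alpha(x_1)^+)$, $\exp(\int\beta(\bar{x}_1)^+)$ together with the zero-mode operators ${\rm e}_\lambda x_1^{\alpha}\bar{x}_1^{\beta}$ to the right past the second vertex operator. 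Every ingredient needed is already available: the current commutators \eqref{eq:alpalp'br}, the exponential-conjugation identity \eqref{eq:exexalpalp+-} and its differentiated forms \eqref{normalorderchange1}--\eqref{normalorderchange2}, the zero-mode identities \eqref{eq:alph0elamb}, \eqref{eq:beta0elamb} and \eqref{eq:xealp}, the formal-variable identity \eqref{eq:multx1-x2}, the fact \eqref{twomodescommute} that the two families of currents commute, and the central-extension relation \eqref{eq:commutelm}.

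Carrying out these moves yields a multi-valued prefactor $(x_1-x_2)^{\langle\alpha,\alpha'\rangle}(\bar{x}_1-\bar{x}_2)^{\langle\beta,\beta'\rangle}$, a common operator part which by \eqref{twomodescommute} factorizes into a piece built from the $\mf{h}_1$-currents at $z_1,z_2$ and a piece built from the $\mf{h}_2$-currents at $\bar z_1,\bar z_2$, and, inside the surviving normal-ordered products, rational contraction terms of the shape $(-1)^{m_r-1}\big(\langle\alpha',\alpha_r\rangle(z_1-z_2)^{-m_r}-\langle\alpha',\alpha_r\rangle z_1^{-m_r}\big)$ together with their $\beta$-, $\alpha'$- and $\beta'$-analogues. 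I would then redo the computation in the opposite order $Y_{V_{\Lambda}}(w,x_2,\bar{x}_2)Y_{V_{\Lambda}}(v,x_1,\bar{x}_1)$, obtaining literally the same operator part (the crucial point being that $\alpha(z_1)$ and $\alpha'(z_2)$ commute by \eqref{eq:alpalp'br}, hence so do all their derivatives and normal-ordered products) but now with prefactor $(-1)^{\lambda\circ\lambda'}(x_2-x_1)^{\langle\alpha,\alpha'\rangle}(\bar x_2-\bar x_1)^{\langle\beta,\beta'\rangle}$ and contraction terms with $z_1-z_2$ replaced by $z_2-z_1$.

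The final step is to recognize the two resulting expressions as the two expansions of a single multi-valued function. Substituting complex numbers $z_i,\bar z_i$ for the formal variables — reading $(x_1-x_2)^s$ as its formal power-series expansion — the multi-valued prefactor is handled exactly as in Proposition \ref{latticevectorlocality}: both $(z_1-z_2)^{\langle\alpha,\alpha'\rangle}(\bar z_1-\bar z_2)^{\langle\beta,\beta'\rangle}$ (valid for $|z_1|>|z_2|$) and $(-1)^{\lambda\circ\lambda'}(z_2-z_1)^{\langle\alpha,\alpha'\rangle}(\bar z_2-\bar z_1)^{\langle\beta,\beta'\rangle}$ (valid for $|z_2|>|z_1|$) are expansions of $\exp(\langle\alpha,\alpha'\rangle\log(z_1-z_2))\exp(\langle\beta,\beta'\rangle\log(\bar z_1-\bar z_2))$, which is single-valued when $\bar z_i=\overline{z_i}$ because $\langle\alpha,\alpha'\rangle-\langle\beta,\beta'\rangle=\lambda\circ\lambda'\in\Z$; and each contraction term, e.g. $(-1)^{m_r-1}\big(\langle\alpha',\alpha_r\rangle(z_1-z_2)^{-m_r}-\langle\alpha',\alpha_r\rangle z_1^{-m_r}\big)$ for $|z_1|>|z_2|$ versus $-\langle\alpha',\alpha_r\rangle(z_2-z_1)^{-m_r}-(-1)^{m_r-1}\langle\alpha',\alpha_r\rangle z_1^{-m_r}$ for $|z_2|>|z_1|$, is the pair of expansions of one rational function with poles only at $z_1=z_2$ and $z_1=0$. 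Collecting all $z$-dependence (operator part, the $(z_1-z_2)^{\langle\alpha,\alpha'\rangle}$ factor, and the $z$-rational terms) into $f(z_1,z_2)$ and all $\bar z$-dependence into $g(\bar z_1,\bar z_2)$ then produces the required function; locality for arbitrary $v,w$ follows by linearity, and the extension to products of more than two vertex operators is deferred to Appendix \ref{app:genloc}.

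The step I expect to be the main obstacle is the reordering bookkeeping in the second paragraph: one must keep careful track of which contraction terms land inside the normal-ordered products and with which signs and powers, and then verify that the operator part that remains is genuinely symmetric under exchanging the two vertex operators. By contrast, once Proposition \ref{latticevectorlocality} is in hand the analytic-continuation step is essentially automatic, since every multi-valued factor that survives is of the precise type already analyzed there.
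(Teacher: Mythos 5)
Your proposal is correct and follows essentially the same route as the paper's own proof: reduce to spanning vectors of the form \eqref{eq:genvect}, reorder the product using \eqref{eq:alpalp'br}, \eqref{eq:exexalpalp+-}, \eqref{normalorderchange1}--\eqref{normalorderchange2}, \eqref{eq:alph0elamb}, \eqref{eq:beta0elamb}, \eqref{eq:xealp}, \eqref{eq:multx1-x2} and \eqref{eq:commutelm}, and then identify the prefactor and the contraction terms in the two orderings as expansions of a single multi-valued function exactly as in Proposition \ref{latticevectorlocality}. No substantive differences from the paper's argument.
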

\begin{remark}
A similar calculation as in Remark  \ref{rem:veropformloc} shows that formal commutativity holds for general vertex operators:  
\begin{equation}
    (x_1-x_2)^K(\Bar{x}_1-\Bar{x}_2)^{\Bar{K}}\left[Y_{V_\Lambda}(v,x_1,\bar{x}_1),Y_{V_\Lambda}(w,x_2,\bar{x}_2)\right]=0 \, ,
\end{equation}
with $v,w\in V_\Lambda.$
\end{remark}
\begin{remark}\label{rem:locgenverop}
The proof of locality goes through even if we take vertex operators corresponding to vectors of the form \eqref{eq:genvect} with $\mathrm{e}^\lambda\in\C[\Lambda]$ (see Remark \ref{rem:genvertop} for definition of such vertex operators). This requires $\Lambda$ to be an integral Lorentzian lattice. It is worth noting that formal commutativity fails to hold for general vertex operators since $\langle\alpha,\alpha\rangle,\langle\beta,\beta\rangle\not\in\Z$ in general.  
\end{remark}
The graded dimension of the LLVOA can be easily computed. Using the structure of the vector space $V_{\Lambda}$ and the general discussion in \cite[Section 1.10]{Frenkel:1988xz}, we find that 
\begin{equation}\label{eq:gradcharllvoa}
    \chi_{V_{\Lambda}}(\tau,\bar{\tau})=\frac{1}{\eta(\tau)^{m}\overline{\eta(\tau)}^{n}}\sum_{(\alpha,\beta)\in\Lambda_0}q^{\frac{\langle\alpha,\alpha\rangle}{2}}\bar{q}^{\frac{\langle\beta,\beta\rangle}{2}},
\end{equation}
where $\eta(\tau)$ is the Dedekind eta function 
\begin{equation}
    \eta(\tau)=q^{\frac{1}{24}}\prod_{n=1}^{\infty}(1-q^n).
\end{equation}
We now give explicit examples of isomorphisms and automorphisms of the LLVOA. 

\begin{thm}
\label{thm:lambtildelambvoaiso}
Let $(V_{\Lambda},Y_{\Lambda}, \omega_L, \omega_R, \mathbf{1}_{V_{\Lambda}})$ and $(V_{\tilde{\Lambda}},Y_{\tilde{\Lambda}}, \tilde{\omega}_L, \tilde{\omega}_R, \mathbf{1}_{V_{\tilde{\Lambda}}})$be LLVOAs corresponding to lattices $\Lambda,\tilde{\Lambda}\subset\R^{m,n}$. Suppose $\Lambda$ and $\tilde{\Lambda}$ are related by an  $\mathrm{O}(m,\R)\times \mathrm{O}(n,\R)$-transformation, then the two LLVOAs are isomorphic $(V_{\Lambda},Y_{\Lambda})\cong(V_{\tilde{\Lambda}},Y_{\tilde{\Lambda}})$. 
\begin{proof}
Suppose $f:\Lambda\longrightarrow\tilde{\Lambda} $ is the isomorphism relating $\Lambda$ and $\tilde{\Lambda}$, then for any $\lambda = (\alpha^{\lambda}, \beta^{\lambda}) \in \Lambda$
\begin{equation}\label{eq:f_map_ortho}
    f(\alpha^{\lambda}, \beta^{\lambda}) = (O_1 \cdot \alpha^{\lambda}, O_2 \cdot \beta^{\lambda} ) , \, 
\end{equation}
where $O_1$ and $O_2$ lie in $O(m,\R)$ and $O(n, \R)$ respectively. Further from the action \eqref{eq:f_map_ortho}, it is clear that $f(\Lambda_1^{0}) = \tilde{\Lambda}_1^{0}$ and $f(\Lambda_2^{0}) = \tilde{\Lambda}_2^{0}$ and further that the restrictions to $\Lambda_1^0,\Lambda_2^0$ are norm-preserving isomorphisms. Using $f$ we can define the maps 
\begin{equation}\label{eq:f12module}
\begin{split}
    f_i : \Lambda_i \longrightarrow \tilde{\Lambda}_i , ~~i = 1,2 \, ,  \\ 
    \alpha^{\lambda} \mapsto O_1 \cdot  \alpha^{\lambda} , \quad  \beta^{\lambda} \mapsto O_2 \cdot \beta^{\lambda} \, ,
\end{split}
\end{equation}
which are norm-preserving maps when we consider $\Lambda_1$ and $\tilde{\Lambda}_1$ ($\Lambda_2$ and $\tilde{\Lambda}_2$) as subspaces of $\R^{m}$ $(\R^{n})$. 

Since an integral basis of $\Lambda$ and $\tilde{\Lambda}$ is also a basis of $\R^{m,n}$, it is clear that the dimension 
\begin{equation}
    \text{dim}(\mf{h}_1)=\text{dim}(\tilde{\mf{h}}_1)=m, \, \text{dim}(\mf{h}_2)=\text{dim}(\tilde{\mf{h}}_2)=n \, , 
\end{equation}
where
\begin{equation}
    \mf{h}_i=\Lambda_i\otimes_\Z\C,\quad \tilde{\mf{h}}_i=\tilde{\Lambda}_i\otimes_\Z\C,\quad i=1,2 \, ,
\end{equation}
this implies that the central charges of the two LLVOAs are the same.
We then extend $f_1 : \mf{h}_1 \to \tilde{\mf{h}}_1$ and $f_2: \mf{h}_2 \to \tilde{\mf{h}}_2$ by $\C$-linearity  and observe that the bilinear form on $\mf{h}_i$'s are preserved under this map. We then extend $f_1$ and $f_2$ to $\hat{\mf{h}}_1,\hat{\mf{h}}_2$, by mapping $\textbf{k}$ and $\bar{\textbf{k}}$ back to themselves.

Consider the orthonormal bases, which were chosen when defining the conformal vectors $\omega_L,\omega_R,\tilde{\omega}_L,$ and $\tilde{\omega}_R$, i.e. $\{u_i\}_{i=1}^m,\{v_i\}_{i=1}^n$ and $\{\tilde{u}_i\}_{i=1}^m,\{\tilde{v}_i\}_{i=1}^n$ of $\mf{h}_1,\Tilde{\mf{h}}_1$ and $\mf{h}_2,\Tilde{\mf{h}}_2$ respectively so that 
\begin{equation}
\begin{split}
   &\omega_{L}=   \frac{1}{2}\sum_{i = 1}^{m} \left( u_{i}(-1)^2 \right)\otimes \mathbf{1}_{V_\Lambda},\quad \omega_R=  \frac{1}{2}\sum_{i = 1}^{n}  \left( v_{i}(-1)^2\right)\otimes \mathbf{1}_{V_{\Lambda}}\\
   & \tilde{\omega}_{L}=   \frac{1}{2}\sum_{i = 1}^{m} \left( \tilde{u}_{i}(-1)^2 \right)\otimes \mathbf{1}_{V_{\tilde{\Lambda}}},\quad \tilde{\omega}_R=  \frac{1}{2}\sum_{i = 1}^{n}  \left( \tilde{v}_{i}(-1)^2\right)\otimes \mathbf{1}_{V_{\tilde{\Lambda}}}.
\end{split}
\end{equation}
Define the isomorphism of complex vector spaces
\begin{equation}
\begin{split}
\mf{h}_1\longrightarrow\Tilde{\mf{h}}_1,\quad &\mf{h}_2\longrightarrow\Tilde{\mf{h}}_2\\u_i\mapsto \tilde{u}_i,\quad &v_j\mapsto \tilde{v}_j,\quad i=1,\dots,m,~~j=1,\dots,n.
\end{split}
\end{equation}
Denote by $\tilde{\alpha}\in\tilde{\mf{h}}_1,\tilde{\beta}\in\Tilde{\mf{h}}_2$ the image of $\alpha\in\mf{h}_1,\beta\in\mf{h}_2$ under the above map.
Define the map $\psi:V_\Lambda\longrightarrow V_{\tilde{\Lambda}}$ by 
\begin{equation}\label{eq:psimapllvoaiso}
    \begin{split}
    \psi\big{(} \alpha_{1}(-&m_1)\cdot\alpha_{2}(-m_2)\cdots\alpha_{k}(-m_k) \, \beta_{1}(-\bar{m}_1)\cdot\beta_{2}(-\bar{m}_2)\cdots\beta_{\bar{k}}(-\bar{m}_{\bar{k}})  \otimes {\rm e}^{(\alpha,\beta)}\big{)}\\&= \tilde{\alpha}_{1}(-m_1)\cdot \tilde{\alpha}_{2}(-m_2)\cdots \tilde{\alpha}_{k}(-m_k) \, \tilde{\beta_{1}}(-\bar{m}_1)\cdot \tilde{\beta_{2}}(-\bar{m}_2)\\&\hspace{4cm}\cdots \tilde{\beta}_{\bar{k}}(-\bar{m}_{\bar{k}}) \otimes {\rm e}^{(f_1(\alpha),f_2(\beta))}.
\end{split}
\end{equation}
Clearly 
\begin{equation}
    \psi(\textbf{1}_{V_\Lambda})=\textbf{1}_{V_{\Tilde{\Lambda}}}~, \quad \psi(\omega_i) = \tilde{\omega}_i~, \text{ where $i = L, R$.}
\end{equation}
Since $f_i$ is norm preserving, from \eqref{eq:gradingformula} it is clear that $\psi$ is grading preserving. We now check that \eqref{eq:voahomoYV} is satisfied. Let us first check \eqref{eq:voahomoYV} for $u=\mathrm{e}^{\lambda'}$ and $v$ of the form \eqref{eq:genvect}. From the definition \eqref{eq:genveropexp} we see that 
\begin{equation}
\begin{split}
&Y_{V_{\Lambda}}({\rm e}^{\lambda'},x,\bar{x})v=\left[\exp\left(-\sum_{s<0}\frac{\alpha^{\lambda'}(s)}{s}x^{-s}\right)\exp\left(-\sum_{s>0}\frac{\alpha^{\lambda'}(s)}{s}x^{-s}\right)\exp\left(-\sum_{s<0}\frac{\beta^{\lambda'}(s)}{s}\bar{x}^{-s}\right)\right.\\&\hspace{5cm}\left.\exp\left(-\sum_{s>0}\frac{\beta^{\lambda'}(s)}{s}\bar{x}^{-s}\right)\right]{\rm e}_{\lambda'}x^{\alpha^{\lambda'}}\bar{x}^{\beta^{\lambda'}}v\\&=(-1)^{\epsilon(\lambda',\lambda)}x^{\langle\alpha^{\lambda'},\alpha^{\lambda}\rangle}\bar{x}^{\langle\beta^{\lambda'},\beta^{\lambda}\rangle}\left[\exp\left(-\sum_{s<0}\frac{\alpha^{\lambda'}(s)}{s}x^{-s}\right)\exp\left(-\sum_{s>0}\frac{\alpha^{\lambda'}(s)}{s}x^{-s}\right)\right.\\&\hspace{3cm}\left.\exp\left(-\sum_{s<0}\frac{\beta^{\lambda'}(s)}{s}\bar{x}^{-s}\right)\exp\left(-\sum_{s>0}\frac{\beta^{\lambda'}(s)}{s}\bar{x}^{-s}\right)\right]\\&\alpha_{1}(-m_1)\cdot\alpha_{2}(-m_2)\cdots\alpha_{k}(-m_k) \cdot \beta_{1}(-\bar{m}_1)\cdot\beta_{2}(-\bar{m}_2)\cdots\beta_{\bar{k}}(-\bar{m}_{\bar{k}})  \otimes {\rm e}^{\lambda'+(\alpha,\beta)}.
\end{split}
\end{equation}
Now expanding the exponentials, we obtain a linear combination of terms of the form 
\begin{equation}
\begin{split}
    &\alpha^{\lambda'}(n_1)\cdot\alpha^{\lambda'}(n_2)\cdots\alpha^{\lambda'}(n_p) \cdot \beta^{\lambda'}(\bar{n}_1)\cdot\beta^{\lambda'}(\bar{n}_2)\cdots\beta^{\lambda'}(\bar{n}_{\bar{p}})\\&\alpha_{1}(-m_1)\cdot\alpha_{2}(-m_2)\cdots\alpha_{k}(-m_k) \cdot \beta_{1}(-\bar{m}_1)\cdot\beta_{2}(-\bar{m}_2)\cdots\beta_{\bar{k}}(-\bar{m}_{\bar{k}})  \otimes {\rm e}^{\lambda'+(\alpha,\beta)}\\&(-1)^{\epsilon(\lambda',\lambda)}x^{\langle\alpha^{\lambda'},\alpha^{\lambda}\rangle}\bar{x}^{\langle\beta^{\lambda'},\beta^{\lambda}\rangle}x^{\ell}\bar{x}^{\bar{\ell}}~,
\end{split}    
\end{equation}
with $n_i,\Bar{n}_i\in\Z,p,\Bar{p}\geq 0$ and the sum is over $\ell,\bar{\ell}$ with 
\begin{equation}
 \ell=-\sum_{i=1}^pn_i,\quad \ell=-\sum_{i=1}^{\bar{p}}\bar{n}_i~.   
\end{equation}
Now we can use the Heisenberg algebra \eqref{eq:heisalg} to commute the operators $\alpha^{\lambda'}(n_i),\beta^{\lambda'}(\bar{n}_i)$ with $n_j,\bar{n}_j>0$ past other operators and then annihilate ${\rm e}^{\lambda'+(\alpha,\beta)}$. The result will be a vector of the form \eqref{eq:genvect} with some factors of the form $\langle\alpha^{\lambda'},\alpha^{\lambda'}\rangle, \langle\alpha^{\lambda'},\alpha_i\rangle$ and $\langle\beta^{\lambda'},\beta^{\lambda'}\rangle,\langle\beta^{\lambda'},\beta_j\rangle$. Thus under the map $\psi$ in \eqref{eq:psimapllvoaiso}, we see that $\psi(Y_{V_{\Lambda}}({\rm e}^{\lambda'},x,\bar{x})v)$ is a linear combination of terms as on the right hand side of \eqref{eq:psimapllvoaiso} with factors of the form
\begin{equation}\label{eq:factors_LHS}
\begin{split}
&(-1)^{\epsilon(\lambda',\lambda)}x^{\langle\alpha^{\lambda'},\alpha^{\lambda}\rangle}\bar{x}^{\langle\beta^{\lambda'},\beta^{\lambda}\rangle},\quad \langle\alpha^{\lambda'},\alpha^{\lambda'}\rangle,\quad \langle\alpha^{\lambda'},\alpha_i\rangle,\quad \langle\beta^{\lambda'},\beta^{\lambda'}\rangle,\quad \langle\beta^{\lambda'},\beta_j\rangle\\&=(-1)^{\epsilon(\lambda',\lambda)}x^{\left\langle f_1(\alpha^{\lambda'}),f_1(\alpha^{\lambda})\right\rangle}\bar{x}^{\left\langle f_2(\beta^{\lambda'}),f_2(\beta^{\lambda})\right\rangle},\quad \left\langle f_1(\alpha^{\lambda'}),f_1(\alpha^{\lambda'})\right\rangle,\quad \left\langle f_1(\alpha^{\lambda'}),f_1(\alpha_i)\right\rangle\\&\quad \left\langle f_2(\beta^{\lambda'}),f_2(\beta^{\lambda'})\right\rangle,\quad \left\langle f_2(\beta^{\lambda'}),f_2(\beta_j)\right\rangle,
\end{split}    
\end{equation}
where by equality we mean the first term in L.H.S is equal to the first term on R.H.S, and so on. Further, we used the fact that $f_1,f_2$ are norm preserving maps to show this equality. Consider now the LLVOA obtained from the lattice $\Tilde{\Lambda}$ but with the central extension $\hat{\Tilde{\Lambda}}$ of $\Tilde{\Lambda}$ constructed from the cocycle $\tilde{\epsilon}$ using the integral basis $\{f(\lambda_i)\}_{i=1}^{m+n}$ of $\tilde{\Lambda}$ where $\{\lambda_i\}_{i=1}^{m+n}$ is the integral basis of $\Lambda$ used to define the cocycle for $\hat{\Lambda}$. By Proposition \ref{prop:basischangecenextiso}, central extensions corresponding to cocycles defined using different bases of $\Tilde{\Lambda}$ are equivalent, thus the LLVOA constructed using those central extensions are isomorphic. Hence, we may assume that the cocycle $\Tilde{\epsilon}$ for the central extension $\hat{\Tilde{\Lambda}}$ is defined by \eqref{epsilonaction} using the basis $\{f(\lambda_i)\}_{i=1}^{m+n}$ of $\tilde{\Lambda}$. Now, following the same calculation as above and using the map $\psi$, it is clear that $Y_{V_{\Tilde{\Lambda}}}(\mathrm{e}^{f(\lambda')},x,\bar{x})\psi(v)$ is given by the exact same linear combinations terms of the form of the right hand side of \eqref{eq:psimapllvoaiso} as for $Y_{V_{\Tilde{\Lambda}}}(\mathrm{e}^{f(\lambda')},x,\bar{x})\psi(v)$ but with factors 
\begin{equation}\label{eq:factors_RHS}
\begin{split}
&(-1)^{\tilde{\epsilon}(f(\lambda'),f(\lambda))}x^{\left\langle f_1(\alpha^{\lambda'}),f_1(\alpha^{\lambda})\right\rangle}\bar{x}^{\left\langle f_2(\beta^{\lambda'}),f_2(\beta^{\lambda})\right\rangle},\quad\left\langle f_1(\alpha^{\lambda'}),f_1(\alpha^{\lambda'})\right\rangle,~\left\langle f_1(\alpha^{\lambda'}),f_1(\alpha_i)\right\rangle\\&\quad\left\langle f_2(\beta^{\lambda'}),f_2(\beta^{\lambda'})\right\rangle,\quad\left\langle f_2(\beta^{\lambda'}),f_2(\beta_j)\right\rangle.  
\end{split}
\end{equation}
Now, consider any $\lambda' = \sum_{i}c_i \lambda_i, \,  \lambda = \sum_{i} d_i \lambda_i \in \Lambda$, observe that
\begin{equation}
\begin{split}
\tilde{\epsilon}(f(\lambda'), f(\lambda)) &= \sum\limits_{i,j = 1}^{m + n} c_i d_j \tilde{\epsilon}(f(\lambda_i),f(\lambda_j) ) = \sum\limits_{i,j = 1}^{m + n} c_i d_j \, \tilde{\epsilon}(f(\lambda_i),f(\lambda_j) ) \\ 
   & = \sum\limits_{i,j = 1}^{m + n} c_i d_j \, \epsilon(\lambda_i,\lambda_j ) = \epsilon(\lambda', \lambda) \, ,
\end{split}
\end{equation}
where the third equality follows from the fact that $f$ is norm-preserving.
Using the fact that $\epsilon(\lambda',\lambda)=\Tilde{\epsilon}(f(\lambda'),f(\lambda))$ we see that the factors in \eqref{eq:factors_RHS} are equal to the factors on the R.H.S of \eqref{eq:factors_LHS}, and hence
\begin{equation}
 Y_{V_{\Tilde{\Lambda}}}(\psi(\mathrm{e}^{\lambda'}),x,\bar{x})\psi(v)=\psi(Y_{V_{\Lambda}}({\rm e}^{\lambda'},x,\bar{x})v).   
\end{equation}
When we take $u$ to be a more general vector, the corresponding vertex operator has products of operators which can again be expanded and dealt with as above. 
This completes the proof of the proposition.
\end{proof}
\end{thm}
\begin{cor}\label{cor:autvoaautlamb}
Let $f\in\mathrm{Aut}(\Lambda)$ such that $f(\Lambda_i^0)=\Lambda_i^0,~i=1,2$. Then $f$ can be extended to an automorphism of the LLVOA associated to $\Lambda$.
\begin{proof}
We define the map 
\begin{equation}\label{eq:flamb0actvoa}
    \psi(\mathrm{e}^{\lambda})=\mathrm{e}^{f(\lambda)},\quad \lambda\in\Lambda_0.
\end{equation}
Then define $\psi:V_{\Lambda}\longrightarrow V_{\Lambda}$ analogous to \eqref{eq:psimapllvoaiso} which acts as identity on the factors $\alpha_i(-m_i)$ and $\beta_j(-\Bar{m}_j)$ and as \eqref{eq:flamb0actvoa} on $\C[\Lambda_0]$. It can be checked that $\psi$ defines an automorphism of the LLVOA $V_{\Lambda}$ by following the same calculation as in Theorem \ref{thm:lambtildelambvoaiso}. 
\end{proof}
\end{cor}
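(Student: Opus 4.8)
The strategy is to specialise to $\tilde\Lambda=\Lambda$ the construction used in the proof of Theorem~\ref{thm:lambtildelambvoaiso}. First I would observe that, since the Lorentzian form $\circ$ restricts to $\langle\cdot,\cdot\rangle$ on $\Lambda_1^0$ and to $-\langle\cdot,\cdot\rangle$ on $\Lambda_2^0$, the hypothesis $f(\Lambda_i^0)=\Lambda_i^0$ makes $f|_{\Lambda_i^0}$ a norm-preserving automorphism of the Euclidean lattice $\Lambda_i^0$. Writing $f_i^0$ for this map viewed inside $\R^m$ (resp.\ $\R^n$) and extending by $\C$-linearity, $f_i^0$ becomes a linear isometry of $\mathrm{Span}_\C(\Lambda_i^0)$; because $\mathrm{Span}_\C(\Lambda_i^0)$ is the complexification of a positive-definite real subspace, the restricted form is non-degenerate, so $f_i^0$ extends to an orthogonal transformation $g_i$ of $\mathfrak{h}_i$ with $g_i|_{\Lambda_i^0}=f_i^0$ (for instance by the identity on the orthogonal complement). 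Since $g_i$ preserves $\langle\cdot,\cdot\rangle$, it lifts to a Lie-algebra automorphism of $\hat{\mathfrak{h}}$ fixing $\mathbf{k},\bar{\mathbf{k}}$, hence to a grading-preserving automorphism of $S(\hat{\mathfrak{h}}^-)$.

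Next I would define $\psi\colon V_\Lambda\to V_\Lambda$ exactly as in \eqref{eq:psimapllvoaiso}: act by $g_1$ on the modes $\alpha_j(-m_j)$, by $g_2$ on the modes $\beta_j(-\bar m_j)$, and by $\mathrm{e}^{\lambda}\mapsto\mathrm{e}^{f(\lambda)}$ on $\C[\Lambda_0]$ (this is well defined since $f$ preserves $\Lambda_1^0\oplus\Lambda_2^0$). One reads off immediately that $\psi(\mathbf{1})=\mathbf{1}$, $\psi(\omega_L)=\omega_L$, $\psi(\omega_R)=\omega_R$ (the conformal vectors are built from an arbitrary orthonormal basis and $g_i$ is orthogonal), and, using \eqref{eq:gradingformula} together with the norm-preservation of $f_i^0$, that $\psi$ is grading-preserving. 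It then remains to check $\psi\circ Y_{V_\Lambda}(v,x,\bar x)=Y_{V_\Lambda}(\psi(v),x,\bar x)\circ\psi$, which I would do by rerunning the computation in the proof of Theorem~\ref{thm:lambtildelambvoaiso}: expanding $Y_{V_\Lambda}(\mathrm{e}^{\lambda'},x,\bar x)v$ for $\lambda'\in\Lambda_0$ and $v$ as in \eqref{eq:genvect} produces, via \eqref{eq:heisalg}, a linear combination of vectors of the form \eqref{eq:genvect} whose data are only the numbers $\langle\alpha^{\lambda'},\alpha^{\lambda'}\rangle,\ \langle\alpha^{\lambda'},\alpha_j\rangle,\ \langle\beta^{\lambda'},\beta^{\lambda'}\rangle,\ \langle\beta^{\lambda'},\beta_j\rangle$, the monomials $x^{\langle\alpha^{\lambda'},\alpha^{\lambda}\rangle}\bar x^{\langle\beta^{\lambda'},\beta^{\lambda}\rangle}$, and the sign $(-1)^{\epsilon(\lambda',\lambda)}$. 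Applying $\psi$ relabels each mode $\mu(-r)$ to $g_i(\mu)(-r)$ and each $\mathrm{e}^{\mu}$ to $\mathrm{e}^{f(\mu)}$, and $Y_{V_\Lambda}(\mathrm{e}^{f(\lambda')},x,\bar x)\psi(v)$ yields the same combination because $g_i$ is an isometry with $g_i(\alpha^{\lambda'})=\alpha^{f(\lambda')}$ and $g_i(\beta^{\lambda'})=\beta^{f(\lambda')}$; a general $v$ is handled by writing $Y_{V_\Lambda}(v,x,\bar x)$ as a normal-ordered product of derivatives of the currents $\alpha_j(x),\beta_j(\bar x)$ times $Y_{V_\Lambda}(\mathrm{e}^{\lambda},x,\bar x)$ and repeating the bookkeeping.

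The one point requiring care, and the main obstacle, is matching the sign $(-1)^{\epsilon(\lambda',\lambda)}$ with $(-1)^{\epsilon(f(\lambda'),f(\lambda))}$, which is not automatic for the fixed cocycle \eqref{epsilonaction} attached to a chosen integral basis $\{\lambda_i\}$ of $\Lambda$. As in the proof of Theorem~\ref{thm:lambtildelambvoaiso}, I would first replace that cocycle by the cohomologous one built from the basis $\{f(\lambda_i)\}$ (legitimate by Appendix~\ref{app:centext}/Proposition~\ref{prop:basischangecenextiso}, since cohomologous cocycles give isomorphic central extensions, hence isomorphic LLVOAs); with this choice one has $\epsilon(f(\lambda_i),f(\lambda_j))=f(\lambda_i)\circ f(\lambda_j)=\lambda_i\circ\lambda_j=\epsilon(\lambda_i,\lambda_j)$ for $i>j$ and both sides vanish for $i\le j$, so $\epsilon(f(\cdot),f(\cdot))=\epsilon(\cdot,\cdot)$ after $\Z$-bilinear extension, in particular on $\Lambda_0$. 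Finally $\psi$ is bijective, its inverse being built in the same way from $f^{-1}\in\mathrm{Aut}(\Lambda)$ (which also fixes $\Lambda_i^0$) and from $g_i^{-1}$, so $\psi$ is an automorphism of $V_\Lambda$ extending $f$, as required.
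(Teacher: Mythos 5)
Your proposal is correct and follows the same overall strategy as the paper's own (very terse) proof: specialize the construction of Theorem \ref{thm:lambtildelambvoaiso} to $\tilde{\Lambda}=\Lambda$, define $\psi$ on the spanning vectors \eqref{eq:genvect}, rerun the factor-by-factor verification of \eqref{eq:voahomoYV}, and absorb the cocycle mismatch by passing to the cohomologous cocycle built from the basis $\{f(\lambda_i)\}$ via Proposition \ref{prop:basischangecenextiso}. The one place where you genuinely diverge is the action on the Heisenberg modes: the paper's proof takes $\psi$ to act as the identity on the factors $\alpha_i(-m_i),\beta_j(-\bar{m}_j)$, whereas you rotate them by orthogonal extensions $g_i$ of $f|_{\Lambda_i^0}$ to $\mf{h}_i$ (identity on the orthogonal complement of $\mathrm{Span}_\C\Lambda_i^0$, which exists since the restricted form is nondegenerate). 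Your version is the one under which the intertwining check actually closes: $Y_{V_\Lambda}(\mathrm{e}^{f(\lambda')},x,\bar{x})$ has zero-mode eigenvalues $\langle\alpha^{f(\lambda')},\alpha^{\lambda}\rangle$ and creates modes $\alpha^{f(\lambda')}(-r)$, so a map fixing all modes can only intertwine when $f$ acts trivially on $\Lambda_0$; with $g_1(\alpha^{\lambda'})=\alpha^{f(\lambda')}$, $g_2(\beta^{\lambda'})=\beta^{f(\lambda')}$, $g_i$ orthogonal on all of $\mf{h}_i$, and $f$ preserving the separate norms and inner products on $\Lambda_0$ (the precise consequence of the hypothesis $f(\Lambda_i^0)=\Lambda_i^0$), all scalar factors, created modes, gradings and the conformal vectors match ($\psi(\omega_{L,R})=\omega_{L,R}$ follows from $\sum_i g_1(u_i)(-1)^2=\sum_i u_i(-1)^2$ for $g_1$ orthogonal, which is a direct computation rather than an appeal to the change-of-basis footnote). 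Two minor points of care, both at the same level of rigor as the paper: the cocycle replacement strictly yields an isomorphism onto the LLVOA built from $\tilde{\epsilon}$, so the automorphism of the original $V_\Lambda$ is obtained after composing with the equivalence of central extensions, which may modify $\psi(\mathrm{e}^{\lambda})$ by signs $(-1)^{\eta(\lambda)}$ but still extends $f$; and only $\lambda'\in\Lambda_0$ ever appears in vertex operators of $V_\Lambda$, so no diagonal action of $f$ on all of $\Lambda$ is needed — a point you correctly isolate and which the paper leaves implicit.
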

From Corollary \ref{cor:autvoaautlamb}, automorphisms of the lattice whuch preserve $\Lambda_i^0,~i=1,2$ can be extended to automorphisms of the LLVOA. We then propose the following 
\begin{conj}\label{conj:}
Any automorphism $f\in\mathrm{Aut}(\Lambda)$ preserves $\Lambda_i^0,\,i=1,2$.     
\end{conj}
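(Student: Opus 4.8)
The goal is to show that any $f\in\mathrm{Aut}(\Lambda)$ preserves the Euclidean sublattices $\Lambda_1^0=\{(\alpha,0)\in\Lambda\}$ and $\Lambda_2^0=\{(0,\beta)\in\Lambda\}$. The key structural observation is that $\Lambda_1^0$ and $\Lambda_2^0$ are characterized intrinsically (i.e.\ without reference to the particular splitting $\R^{m,n}=\R^m\oplus\R^n$) up to the action of the automorphism group, so that any lattice isometry must respect this characterization. Concretely, I would try to identify $\Lambda_1^0$ as the set of vectors $v\in\Lambda$ of \emph{positive norm} that are ``extremal'' in a suitable sense, and dually $\Lambda_2^0$ with negative-norm extremal vectors; since $f$ preserves $v\circ v$ for all $v$, it preserves the sign of the norm, and if the extremality condition is also isometry-invariant we are done. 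The natural candidate: $\Lambda_1^0$ consists precisely of those $\lambda\in\Lambda$ such that $\lambda$ lies in a \emph{positive-definite} sublattice of $\Lambda$ of maximal rank $m$ — equivalently, $\lambda$ is a sum of vectors each lying in some rank-$m$ positive-definite sublattice. Equivalently still, $\Lambda_1^0=\{\lambda\in\Lambda : \lambda\circ\mu=0 \text{ for all }\mu\in\Lambda_2^0\}\cap(\text{positive cone})$, but this is circular unless one of the two is pinned down first.

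\textbf{Key steps.} First I would establish that $\Lambda_1^0$ is the (unique) maximal positive-definite sublattice of $\Lambda$ of rank $m$: any positive-definite subgroup $P\subseteq\Lambda$ projects injectively to $\R^m$ under $(\alpha,\beta)\mapsto\alpha$ and trivially... no — this needs care, since a generic positive-norm vector $(\alpha,\beta)$ has $\beta\neq0$. The correct statement is subtler: $\Lambda_1^0$ is the intersection of $\Lambda$ with the subspace $\R^m\times\{0\}$, and this subspace is the \emph{unique} maximal positive-definite subspace of $\R^{m,n}$ that contains a full-rank sublattice of $\Lambda$ — but there are many maximal positive-definite subspaces. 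So the genuine content must use that $\Lambda_1^0\oplus\Lambda_2^0$ has finite index issues and that $\R^m\times\{0\}$ is singled out by \emph{rationality with respect to $\Lambda$}: it is spanned by lattice vectors. Thus I would argue: (i) $\mathrm{span}_\R(\Lambda_1^0)$ is a rational ($\Lambda$-defined) positive-definite subspace of dimension $m$, and $\mathrm{span}_\R(\Lambda_2^0)$ is a rational negative-definite subspace of dimension $n$, and these are orthogonal complements of each other; (ii) conversely any rational positive-definite subspace has dimension $\le m$ with equality forcing it to equal $\mathrm{span}_\R(\Lambda_1^0)$ — this is where the real work lies, and it should follow from a Witt-type / signature argument for the rational quadratic space $\Lambda\otimes\Q$ together with the specific structure of $\Lambda$; (iii) since $f$ is a $\Q$-linear isometry of $\Lambda\otimes\Q$ preserving $\Lambda$, it permutes rational definite subspaces of each dimension and sign, hence fixes $\mathrm{span}_\R(\Lambda_1^0)$ and $\mathrm{span}_\R(\Lambda_2^0)$ setwise; (iv) therefore $f(\Lambda_1^0)=f(\Lambda\cap\mathrm{span}_\R\Lambda_1^0)=\Lambda\cap\mathrm{span}_\R\Lambda_1^0=\Lambda_1^0$, and similarly for $\Lambda_2^0$.

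\textbf{Main obstacle.} Step (ii) — the uniqueness of the maximal rational positive-definite subspace — is the crux and is exactly what makes the conjecture hard in general signature. The statement ``every rational positive-definite subspace of $\Lambda\otimes\Q$ of dimension $m$ coincides with $\mathrm{span}_\R(\Lambda_1^0)$'' is false for a general Lorentzian lattice (e.g.\ for $\mathrm{II}_{m,n}$ one can rotate), so the conjecture as stated cannot be true without extra hypotheses — and indeed the authors only claim to prove it for $m=n$. For $m=n$ the extra ingredient is presumably that the two maximal definite subspaces are \emph{swapped} by an outer symmetry, reducing the count, or that a dimension/discriminant-form constraint (the glue group between $\Lambda_1^0$ and $\Lambda_2^0$, which must match on both sides) rigidifies the splitting. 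So the realistic plan is: prove the characterization of $\Lambda_i^0$ in terms of rational definite subspaces as above (steps (i),(iii),(iv) are formal), and then \emph{for the case $m=n$} supply the rigidity in step (ii) using the discriminant form $\Lambda_1^0{}^\star/\Lambda_1^0\cong\Lambda_2^0{}^\star/\Lambda_2^0$ forced by self-duality (or integrality) of $\Lambda$, which constrains the glue vectors enough to pin down the splitting up to $\mathrm{Aut}(\Lambda_1^0)\times\mathrm{Aut}(\Lambda_2^0)$ and the swap. I would expect the general-$m\neq n$ case to require controlling when a ``rotated'' splitting $(\Lambda_1^0)',(\Lambda_2^0)'$ can still have matching discriminant forms, which is the open point.
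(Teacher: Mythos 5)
There is a genuine gap, and it sits exactly where you place it yourself: your step (ii). Your plan rests on characterizing $\Lambda_1^0$ intrinsically as (the lattice points of) a distinguished rational positive-definite subspace, and you concede that the uniqueness of such a subspace is false in general; without it, step (iii) ("$f$ fixes $\mathrm{span}_\R(\Lambda_1^0)$ setwise") has no support, and nothing is proved even for $m=n$. Worse, the characterization itself is wrong at the start: $\Lambda_1^0=\Lambda\cap(\R^m\times\{0\})$ need not have rank $m$ — at a generic point of the moduli space it is $\{0\}$, and in general $\operatorname{rank}\Lambda_1^0+\operatorname{rank}\Lambda_2^0<m+n$ — so there is no "maximal-rank positive-definite sublattice" to pin down, and $\mathrm{span}_\R\Lambda_1^0$ and $\mathrm{span}_\R\Lambda_2^0$ are orthogonal but not complementary. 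Finally, your inference that "the conjecture as stated cannot be true without extra hypotheses" is a non sequitur: what fails is your intermediate uniqueness claim, not the conjecture; the paper asserts the conjecture in general and proves it for $m=n$ without any such uniqueness statement. The gesture toward discriminant-form rigidity for $m=n$ is left entirely unexecuted, so the case the paper actually establishes is not established by your proposal.

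For contrast, the paper's $m=n$ argument is concrete rather than intrinsic. First, a conjugation lemma: if $\tilde\Lambda=\Lambda O$ with $O\in\mathrm{O}(m,\R)\times\mathrm{O}(m,\R)$, then $\mathrm{Aut}(\tilde\Lambda)=O\,\mathrm{Aut}(\Lambda)\,O^{-1}$ and preservation of $\Lambda_i^0$ transfers between the two lattices. Second, every even self-dual lattice of signature $(m,m)$ is brought by such an $O$ to a standard Narain form $\Lambda_S$ with generator matrix built from $\gamma,\gamma^\star$ and an antisymmetric $B$. Third, one checks explicitly on the known generators of $\mathrm{O}(m,m,\Z)$ (the factorized-duality transformation and the integral $B$-shifts, acting simultaneously on the integer coordinates and the generator matrix) that they preserve the two Euclidean norms $\langle\alpha,\alpha\rangle$ and $\langle\beta,\beta\rangle$ \emph{separately}; positive-definiteness then forces $f(\alpha,0)=(\alpha',0)$, i.e.\ $(\Lambda_S)_i^0$ is preserved, and the lemma transports this back to $\Lambda$. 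If you want to salvage your approach, the statement you would actually need is not uniqueness of rational definite subspaces but the much weaker claim that an isometry of $\Lambda$ preserving the Lorentzian form automatically preserves the two definite projections of norm on the null-intersection sublattices $\Lambda_i^0$ — and that is precisely the nontrivial content the paper only knows how to verify generator-by-generator in the $m=n$ standard form.
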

We prove this conjecture for $m=n$ in Appendix \ref{app:conjproof}. Although we were not able to prove this conjecture for $m\neq n$ there are physical reasons to believe this conjecture: T-duality group in string theory acts by automorphism of a reference Lorentzian lattice \cite{Polchinski:1998rq}. By definition, T-duality must preserve the chiral and anti-chiral algebra of the CFT. In our formalism, the chiral and anti-chiral algebra is identified with the algebra of modes of the chiral and anti-chiral vertex operators of non-chiral VOA (see Table \ref{tab:dictionary}), thus the automorphism of the reference lattice must act as automorphism of the LLVOA. This physical consideration supports the conjecture. We will assume the truth of this conjecture and derive the moduli space of LLVOAs later in  Section \ref{sec:modintopllvoa} below. 
\section{Modules and intertwining operators}\label{sec:modintop}
\subsection{Modules}
We now define modules of a non-chiral VOA. 
\begin{defn}\label{def:modofncft}
Let $(V,Y_{V},\omega,\bar{\omega},\mathbf{1})$ be a non-chiral VOA. A \textit{module} for $V$ is a tuple $(W,Y_W)$ where $W$ is an $(\C\times \C)$-graded complex vector space, $Y_W$ is a linear map, called the \textit{module vertex operator map}, 
\begin{equation}
\begin{split}
    Y_W: \, &V\otimes W\longrightarrow W\{x,\bar{x}\}\\&u\otimes w\longmapsto Y_W(u,x,\bar{x})w
\end{split}
\end{equation}
or equivalently a map
\begin{equation}
\begin{split}
    Y_W: \, &\C^\times\times\C^\times \longrightarrow \text{Hom}(V\otimes W,\overline{W})\\&(z,\bar{z})\longmapsto Y_W(\cdot,z,\Bar{z}):u\otimes w\longmapsto Y_W(u,z,\bar{z})w,
\end{split}
\end{equation}
which is multi-valued and analytic if $z,\Bar{z}$ are independent complex variables and single valued when $\Bar{z}$ is the complex conjugate of $z$. As before the vertex operator $Y_W(u,x,\bar{x})$ for $u\in V_{(h,\bar{h})}$ is expanded as a formal power series 
\begin{equation}\label{eq:mode_expansion_module}
\begin{split}    Y_W(u,x,\bar{x})&=\sum_{\substack{m,n\in\C\\(m-n)\in\Z}}u^W_{m,n}x^{-m-1}\bar{x}^{-n-1}\\&=\sum_{\substack{m,n\in\C\\(m-n)\in\Z}}x^W_{m,n}(u)x^{-m-h}\bar{x}^{-n-\bar{h}}\in\text{End}(W)\{x,\bar{x}\}.
\end{split}
\end{equation}
The following properties must be satisfied:
\begin{enumerate}
\item\label{item:M_idprop} \textit{Identity property:} The vertex operator corresponding to the vacuum vector acts as identity, i.e.
\begin{equation}
    Y_W(\mathbf{1}, x, \bar{x}) w = w, \quad \forall~ w \in W.
\end{equation}
\item\label{item:M_gradres} \textit{Grading-restriction property:} For every\footnote{Note that $h,\bar{h}$ are \textit{not} complex conjugates of each other. We will explicitly specify this when this is the case. } $(h,\Bar{h})\in\C\times \C$, 
\begin{equation}
   \text{dim}(W_{(h,\bar{h})})<\infty,
\end{equation}
there exists $M\in\R$, such that 
\begin{equation}\label{eq:M_lowerM}
W_{(h,\bar{h})}=0,\quad\text{ for } \text{Re}(h)<M \text{ or }\text{Re}(\Bar{h})<M.    
\end{equation}
\item\label{item:M_singvalprop} \textit{Single-valuedness property:} For every homogenous subspace $W_{(h,\bar{h})}$: 
\begin{equation}
    h-\bar{h}\in\mathbb{Z}.
\end{equation}
\item\label{item:M_virprop} \textit{Virasoro property:} The vertex operators $Y_W(\omega,x,\bar{x})$ and $Y_W(\bar{\omega},x,\bar{x})$, called \textit{conformal vertex operators}, have Laurent series in $x,\bar{x}$ given by 
    \begin{equation}
     \begin{split}
&Y_W(\omega,x,\bar{x})=\sum_{n\in\Z}L^W(n)x^{-n-2},\\
&Y_W(\bar{\omega},x,\bar{x})=\sum_{n\in\Z}\bar{L}^W(n)\bar{x}^{-n-2},    
     \end{split}   
    \end{equation}
where $L^W(n),\bar{L}^W(n)$ are operators which satisfy the Virasoro algebra \eqref{eq:viraalg} with central charge $c,\bar{c}$ respectively.
\item\label{item:M_gradprop} \textit{Grading property:} For $w\in W_{(h,\bar{h})}$
\begin{equation}
    L^{W}(0)w=hw,\quad \Bar{L}^{W}(0)w=\bar{h}w.
\end{equation}
\item\label{item:M_l0prop} \textit{$L^W(0)$-property} : 
    \begin{equation}
    \begin{split}
        &[L^W(0), Y_W(u , x, \bar{x})]=x \frac{\partial}{\partial x}Y_W(u ,x, \bar{x})+Y(L(0) u , x, \bar{x}),\\&[\bar{L}^W(0), Y_W(u , x, \bar{x})]=\bar{x} \frac{\partial}{\partial \bar{x}}Y_W(u ,x, \bar{x})+Y_W(\bar{L}(0) u , x, \bar{x}).
    \end{split}
\end{equation}
\item\label{item:M_transprop} \textit{Translation property:} For any $u\in V$

\begin{equation}\label{eq:translation_module}
\begin{split}
& {\left[L^W(-1), Y_W(u , x, \bar{x})\right]=Y_W\left(L(-1) u , x, \bar{x}\right)=\frac{\partial}{\partial x} Y_W(u , x, \bar{x}),} \\
& {\left[\bar{L}^W(-1), Y_W(u ; x, \bar{x})\right]=Y_W\left(\bar{L}(-1) u , x, \bar{x}\right)=\frac{\partial}{\partial \bar{x}} Y_W(u , x, \bar{x}).} \\
&
\end{split}
\end{equation}
\item \label{item:M_locprop}\textit{Locality and Duality property:} The module vertex operators must be local, that is given $n$ module vertex operators $Y_W(u_i,z_i,\Bar{z}_i),~i=1,\dots,n$, there exists an operator-valued function $m_n(u_1,\dots,u_n,z_1,\dots,z_n,\Bar{z}_1,\dots,\Bar{z}_n)$ satisfying the requirements in Property \ref{item:locprop} of Definition \ref{def:nvoa}. Moreover,  
for $u_1, u_2\in V$,
\begin{equation}\label{eq:M_locality_fields}
\begin{split}
& Y_W\left(u_1 , z_1, \bar{z}_1\right) Y_W\left(u_2 , z_2, \bar{z}_2\right) , \\
&Y_W\left(u_2 , z_2, \bar{z}_2\right) Y_W\left(u_1 , z_1, \bar{z}_1\right) ,\\
& Y_W\left(Y_V\left(u_1 , z_1-z_2, \bar{z}_1-\bar{z}_2\right)u_2 , z_2, \bar{z}_2\right), 
\end{split}
\end{equation}
are the expansions of a function
$m\left(u_1,u_2,z_1, \bar{z}_1, z_2, \bar{z}_2 \right)   $ 
in the sets given by $\left|z_1\right|>\left|z_2\right|>0$, $\left|z_2\right|>\left|z_1\right|>0$, and $|z_2|>|z_1-z_2|>0$, respectively, where $\bar{z}_1$, $\bar{z}_2$ are the complex conjugates of $z_1$ and $z_2$ respectively. Also $m$ is an $\text{End}(W)$-valued function, linear in $u_1,u_2$, defined on 
\begin{equation}
    \{(z_1,z_2) \in \C^2 \, \lvert \, z_1, z_2 \neq 0, z_1 \neq z_2\} \, ,
\end{equation}
multi-valued and analytic
when $\bar{z}_1,\bar{z}_2$ are viewed as independent variables and is single-valued when $\bar{z}_1, \bar{z}_2$ are equal to the complex conjugates of $z_1,z_2$ respectively. We say that the module vertex operators
$Y_W\left(u_1 , z_1, \bar{z}_1\right)$ and $ Y_W\left(u_2 , z_2, \bar{z}_2\right)$ satisfy locality and duality with respect to each other if they satisfy \eqref{eq:M_locality_fields}.
\end{enumerate}
\end{defn}
\begin{remark}
The module for non-chiral VOA defined here is related to the notion of \textit{module} in \cite{Frenkel:1988xz} and \cite{Frenkel:1993xz}, and \textit{ordinary module} in \cite{abe2002rationality}.
\end{remark}
\begin{remark}\label{rem:locdual}
The equality of only the first two expressions of \eqref{eq:M_locality_fields} is the usual locality of two module vertex operators and the equality of first and third expressions in \eqref{eq:M_locality_fields} is called duality of module vertex operators. From Proposition \ref{prop:duality}, we see that locality implies duality for vertex operators of a non-chiral VOA, while for module vertex operators, Proposition \ref{prop:intmoddualexist} below gives a sufficient condition for locality to imply duality in terms of existence of a certain intertwining operator (see Definition \ref{defn:intertwiners}).     
\end{remark}
Chiral and anti-chiral module vertex operators are defined analogous to chiral and anti-chiral vertex operators. For $v\in V_\Lambda$ with conformal weights $(h,\bar{h})$,  we will expand the module vertex operator $Y_W(v,x,\Bar{x})$ as in \eqref{eq:mode_expansion_module}.
As in Lemma \ref{lemma:hintchiral}, for chiral and anti-chiral vectors $u\in V_{(h,\bar{h})},v\in V_{(h',\bar{h}')}$, we expand the module vertex operators as 
\begin{equation}\label{eq:chiral_modules}
\begin{split}
&Y_W(u,x)=\sum_{m\in\Z}x^W_{m}(u)x^{-m-(h-\bar{h})},\\&Y_W(v,\Bar{x})=\sum_{m\in\Z}\bar{x}^W_{m}(v)\Bar{x}^{-m-(\bar{h}'-h')}.
\end{split}  
\end{equation}
The proof of Theorem \ref{thm:chiralmodescomm} goes through even for module vertex operators. We record the result for later reference.
\begin{thm}\label{thm:commmodverop}
Let $u_i\in V_{(h_i,\bar{h}_i)}$ and $v_i\in V_{(h_i',\bar{h}'_i)}$ be  homogeneous chiral and anti-chiral vectors respectively with corresponding vertex operators 
    \begin{equation}
    \begin{split}
        &Y_W(u_i,x)=\sum_{n\in\Z}x^W_n(u_i)x^{-n-(h_i-\bar{h}_i)}, \\&Y_W(v_j,\bar{x})=\sum_{n\in\Z}\bar{x}^W_n(v_i)\bar{x}^{-n-(\bar{h}'_i-h'_i)}.  
    \end{split}
    \end{equation}
Then we have 
    \begin{equation}
    \begin{split}
        &[x^W_n(u_i),x^W_{k}(u_j)]=\sum_{\substack{p\geq -(h_i-\bar{h}_i)+1}}{n+(h_i-\bar{h}_i)-1\choose p+(h_i-\bar{h}_i)-1}x^W_{k+n}(x_p(u_i)\cdot u_j),\\&[\bar{x}^W_n(v_i),\bar{x}^W_{k}(v_j)]=\sum_{\substack{p\geq -(\bar{h}'_i-h'_i)+1}}{n+(\bar{h}'_i-h'_i)-1\choose p+(\bar{h}'_i-h'_i)-1}\bar{x}^W_{k+n}(\bar{x}_p(v_i)\cdot v_j),\\&[x^W_n(u_i),\bar{x}^W_k(v_j)]=0.
    \end{split}    
    \end{equation}
In particular,     \begin{equation}\label{eq:L0Wcommmodesgenver}
    \begin{split}
        &[L^W(n),x^W_{k}(u_i)]=\sum_{\substack{p\geq -1}}{n+1\choose p+1}x^W_{k+n}(L^W(p)\cdot u_i),\\&[L^W(n),\bar{x}^W_{k}(v_i)]=0,\\&[\bar{L}^W(n),\bar{x}^W_{k}(v_i)]=\sum_{\substack{p\geq -1}}{n+1\choose p+1}\bar{x}^W_{k+n}(\bar{L}^W(p)\cdot v_i),\\&[\bar{L}^W(n),x^W_{k}(u_i)]=0.
    \end{split}
    \end{equation}    
More generally, for $m\in\Z$ we have the Borcherd's identity
\begin{equation}
\begin{split}
\sum_{r\geq 0}{m\choose r}\Big{(}(-1)^r&x^W_{n+m-r}(u_i)x^W_{k+r}(u_j)-(-1)^{m+r}x^W_{k+m-r}(u_j)x^W_{n+r}(u_i)\Big{)}\\&=\sum_{p\geq 1-(h_i-\Bar{h}_i)}{n+(h_i-\Bar{h}_i)-1\choose p+(h_i-\Bar{h}_i)-1}x^W_{k+n+m+\bar{h}_i-\Bar{h}_j}(x_{p+m}(u_i)\cdot u_j), 
\end{split}
\end{equation}
\begin{equation}
\begin{split}
\sum_{r\geq 0}{m\choose r}\Big{(}(-1)^r&\bar{x}^W_{n+m-r}(v_i)\bar{x}^W_{k+r}(v_j)-(-1)^{m+r}\bar{x}^W_{k+m-r}(v_j)\bar{x}^W_{n+r}(v_i)\Big{)}\\&=\sum_{p\geq 1-(\bar{h}'_i-h'_i)}{n+(\bar{h}'_i-h'_i)-1\choose p+(\bar{h}'_i-h'_i)-1}\bar{x}^W_{k+n+m+h'_i-h'_j}(\bar{x}_{p+m}(v_i)\cdot v_j), 
\end{split}
\end{equation}
\begin{equation}
\sum_{r\geq 0}{m\choose r}\left((-1)^rx^W_{n+m-r}(u_i)\bar{x}^W_{k+r}(v_j)-(-1)^{m+r}\bar{x}^W_{k+m-r}(v_j)x^W_{n+r}(u_i)\right)=0.    
\end{equation}     
\end{thm}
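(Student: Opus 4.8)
The plan is to deduce Theorem~\ref{thm:commmodverop} by running essentially verbatim the argument of Theorem~\ref{thm:chiralmodescomm}, with the non-chiral VOA vertex operators replaced by module vertex operators, using the module locality and duality axiom \ref{item:M_locprop} in place of the VOA locality axiom. Concretely, the inputs I would invoke are: (i) the module translation property \eqref{eq:translation_module}, which gives $[\bar{L}^W(-1),Y_W(u_i,x)]=0$ for a chiral vector $u_i$ (so that $Y_W(u_i,x)$ is single-valued and analytic in $z_1$ alone, with zero $\bar z_1$-derivative) and similarly $[L^W(-1),Y_W(v_i,\bar x)]=0$; (ii) the residue formula \eqref{eq:modeschirantchiint}, which expresses the modes $x^W_n(u_i)$, $\bar x^W_n(v_i)$ as contour integrals of the module vertex operators (this is legitimate since, by Lemma~\ref{lemma:hintchiral} applied to the expansion \eqref{eq:chiral_modules}, chiral module vertex operators have only integral powers of $x$, so Remark~\ref{rem:modesveropint} applies); and (iii) the module duality statement, i.e.\ the equality of the first and third expressions in \eqref{eq:M_locality_fields}, which is the exact analogue of the OPE \eqref{eq:opeexp} used in the VOA proof, together with the fact that $Y_V(u_i,z_1-z_2)u_j$ is again a chiral vector (this follows from $\bar L(-1)u_j$-annihilation exactly as in the first paragraph of the proof of Theorem~\ref{thm:chiralmodescomm}, since $[\bar L(-1),x_p(u_i)]=0$ in $V$).

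The main steps, in order, would be: First, record that $x_p(u_i)\cdot u_j$ and $\bar x_p(v_k)\cdot v_\ell$ are chiral and anti-chiral vectors of $V$ respectively --- this is proved already inside Theorem~\ref{thm:chiralmodescomm} and needs no reproof, but it is what makes the right-hand sides of the asserted identities well-defined as sums of modes of chiral/anti-chiral module vertex operators. Second, for the commutator $[x^W_n(u_i),x^W_k(u_j)]$, set up the two iterated contour integrals over $C^{r_1}_1, C^{r_2}_2$ and over $C^{r_3}_1, C^{r_2}_2$ of $Y_W(u_i,z_1)Y_W(u_j,z_2)f(z_1,z_2)$ and $Y_W(u_j,z_2)Y_W(u_i,z_1)f(z_1,z_2)$ with $f=z_1^{n+(h_i-\bar h_i)-1}z_2^{k+(h_j-\bar h_j)-1}$; since these products are single-valued analytic in $z_1,z_2$, Cauchy's theorem lets me deform the difference of $z_1$-contours to a small circle around $z_2$, and then duality \eqref{eq:M_locality_fields} rewrites the integrand as $Y_W(Y_V(u_i,z_1-z_2)u_j,z_2)f$. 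Expanding $Y_V(u_i,z_1-z_2)u_j=\sum_p x_p(u_i)\cdot u_j\,(z_1-z_2)^{-p-(h_i-\bar h_i)}$ and applying the Cauchy identity \eqref{eq:cauchyid} (which requires $h_i-\bar h_i\in\Z$, guaranteed by \ref{item:M_singvalprop}) gives the claimed binomial sum after re-expressing the remaining $z_2$-integral via \eqref{eq:modeschirantchiint} and using $x_p(u_i)\cdot u_j\in V_{(h_j-p+\bar h_i,\bar h_i+\bar h_j)}$. Third, the anti-chiral commutator is identical with bars everywhere. Fourth, the mixed commutator $[x^W_n(u_i),\bar x^W_k(v_j)]=0$ follows because $\partial_{\bar z_1}$ and $\partial_{z_2}$ of the radial-ordered product both vanish, so $R(Y_W(u_i,z_1)Y_W(v_j,\bar z_2))$ has no $(z_1-z_2)$, $(\bar z_1-\bar z_2)$, or $(z_1-\bar z_2)$ dependence (the last point using the structure of the module duality OPE as in Proposition~\ref{prop:duality}), hence the deformed contour integral picks up no residue. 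Fifth, the specialisations \eqref{eq:L0Wcommmodesgenver} are the cases $u_i=\omega$ (resp.\ $\bar\omega$), using that $L^W(p)\cdot\omega$ has the usual low modes. Finally, the three Borcherd identities follow by repeating the contour computation with the shifted test functions $f_1,f_2,f_3$ carrying an extra factor $(z_1-z_2)^m$, $(\bar z_1-\bar z_2)^m$, $(z_1-\bar z_2)^m$, exactly as in Theorem~\ref{thm:chiralmodescomm}; expanding the binomial $(z_1-z_2)^m$ on the operator side produces the alternating double sum on the left, and the residue on the right produces the shifted-mode sum, with the overall grading shift accounting for the $\bar h_i-\bar h_j$ (resp.\ $h'_i-h'_j$) offset in the subscript.

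The one genuine subtlety --- and the step I would flag as the main obstacle --- is justifying the module duality move \emph{inside} the contour integral, i.e.\ that the third expression in \eqref{eq:M_locality_fields}, namely $Y_W(Y_V(u_i,z_1-z_2,\bar z_1-\bar z_2)u_j,z_2,\bar z_2)$, is literally the analytic continuation (restricted to $z_1,z_2$ real-analytic in the chiral directions) of the radial-ordered product of module vertex operators, so that the contour-deformation identity $\oint_{C^{r_1}_1}-\oint_{C^{r_3}_1}=\oint_{C^\delta_1(z_2)}$ is applicable to it. In the VOA case this is Proposition~\ref{prop:duality}, proved via Goddard uniqueness; for modules no such uniqueness theorem is available and instead duality is \emph{postulated} as part of axiom \ref{item:M_locprop}. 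So the proof should simply cite \eqref{eq:M_locality_fields} directly, being careful that the domains quoted there ($|z_2|>|z_1-z_2|>0$ for the third expression) are exactly what is needed for the small-circle deformation, and noting that the $\bar z$-dependence of $Y_V(u_i,z_1-z_2,\bar z_1-\bar z_2)u_j$ collapses because $u_i$ is chiral ($Y_V(u_i,x,\bar x)=Y_V(u_i,x)$). Beyond that, every manipulation is routine and identical to the already-written VOA proof, so I would present the argument tersely: ``The proof of Theorem~\ref{thm:chiralmodescomm} applies verbatim, replacing $Y_V$ by $Y_W$ throughout, using the module translation property \eqref{eq:translation_module}, the residue formulas \eqref{eq:modeschirantchiint} for chiral/anti-chiral module vertex operators, and the module locality and duality \eqref{eq:M_locality_fields} in place of Proposition~\ref{prop:duality}.''
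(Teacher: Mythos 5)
Your proposal is correct and takes essentially the same approach as the paper: the paper disposes of this theorem with the single remark that the contour-integration proof of Theorem \ref{thm:chiralmodescomm} goes through for module vertex operators, which is precisely the argument you reconstruct ($Y_V\to Y_W$, residue formulas \eqref{eq:modeschirantchiint}, shifted test functions for the Borcherds identities). Your flagged subtlety is also resolved the way the paper intends: module duality is not re-derived via a Goddard-uniqueness argument but is supplied directly by the locality-and-duality axiom \ref{item:M_locprop}, i.e.\ by citing \eqref{eq:M_locality_fields} in place of Proposition \ref{prop:duality}.
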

The graded dimension or character of a module $W$ of a non-chiral VOA is defined similar to that of the VOA:
\begin{equation}\label{eq:graddimchiW}
    \chi_W(\tau,\bar{\tau})=\text{Tr}_W\,q^{L^W(0)-\frac{c}{24}}\bar{q}^{\Bar{L}^W(0)-\frac{\bar{c}}{24}}=\sum_{(h,\bar{h})\in \C\times\C}\left(\text{dim}\,W_{(h,\bar{h})}\right)q^{h-\frac{c}{24}}\bar{q}^{\bar{h}-\frac{\bar{c}}{24}}.
\end{equation}
Let $(W,Y_W)$ be a module of a non-chiral VOA $V$. A $V$-\textit{submodule} of $W$ is a vector subspace $W_1\subset W$ such that the vertex operator map restricts to a map on $W_1$:
\begin{equation}
\begin{split}
    Y_W: \, &V\otimes W_1\longrightarrow W_1\{x,\bar{x}\}\\&u\otimes w\longmapsto Y_W(u,x,\bar{x})w
\end{split}
\end{equation}
and is a $V$-module in its own right. A $V$-module is called \textit{irreducible} if it has no non-zero proper submodules. Irreducible modules are also called \textit{simple} modules. 
Direct sum of two $V$-modules is another $V$-module with the obvious definition of vertex operator map.
A homomorphism between two $V$-modules $(W_1,Y_{W_1})$ and $(W_2,Y_{W_2})$ is a grading preserving linear map $f:W_1\longrightarrow W_2$ satisfying 
\begin{equation}
    f(Y_{W_1}(v,x,\bar{x})w)=Y_{W_2}(v,x,\bar{x})f(w),\quad\forall~ v\in V,w\in W_1.
\end{equation}
The notion of isomorphisms and automorphisms are defined analogous to the non-chiral VOA. Again, isomorphic modules have identical graded dimension. 
A \textit{semi-simple} $V$-module is a $V$-module isomorphic to the direct sum of finitely many simple $V$-modules.
\subsection{Intertwining operators}
In this section, we define intertwining operators and study some of their properties.
\begin{defn}\label{defn:intertwiners}
Let $(V,Y_V,\omega,\Bar{\omega},\mathbf{1})$ be a non-chiral vertex operator algebra and let $\left(W_i, Y_i\right)$, $\left(W_j, Y_j\right)$ and $\left(W_k, Y_k\right)$ be three $V$-modules. An \textit{intertwining operator} of type $\left(\begin{smallmatrix}
W_i\\W_j W_k    
\end{smallmatrix}\right)$ is a linear map
\begin{equation}
\begin{split}
&\mathcal{Y}:W_j \otimes W_k \longrightarrow W_i\{x,\Bar{x}\}\\
& w_{(j)}\otimes w_{(k)}  \mapsto \mathcal{Y}(w_{(j)}, x,\bar{x})w_{(k)},
\end{split}
\end{equation}
or equivalently a map
\begin{equation}
\begin{split}
    \mathcal{Y} \, : \, &\C^\times\times\C^\times \longrightarrow \text{Hom}(W_j\otimes W_k,\overline{W}_i)\\& (z,\Bar{z})\longmapsto \mathcal{Y}(\cdot,z,\Bar{z}):w_{(j)}\otimes w_{(k)}\longmapsto \mathcal{Y}(w_{(j)},z,\bar{z})w_{(k)},
\end{split}
\end{equation}
which is multi-valued and analytic if $z,\Bar{z}$ are independent complex variables and single valued when $\Bar{z}$ is the complex conjugate of $z$. The intertwining operator $\cal{Y}$$(w_{(j)},x,\bar{x})$ is expanded as
\begin{equation}
\mathcal{Y}(w_{(j)}, x,\bar{x})=\sum_{n,m\in\C}(w_{(j)})_{n,m}x^{-n-1}\bar{x}^{-m-1}\in \operatorname{Hom}\left(W_k, W_i\right)\{x,\Bar{x}\}.   
\end{equation}
The following properties must be satisfied:
\begin{enumerate}
\item\label{item:intL0} $L(0)$\textit{-property:} For any $w_{(j)}\in W_j$ 
\begin{equation}
\begin{split}
        &[L(0), \mathcal{Y}(w_{(j)} , x, \bar{x})]=x \frac{\partial}{\partial x}\mathcal{Y}(w_{(j)} ,x, \bar{x})+Y_{j}(L^{W_j}(0) w_{(j)} , x, \bar{x}),\\&[\bar{L}(0), \mathcal{Y}(w_{(j)} , x, \bar{x})]=\bar{x} \frac{\partial}{\partial \bar{x}}\mathcal{Y}(w_{(j)} ,x, \bar{x})+Y_{j}(\bar{L}^{W_j}(0) w_{(j)} , x, \bar{x}),
    \end{split}    
\end{equation}
where the commutator on the LHS is understood to be 
\begin{equation}
\begin{split}
&[L(0), \mathcal{Y}(w_{(j)} , x, \bar{x})]=L^{W_i}(0)\mathcal{Y}(w_{(j)} , x, \bar{x})-\mathcal{Y}(w_{(j)} , x, \bar{x})L^{W_k}(0), \\&[\bar{L}(0), \mathcal{Y}(w_{(j)} , x, \bar{x})]=\bar{L}^{W_i}(0)\mathcal{Y}(w_{(j)} , x, \bar{x})-\mathcal{Y}(w_{(j)} , x, \bar{x})\bar{L}^{W_k}(0).
\end{split}
\end{equation}
\item\label{item:intL-1} \textit{Translation property:} For any $w_{(j)}\in W_j$
\begin{equation}
\begin{split}
& {[L(-1), \mathcal{Y}(w_{(j)} , x, \bar{x})]=\mathcal{Y}\left(L^{W_j}(-1) w_{(j)} , x, \bar{x}\right)=\frac{\partial}{\partial x} \mathcal{Y}(w_{(j)} , x, \bar{x}),} \\
& {[\bar{L}(-1), \mathcal{Y}(w_{(j)} , x, \bar{x})]=\mathcal{Y}\left(\bar{L}^{W_j}(-1) w_{(j)} , x, \bar{x}\right)=\frac{\partial}{\partial \bar{x}} \mathcal{Y}(w_{(j)} , x, \bar{x})}
\end{split}
\end{equation}
where the commutativity is understood as above.
\item\label{item:intloc} \textit{Locality property:} The module vertex operators and the intertwiner must be local, that is given vectors $u_1,\dots,u_{n-1}\in V, w_{(j)}\in W_j$, there exists an operator-valued function $m_n(u_1,\dots,u_{n-1},w_{(j)},z_1,\dots,z_n,\Bar{z}_1,\dots,\Bar{z}_n)$ satisfying the requirements in Property \ref{item:locprop} of Definition \ref{def:nvoa}. Here, the product of vertex operators in \eqref{eq:veropprodvoa} is replaced by 
\begin{equation}
\begin{split}
Y_i\left(u_{\sigma(1)} , z_{\sigma(1)}, \bar{z}_{\sigma(1)}\right)\cdots Y_i\left(u_{\sigma(a-1)} , z_{\sigma(a-1)}, \bar{z}_{\sigma(a-1)}\right)\mathcal{Y}\left(w_{(j)} , z_{a}, \bar{z}_{a}\right)\\Y_k\left(u_{\sigma(a+1)} , z_{\sigma(a+1)}, \bar{z}_{\sigma(a+1)}\right)\cdots  Y_k\left(u_{\sigma(n)} , z_{\sigma(n)}, \bar{z}_{\sigma(n)}\right)~. 
\end{split}
\end{equation}
\end{enumerate}
We will denote the intertwining operator by
\[
\mathcal{Y}_{j k}^i \text { or } \mathcal{Y}_{W_j W_k}^{~W_i} \text {, }
\]
when we need to indicate its type.
\end{defn}
\begin{remark}
The vertex operator map $Y_V(\cdot, x,\bar{x})$ acting on a non-chiral VOA $V$ is an example of an intertwining operator of type $\left(\begin{smallmatrix}
V\\V V    
\end{smallmatrix}\right)$ and $Y_W(\cdot, x,\bar{x})$ acting on a $V$-module $W$ is an example of an intertwining operator of type $\left(\begin{smallmatrix}
W\\V W    
\end{smallmatrix}\right)$. 
\end{remark}
\begin{remark}
Following the proof of \eqref{eq:commL0vmn} and using the $L(0)$-property \ref{item:intL0} along with the grading-restriction property \ref{item:M_gradres} of modules, one can show the following \textit{lower truncation property} for intertwiners:  for $w_{(j)}\in W_j$ and $w_{(k)} \in W_k$,
\begin{equation}
    (w_{(j)})_{n,m} w_{(k)}=0 \text { for } n,m \text { sufficiently large} \, .
\end{equation}    
\end{remark}
\begin{prop}\label{prop:intmoddualexist}
Let $(V,Y_V)$ be a non-chiral VOA and $(W,Y_W)$ be a $V$-module. Suppose there exists an intertwining operator of type $\mathcal{Y}^{~W}_{WV}$ where $(V,Y_V)$ is considered as a module for itself. Suppose further that the intertwining operator satisfies 
\begin{equation}\label{eq:intlimxxbar0}
    \lim_{x,\bar{x}\to 0}\mathcal{Y}^{~W}_{WV}(w,x,\bar{x})\mathbf{1}=w,\quad w\in W \, .
\end{equation}
Then the locality property of module vertex operators implies the duality property (see Remark \ref{rem:locdual} for terminology):
\begin{equation}
Y_W(u,z_1,\bar{z}_1)Y_W(v,z_2,\bar{z}_2)=Y_W(Y_V(u,z_1-z_2,\bar{z}_1-\bar{z}_2)v,z_2,\bar{z}_2),\quad u,v\in V \, .   
\end{equation}
In particular, we have the OPE:
\begin{equation}
Y_W(u,z_1,\bar{z}_1)Y_W(v,z_2,\Bar{z}_2)= \sum_{m,n\in\C}Y_W(u_{m,n}\cdot v,z_2,\bar{z}_2)(z_1-z_2)^{-m-1}(\bar{z}_1-\bar{z}_2)^{-n-1}     \, ,
\end{equation}
where $u_{m,n}\in\mathrm{End}(V)$ defined using the expansion \eqref{eq:genveropexp}.
\end{prop}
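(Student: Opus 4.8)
The plan is to adapt the proof of the duality/uniqueness statements (Lemma \ref{lemma:expL-1}, Proposition \ref{prop:Goddard-Uniqueness} and Proposition \ref{prop:duality}) to the module setting, using the extra intertwining operator $\mathcal{Y}^{~W}_{WV}$ as a substitute for the creation map $v\mapsto Y_V(v,x,\bar x)\mathbf{1}$ that is available on the VOA but not on a general module. First I would record a module analogue of Lemma \ref{lemma:expL-1}: from the translation property \ref{item:intL-1} of the intertwiner together with Taylor's theorem \eqref{eq:TaylorTHM} and the hypothesis \eqref{eq:intlimxxbar0}, one obtains
\begin{equation}
\mathcal{Y}^{~W}_{WV}(w,x,\bar{x})\mathbf{1}=\mathrm{e}^{\bar{x}\,\bar{L}^W(-1)}\mathrm{e}^{x\,L^W(-1)}w,\qquad w\in W \, ,
\end{equation}
exactly as in Lemma \ref{lemma:expL-1}. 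Similarly, from the translation property \ref{item:M_transprop} of the module vertex operator one gets the module version of Lemma \ref{lemma:transprop}, namely $\mathrm{e}^{x_2 L^W(-1)}\mathrm{e}^{\bar{x}_2\bar{L}^W(-1)}Y_W(v,x_1,\bar{x}_1)\mathrm{e}^{-x_2 L^W(-1)}\mathrm{e}^{-\bar{x}_2\bar{L}^W(-1)}=Y_W(v,x_1+x_2,\bar{x}_1+\bar{x}_2)$, and the analogous conjugation formula for $\mathcal{Y}^{~W}_{WV}$ in its $W_j=W$ slot.

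The core of the argument is then a module-uniqueness lemma mirroring Proposition \ref{prop:Goddard-Uniqueness}: if $U(x,\bar{x})$ is an operator-valued series on $W$ that is local with respect to all module vertex operators $Y_W(\cdot,\cdot)$ and satisfies $U(x,\bar{x})\mathbf{1}=\mathrm{e}^{\bar{x}\bar{L}^W(-1)}\mathrm{e}^{xL^W(-1)}w$ for some fixed $w\in W$ — where ``$\mathbf{1}$'' here means evaluation of $U$ as an intertwiner on the vacuum, so $U$ is understood to be of type $\mathcal{Y}^{~W}_{WV}$-like — then $U(z,\bar{z})=\mathcal{Y}^{~W}_{WV}(w,z,\bar{z})$. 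The proof copies Proposition \ref{prop:Goddard-Uniqueness} verbatim, applying $U(z_1,\bar{z}_1)$ to $Y_W(v',z_2,\bar{z}_2)\mathbf{1}=\mathcal{Y}^{~W}_{WV}(\mathbf{1},\dots)$-type expressions, using the locality of \ref{item:M_locprop}/\ref{item:intloc} to commute past, then using $U(z_1,\bar{z}_1)\mathbf{1}=\mathrm{e}^{\bar{z}_1\bar{L}^W(-1)}\mathrm{e}^{z_1 L^W(-1)}w=\mathcal{Y}^{~W}_{WV}(w,z_1,\bar{z}_1)\mathbf{1}$, and finally taking $z_2,\bar{z}_2\to 0$. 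With this in hand, the duality proof of Proposition \ref{prop:duality} transcribes directly: for $u,v\in V$ and any $w\in W$, the chain of identities
\begin{equation}
Y_W(u,z_1,\bar{z}_1)Y_W(v,z_2,\bar{z}_2)\,\mathcal{Y}^{~W}_{WV}(w,z_3,\bar{z}_3)\mathbf{1} \, ,
\end{equation}
after moving $\mathcal{Y}^{~W}_{WV}(w,z_3,\bar{z}_3)$ to the far left via locality, using the \emph{VOA} duality (Proposition \ref{prop:duality}) on $Y_W(u,z_1,\bar{z}_1)\mathrm{e}^{\bar{z}_2\bar{L}^W(-1)}\mathrm{e}^{z_2 L^W(-1)}v = \mathrm{e}^{\bar{z}_2\bar{L}^W(-1)}\mathrm{e}^{z_2 L^W(-1)}Y_V(u,z_1-z_2,\bar{z}_1-\bar{z}_2)v$ and the module-translation formula, and moving the intertwiner back, shows that the operator $w'\mapsto Y_W(u,z_1,\bar{z}_1)Y_W(v,z_2,\bar{z}_2)w'$ acting on $\mathrm{e}^{\bar{z}_3\bar{L}^W(-1)}\mathrm{e}^{z_3 L^W(-1)}w$ coincides (up to analytic continuation) with $Y_W(Y_V(u,z_1-z_2,\bar{z}_1-\bar{z}_2)v,z_2,\bar{z}_2)$ applied to the same vector; taking $z_3,\bar{z}_3\to 0$ and invoking the module-uniqueness lemma completes it. The OPE statement is then the expansion \eqref{eq:opeexp} read in $\mathrm{End}(W)$, with the sum over $m,n\in\C$ now because module weights lie in $\C\times\C$; its convergence in the stated domain follows from the skew-symmetry argument at the end of Proposition \ref{prop:duality}, again transported to $W$.

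The main obstacle I anticipate is purely bookkeeping about \emph{domains of convergence and the role of the intertwiner's multivaluedness}: in Proposition \ref{prop:duality} the creation property lets one take $z_3,\bar{z}_3\to 0$ cleanly because $Y_V(u,z_3,\bar z_3)\mathbf 1$ is a power series, but $\mathcal{Y}^{~W}_{WV}(w,z_3,\bar z_3)\mathbf{1}=\mathrm{e}^{\bar{z}_3\bar{L}^W(-1)}\mathrm{e}^{z_3 L^W(-1)}w$ is \emph{also} a genuine (convergent) power series in $z_3,\bar z_3$ by the module-analogue of Lemma \ref{lemma:expL-1}, so the limit is still legitimate; the subtlety is only that one must check the intertwiner slot in the locality axiom \ref{item:intloc} is invoked with the correct ordering of $|z_i|$ and that the ``$\sim$'' analytic-continuation steps are applied in overlapping regions $|z_1|>|z_2|>|z_3|$, $|z_3|>|z_1|>|z_2|$, etc., with $|z_1|>|z_2|>|z_1-z_2|$ for the associativity rearrangement — exactly as in Proposition \ref{prop:duality}, but one should state explicitly that $\mathcal{Y}^{~W}_{WV}$ being an intertwiner (and not merely a module vertex operator) is what makes these rearrangements available, since that is the content of Remark \ref{rem:locdual}. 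Everything else is a line-by-line transcription of Lemmas \ref{lemma:expL-1}, \ref{lemma:transprop}, \ref{lemma:sksym} and Propositions \ref{prop:Goddard-Uniqueness}, \ref{prop:duality} with $Y_V\to Y_W$ on the outer slots, $Y_V\to Y_V$ retained on the inner associativity slot, and $Y_V(w,\cdot)\mathbf 1 \to \mathcal{Y}^{~W}_{WV}(w,\cdot)\mathbf 1$.
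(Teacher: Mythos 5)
Your proposal is correct and follows essentially the same route as the paper: establish $\mathcal{Y}^{~W}_{WV}(w,x,\bar{x})\mathbf{1}=\mathrm{e}^{\bar{x}\bar{L}(-1)}\mathrm{e}^{xL(-1)}w$ from \eqref{eq:intlimxxbar0} and the translation property, apply $Y_W(u,z_1,\bar{z}_1)Y_W(v,z_2,\bar{z}_2)$ to this vector, use the intertwiner's locality to move $\mathcal{Y}^{~W}_{WV}(w,z_3,\bar{z}_3)$ to the far left (turning the outer operators into $Y_V$'s acting on $\mathbf{1}$), invoke the VOA duality of Proposition \ref{prop:duality}, move the intertwiner back, and let $z_3,\bar{z}_3\to 0$. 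The only difference is that the paper dispenses with your module-uniqueness lemma (and with rederiving the conjugation formula on the inner slot), since after the limit both sides act on the arbitrary vector $w\in W$ directly; your extra step is harmless but not needed.
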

\begin{proof}
The proof is analogous to the proof of Proposition \ref{prop:duality}. Let $w\in W$ be an arbitrary vector. First note that \eqref{eq:intlimxxbar0} along with the translation property \ref{item:intL-1} implies that Lemma \ref{lemma:expL-1} is true for the intertwiner $\mathcal{Y}^{~W}_{WV}(w,x,\bar{x})$:
\begin{equation}
\mathcal{Y}^{~W}_{WV}(w,x,\bar{x})\mathbf{1}= \mathrm{e}^{\bar{x} \, \bar{L}(-1)}\mathrm{e}^{x \, L(-1)}w.    
\end{equation}
Then we have 
\begin{equation}
\begin{split}     Y_{W}(u,z_1,\bar{z}_1)&Y_{W}(v,z_2,\bar{z}_2)\mathrm{e}^{\bar{z}_3 \, \bar{L}(-1)}\mathrm{e}^{z_3 \, L(-1)}w\\&=Y_{W}(u,z_1,\bar{z}_1)Y_{W}(v,z_2,\bar{z}_2)\mathcal{Y}^{~W}_{WV}(w,z_3,\bar{z}_3)\mathbf{1}\\&=\mathcal{Y}^{~W}_{WV}(w,z_3,\bar{z}_3)Y_{V}(u,z_1,\bar{z}_1)Y_{V}(v,z_2,\bar{z}_2)\mathbf{1}\\&=\mathcal{Y}^{~W}_{WV}(w,z_3,\bar{z}_3)Y_{V}\left(Y_{V}(u,z_1-z_2,\bar{z}_1-\Bar{z}_2)v,z_2,\bar{z}_2\right)\mathbf{1}\\&=Y_{W}\left(Y_{V}(u,z_1-z_2,\bar{z}_1-\Bar{z}_2)v,z_2,\bar{z}_2\right)\mathcal{Y}^{~W}_{WV}(w,z_3,\bar{z}_3)\mathbf{1},
\end{split}
\end{equation}
where we used the locality property of intertwiner $\mathcal{Y}^{~W}_{WV}$ and the duality of vertex operators in Proposition \ref{prop:duality}. Now taking the limit $z_3,\bar{z}_3\to 0$ gives the required result.  
\end{proof}
\subsection{Non-chiral CFT and modular invariance}
Given a non-chiral VOA $(V,Y_V)$ with the set of all isomorphism classes of simple modules $\{(W_i,Y_{W_i})\}$, one can construct a non-chiral CFT by taking the non-chiral VOA and a subset of the simple modules\footnote{A crucial requiremnt in choosing what subset of simple modules to include is to make sure that the OPE of appropriate intertwiners between modules closes in the sense that the right hand side of the OPE contains interwiners and vertex operators for modules which are included in the subset we choose. We will explore these \textit{fusion rules} for intertwiners in a future work. For the purposes of this paper, we will work with the simplistic definition given below.}. Note that one is allowed to choose copies of the same module. 
\begin{defn}\label{def:nonchcft}
A non-chiral VOA $(V,Y_V)$ along with a subset $\{(W_\alpha,Y_{W_\alpha})\}_{\alpha\in I}$ of simple modules, with possibly $(W_\alpha,Y_{W_\alpha})\cong (W_\beta,Y_{W_\beta})$ for some $\alpha,\beta\in I$, will be called a non-chiral CFT.
 
\end{defn}
Two non-chiral CFTs are said to be equivalent if the underlying non-chiral VOAs and their simple modules are isomorphic. Note that the isomorphism is allowed to permute the (non-trivial) modules but not the non-chiral VOA which is considered as a module for itself.\\\\
Let $\{W_\alpha\}_{\alpha\in I}$ with $W_0\cong V$ being the non-chiral VOA $V$ considered as a module for itself be a non-chiral CFT.
The \textit{torus partition function} of the non-chiral CFT is defined by 
\begin{equation}
    Z_V(\tau,\Bar{\tau}):=\sum_{i\in I}\chi_{W_i}(\tau,\bar{\tau}) \, .
\end{equation}
It is clear that equivalent non-chiral CFTs have identical partition function.\\\\A non-chiral CFT is called \textit{modular invariant} if its torus partition function is modular invariant:
\begin{equation}
    Z_V(\gamma\tau,\gamma\bar{\tau})=Z_V(\tau,\Bar{\tau}),\quad \gamma\in\mathrm{SL}(2,\Z) \, , 
\end{equation}
where 
\begin{equation}
\gamma\tau=\frac{a\tau+b}{c\tau+d},\quad \gamma\bar{\tau}=\frac{a\bar{\tau}+b}{c\bar{\tau}+d},\quad\gamma=\begin{pmatrix}
    a&b\\c&d
\end{pmatrix}\in\mathrm{SL}(2,\Z).    
\end{equation}
Given a non-chiral CFT, its torus partition function need not be modular invariant. Modular invariance is a physical requirement and it puts strong constraints on which modules of a non-chiral VOA is allowed to construct the non-chiral CFT. 
\section{Moduli space of non-chiral CFTs over Lorentzian lattices}\label{sec:modintopllvoa}
\subsection{Construction of modules of LLVOA}
Given any $[\mu] = [(\mu_1, \mu_2)] \in \Lambda/\Lambda_0$ be a coset. We will construct a module for the LLVOA corresponding to this coset. First observe that there is a one-to-one correspondence between cosets $\Lambda/\Lambda_0$ and $\C[\Lambda]/\C[\Lambda_0]$ given by the map 
\begin{equation}
    [\mu]\mapsto \C[\mu+\Lambda_0]=e^\mu\cdot\C[\Lambda_0]=\text{Span}_{\C}\{e^{\mu+\lambda} : \lambda\in\Lambda_0\} \, .
\end{equation}
Using the coset $\C[\mu+\Lambda_0]$, define the vector space 
\begin{equation}
    W_{\mu}:=S(\mf{h}^-)\otimes \C[\mu+\Lambda_0]. 
\end{equation}
Note that $W_{\mu}$ is generated by elements of the form 
\begin{equation}\label{eq:genvectmodule}
w :=  \left( \alpha_{1}(-m_1)\cdot\alpha_{2}(-m_2)\cdots\alpha_{k}(-m_k) \, \beta_{1}(-\bar{m}_1)\cdot\beta_{2}(-\bar{m}_2)\cdots\beta_{\bar{k}}(-\bar{m}_{\bar{k}}) \right)  \otimes {\rm e}^{(\mu_1, \mu_2)+(\alpha,\beta)} 
\end{equation}
for $m_i, \, \bar{m}_{\bar{i}}> 0,\, k,\bar{k}\geq 0,(\alpha,\beta)\in\Lambda_0$.
The vertex operator map is exactly the same as for $(Y_{V_{\Lambda}},V_{\Lambda})$:
\begin{equation}\label{eq:module_map_Voa_map}
 Y_{W_\mu}(\cdot,x,\bar{x})=Y_{V_{\Lambda}}(\cdot,x,\bar{x}).   
\end{equation} 
Note that $Y_{W_\mu}(\cdot,x,\bar{x})$ acts on $W_{\mu}$ for which
the required action of $\hat{\mf{h}}^0,\textbf{k},\bar{\textbf{k}}$ on $\C[\mu+\Lambda_0]$ is defined  in \eqref{eq:alp0act}.
The action of $x^{\alpha},\bar{x}^{\beta}$ is defined exactly the same as in \eqref{eq:xpoweraction} and the action of $\hat{\Lambda}_0$ on $\C[\mu+\Lambda]$ is given in \eqref{eq:eloweraction}. 
The grading on $W_{\mu}$ is given by defining the \textit{conformal weights} of $w$ in \eqref{eq:genvectmodule} to be 
\begin{equation}\label{eq:M_gradingformula}
    h =\frac{\langle \mu_1 + \alpha,\mu_1 +\alpha\rangle}{2}+\sum_{i=1}^{k}m_i,\quad \Bar{h} =\frac{\langle\mu_2 +\beta,\mu_2 +\beta\rangle}{2}+\sum_{j=1}^{\bar{k}}\bar{m}_j \, .
\end{equation}
\begin{remark}\label{rem:xWmnu}
In view of \eqref{eq:module_map_Voa_map}, the modes $x^{W_\mu}_{m,n}(u)$ of the vertex operator $Y_{W_\mu}(u,x,\bar{x})$ is the same as the mode $x_{m,n}(u)$ of $Y_{V_\Lambda}(u,x,\bar{x})$ but now acting on $W_\mu$, see Remark \ref{rem:genvertop}.    
\end{remark}
\begin{thm}
For every $[\mu]\in \Lambda/\Lambda_0$, the tuple $(W_\mu,Y_{W_\mu})$ is a $V_\Lambda$-module.    
\end{thm}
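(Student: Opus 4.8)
The plan is to verify each of the ten module axioms of Definition \ref{def:modofncft} for the tuple $(W_\mu, Y_{W_\mu})$, leveraging the crucial observation \eqref{eq:module_map_Voa_map} that the module vertex operator map is literally the same formula as the LLVOA vertex operator map (see Remark \ref{rem:genvertop}), now acting on the space $W_\mu = S(\hat{\mf{h}}^-)\otimes\C[\mu+\Lambda_0]$. Because of this, almost every verification will be a near-verbatim transcription of the corresponding step in Section \ref{sec:llvoaproofs}, with $\C[\Lambda_0]$ replaced by $\C[\mu+\Lambda_0]$ and $\lambda\in\Lambda_0$ replaced by $\mu+\lambda$ with $\lambda\in\Lambda_0$. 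I would organize the proof axiom by axiom, as in Section \ref{sec:llvoaproofs}.

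First I would dispatch the easy axioms. The identity property \ref{item:M_idprop} is immediate since $Y_{W_\mu}(\mathbf{1},x,\bar{x}) = {\rm e}_0 = \mathds{1}$. The single-valuedness property \ref{item:M_singvalprop} follows from the same computation as in the LLVOA case: for $w$ of the form \eqref{eq:genvectmodule}, $h-\bar{h} = \tfrac{1}{2}(\langle\mu_1+\alpha,\mu_1+\alpha\rangle - \langle\mu_2+\beta,\mu_2+\beta\rangle) + \sum m_i - \sum\bar{m}_j$, and here one must check that $\tfrac{1}{2}(\langle\mu_1+\alpha,\mu_1+\alpha\rangle - \langle\mu_2+\beta,\mu_2+\beta\rangle)\in\Z$; writing this as $\tfrac{1}{2}(\mu+\lambda)\circ(\mu+\lambda)$ with $\mu+\lambda\in\Lambda$, this is an integer since $\Lambda$ is even. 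For the grading-restriction property \ref{item:M_gradres}, finiteness of $\dim W_{(h,\bar{h})}$ follows exactly as before from discreteness of $\Lambda_1^0, \Lambda_2^0$ and the shifted finiteness of $\{(\alpha,\beta)\in\Lambda_0 : \langle\mu_1+\alpha,\mu_1+\alpha\rangle\le 2h,\ \langle\mu_2+\beta,\mu_2+\beta\rangle\le 2\bar{h}\}$; and $h,\bar{h}$ are bounded below (by some real $M$ depending on $\mu$, not necessarily $0$) since $\langle\mu_1+\alpha,\mu_1+\alpha\rangle\ge 0$ and $\langle\mu_2+\beta,\mu_2+\beta\rangle\ge 0$ always, with only finitely many cosets representatives of small norm. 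The Virasoro property \ref{item:M_virprop} and grading property \ref{item:M_gradprop} use the same Virasoro generators \eqref{eq:Virgenlamb}, now acting on $W_\mu$; the Virasoro algebra relations with central charge $(m,n)$ hold purely from the Heisenberg commutation relations \eqref{eq:heisalg}, independent of which $\hat{\mf{h}}$-module we act on, and the grading computation $L_\Lambda(0)w = (\sum m_i + \tfrac{\langle\mu_1+\alpha,\mu_1+\alpha\rangle}{2})w$ follows from \eqref{eq:alp0act} with $e^\lambda$ replaced by $e^{\mu+\lambda}$.

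Next, the $L^W(0)$-property \ref{item:M_l0prop} and translation property \ref{item:M_transprop}: these were proved for the LLVOA by computing $[L_\Lambda(0), \alpha(x)^\pm]$, $[L_\Lambda(-1),\alpha(-r)]$ and the action of $L_\Lambda(0), L_\Lambda(-1)$ on $e^\lambda x^\alpha$ and on generating vectors; every one of these computations only used the Heisenberg relations and the action \eqref{eq:alp0act}, \eqref{eq:xpoweraction}, \eqref{eq:eloweraction}, all of which hold verbatim on $W_\mu$ (with the replacement $\lambda\to\mu+\lambda$). So these carry over with no new input. Finally, for the locality and duality property \ref{item:M_locprop}: the locality of products of module vertex operators $Y_{W_\mu}(v,z_1,\bar z_1)Y_{W_\mu}(w,z_2,\bar z_2)$ follows from exactly the same normal-ordering manipulation as in the proof of the locality theorem in Section \ref{sec:localproof} and Remark \ref{rem:locgenverop} — note that remark already observes that the proof goes through for vertex operators attached to vectors with $e^\lambda\in\C[\Lambda]$ rather than $\C[\Lambda_0]$, precisely the generality needed here — using integrality of $\Lambda$ so that $\langle\alpha,\alpha'\rangle,\langle\beta,\beta'\rangle\in\Z$ for lattice vectors appearing in $Y_V$'s. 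For the duality half of \ref{item:M_locprop}, I would invoke Proposition \ref{prop:intmoddualexist}: it suffices to exhibit an intertwining operator $\mathcal{Y}^{~W_\mu}_{W_\mu V}$ satisfying $\lim_{x,\bar x\to 0}\mathcal{Y}(w,x,\bar x)\mathbf{1} = w$. Such an operator is built from the same vertex-operator formula \eqref{eq:genvertopdef} extended via the central extension $\hat\Lambda$ (Remark \ref{rem:genvertop}), acting $W_\mu\otimes V_\Lambda\to W_\mu$; its $L(0)$- and translation properties and locality follow as above, and the limit condition follows from the creation-property computation.

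The main obstacle I anticipate is the duality/locality axiom \ref{item:M_locprop}: one must be careful that the relevant power series $(z_1-z_2)^s$ with $s=\langle\alpha,\alpha'\rangle$ still have \emph{integer} exponents so that the $\C[\mu+\Lambda_0]$-twisted cocycle signs $(-1)^{\lambda\circ\lambda'}$ and the analytic-continuation argument of Proposition \ref{latticevectorlocality} go through unchanged — this is why integrality of $\Lambda$ (not just of $\Lambda_0$) is essential, and why one works with the full central extension $\hat\Lambda$. Once that point is handled exactly as in Remark \ref{rem:locgenverop}, and the auxiliary intertwiner for Proposition \ref{prop:intmoddualexist} is written down, the remaining axioms are routine adaptations of Section \ref{sec:llvoaproofs}. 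I would therefore present the proof compactly: state that axioms \ref{item:M_idprop}, \ref{item:M_gradres}, \ref{item:M_singvalprop}, \ref{item:M_virprop}, \ref{item:M_gradprop}, \ref{item:M_l0prop}, \ref{item:M_transprop} follow \emph{mutatis mutandis} from the corresponding arguments in Section \ref{sec:llvoaproofs} with $\Lambda_0$ replaced by $\mu+\Lambda_0$, and give the locality/duality argument in slightly more detail, citing Remark \ref{rem:locgenverop} and Proposition \ref{prop:intmoddualexist}.
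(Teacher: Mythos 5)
Your proposal is correct and follows essentially the same route as the paper: the paper likewise observes that all axioms except grading restriction and locality/duality carry over verbatim from Section \ref{sec:llvoaproofs} with $\Lambda_0$ replaced by $\mu+\Lambda_0$, proves grading restriction via discreteness of $\mu_1+\Lambda_1$ and $\mu_2+\Lambda_2$, and establishes duality exactly as you propose, by exhibiting the intertwiner $\mathcal{Y}^{~W_\mu}_{W_\mu V_\Lambda}$ built from \eqref{eq:genvertopdef} and invoking Proposition \ref{prop:intmoddualexist}. Your only minor deviation is the hedge that $M$ might depend on $\mu$; in fact $h,\bar h\geq 0$ always, so $M=0$ works, as the paper notes.
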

\begin{proof}
We prove the properties in Section \ref{sec:modintop} to show $(W_\mu,Y_{W_\mu})$ is a $V_\Lambda$-module. The proof of Properties \ref{item:M_idprop} through \ref{item:M_transprop}, except Property \ref{item:M_gradres} are exactly the same as in the case of the LLVOA, which we had shown in Subsection \ref{sec:llvoaproofs}.
\\
Now, it can be seen that $h$ and $\bar{h}$  in \eqref{eq:M_gradingformula} are both positive numbers, as $m_i$ are positive integers and the bilinear forms, inherited from $\R^{m}$ and $\R^{n}$, on $\Lambda_1$ and $\Lambda_2$ are positive definite. Hence, $M = 0$ for Property \ref{item:M_gradres}. To show dim$(W_{(h, \bar{h})}) < \infty$, we show that for any $h,\bar{h}\in\mathbb{R}$ the number of distinct $\lambda = (\alpha, \beta) \in \Lambda_{0}$ satisfying 
\begin{equation}\label{eq:alpbethhbar}
   \langle \mu_1 + \alpha, \mu_1 + \alpha\rangle \leq 2 h \text{ and } \langle \mu_2 +  \beta,\mu_2 + \beta\rangle \leq 2 \bar{h} \,  , \text{ where } \alpha \in \Lambda_1, \beta \in \Lambda_2\,  ,
\end{equation}
can be only finitely many. We basically mimic the proof for the grading restriction property for the LLVOA, noting that $\mu_1 + \Lambda_1$ and $\mu_2 + \Lambda_2$ are also discrete. \\
The proof of the locality of module vertex operators is same as for the LLVOA case. We prove the duality property. We will show that there exists an intertwining operator of type $\mathcal{Y}^{W_\mu}_{W_\mu V_\Lambda}$ satisfying the hypothesis of Proposition \ref{prop:intmoddualexist}. Indeed, for $w\in W_\mu$ of the form \eqref{eq:genvectmodule}, consider the operator 
\begin{equation}
\begin{split}
\mathcal{Y}^{~W_\mu}_{W_\mu V_\Lambda}(w,x,\bar{x})=\typecolon \prod_{r=1}^{k}\prod_{s=1}^{\bar{k}}\left(\frac{1}{(m_r-1)!}\frac{d^{m_r - 1 }\alpha_{r}(x)}{dx^{m_r-1}}\right)&\left(\frac{1}{(\bar{m}_s-1)!}\frac{d^{\bar{m}_s-1}\beta_{s}(\bar{x})}{d\bar{x}^{\bar{m}_s-1}}\right)\\& \mathcal{Y}^{~W_\mu}_{W_\mu V_\Lambda}(\mathrm{e}^{\mu+(\alpha,\beta)},x,\bar{x})\typecolon ,  
\end{split}
\end{equation} 
where 
\begin{equation}
\mathcal{Y}^{~W_\mu}_{W_\mu V_\Lambda}(\mathrm{e}^{\mu+(\alpha,\beta)},x,\bar{x})=Y_{V_\Lambda}(\mathrm{e}^{\mu+(\alpha,\beta)},x,\bar{x}).  \end{equation}
The operators appearing in the intertwiner above act on $V_{\Lambda}$ in the obvious way. The axioms of intertwiners along with the hypothesis of Proposition \ref{prop:intmoddualexist} for  $\mathcal{Y}^{~W_\mu}_{W_\mu V_\Lambda}$ follows from the general proofs in Subsection \ref{sec:llvoaproofs}. 
\end{proof}
We now show that these modules are irreducible. 
\begin{prop}
Any $V_{\Lambda}$-module $(W,Y_{W})$ is also an $(\hat{\mf{h}}_1^\star\oplus\hat{\mf{h}}_2^\star)$-module, where $\hat{\mf{h}}_i^\star$ are Heisenberg algebras associated to $\mf{h}_i$.   
\end{prop}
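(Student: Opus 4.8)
The plan is to exhibit explicit actions of the Heisenberg algebras $\hat{\mf{h}}_1^\star$ and $\hat{\mf{h}}_2^\star$ on $W$ extracted from the module vertex operators associated to the degree-one chiral and anti-chiral vectors of $V_\Lambda$. Concretely, for $\alpha\in\mf{h}_1$ the vector $\alpha(-1)\otimes\mathbf{1}\in V_\Lambda$ has conformal weight $(1,0)$ and satisfies $\bar{L}(-1)(\alpha(-1)\otimes\mathbf{1})=0$, so by the Remark after Definition~\ref{def:chirverop} it is a chiral vector; similarly $\beta(-1)\otimes\mathbf{1}$ for $\beta\in\mf{h}_2$ is anti-chiral. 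By Lemma~\ref{lemma:hintchiral} the module vertex operators expand as
\begin{equation}
Y_W(\alpha(-1)\otimes\mathbf{1},x)=\sum_{n\in\Z}\alpha^W(n)x^{-n-1},\qquad Y_W(\beta(-1)\otimes\mathbf{1},\bar{x})=\sum_{n\in\Z}\beta^W(n)\bar{x}^{-n-1},
\end{equation}
where I write $\alpha^W(n):=x^W_n(\alpha(-1)\otimes\mathbf{1})$ and $\beta^W(n):=\bar{x}^W_n(\beta(-1)\otimes\mathbf{1})$. I would first check (directly from the definition $Y_W=Y_{V_\Lambda}$ in the LLVOA construction, or abstractly) that $\alpha\mapsto\alpha^W(n)$ and $\beta\mapsto\beta^W(n)$ are $\C$-linear, and that the assignments respect the obvious gradings.

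The core step is to compute the commutation relations using Theorem~\ref{thm:commmodverop}. Applying the first commutator formula in that theorem to the two chiral vectors $u_i=u_j=\alpha(-1)\otimes\mathbf{1}$ and $u'=\alpha'(-1)\otimes\mathbf{1}$, with $h_i-\bar h_i=1$, the sum on the right collapses to the single term $p=0$ (since $p\geq -1+1=0$ and higher modes of a weight-one vector acting on a weight-one vector land in non-positive weight, forcing $p\le 0$ there and $p\ge 0$ from the bound). One then needs the single fusion datum $x_0(\alpha(-1)\otimes\mathbf 1)\cdot(\alpha'(-1)\otimes\mathbf 1)$ together with $x_1(\cdot)\cdot(\cdot)$; in the LLVOA these are read off from the explicit vertex operator \eqref{eq:genvertopdef} and the Heisenberg brackets \eqref{eq:heisalg}, giving $x_1(\alpha(-1)\otimes\mathbf 1)\cdot(\alpha'(-1)\otimes\mathbf 1)=\langle\alpha,\alpha'\rangle\,\mathbf 1$ and $x_0(\cdot)\cdot(\cdot)$ of higher weight, so that
\begin{equation}
[\alpha^W(m),\alpha'^W(n)]=m\,\langle\alpha,\alpha'\rangle\,\delta_{m+n,0}\,\mathrm{id}_W,
\end{equation}
and similarly $[\beta^W(m),\beta'^W(n)]=m\langle\beta,\beta'\rangle\delta_{m+n,0}\,\mathrm{id}_W$, while the mixed commutator $[\alpha^W(m),\beta'^W(n)]=0$ is exactly the third commutator in Theorem~\ref{thm:commmodverop}. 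These are precisely the defining relations \eqref{liebracketmodes}/\eqref{eq:heisalg} of $\hat{\mf h}_1^\star\oplus\hat{\mf h}_2^\star$ with the central elements $\textbf{k},\bar{\textbf{k}}$ acting as the identity, so $W$ becomes a module over this direct sum of Heisenberg algebras. I would also note the lower-truncation/regularity point: for fixed $w\in W$, $\alpha^W(n)w=0$ for $n$ sufficiently large by the grading-restriction property of modules, which is the standard finiteness condition needed to call $W$ an honest smooth module.

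The main obstacle, such as it is, is bookkeeping rather than conceptual: one must verify that the relevant fusion products $x_p(\alpha(-1)\otimes\mathbf 1)\cdot(\alpha'(-1)\otimes\mathbf 1)$ are what I claimed and that the index bound in Theorem~\ref{thm:commmodverop} genuinely truncates the sum to the single central term, i.e. that no weight-one-into-weight-one mode other than the bracket contributes. This is where a little care with the weight gradings \eqref{eq:wtofmodes} (and the fact that $V_\Lambda$ is bounded below with one-dimensional weight-zero space spanned by $\mathbf 1$) is needed. Everything else — linearity, the reduction of Theorem~\ref{thm:commmodverop} to $h_i-\bar h_i=1$, and the identification with the abstract Heisenberg presentation — is routine. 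An alternative, perhaps cleaner, route avoiding even Theorem~\ref{thm:commmodverop} is to use the duality/OPE for module vertex operators (Proposition~\ref{prop:intmoddualexist}, already established for $W_\mu$) to compute $Y_W(\alpha(-1)\otimes\mathbf 1,z_1)Y_W(\alpha'(-1)\otimes\mathbf 1,z_2)$ as a sum of module vertex operators and extract the $(z_1-z_2)^{-2}$ singular term; I would present the Borcherd-identity argument as the primary proof and mention the OPE computation as a remark.
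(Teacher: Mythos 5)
Your proof is correct and takes essentially the same approach as the paper: the paper's proof simply introduces $\alpha^W(n)$ and $\beta^W(n)$ as the modes of $Y_W(\alpha(-1)\otimes\mathbf{1},x)$ and $Y_W(\beta(-1)\otimes\mathbf{1},\bar{x})$ and leaves the Heisenberg relations implicit, which is precisely the verification via Theorem \ref{thm:commmodverop} that you spell out. The only slip is in your truncation parenthetical: the surviving contribution is the $p=1$ term, with $x_1(\alpha(-1)\otimes\mathbf{1})\cdot(\alpha'(-1)\otimes\mathbf{1})=\langle\alpha,\alpha'\rangle\,\mathbf{1}$, while the $p=0$ term vanishes not on weight grounds (it is weight-preserving) but because $\alpha(0)\cdot\mathbf{1}=\langle\alpha,0\rangle\,\mathbf{1}=0$; with that fixed, your computation is exactly the intended one.
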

\begin{proof}
For any $\alpha\in\mf{h}_1$, consider $\alpha(-1)\otimes\mathbf{1}\in V_{\Lambda}$. The corresponding vertex operator is
\begin{equation}
    Y_{W}(\alpha(-1)\otimes\mathbf{1},x,\bar{x}):=\alpha^W(x) :=\sum_{n\in\Z}\alpha^W(n)x^{-n-1}.
\end{equation}
This implies that $W$ is also an $\hat{\mf{h}}_1^\star$-module. Similarly considering the vector $\beta(-1)\otimes\mathbf{1}\in V_{\Lambda}$ and its vertex operator, we see that $W$ is also an $\hat{\mf{h}}_2^\star$-module. 
\end{proof}
\begin{remark}
When $W=W_\mu$ is the module of the LLVOA corresponding to the coset $[\mu]\in\Lambda/\Lambda_0$, then $\alpha^W(x)=\alpha(x)$ and $\alpha^W(n)=\alpha(n)$, see Remark \ref{rem:xWmnu}.    
\end{remark}
\begin{thm}
For $[\mu]\in\Lambda/\Lambda_0$, the $V_{\Lambda}$-module $(W_\mu,Y_{W_\mu})$ is irreducible.    
\end{thm}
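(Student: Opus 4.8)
\textbf{Proof proposal for the irreducibility of $(W_\mu,Y_{W_\mu})$.}

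The plan is to argue in two stages: first reduce to a statement about the Heisenberg/group-algebra structure, then invoke the classification of irreducible modules of Heisenberg algebras together with the action of the lattice operators. First I would fix a nonzero $V_\Lambda$-submodule $W\subseteq W_\mu$ and show it must be all of $W_\mu$. By the preceding proposition, $W_\mu$ is an $(\hat{\mf{h}}_1^\star\oplus\hat{\mf{h}}_2^\star)$-module, and under this action it decomposes as $W_\mu=\bigoplus_{\lambda\in\mu+\Lambda_0}S(\hat{\mf{h}}^-)\otimes{\rm e}^{\lambda}$, where each summand $S(\hat{\mf{h}}^-)\otimes{\rm e}^{\lambda}$ is an irreducible module for $\hat{\mf{h}}_1^\star\oplus\hat{\mf{h}}_2^\star$ with $\textbf{k},\bar{\textbf{k}}$ acting as $1$ and $\hat{\mf{h}}^0$ acting by the scalars determined by $\lambda$ via \eqref{eq:alp0act} — this is the standard Fock space irreducibility for Heisenberg algebras, which I would cite from Appendix \ref{app:heisalgreps} (or \cite[Chapter 1]{Frenkel:1988xz}). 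The key point is that distinct $\lambda\in\mu+\Lambda_0$ give \emph{inequivalent} Heisenberg modules, because the central-like operators $\hat{\mf{h}}^0$ act by different scalars (the pairings $\langle\alpha',\alpha^\lambda\rangle,\langle\beta',\beta^\lambda\rangle$ separate points of $\mu+\Lambda_0$, as $\Lambda_0$ spans a nondegenerate subspace).

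Next I would use this decomposition as follows. Since $W$ is a nonzero $\hat{\mf{h}}^\star$-submodule (it is closed under the modes $\alpha^W(n),\beta^W(n)$, which are among the modes of vertex operators $Y_{W_\mu}(\alpha(-1)\otimes\mathbf 1,x,\bar x)$ etc.), and since $W_\mu$ is a direct sum of irreducible, pairwise-inequivalent $\hat{\mf{h}}^\star$-modules indexed by $\mu+\Lambda_0$, it follows that $W=\bigoplus_{\lambda\in S}S(\hat{\mf{h}}^-)\otimes{\rm e}^{\lambda}$ for some nonempty subset $S\subseteq\mu+\Lambda_0$. It then remains to show $S=\mu+\Lambda_0$. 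For this I would use the lattice vertex operators $Y_{W_\mu}({\rm e}^{\nu},x,\bar x)$ for $\nu\in\Lambda_0$: from the explicit formula \eqref{eq:vertexoperatordef}, the operator ${\rm e}_{\nu}x^{\alpha^\nu}\bar x^{\beta^\nu}$ sends the summand over ${\rm e}^{\lambda}$ to the summand over ${\rm e}^{\lambda+\nu}$ (up to the nonzero scalar $(-1)^{\epsilon(\nu,\lambda)}$ and a monomial in $x,\bar x$), and the surrounding exponentials of Heisenberg modes preserve this shift. Concretely, picking a suitable mode $x^{W_\mu}_{m,n}({\rm e}^{\nu})$ from the expansion of $Y_{W_\mu}({\rm e}^{\nu},x,\bar x)$, one gets a nonzero map $S(\hat{\mf{h}}^-)\otimes{\rm e}^{\lambda}\to S(\hat{\mf{h}}^-)\otimes{\rm e}^{\lambda+\nu}$; in fact applying ${\rm e}^{\nu}$-modes and then Heisenberg lowering modes one can reach $1\otimes{\rm e}^{\lambda+\nu}$ from $1\otimes{\rm e}^\lambda$. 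Since $\{\lambda+\nu:\nu\in\Lambda_0\}=\mu+\Lambda_0$ for any $\lambda\in\mu+\Lambda_0$, this shows that starting from any one summand in $W$ we reach every summand, hence $S=\mu+\Lambda_0$ and $W=W_\mu$.

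I expect the main obstacle to be the bookkeeping in the second stage: verifying carefully that some mode of $Y_{W_\mu}({\rm e}^{\nu},x,\bar x)$ acts as a \emph{nonzero} operator between the relevant Fock summands, i.e. that the leading-order term in the expansion \eqref{eq:vertexoperatordef} survives (the lowest power of $x,\bar x$ contributes ${\rm e}_\nu$ acting by a nonzero scalar times a nonzero element of the target Fock space). One must also make sure no accidental cancellation or vanishing occurs because of the cocycle signs $(-1)^{\epsilon(\nu,\lambda)}$ — but these are always $\pm1$, never $0$, so they cause no trouble. A secondary point to be careful about is that $W$, being a $V_\Lambda$-submodule, is genuinely closed under \emph{all} modes $x^{W_\mu}_{m,n}(v)$ for $v\in V_\Lambda$, in particular those coming from ${\rm e}^{\nu}\in\C[\Lambda_0]\subset V_\Lambda$; this is immediate from the definition of submodule. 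Once these two facts are in hand, the argument closes: irreducibility of Fock spaces plus transitivity of the $\Lambda_0$-translations on $\mu+\Lambda_0$ gives the result.
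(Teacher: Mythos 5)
Your proposal is correct and follows essentially the same route as the paper's proof: first use the Heisenberg-module structure of a submodule together with the zero-mode eigenvalues (the paper phrases this via the vacuum space $\Omega_W\subseteq\C[\mu+\Lambda_0]$ and its eigenspace decomposition, you via the decomposition of $W_\mu$ into pairwise inequivalent Fock summands $S(\hat{\mf{h}}^-)\otimes{\rm e}^{\lambda}$), and then use the operators coming from $Y_{W_\mu}({\rm e}^{\nu},x,\bar{x})$ for $\nu\in\Lambda_0$ together with the transitivity of $\Lambda_0$-translations on $\mu+\Lambda_0$ to force the submodule to be all of $W_\mu$. The only cosmetic difference is that the paper recovers the operator ${\rm e}_{\nu}$ itself from the vertex operator by stripping off the exponentials and $x^{\alpha^{\nu}}\bar{x}^{\beta^{\nu}}$, whereas you exhibit a single nonvanishing mode carrying $1\otimes{\rm e}^{\lambda}$ to a nonzero multiple of $1\otimes{\rm e}^{\lambda+\nu}$; both close the argument identically.
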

\begin{proof}
Suppose $W\subset W_\mu$ is a $V_\Lambda$-submodule. Then $W$ is also an $(\hat{\mf{h}}_1^\star\oplus\hat{\mf{h}}_2^\star)$-module. By Theorem \ref{thm:g1+g2modheis}
\begin{equation}
    W\cong S(\hat{\mf{h}}_1^-)\otimes S(\hat{\mf{h}}_2^-)\otimes \Omega_W\cong S(\hat{\mf{h}}^-)\otimes \Omega_W,
\end{equation}
where $\Omega_W$ is the vacuum space of $W$, see Appendix \ref{app:heisalgreps} for definition. Since the vacuum space of $W_{\mu}$ is $\C[\mu+\Lambda_0]$ we have 
\begin{equation}
\Omega_W\subset\C[\mu+\Lambda_0]=\bigoplus_{\lambda\in\mu+\Lambda_0}\C \mathrm{e}^\lambda.     
\end{equation}
Since $W$ is invariant under $\alpha(0),\beta(0)$ for all $\alpha\in\mf{h}_1,\beta\in\mf{h}_2$ and $\C \mathrm{e}^\lambda$ are eigenspaces for $\alpha(0),\beta(0)$, we must have 
\begin{equation}
    \Omega_W=\C[M] \, , 
\end{equation}
for some non-empty subspace $M\subset\mu+\Lambda_0$. Finally note that for any $\lambda\in\Lambda_0$ we have
\begin{equation}
\begin{split}
   {\rm e}_\lambda= \exp\left(-\int dx~\alpha^{\lambda}(x)^-\right)&\exp\left(-\int d\bar{x}~\beta^{\lambda}(\Bar{x})^-\right)Y_{V_{\Lambda}}({\rm e}^{\lambda},x,\Bar{x})\\&\times\exp\left(-\int dx~\alpha^{\lambda}(x)^+\right)\exp\left(-\int d\bar{x}~\beta^{\lambda}(\Bar{x})^+\right)x^{-\alpha^{\lambda}}\Bar{x}^{-\beta^{\lambda}}.
    \end{split}
\end{equation}
Noting that 
$x^{-\alpha^{\lambda}}, \Bar{x}^{-\beta^{\lambda}}$ acts as $x^{-\alpha^{\lambda}(0)}, \Bar{x}^{-\beta^{\lambda}(0)}$ respectively, 
we see that $W$ must be invariant under ${\rm e}_{\lambda}$ for all $\lambda\in\Lambda_0$. This means that $M=\mu+\Lambda_0$ since $\{{\rm e}_\lambda : \lambda\in\Lambda_0\}$ acts transitively on $\C[\mu+\Lambda_0]$. 
\end{proof}
The graded dimension for the module $W_\mu$ for any $[\mu]\in\Lambda/\Lambda_0$ can be easily calculated. As for LLVOA, we obtain 
\begin{equation}\label{eq:gradcharllvoa_mu}
    \chi_{W_\mu}(\tau,\bar{\tau})=\frac{1}{\eta(\tau)^{m}\overline{\eta(\tau)}^{n}}\sum_{(\alpha,\beta)\in\mu+\Lambda_0}q^{\frac{\langle\alpha,\alpha\rangle}{2}}\bar{q}^{\frac{\langle\beta,\beta\rangle}{2}}.
\end{equation}
Using \eqref{eq:gradcharllvoa} and \eqref{eq:gradcharllvoa_mu}, we see that the partition function of the non-chiral CFT consisting of the LLVOA $(V_\Lambda,Y_{V_\Lambda})$ and its modules \footnote{We stress that these are not all the modules of the LLVOA. But if we want the non-chiral CFT to be modular invariant, we need to restrict to this set of modules, see Theorem \ref{thm:modinvllvoa}.} $\{(W_\mu,Y_{W_\mu})\}_{[\mu]\in\Lambda/\Lambda_0}$ is given by 
\begin{equation}\label{eq:partfunc}
\begin{split}
    Z_{V_\Lambda}(\tau,\bar{\tau})&=\sum_{[\mu]\in\Lambda/\Lambda_0}\chi_{W_\mu}(\tau,\bar{\tau})\\&=\frac{1}{\eta(\tau)^{m}\overline{\eta(\tau)}^{n}}\sum_{(\alpha,\beta)\in\Lambda}q^{\frac{\langle\alpha,\alpha\rangle}{2}}\bar{q}^{\frac{\langle\beta,\beta\rangle}{2}}.
\end{split}    
\end{equation}
\subsection{Moduli space of modular invariant non-chiral CFTs over Lorentzian lattices}
Given a Lorentzian lattice $\Lambda\subset\R^{m,n}$, we have constructed a non-chiral vertex operator algebra based on $\Lambda$ and constructed a set of its irreducible modules. In general, these non-chiral CFTs, consisting of the LLVOA and its irreducible modules, are not modular invariant. To construct a modular invariant non-chiral CFT we restrict to even self-dual lattices and only consider the irreducible modules constructed here which are in 1-1 correspondence with the cosets $\Lambda/\Lambda_0$. We call such CFTs as Lorentzian lattice CFTs (LLCFTs). Indeed, we have the following theorem. 
\begin{thm}\label{thm:modinvllvoa}
Let $\Lambda\in\R^{m,n}$ be an even self-dual lattice such that $m-n\equiv 0\bmod 24$. Then the LLCFT consisting of the LLVOA $V_\Lambda$ and its modules $\{W_\mu\}_{[\mu]\in\Lambda/\Lambda_0}$ is a modular invariant non-chiral CFT.
\begin{proof}
The partition function of the non-chiral CFT in the statement of the theorem is given by \footnote{We are using a different notation for the partition to emphasize modular invariance.} \eqref{eq:partfunc}:
\begin{equation}
\begin{split}
Z^{\text{mod}}_{V_{\Lambda}}(\tau,\bar{\tau})&:=\frac{1}{\eta(\tau)^{m}\overline{\eta(\tau)}^{n}}\sum_{(\alpha,\beta)\in\Lambda}q^{\frac{\langle\alpha,\alpha\rangle}{2}}\bar{q}^{\frac{\langle\beta,\beta\rangle}{2}} \\&=\frac{1}{\eta(\tau)^{m}\overline{\eta(\tau)}^{n}}\Theta_{\Lambda}(\tau,\Bar{\tau}) \, , 
\end{split}
\end{equation}
where $\Theta_{\Lambda}(\tau,\Bar{\tau})$ is the Siegel-Narain theta function \cite{siegel, Narain:1985jj} associated to the lattice $\Lambda$. 
Invariance under $\tau\to\tau+1$:
\begin{equation}
    Z^{\text{mod}}_{V_{\Lambda}}(\tau+1,\bar{\tau}+1)=Z^{\text{mod}}_{V_{\Lambda}}(\tau,\bar{\tau}),
\end{equation}
follows from $m-n\equiv 0\bmod24$ because
\begin{equation}
    \eta(\tau+1)=e^{2\pi i/24}\eta(\tau) \, ,
\end{equation}
Invariance under the modular transformation: 
\begin{equation}
\begin{split}
Z^{\text{mod}}_{V_{\Lambda}}\left(-\frac{1}{\tau},-\frac{1}{\bar{\tau}}\right)&=\frac{e^{-i\pi \frac{m-n}{4}}}{\sqrt{|\text{det}\,\mathcal{G}_\Lambda|}}\tau^{m/2}\bar{\tau}^{n/2}(-i\tau)^{-m/2}(i\Bar{\tau})^{-n/2}\frac{1}{\eta(\tau)^{m}\overline{\eta(\tau)}^{n}}\sum_{(\alpha,\beta)\in\Lambda^\star}q^{\frac{\langle\alpha,\alpha\rangle}{2}}\bar{q}^{\frac{\langle\beta,\beta\rangle}{2}}\\&=Z^{\text{mod}}_{V_{\Lambda}}\left(\tau,\bar{\tau}\right) \, ,  
\end{split}
\end{equation}
follows from the fact that $\Lambda$ is unimodular and self-dual. Here we used the modular transformation of the Dedekind eta function:
\begin{equation}
    \eta\left(-\frac{1}{\tau}\right)=\sqrt{-i\tau}\eta(\tau) \, , 
\end{equation}
and the modular transformation of the Siegel-Narain theta function \cite{Ashwinkumar:2021kav,siegel}. 
\end{proof}
\end{thm}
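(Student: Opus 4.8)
The plan is to show that the partition function
\begin{equation*}
Z^{\text{mod}}_{V_{\Lambda}}(\tau,\bar{\tau})=\frac{\Theta_{\Lambda}(\tau,\bar{\tau})}{\eta(\tau)^{m}\overline{\eta(\tau)}^{n}}
\end{equation*}
is invariant under the two generators $T:\tau\mapsto\tau+1$ and $S:\tau\mapsto -1/\tau$ of $\mathrm{SL}(2,\Z)$, which suffices since these generate the modular group. First I would recall from the construction of the LLVOA and its modules $\{W_\mu\}_{[\mu]\in\Lambda/\Lambda_0}$ that the torus partition function is the sum of graded dimensions $\sum_{[\mu]}\chi_{W_\mu}$, and that by \eqref{eq:partfunc} this equals the stated ratio, where $\Theta_\Lambda(\tau,\bar{\tau})=\sum_{(\alpha,\beta)\in\Lambda}q^{\langle\alpha,\alpha\rangle/2}\bar{q}^{\langle\beta,\beta\rangle/2}$ is the Siegel--Narain theta function. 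Note that collecting all the module characters reconstitutes the sum over the full lattice $\Lambda$ from the sums over the cosets $\mu+\Lambda_0$ — this is the key point where the specific choice of modules enters.

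For $T$-invariance, I would use $\eta(\tau+1)=e^{2\pi i/24}\eta(\tau)$, so that $\eta(\tau)^m\overline{\eta(\tau)}^n$ picks up a phase $e^{2\pi i(m-n)/24}$; since $m-n\equiv 0\bmod 24$ this phase is trivial. For $\Theta_\Lambda$, a term $q^{\langle\alpha,\alpha\rangle/2}\bar{q}^{\langle\beta,\beta\rangle/2}$ transforms by $e^{2\pi i(\langle\alpha,\alpha\rangle-\langle\beta,\beta\rangle)/2}=e^{2\pi i(\lambda\circ\lambda)/2}=e^{\pi i(\lambda\circ\lambda)}$, which is $1$ because $\Lambda$ is even. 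Hence both factors are $T$-invariant separately (and the single-valuedness condition $c-\bar c=m-n=24k$ already noted after the graded-dimension definition is exactly what is needed). For $S$-invariance, I would invoke the modular transformation of the Dedekind eta function, $\eta(-1/\tau)=\sqrt{-i\tau}\,\eta(\tau)$, together with the Poisson-summation / Siegel transformation law of the Narain theta function, which for an even self-dual (hence unimodular, $|\det\mathcal{G}_\Lambda|=1$) lattice of signature $(m,n)$ reads
\begin{equation*}
\Theta_{\Lambda}\!\left(-\frac1\tau,-\frac1{\bar\tau}\right)=e^{-i\pi(m-n)/4}\,(-i\tau)^{m/2}(i\bar\tau)^{n/2}\,\Theta_{\Lambda^\star}(\tau,\bar\tau),
\end{equation*}
and then use $\Lambda^\star=\Lambda$. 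Combining, the $\tau^{m/2}\bar\tau^{n/2}$ from the eta factors cancels the $(-i\tau)^{-m/2}(i\bar\tau)^{-n/2}$ from dividing, and the residual phase is $e^{-i\pi(m-n)/4}$ times the phase from $\eta$; one checks this residual phase is $1$ precisely when $m-n\equiv 0\bmod 24$ (in fact $m-n\equiv 0 \bmod 8$ already makes $e^{-i\pi(m-n)/4}=1$, and the eta phase is handled as in the $T$ case). I would cite the standard references for the Siegel--Narain transformation (e.g. \cite{siegel,Narain:1985jj,Ashwinkumar:2021kav}) rather than reprove Poisson summation.

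The main obstacle — really the only nontrivial input — is the transformation law of the Siegel--Narain theta function under $S$, including getting the multiplier/phase exactly right; everything else is bookkeeping of phases and powers. I would present this step by quoting the known formula and verifying that the hypotheses ($\Lambda$ even, self-dual, signature with $m-n\equiv 0\bmod 8$, strengthened to $\bmod\,24$ to kill the eta multiplier) are exactly what make all phases trivial. One subtlety worth a sentence: one must make sure the rearrangement $\sum_{[\mu]\in\Lambda/\Lambda_0}\sum_{\lambda\in\mu+\Lambda_0}=\sum_{\lambda\in\Lambda}$ of the (absolutely convergent, for $\tau\in\mathbb{H}$) double series is legitimate, which follows from the grading-restriction property of each module guaranteeing convergence of each $\chi_{W_\mu}$ and positivity of the lattice norms on $\Lambda_1,\Lambda_2$.
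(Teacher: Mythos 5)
Your proposal follows essentially the same route as the paper: identify $Z^{\text{mod}}_{V_\Lambda}=\Theta_\Lambda/(\eta^m\bar\eta^n)$ by summing the coset characters, check $T$-invariance via the eta multiplier together with evenness of $\Lambda$ and $m-n\equiv 0\bmod 24$, and check $S$-invariance by quoting the Siegel--Narain theta transformation plus $\eta(-1/\tau)=\sqrt{-i\tau}\,\eta(\tau)$ and self-duality/unimodularity. Your extra remarks (that $m-n\equiv 0\bmod 8$ already kills the theta phase while mod $24$ is needed for the eta multiplier, and that the coset rearrangement is justified by absolute convergence) are correct refinements of the same argument.
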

We now want to classify all LLCFTs based on Lorentzian lattices of signature $(m,n)$ up to isomorphism. Following the physics convention, we call the set of isomorphism classes the moduli space of modular invariant LLCFTs over Lorentzian lattices and denote it by $\mathcal{M}_{m,n}$.
\begin{thm}\label{thm:isolamblambtmodule}
Under the assumptions of Theorem \ref{thm:lambtildelambvoaiso}, 
the LLCFTs based on $\Lambda,\tilde{\Lambda}$ are isomorphic. 
\begin{proof}
Let $(W_\mu,Y_{W_\mu})_{[\mu]\in\Lambda/\Lambda_0}$ and $(\tilde{W}_\mu,\tilde{Y}_{\tilde{W}_\mu})_{[\mu]\in\tilde{\Lambda}/\tilde{\Lambda}_0}$ be the isomorphism classes of irreducible modules of the corresponding LLVOAs  $(V_{\Lambda},Y_{V_{\Lambda}})$ and $(V_{\tilde{\Lambda}},Y_{V_{\tilde{\Lambda}}})$. 
By Theorem \ref{thm:lambtildelambvoaiso} the two LLVOAs are isomorphic. It now suffices to show that for $0\neq [\mu]\in\Lambda/\Lambda_0$, there exists $0\neq [\nu]\in\tilde{\Lambda}/\Tilde{\Lambda}_0$ such that 
\begin{equation}\label{eq:Wmunuiso}
(W_{\mu},Y_{W_{\mu}})\cong(\tilde{W}_{\nu},\tilde{Y}_{\tilde{W}_{\nu}}) \, .    
\end{equation}
Pick a representative $\mu\in[\mu]$ and let $\nu=f(\mu)$. Then 
define the map 
\begin{equation}
\begin{split}    &\varphi:W_\mu\longrightarrow\Tilde{W}_\nu\\&\left( \alpha_{1}(-m_1)\cdot\alpha_{2}(-m_2)\cdots\alpha_{k}(-m_k) \, \beta_{1}(-\bar{m}_1)\cdot\beta_{2}(-\bar{m}_2)\cdots\beta_{\bar{k}}(-\bar{m}_{\bar{k}}) \right)  \otimes {\rm e}^{(\mu_1, \mu_2)+(\alpha,\beta)}\\& \mapsto \left( \tilde{\alpha}_{1}(-m_1)\cdot \tilde{\alpha}_{2}(-m_2)\cdots \tilde{\alpha}_{k}(-m_k) \, \tilde{\beta}_{1}(-\bar{m}_1)\cdot \tilde{\beta}_{2}(-\bar{m}_2)\cdots\tilde{\beta}_{\bar{k}}(-\bar{m}_{\bar{k}}) \right)\\&\hspace{12cm}  \otimes {\rm e}^{(f_1(\mu_1+\alpha),f-2( \mu_2+\beta)},
\end{split}
\end{equation}
where $f_i:\Lambda_i\longrightarrow\Tilde{\Lambda}_i,~~i=1,2$ is defined as in \eqref{eq:f12module} and extended to $f_i:\hat{\mf{h}}_i\longrightarrow\hat{\tilde{\mf{h}}}_i$ by $\C$-linearity. Here $\hat{\tilde{\mf{h}}}_i$ is constructed as in \eqref{eq:defmfhi} and \eqref{eq:defhatmfhi} for $\Tilde{\Lambda}$. This map is grading preserving since $f_1,f_2$ are norm-presering on $\Lambda_1,\Lambda_2$ respectively. One can now show that $\varphi$ now defines an isomorphism of modules of isomorphic LLVOA by following the same calculations as in the proof of Theorem \ref{thm:lambtildelambvoaiso}. 
\end{proof}
\end{thm}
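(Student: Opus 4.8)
The plan is to reduce the isomorphism of LLCFTs to the isomorphism of LLVOAs already established in Theorem \ref{thm:lambtildelambvoaiso}, together with a bijection between the sets of irreducible modules that is compatible with module vertex operators. Recall that the irreducible modules of $V_\Lambda$ constructed here are $\{(W_\mu,Y_{W_\mu})\}_{[\mu]\in\Lambda/\Lambda_0}$, and the isomorphism $f:\Lambda\to\tilde\Lambda$ is an $\mathrm{O}(m,\R)\times\mathrm{O}(n,\R)$-transformation, so by \eqref{eq:f_map_ortho}--\eqref{eq:f12module} it induces norm-preserving maps $f_i:\Lambda_i\to\tilde\Lambda_i$ and carries $\Lambda_0$ onto $\tilde\Lambda_0$. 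Hence $f$ descends to a bijection $\Lambda/\Lambda_0\to\tilde\Lambda/\tilde\Lambda_0$, $[\mu]\mapsto[f(\mu)]$, which in particular sends $[0]$ to $[0]$; this is exactly what is needed since the notion of equivalence of non-chiral CFTs (Definition \ref{def:nonchcft} and the paragraph after it) allows the isomorphism to permute the non-trivial modules but must fix the VOA-as-module $W_0\cong V$.

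First I would fix a coset $0\neq[\mu]\in\Lambda/\Lambda_0$, pick a representative $\mu=(\mu_1,\mu_2)$, set $\nu=f(\mu)=(f_1(\mu_1),f_2(\mu_2))$, and define $\varphi:W_\mu\to\tilde W_\nu$ on the spanning vectors \eqref{eq:genvectmodule} by acting as the $\C$-linear extension of $f_1,f_2$ on the oscillator factors $\alpha_i(-m_i),\beta_j(-\bar m_j)$ and by $\mathrm{e}^{\mu+(\alpha,\beta)}\mapsto\mathrm{e}^{(f_1(\mu_1+\alpha),f_2(\mu_2+\beta))}$ on $\C[\mu+\Lambda_0]$, exactly as in the displayed map in the theorem statement. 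Since $f_1,f_2$ are norm-preserving, the grading formula \eqref{eq:M_gradingformula} shows $\varphi$ is grading-preserving and bijective. The key step is then to verify the module-homomorphism condition $\varphi(Y_{W_\mu}(v,x,\bar x)w)=\tilde Y_{\tilde W_\nu}(\psi(v),x,\bar x)\varphi(w)$ for all $v\in V_\Lambda$, $w\in W_\mu$, where $\psi$ is the LLVOA isomorphism of Theorem \ref{thm:lambtildelambvoaiso}. Because $Y_{W_\mu}=Y_{V_\Lambda}$ acts by the same formulas as the VOA vertex operators (now on the shifted lattice coset), this verification is the word-for-word analogue of the computation in the proof of Theorem \ref{thm:lambtildelambvoaiso}: expand $Y_{V_\Lambda}(\mathrm{e}^{\lambda'},x,\bar x)$ on a vector of the form \eqref{eq:genvectmodule}, commute the positive-mode oscillators through using the Heisenberg relations \eqref{eq:heisalg}, collect the resulting inner products $\langle\alpha^{\lambda'},\alpha_i\rangle$, $\langle\beta^{\lambda'},\beta_j\rangle$ and the cocycle factor $(-1)^{\epsilon(\lambda',\mu+\lambda)}$, and observe that these are matched on the $\tilde\Lambda$ side after using that $f_1,f_2$ are norm-preserving and that $\tilde\epsilon(f(\cdot),f(\cdot))=\epsilon(\cdot,\cdot)$ (with, as in that proof, the central extension of $\tilde\Lambda$ taken with respect to the image basis, which is legitimate by Proposition \ref{prop:basischangecenextiso}). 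General $v\in V_\Lambda$ reduces to this case by the multiplicativity of the normal-ordered product in \eqref{eq:genvertopdef}.

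Having established \eqref{eq:Wmunuiso} for every coset, I would assemble the pieces: Theorem \ref{thm:lambtildelambvoaiso} gives $V_\Lambda\cong V_{\tilde\Lambda}$ as non-chiral VOAs fixing the conformal vectors and vacuum, the assignment $[\mu]\mapsto[f(\mu)]$ is a bijection of the indexing sets fixing $[0]$, and $\varphi$ realizes each $W_\mu\cong\tilde W_{f(\mu)}$ compatibly; by definition this is precisely an equivalence of the two LLCFTs. The main obstacle is purely bookkeeping rather than conceptual: one must make sure the cocycle normalizations on $\Lambda$ and $\tilde\Lambda$ are chosen consistently (via the image basis) so that the sign factors $(-1)^{\epsilon(\cdot,\cdot)}$ genuinely agree under $\varphi$, and that the map $\varphi$ is well-defined on cosets, i.e. independent of the chosen representative $\mu$ up to the module isomorphism already built for the pair $(V_\Lambda,W_0)$. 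Both points are handled exactly as the corresponding points in the proof of Theorem \ref{thm:lambtildelambvoaiso}, so no genuinely new difficulty arises.
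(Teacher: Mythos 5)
Your proposal is correct and follows essentially the same route as the paper: invoke Theorem \ref{thm:lambtildelambvoaiso} for the LLVOAs, send $[\mu]\mapsto[f(\mu)]$, define $\varphi$ on the spanning vectors \eqref{eq:genvectmodule} by transporting the oscillator and group-algebra factors, note grading preservation from norm preservation of $f_1,f_2$, and verify the intertwining relation by the same cocycle/Heisenberg computation (with Proposition \ref{prop:basischangecenextiso} handling the cocycle normalization). Your extra remarks on well-definedness on cosets and on $f(\Lambda_0)=\tilde\Lambda_0$ are just slightly more explicit bookkeeping of what the paper leaves implicit.
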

It is known that all even self-dual Lorentzian lattices of signature $(m,n)$ are related by an $\mathrm{O}(m,n,\R)$ transformation \cite{serre1973course}. 
Thus the set of all non-chiral CFTs based on Lorentzian lattices in signature $(m,n)$ can be identified with $\mathrm{O}(m,n,\R)$. But in view of Theorem  \ref{thm:isolamblambtmodule}, many of the lattices determine isomorphic LLCFTs. Moreover, for any LLCFT based on $\Lambda$, from Corollary \ref{cor:autvoaautlamb}, one can identify a discrete subgroup of $\mathrm{O}(m,n,\R)$ which acts as automorphisms of the LLCFT. If we believe the truth of Conjecture \ref{conj:}, then we can identify this discrete subgroup of $\mathrm{O}(m,n,\R)$ as the automorphism group of the lattice. More precisely, from  Theorem \ref{thm:autlamb} it is easy to see that the discrete subgroup is isomorphic to $\mathcal{G}_\Lambda \mathrm{O}_\Lambda(m,n,\Z)\mathcal{G}_\Lambda^{-1}\subset \mathrm{O}(m,n,\R)$. This subgroup, somewhat inaccurately, but conventionally \cite{Polchinski:1998rq, Polchinski:1998rr} is denoted by $\mathrm{O}(m,n,\Z)$. Thus we have the following theorem.
\begin{thm}\label{thm:modspmn}
Assuming Conjecture \ref{conj:}, the moduli space $\mathcal{M}_{m,n}$ of modular invariant LLCFTs based on Lorentzian lattices of signature $(m,n)$ is isomorphic to \begin{equation}\label{eq:modspmnsig}
\mathcal{M}_{m,n}\cong\frac{\mathrm{O}(m,n,\R)}{\mathrm{O}(m,\R)\times \mathrm{O}(n,\R)\times \mathrm{O}(m,n,\Z)}  \, ,   
\end{equation}
where $\mathrm{O}(m,\R)\times \mathrm{O}(n,\R)$ acts on $\mathrm{O}(m,n,\R)$ by right multiplication and $\mathrm{O}(m,n,\Z)$ acts by left multiplication.
\begin{proof}
Choose a reference Lorentzian lattice $\Lambda_{\text{ref}}$ with generator matrix $\mathcal{G}_{\text{ref}}$. Then the set of all Lorentzian lattices can be identified with $\mathrm{O}(m,n,\R)$ under the map\footnote{Recall that in our convention, lattice vectors are written as rows rather than columns.} 
\begin{equation}
    \mathcal{O}\mapsto \mathcal{G}_{\text{ref}}\mathcal{O} \, .
\end{equation}
From above discussion, the non-chiral CFT based on $\mathcal{G}_{\text{ref}}\mathcal{O}$ and $\mathcal{G}_{\text{ref}}\mathcal{O}O$ with $O\in\mathrm{O}(m,\R)\times \mathrm{O}(n,\R)$ is isomorphic. Thus we must quotient out by the right action of $\mathrm{O}(m,\R)\times \mathrm{O}(n,\R)$  on $\mathrm{O}(m,n,\R)$. Also automorphisms of $\Lambda_{\text{ref}}$, which is isomorphic to a discrete subgroup of $\mathrm{O}(m,n,\R)$ and denoted by $\mathrm{O}(m,n,\Z)$, act by right multiplication on $\mathcal{G}_{\text{ref}}$. So we must quotient by the left action of $\mathrm{O}(m,n,\Z)$ on $\mathrm{O}(m,n,\R)$. This gives the required structure of the moduli space.  
\end{proof}
\end{thm}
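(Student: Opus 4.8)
The plan is to realise $\mathcal{M}_{m,n}$ as an explicit double coset by fixing a reference lattice and tracking precisely which $\mathrm{O}(m,n,\R)$-transformations preserve the isomorphism class of the associated LLCFT. First I would set up the parametrisation of lattices: fix a reference even self-dual Lorentzian lattice $\Lambda_{\text{ref}}\subset\R^{m,n}$, which exists by Theorem \ref{thm:classlorlat} (and in the modular-invariant regime one has $m-n\equiv 0\bmod 24$), and which may be chosen with a generator matrix $\mathcal{G}_{\text{ref}}\in\mathrm{O}(m,n,\R)$. Since by Theorem \ref{thm:classlorlat} every even self-dual lattice of signature $(m,n)$ with $n\geq 1$ is $\mathrm{O}(m,n,\R)$-equivalent to $\Lambda_{\text{ref}}$, the assignment $\mathcal{O}\mapsto\Lambda_{\mathcal{O}}:=(\Z\text{-span of the rows of }\mathcal{G}_{\text{ref}}\mathcal{O})$ surjects $\mathrm{O}(m,n,\R)$ onto the set of all such lattices, with $\Lambda_{\mathcal{O}}=\Lambda_{\mathcal{O}'}$ exactly when $\mathcal{O}'=U\mathcal{O}$ for a unimodular $U$ preserving the form as in Theorem \ref{thm:autlamb}, i.e. $\mathcal{O}'\in\big(\mathcal{G}_{\text{ref}}\,\mathrm{O}_{\Lambda_{\text{ref}}}(m,n,\Z)\,\mathcal{G}_{\text{ref}}^{-1}\big)\mathcal{O}=:\mathrm{O}(m,n,\Z)\,\mathcal{O}$.

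Next I would check that the two advertised group actions descend to genuine identifications of LLCFTs. For $O\in\mathrm{O}(m,\R)\times\mathrm{O}(n,\R)$, the lattices $\Lambda_{\mathcal{O}}$ and $\Lambda_{\mathcal{O}O}$ are related by an $\mathrm{O}(m,\R)\times\mathrm{O}(n,\R)$-transformation, so by Theorem \ref{thm:lambtildelambvoaiso} and Theorem \ref{thm:isolamblambtmodule} the two LLVOAs and their full families of modules $\{W_\mu\}_{[\mu]\in\Lambda/\Lambda_0}$ are isomorphic, hence the LLCFTs are equivalent; this gives the right action. For $g\in\mathrm{O}(m,n,\Z)$, we have $g\mathcal{G}_{\text{ref}}=\mathcal{G}_{\text{ref}}\big(\mathcal{G}_{\text{ref}}^{-1}g\mathcal{G}_{\text{ref}}\big)$ with $\mathcal{G}_{\text{ref}}^{-1}g\mathcal{G}_{\text{ref}}\in\mathrm{O}_{\Lambda_{\text{ref}}}(m,n,\Z)$, so left multiplication by $g$ implements an automorphism of $\Lambda_{\text{ref}}$ (Theorem \ref{thm:autlamb}); granting Conjecture \ref{conj:} this automorphism preserves $\Lambda_1^0$ and $\Lambda_2^0$, whence by Corollary \ref{cor:autvoaautlamb} it extends to an automorphism of $V_{\Lambda_{\text{ref}}}$, and by the argument in the proof of Theorem \ref{thm:isolamblambtmodule} it merely permutes the modules $\{W_\mu\}$, leaving the LLCFT unchanged. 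Combining, $\mathcal{O}\mapsto[\text{LLCFT of }\Lambda_{\mathcal{O}}]$ factors through the double quotient in \eqref{eq:modspmnsig} and gives a well-defined surjection onto $\mathcal{M}_{m,n}$.

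Finally --- and this is where I expect the real difficulty --- I would prove injectivity, i.e. that distinct classes in the double coset yield non-equivalent LLCFTs. The strategy is to reconstruct the lattice from the LLCFT intrinsically: from the grading \eqref{eq:gradingformula} together with the action of the chiral and anti-chiral Heisenberg subalgebras $\hat{\mf{h}}_1^\star\oplus\hat{\mf{h}}_2^\star$ (which act on every $V_\Lambda$-module) one recovers the Euclidean spaces $(\mf{h}_1,\langle\cdot,\cdot\rangle)$ and $(\mf{h}_2,\langle\cdot,\cdot\rangle)$, and from the conformal weights of the $\mathrm{e}^\lambda$-type lowest-weight vectors collected across all modules $\{W_\mu\}_{[\mu]\in\Lambda/\Lambda_0}$ one reads off the full lattice $\Lambda\subset\mf{h}_1\oplus\mf{h}_2$, canonically up to an isometry of each factor separately. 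An equivalence of LLCFTs must intertwine this structure, forcing $\Lambda_{\mathcal{O}}$ and $\Lambda_{\mathcal{O}'}$ to be abstractly isometric lattices embedded in $\R^m\oplus\R^n$ up to $\mathrm{O}(m,\R)\times\mathrm{O}(n,\R)$; translating lattice isometries into generator matrices then places $\mathcal{O}$ and $\mathcal{O}'$ in the same $\mathrm{O}(m,n,\Z)$-$\big(\mathrm{O}(m,\R)\times\mathrm{O}(n,\R)\big)$ double coset. The main obstacle is to make ``recover $\Lambda$'' precise and canonical: one must rule out that an LLVOA isomorphism mixes the $\mathrm{e}^\lambda$-sector with the purely oscillator sector at a fixed conformal weight, and verify that the union of weight-sets over all the modules pins down $\Lambda$ rather than only $\Lambda_0$. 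It is this rigidity step, not the orbit bookkeeping, that I expect to be the crux.
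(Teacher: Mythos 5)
Your first two steps coincide with the paper's entire proof: it fixes $\Lambda_{\text{ref}}$ with generator matrix $\mathcal{G}_{\text{ref}}$, identifies lattices with $\mathrm{O}(m,n,\R)$ via $\mathcal{O}\mapsto\mathcal{G}_{\text{ref}}\mathcal{O}$, quotients on the right by $\mathrm{O}(m,\R)\times\mathrm{O}(n,\R)$ using Theorems \ref{thm:lambtildelambvoaiso} and \ref{thm:isolamblambtmodule}, and on the left by the discrete group coming from $\mathrm{Aut}(\Lambda_{\text{ref}})$, with Conjecture \ref{conj:} entering through Corollary \ref{cor:autvoaautlamb} exactly as you invoke it. The genuine difference is your third step: the paper stops after this orbit bookkeeping, i.e.\ it only shows that the map from the double coset to $\mathcal{M}_{m,n}$ is well defined and surjective, and it never addresses injectivity (that lattices in distinct double cosets yield inequivalent LLCFTs). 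So the reconstruction argument you single out as the crux --- recovering $\Lambda$ from the $\hat{\mf{h}}_1^\star\oplus\hat{\mf{h}}_2^\star$ action and the weight data of the lowest-weight vectors across all modules, while ruling out isomorphisms that mix the momentum and oscillator sectors --- is not carried out in the paper either; your proposal is at least as complete as the published proof, and making that rigidity step precise would in fact strengthen the theorem rather than merely reproduce it. One small point to check: whether the discrete subgroup of $\mathrm{O}(m,n,\R)$ is $\mathcal{G}_{\text{ref}}\,\mathrm{O}_{\Lambda_{\text{ref}}}(m,n,\Z)\,\mathcal{G}_{\text{ref}}^{-1}$ or $\mathcal{G}_{\text{ref}}^{-1}\,\mathrm{O}_{\Lambda_{\text{ref}}}(m,n,\Z)\,\mathcal{G}_{\text{ref}}$; tracing the row-vector convention of Theorem \ref{thm:autlamb} gives the latter for the left action on $\mathcal{O}$, the paper itself is loose on this, and it does not affect the double-coset structure.
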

\begin{remark}
In deriving the moduli space of LLCFTs based on Lorentzian lattices, we imposed modular invariance as a requirement. This was crucial in restricting the lattices to self-dual ones. If we lift the modular invariance requirement, we obtain more general non-chiral CFTs recently discussed in \cite{Ashwinkumar:2023ctt} and called \textit{generalised Narain theories} \footnote{We thank Masahito Yamazaki for discussion on this point.
}.     
\end{remark}
\begin{remark}
At a general point in the moduli space \eqref{eq:modspmnsig}, the sublattice $\Lambda_0$ is trivial and the LLVOA is simply $S(\hat{\mf{h}}^-)\cong S(\hat{\mf{h}}_1^-)\otimes S(\hat{\mf{h}}_2^-)$. The chiral and anti-chiral algebra is then generated by the vertex operators 
\begin{equation}
u_i(x)=\sum_{r\in\Z}u_i(r)x^{-r-1},\quad  v_j(\bar{x})=\sum_{r\in\Z}v_i(r)\bar{x}^{-r-1},\quad i=1,\dots m,~~j=1,\dots,n   \, ,
\end{equation}
corresponding to states $\{u_i(-1)\cdot\textbf{1}\}_{i=1}^{m}$ and $\{v_i(-1)\cdot\textbf{1}\}_{i=1}^{n}$ where $\{u_i\},\{v_i\}$ are orthonormal basis of $\hat{\mf{h}}_1,\hat{\mf{h}}_2$ respectively. All other chiral vertex operators are given by products of derivatives of $u_i(x),v_j(\bar{x})$ (see \eqref{eq:genvertopdef}). In physics, these are called \textit{Kac-Moody currents} and their modes generate $\mathrm{U}(1)^m\times \mathrm{U}(1)^n$ \textit{Kac-Moody algebra} in the LLCFT. Thus at a generic point in the moduli space, the chiral and anti-chiral algebra is extended from Virasoro to $\mathrm{U}(1)^m\times \mathrm{U}(1)^n$ Kac-Moody algebra. At certain points in the moduli space where $\Lambda_0\neq 0$, the chiral and anti-chiral algebra is further extended to some enhanced symmetry algebra \footnote{We thank Anatoly Dymarsky for discussions on this point.}.  It would be interesting to identify the chiral and anti-chiral algebra at these special points in the moduli space with known algebras.         
\end{remark}
\begin{remark}
The double coset structure \eqref{eq:modspmnsig} of the moduli space of LLCFTs was also obtained in \cite{Moriwaki:2020cxf}. While the discussion of the result is complete in \cite{Moriwaki:2020cxf}, it has not been explicitly checked that these theories are example of non-chiral CFTs defined in this paper. As discussed above, the proof of the moduli space structure of LLCFTs requires the truth of Conjecture \ref{conj:} in our formalism. Such conditions do not appear in \cite{Moriwaki:2020cxf} since the proof of the double coset structure \eqref{eq:modspmnsig} is given using current-current deformation of full vertex algebra in \cite{Moriwaki:2020cxf}.  
\end{remark}
\section{Narain CFT}\label{sec:naraincft}
Narain CFTs are a large class of conformal field theories which are constructed by compactifying free bosons on a torus and coupling them to a background antisymmetric $B$-field. Narain CFTs naturally appear in string theory when we perform toroidal compactification of strings in multiple directions. In this section, we will describe these CFTs and explain how they provide physical examples of the non-chiral VOA we constructed in Section \ref{sec:llvoa}. We restrict to $m=n$ case for this discussion. 
\subsection{Construction of Narain CFTs}
We describe the construction of Narain CFTs. The exposition is based on \cite[Section 4.1]{Dymarsky:2020qom} and \cite[Section 8.4]{Polchinski:1998rq}.\\\\
Let $\Gamma\subset\R^n$ be an $n$-dimensional Euclidean lattice and $2\pi\Gamma$ be the rescaled lattice. Let $\mathbb{T}^n\equiv \R^n/(2\pi\Gamma)$ be the $n$-dimensional torus obtained by imposing the equivalence relation 
\begin{equation}
    \boldsymbol{x}\sim\boldsymbol{x}'\iff \boldsymbol{x}-\boldsymbol{x}'\in 2\pi\Gamma \, .
\end{equation}
We then consider $n$ bosons $X^\mu,~\mu=1,\dots,n$ on a two dimensional surface (worldsheet)    moving on the torus $\mathbb{T}^n$ (target space). Alternatively, $X^\mu$ can be considered as coordinates on the torus $\mathbb{T}^n$. Note that 
\begin{equation}
    X^\mu\sim X^\mu+2\pi e^\mu,\quad \vec{e}\in\Gamma\, .
\end{equation}
Let us parameterize the worldsheet by $\sigma,t$. Then the action for the CFT is given by 
\begin{equation}
    S=\frac{1}{4\pi\alpha'}\int dt \int d\sigma~ (\dot{X}^2-X'^2-2B_{\mu\nu}\dot{X}^\mu X'^\nu)\, ,
\end{equation}
where 
\begin{equation}
\dot{X}^2=\sum_{\mu=1}^n\dot{X}^\mu\dot{X}^\mu,\quad X'^2=\sum_{\mu=1}^nX'^\mu X'^\mu \, ,    
\end{equation}
with dot indicating derivative with respect to $t$ and prime with respect to $\sigma$,
$B_{\mu\nu}$ is an anti-symmetric matrix and $\alpha'$ is a coupling constant (called \textit{Regge slope} in string theory). 
The equation of motion for $X^\mu$ is given by
\begin{equation}
\ddot{X}^\mu-X^{\prime \prime \mu}=0 \, ,
\end{equation}
which is the wave equation in 2 dimensions with solutions 
\begin{equation}
X(\sigma,t)=X_L^\mu(t+\sigma)+X_R^\mu(t-\sigma) \, .    
\end{equation} 
Here $X_L, X_R$ are called the left moving and right moving components. We now take the $\sigma$ coordinate on the worldsheet to be periodic, $\sigma\sim \sigma+2\pi$ so that 
\begin{equation}\label{eq:periodX}
    X^\mu(t,\sigma+2\pi)=X^\mu(t,\sigma)+2\pi e^\mu,\quad \Vec{e}\in\Gamma.
\end{equation}
The periodicity implies that we have a Fourier expansion of the form 
\begin{equation}
\begin{aligned}
& X_L^\mu(t+\sigma)=\frac{x^\mu}{2}+\alpha^{\prime} \frac{p_L^\mu}{2}(t+\sigma)+\frac{i}{2} \sum_{n \neq 0} \frac{a_n^\mu}{n} e^{-i n(t+\sigma)}\, , \\
& X_R^\mu(t-\sigma)=\frac{x^\mu}{2}+\alpha^{\prime} \frac{p_R^\mu}{2}(t-\sigma)+\frac{i}{2} \sum_{n \neq 0} \frac{b_n^\mu}{n} e^{-i n(t-\sigma)} \, ,
\end{aligned}
\end{equation}
where
\begin{equation}
\frac{\alpha^{\prime}}{2}\left(\vec{p}_L-\vec{p}_R\right)=\vec{e} \in \Gamma \, ,
\end{equation}
so that
\begin{equation}
\begin{aligned}
X^\mu(t, \sigma) & =X_L^\mu(t+\sigma)+X_R^\mu(t-\sigma) \\
& =x^\mu+\frac{\alpha^{\prime}}{2}\left(p_L^\mu+p_R^\mu\right) t+\vec{e} \sigma+\frac{i}{2} \sum_{n \neq 0}\left(\frac{a_n^\mu}{n} e^{-i n(t+\sigma)}+\frac{b_n^\mu}{n} e^{-i n(t-\sigma)}\right) \, ,
\end{aligned}
\end{equation}
satisfies the periodicity \eqref{eq:periodX}. Note that the total momenta given by
\begin{equation}
P^\mu=\frac{1}{2 \pi \alpha^{\prime}} \int_0^{2 \pi} d \sigma\left(\dot{X}^\mu-B^{\mu \nu} X_\nu^{\prime}\right)=\frac{\alpha^{\prime}}{2}\left(p_L^\mu+p_R^\mu\right)-B^{\mu \nu} e_\nu \, ,
\end{equation}
must be a vector of the dual lattice $\Gamma^\star$ defined as
\begin{equation}
\Gamma^\star:=\left\{\vec{e} \in \mathbb{R}^n \mid \vec{e} \cdot \vec{e^{\prime}} \in \mathbb{Z}, \forall ~\vec{e^{\prime}} \in \Gamma\right\} \, , 
\end{equation}
since $X(t, \sigma)$ is only defined up to arbitrary shifts by $2 \pi \vec{e}$ for $\vec{e} \in \Gamma$. We have
\begin{equation}
p_L^\mu=\frac{\alpha^{\prime} P^\mu+\left(B^{\mu \nu}+\delta^{\mu \nu}\right) e_\nu}{\alpha^{\prime}},\quad p_R^\mu=\frac{\alpha^{\prime} P^\mu+\left(B^{\mu \nu}-\delta^{\mu \nu}\right) e_\nu}{\alpha^{\prime}} \, .
\end{equation}
Thus the set of vectors\footnote{We again take the lattice vectors to be rows in $\R^{n,n}$.} $\lambda=\left(\vec{p}_L, \vec{p}_R\right) \in \mathbb{R}^{n, n}$ form a lattice $\Lambda \subset \mathbb{R}^{n, n}$. Moreover
\begin{equation}
\lambda \circ \lambda:=\left\|p_L\right\|^2-\left\|p_R\right\|^2=\frac{4}{\alpha^{\prime}} \vec{P} \cdot \vec{e} \, .
\end{equation}
We now fix $\alpha^{\prime}=2$, so that $\lambda\circ \lambda \in 2 \mathbb{Z}$. Next for any $\lambda, \lambda^{\prime} \in \Lambda$
\begin{equation}
\begin{aligned}
\lambda \circ \lambda^{\prime} & =\vec{p}_L \cdot \vec{p^{\prime}}_L-\vec{p}_R \cdot \vec{p^{\prime} }_R\\
& =\vec{P}^{\prime} \cdot \vec{e}+\vec{P} \cdot \vec{e^{\prime}} \in \mathbb{Z}\, .
\end{aligned}
\end{equation}
So $\Lambda$ is an even\footnote{Note that an even lattice is necessarily integral.} Lorentzian lattice. In fact, $\Lambda$ is self dual \cite{Dymarsky:2020qom}. A generator matrix for $\Lambda$ in the coordinates \cite{Furuta:2023xwl}
\begin{equation}
\lambda=(\alpha, \beta), \quad \alpha=\frac{\vec{p}_L+\vec{p}_R}{\sqrt{2}},~~~ \beta=\frac{\vec{p}_L-\vec{p}_R}{\sqrt{2}}\, ,
\end{equation}
is 
\begin{equation}\label{eq:genmatlorlat}
    \mathcal{G}_\Lambda=\frac{1}{\sqrt{2}}\left(\begin{array}{c|c}
 2 \gamma^\star & 0\\
\hline  \gamma B  & \gamma
\end{array}\right)  \, ,
\end{equation}
where $\gamma$ and $\gamma^\star=\left(\gamma^{-1}\right)^{T}$ are the generator matrices for $\Gamma$ and $\Gamma^\star$ respectively. 
Upon quantisation, we impose the commutators 
\begin{equation}\label{eq:commanbn}
\begin{split}    &[a_n^\mu,a_m^\nu]=n\delta_{n+m,0}\delta^{\mu\nu},\quad [b_n^\mu,b_m^\nu]=n\delta_{n+m,0}\delta^{\mu\nu},\quad [a_n^\mu,b_m^\nu]=0,\\&[x^\mu,p_L^\mu]=[x^\mu,p_R^\nu]=i\delta^{\mu\nu} \, .
\end{split}
\end{equation}
For every $\left(\vec{p}_L, \vec{p}_R\right) \in \Lambda$ we have a primary operator given by
\begin{equation}
V_{p_L, p_R}(z, \bar{z})=\typecolon e^{i \vec{p}_L \cdot \vec{X}_L(z)+i \vec{p}_R \cdot \vec{X}_R(\bar{z})}\typecolon \, ,
\end{equation}
where $z=e^{i(it+\sigma)}, \bar{z}=e^{i(it-\sigma)}$ which is obtained by Wick rotating $t \rightarrow i t$. The normal ordering is defined via 
\begin{equation}\label{eq:normordnarain}
\begin{split}
&\typecolon a_n^\mu a_m^\nu\typecolon = \typecolon a_m^\nu \,  a_n^\mu\typecolon =\begin{cases}
a_m^\nu \, a_n^\mu& m\leq n,\\a_n^\mu \,  a_m^\nu& m\geq n,
\end{cases}
\\&\typecolon a^\mu_m p_L^\nu\typecolon =\typecolon p_L^\nu \,  a_m^\mu\typecolon = a_m^\mu p_L^\nu ,\\&\typecolon x^{\mu} \, a_n^\nu \typecolon = \typecolon a_n^\nu x^{\mu}\typecolon = a_n^\nu x^{\mu},\\&\typecolon x^\mu p_L^\nu\typecolon =\typecolon p_L^\nu \,  x^\mu\typecolon = x^\mu p_L^\nu \, , 
\end{split}
\end{equation}
and similarly for $b_n^\mu,p_R^\mu$. The Virasoro generators are given by 
\begin{equation}
    L_n=\frac{1}{2}\sum_{m\in\Z}\typecolon a_m\cdot a_{n-m}\typecolon,\quad \bar{L}_n=\frac{1}{2}\sum_{m\in\Z}\typecolon b_m\cdot b_{n-m}\typecolon \, .
\end{equation}
It is an easy exercise to show that $L_n,\bar{L}_n$ satisfy the Virasoro algebra with central charge $(c,\bar{c})=(n,n)$. 
The OPE of these primary operators takes the form \cite[Section 8.4]{Polchinski:1998rq} 
\begin{equation}
V_{p_L, p_R}(z, \bar{z})V_{p_L', p_R'}(w, \bar{w})\sim (z-w)^{p_L\cdot p_L'}(\Bar{z}-\Bar{w})^{p_R\cdot p_R'}V_{p_L+p_L', p_R+p_R'}(w, \bar{w}) \, .   
\end{equation}
From the OPE we see that as the first vertex operator circles around the second, it picks up a factor of $e^{2\pi i(p_L\cdot p_L'-p_R\cdot p_R')}$. So for the OPE to be single valued, one requires the lattice $\Lambda$ to be integral. \\\\
The torus partition function of the theory is
\begin{equation}\label{eq:narainpartfunc}
Z(\tau, \bar{\tau})=\frac{1}{|\eta(\tau)|^{2n}} \sum_{\left(\vec{p}_L, \vec{p}_R\right) \in \Lambda} q^{\left\|\vec{p}_L\right\|^2 / 2} \bar{q}^{\left\|\vec{p}_R\right\|^2 / 2}, \quad q=e^{2 \pi i \tau}, \bar{q}=e^{ - 2 \pi i \bar{\tau}} \, ,
\end{equation}
where $\eta(\tau)$ is the Dedekind eta function 
\begin{equation}
    \eta(\tau)=q^{\frac{1}{24}}\prod_{n=1}^{\infty}(1-q^n) \, ,
\end{equation}
and $\tau$ is the moduli of the torus. Here $\tau, \bar{\tau}$ are complex conjugates of each other. The partition function \eqref{eq:narainpartfunc} is modular invariant since $\Lambda$ is self-dual. This construction gives a CFT for any even, self-dual Lorentzian lattice $\Lambda\subset\R^{n,n}$ of signature $(n,n)$, this is called the Narain CFT associated to the lattice $\Lambda$.\\\\ It is easy to see that the partition function is invariant under an orthogonal action on $\vec{p}_L,\vec{p}_R$ separately. Thus two Narain CFT associated on lattices $\Lambda,\Lambda'$ are equivalent if $\Lambda,\Lambda'$ is related by the right action of $\mathrm{O}(n,\R)\times \mathrm{O}(n,\R)$, where $\mathrm{O}(n,\R)\subset\mathrm{GL}(n,\R)$ acts on $\R^n$ and preserves the Euclidean inner product. Next, any two Lorentzian lattices are related by the left action of $\mathrm{O}(n,n,\R)$, where $\mathrm{O}(n,n,\R)\subset\mathrm{GL}(2n,\R)$ acts on $\R^{n,n}$ and preserves the Lorentzian inner product of signature $(n,n)$. But note that if two lattices are related by an $\mathrm{O}(n,n,\Z)$-transformation then the two lattices are the same. Thus the moduli space of Narain CFTs is given by the quotient \begin{equation}\label{eq:modspnarain}
    \frac{\mathrm{O}(n,n,\R)}{\mathrm{O}(n,\R)\times \mathrm{O}(n,\R)\times \mathrm{O}(n,n,\Z)} \, ,
\end{equation}
where the first two factors in the denominator act on the right and relate
physically equivalent lattices to each other while the last factor acts on the left. \\\\
It is immediate to see from the dictionary between conformal field theory in physics and our notion of non-chiral CFT described in Subection \ref{subsec:dictionary} that Narain CFTs are simply LLCFTs based on a Lorentzian lattice of signature $(n,n)$. The moduli space structure \eqref{eq:modspnarain} is then a special case of Theorem \ref{thm:modspmn}. The general case $m\neq n$ also appears in toroidal compactification of heterotic string theory, see \cite{Narain:1985jj,Narain:1986am,Narain:1986qm,Giveon:1994fu} for details. Our general result in Theorem \ref{thm:modspmn} is a more mathematical statement of the moduli space of Narain compactifications for general signature, see \cite{Harvey:2017rko} for more details from physical viewpoint. \\\\Finally we note that \cite{Dymarsky:2020qom} gives a construction of Lorentzian lattices of signature $(n,n)$ from stabiliser codes. Our construction of LLVOA then completes the parallel with the construction of VOAs from codes. The code-Narain CFT correspondence has been explored extensively recently \cite{Kawabata:2022jxt,Angelinos:2022umf,Henriksson:2022dnu,Dymarsky:2021xfc,Aharony:2023zit}. It would be interesting to study their implications on LLVOAs.
\section{Conclusion and Future Directions}\label{sec:conclusion}
In this paper, we have initiated a mathematically rigorous study of non-chiral VOAs and presented the construction of a non-trivial example of our definition, namely the Lorentzian lattice vertex operator algebra and its modules. We also showed the relevance of the construction by demonstrating that Narain CFTs which appear in string compactifications are physical examples of our construction. In this section, we sketch some future directions of the study. \\\\
\textbf{(1) Rationality, Regularity, and $C_2$-Cofiniteness:} We have defined the notion of non-chiral VOA. It is natural to introduce the notion of rationality , regularity, and strong regularity as in the theory of vertex operator algebras. 
One would also like to define the notion of $C_2$-cofiniteness and and see if $C_2$-cofiniteness and rationality implies regularity as in \cite{abe2002rationality}. The main result that we would like to prove is the theorem of Zhu \cite{Zhu1995ModularIO}: assuming that the graded dimensions of the irreducible modules factorise into a product of holomorphic and anti-holomorphic function, we would like to prove that the set of (anti-)holomorphic characters of a $C_2$-cofinite VOA with certain additional properties is a representation of $\mathrm{SL}(2,\Z)$. To be more precise, we would like to prove that if the graded characters decompose as $\chi_{W_i}(\tau,\bar{\tau})=\chi_{W_i}(\tau)\overline{\widetilde{\chi}_{W_i}(\tau)}$, then there exists a representation $\rho(\gamma)_{ij},\widetilde{\rho}(\gamma)_{ij}$ of $\mathrm{SL}(2,\Z)$ such that
\begin{equation}
\begin{split}
&\chi_{W_i}\left(\frac{a\tau+b}{c\tau+d}\right)=\sum_{i}\rho(\gamma)_{ij}\chi_{W_j}(\tau),\\&
    \widetilde{\chi}_{W_i}\left(\frac{a\tau+b}{c\tau+d}\right)=\sum_{i}\widetilde{\rho}(\gamma)_{ij}\widetilde{\chi}_{W_j}(\tau),\quad \gamma=\begin{pmatrix}
        a&b\\c&d
    \end{pmatrix}\in\mathrm{SL}(2,\Z)~.
\end{split}    
\end{equation}
Note that if the non-chiral CFT is modular invariant then we would also have that 
\begin{equation}
    \widetilde{\rho}(\gamma)^\dagger\rho(\gamma)=\mathds{1},\quad \text{for all}\quad\gamma\in\mathrm{SL}(2,\Z)~. 
\end{equation}
Furthermore, we would like to classify all irreducible modules of the LLVOA on the lines of Dong \cite{Dong1993VertexAA} and  obtain conditions on the Lorentzian lattice so that the associated LLVOA is, rational, regular and strongly regular. The first result in this direction is due to Katrin Wendland \cite{Wendland:2000ye}, see also \cite{Furuta:2023xwl} for some recent progress from physical viewpoint.\\\\
\textbf{(2) Modular Tensor Categories and the Verlinde Conjecture:} In conformal field theory in physics, one expects the Verlinde conjecture \cite{Verlinde:1988sn} to hold even for non-chiral CFTs. In \cite{Moore:1988qv}, Moore and Seiberg showed that Verlinde conjecture follows from the axioms of a \textit{rational conformal field theory} which they defined. Their axioms had an important \textit{associativity} assumption. Huang \cite{doi:10.1073/pnas.0409901102} established the associativity (axiom in \cite{Moore:1988qv}) from the axioms of vertex operator algebra and its modules. It required the introduction of the notion of tensor product of modules \cite{Huang:1995yw}. Huang also proved the Verlinde conjecture using the definition of tensor product of modules of a VOA and the associativity theorem.\par We would like to define a similar notion of tensor product of modules for non-chiral VOA and then prove the associativity theorem and Verlinde conjecture. Additionally, one of the main results of \cite{Moore:1988qv} was the realisation that conformal field theories can be understood as generalisation of group theory - the chiral (anti-chiral) algebra and its modules along with intertwining operators forms a category called a modular tensor category \cite{bakalov2001lectures}. Huang proved that the vertex operator algebras and their modules are examples of \textit{braided tensor categories} \cite{Huang1997TwoDimensionalCG}. We would like to follow a similar approach and establish these results for non-chiral VOAs. We discuss these questions for the LLCFT in the sequel paper \cite{Singh2024c}. \\\\
\textbf{Acknowledgments.} The authors would like to thank Anindya Banerjee, Anatoly Dymarsky, Yi-Zhi Huang, Gregory Moore, Ananda Roy, Siddhartha Sahi, Hubert Saleur, Ashoke Sen, and 
Masahito Yamazaki for some useful correspondence and discussions. We especially thank Anatoly Dymarsky, Yi-Zhi Huang, and Hubert Saleur for comments on the manuscript and raising some interesting questions. We also thank Runkai Tao for proof-reading the manuscript. Finally we would like to thank the anonymous referee for comments which improved the presentation of certain parts of the paper. The work of R.K.S is supported by the US Department of Energy under grant DE-SC0010008.
\newpage
\begin{appendix}
\section{Lattice central extensions}\label{app:centext}
In this appendix, we collect some results about central extensions and refer the reader to \cite[Chapter 5]{Frenkel:1988xz} for more details. Recall that a central extension of $G$ by $A$ is a short exact sequence 
\begin{equation}
    0\longrightarrow A\overset{\iota}{\longrightarrow}\hat{G}\overset{-}{\longrightarrow} G\longrightarrow 0 \, ,
\end{equation}
such that $\iota(A)$ is contained in the center $Z(\hat{G})$ of $\hat{G}$. Here $-$ denotes the surjective (projection) map. We sometimes call $\hat{G}$ as the central extension of $G$ by $A$. Two central extensions $\hat{G}$ and $\hat{G}'$ are said to be equivalent if there exists an isomorphism $\psi:\hat{G}\longrightarrow\hat{G}'$ such that the following diagrams commute
\[\begin{tikzcd}
	0 & A & {\hat{G}} & G & 0 \\
	0 & A & {\hat{G}'} & G & 0
	\arrow["\iota", from=1-2, to=1-3]
	\arrow[from=1-1, to=1-2]
	\arrow["{-}", from=1-3, to=1-4]
	\arrow[from=2-1, to=2-2]
	\arrow["\iota", from=2-2, to=2-3]
	\arrow["{-}", from=2-3, to=2-4]
	\arrow[from=1-2, to=2-2]
	\arrow["\psi", from=1-3, to=2-3]
	\arrow[from=1-4, to=1-5]
	\arrow[from=2-4, to=2-5]
	\arrow[from=1-4, to=2-4]
\end{tikzcd}\] 
We now specialize to an abelian group $G$, written additively, and $A=\Z_2=\{\pm 1\}$. A \textit{2-cocycle} is a bilinear map $\epsilon:G\times G\longrightarrow\Z_2$ satisfying 
\begin{equation}
\epsilon(a, b)+\epsilon(a+b, c)=\epsilon(b, c)+\epsilon(a, b+c).    
\end{equation}
A 2-cocycle $\epsilon$ is called a \textit{2-coboundary} if there exists $\eta:G\longrightarrow\Z_2$ such that 
\begin{equation}
\epsilon(a, b)=\eta(a+b)-\eta(a)-\eta(b),\quad a,b\in G.    
\end{equation}
Two 2-cocycles differing by a 2-coboundary are said to be \textit{cohomologous}.
A 2-cocycle determines a central extension $\hat{G}$ of $G$ by $\Z_2$ as follows: 
\begin{equation}
    \hat{G}=\Z_2\times G=\{(\theta,a):\theta\in\Z_2,a\in G\}
\end{equation}
with group operation 
\begin{equation}\label{eq:group_operation}
    \begin{split}
    (\theta, a) \cdot (\tau, b) = & \left(  \theta \tau  \,  (-1)^{ \epsilon(a, b)}, a + b \right).  
\end{split}
\end{equation}
A 2-coboundary determines a central extension equivalent to the trivial extension $\Z_2\times G$ (\textit{i.e.} the direct product of $\Z_2$ and $G$) and two cohomologous 2-cocycles determine equivalent central extensions. It turns out that equivalence classes of central extensions is classified by the group cohomology $H^2(G,\Z_2)$ which is the quotient of 2-cocycles by 2-coboundaries (\cite[Proposition 5.1.2]{Frenkel:1988xz}). \\\\We define the commutator map $c:G\times G\longrightarrow \Z_2$ of a central extension $\hat{G}$ by the relation
\begin{equation}
(-1)^{c(\bar{a},\bar{b})}=aba^{-1}b^{-1},\quad a,b\in \hat{G}.    
\end{equation}
The following proposition will be useful. 
\begin{prop}\emph{\cite[Proposition 5.2.3]{Frenkel:1988xz}}\label{prop5.2.3flm}
Two central extensions of $G$ by $\Z_2$ are equivalent if and only if their commutator maps are the same.    
\end{prop}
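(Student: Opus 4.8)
The plan is to prove the two implications separately; the forward direction is purely formal, and essentially all the content lies in the converse, which will ultimately rest on the fact that for a lattice the commutator form is a \emph{complete} cohomological invariant of the extension. First I would dispose of the forward implication: if $\psi\colon\hat{G}\to\hat{G}'$ is an equivalence of central extensions, then, being a group isomorphism, $\psi$ carries commutators to commutators, $\psi(aba^{-1}b^{-1})=\psi(a)\psi(b)\psi(a)^{-1}\psi(b)^{-1}$ for all $a,b\in\hat{G}$. From the commuting squares of the equivalence diagram, $\psi$ restricts to the identity on $\Z_2$ and covers the identity on $G$, so $\overline{\psi(a)}=\bar a$ and $\overline{\psi(b)}=\bar b$. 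Substituting the defining relation $(-1)^{c(\bar a,\bar b)}=aba^{-1}b^{-1}$ into both sides then gives $c=c'$.

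For the converse I would take $G$ to be a lattice, i.e. a finitely generated free abelian group — the only case used in the body of the paper, and a hypothesis that turns out to be essential. Choose normalised sections $s\colon G\to\hat{G}$ and $s'\colon G\to\hat{G}'$ with $s(0)=1=s'(0)$, and let $\epsilon,\epsilon'\colon G\times G\to\Z_2$ be the associated $2$-cocycles, defined by $s(a)s(b)=(-1)^{\epsilon(a,b)}s(a+b)$ and similarly for $s'$. Since $\Z_2$ is central, the commutator of arbitrary lifts of $a,b\in G$ equals that of $s(a),s(b)$, and a one-line computation gives
\[
aba^{-1}b^{-1}=s(a)s(b)s(a)^{-1}s(b)^{-1}=(-1)^{\epsilon(a,b)+\epsilon(b,a)},
\]
so $c(a,b)=\epsilon(a,b)+\epsilon(b,a)$ in $\Z_2$, and likewise for $c'$. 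Therefore the hypothesis $c=c'$ says precisely that the $2$-cocycle $\delta:=\epsilon+\epsilon'$ is \emph{symmetric}. By \cite[Proposition 5.1.2]{Frenkel:1988xz} (equivalence classes of extensions correspond to classes in $H^2(G,\Z_2)$, and cohomologous cocycles give equivalent extensions), it suffices to prove: a symmetric $\Z_2$-valued $2$-cocycle on a free abelian group $G$ is a $2$-coboundary.

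I would prove this in two steps. First, every $2$-cocycle on a free abelian group is cohomologous to a $\Z$-bilinear one — the normalisation already used in the body to define $\epsilon$ in \eqref{epsilonaction} — and, the commutator map being a cohomology invariant, the bilinear representative of a symmetric class is again symmetric, hence a symmetric $\Z$-bilinear form $b\colon G\times G\to\Z_2$. Second, fixing a $\Z$-basis $\{e_i\}$ of $G$ and writing $a=\sum_i a_i e_i$, one checks directly that $b$ is the coboundary of
\[
\eta(a)=\sum_i \binom{a_i}{2}\,b(e_i,e_i)+\sum_{i<j} a_i a_j\, b(e_i,e_j),
\]
using the congruences $\binom{m+n}{2}\equiv\binom{m}{2}+\binom{n}{2}+mn$ and $(m+m')(n+n')\equiv mn+m'n'+mn'+m'n$ modulo $2$. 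This finishes the claim, hence the converse. (A more structural route to the same conclusion: an extension with trivial commutator map has $\hat{G}$ abelian, hence splits as abelian groups because $\mathrm{Ext}^1_{\Z}(G,\Z_2)=0$ for free $G$; combined with additivity of the commutator map under Baer sum, this shows the commutator map is injective on $H^2(G,\Z_2)$.)

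The main obstacle is exactly the displayed step — equivalently, the assertion that the alternating form is a complete invariant, $H^2(G,\Z_2)\cong\operatorname{Hom}(\wedge^2 G,\Z_2)$. This genuinely uses freeness of $G$: for a general abelian group it fails, as witnessed by the inequivalent extensions $\Z_4$ and $\Z_2\times\Z_2$ of $\Z_2$ by $\Z_2$, both of which have trivial commutator map. Everything else — the reduction to cocycles, the cocycle formula for $c$, and the verification that the displayed $\eta$ has the stated coboundary — is careful but routine bookkeeping with the material recalled in this appendix.
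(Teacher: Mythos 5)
Your proof is correct, but be aware that the paper itself gives no argument for this proposition -- it is quoted verbatim from \cite[Proposition 5.2.3]{Frenkel:1988xz} -- so there is no internal proof to compare against; the comparison is with FLM's. Your forward direction is the standard one-line argument and is fine. For the converse you correctly expose the hidden hypothesis: as stated for an arbitrary abelian $G$ the claim is false (your $\Z_4$ versus $\Z_2\times\Z_2$ example), and what FLM actually prove, and what the paper uses (for $G=\Lambda$ or $\Lambda_0$), is the case of a free abelian group of finite rank. Within that setting your reduction is sound: equality of commutator maps says exactly that the difference cocycle is symmetric, and your explicit potential $\eta(a)=\sum_i\binom{a_i}{2}\,b(e_i,e_i)+\sum_{i<j}a_ia_j\,b(e_i,e_j)$ does satisfy $\eta(a+a')-\eta(a)-\eta(a')=b(a,a')$ by the congruences you quote. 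The one step you lean on without proof is that every class in $H^2(G,\Z_2)$ admits a bilinear representative; your parenthetical alternative (trivial commutator map $\Rightarrow$ abelian extension, which splits since $\mathrm{Ext}^1_{\Z}(G,\Z_2)=0$ for free $G$, together with additivity of the commutator pairing under Baer sum) closes exactly that gap and is cleaner, so I would promote it to the main argument. For the record, FLM's own proof is basis-dependent: they pick a $\Z$-basis and a section normalizing the cocycle to the upper-triangular bilinear form determined by the commutator map -- precisely the $\epsilon$ of \eqref{epsilonaction} -- so that every extension is equivalent to the standard one attached to its commutator map; in spirit this is the same bookkeeping as your coboundary computation, while your $\mathrm{Ext}$/Baer-sum route is the more structural variant and has the advantage of making the role of freeness explicit.
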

Let us now consider the Lorentzian lattice $\Lambda$ as an abelian group. Let $\hat{\Lambda}$ be the central extension of $\Lambda$ by $\Z_2$ determined by the cocycle $\epsilon$ defined in \eqref{epsilonaction}.  
\begin{lemma}\label{lemma:commfunccenext}
  The commutator function $c : \Lambda \times \Lambda \to \Z_2$ for the central extension $\hat{\Lambda}$ in \eqref{eq:centextoflamb}  is given by $c(\mu_1, \mu_2) = \mu_1 \circ \mu_2 \, \bmod $  $2 \Z$.
  \begin{proof}
 We use the notation $\mathrm{e}_{\lambda} = (1, \lambda) $, like in Section \ref{sec:llvoa}. Using the group operation \eqref{eq:group_operation}, it can be shown that 
 \begin{equation}
 \begin{split}
         {\rm e}_{\mu_1}{\rm e}_{\mu_2} & =(-1)^{\epsilon(\mu_1,\mu_2)} \, {\rm e}_{\mu_1+\mu_2} \, ,  \\ 
         {\rm e}_{\mu_2}{\rm e}_{\mu_1} & =(-1)^{\epsilon(\mu_2,\mu_1)} \, {\rm e}_{\mu_1+\mu_2} \, . \\ 
 \end{split}
      \end{equation}
    Using the above equations, the commutator function is seen to be  \begin{equation}
   (-1)^{c(\mu_1, \mu_2)} =  {\rm e}_{\mu_1}  {\rm e}_{\mu_2}  {\rm e}_{\mu_1}^{-1} {\rm e}_{\mu_2}^{-1}= 
        (-1)^{\epsilon(\mu_1, \mu_2)  - \epsilon(\mu_2, \mu_1) }  \, .
    \end{equation}
      If we take the basis of the lattice to be given by $\{\lambda_i \}_{i = 1}^{m+n}$, then we have say 
      \begin{equation}
          \mu_1 = \sum\limits_{i = 1}^{m+n} c_i \, \lambda_i ,  \quad \mu_2 = \sum\limits_{i = 1}^{m+n} d_i \, \lambda_i \,,\quad c_i,d_i\in\Z. 
      \end{equation}
 Using this, we can write 
      \begin{equation}\label{eq:exp_Aapp}
         (-1)^{\epsilon(\mu_1, \mu_2)  - \epsilon(\mu_2, \mu_1) }=  (-1)^{\sum\limits_{i, j  = 1}^{m + n} c_i d_j \epsilon(\lambda_i, \lambda_j)  - c_i d_j \epsilon(\lambda_j, \lambda_i) } = 
          \prod\limits_{i, j = 1}^{m+n}(-1)^{ c_i d_j \left(\epsilon\left(\lambda_i, \lambda_j\right)  -  \epsilon\left(\lambda_j, \lambda_i\right)\right)} \, .
      \end{equation}
      Using \eqref{epsilonaction}, each term in the above product can be written as 
      \begin{equation}
        (-1)^{c_i d_j \left(\epsilon(\lambda_i, \lambda_j) - \epsilon(\lambda_j, \lambda_i) \right)} = \begin{cases}
          (-1)^{c_i d_j \,  \lambda_i \circ \lambda_j}, & i > j, \\
          (-1)^{-c_i d_j \, \lambda_i \circ \lambda_j  }  =  (-1)^{c_i d_j \,  \lambda_i \circ \lambda_j}, & i < j, \\
          1  = (-1)^{c_i d_j \,  \lambda_i \circ \lambda_i},& i = j,
      \end{cases} \,   \\     
      \end{equation}
  where the equality in $i < j$ case follows as $(-1)^{n} = (-1)^{-n}$, when $n \in \Z$, and the equality in case $i = j$ follows as the lattice is even. Hence, we can simplify \eqref{eq:exp_Aapp} as
  \begin{equation}
    (-1)^{\epsilon(\mu_1, \mu_2)  - \epsilon(\mu_2, \mu_1) }   = \prod_{i, j = 1}^{m+n} (-1)^{c_i d_j \,  \lambda_i \circ \lambda_j} = (-1)^{\mu_1 \circ \mu_2} = (-1)^{c(\mu_1, \mu_2)} \, . 
  \end{equation}

 Hence, we conclude $c(\mu_1, \mu_2) = \mu_1 \circ \mu_2 \, \bmod 2 \Z$.    
  \end{proof}
\end{lemma}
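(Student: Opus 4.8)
The plan is to compute the commutator function $c:\Lambda\times\Lambda\to\Z_2$ of the central extension $\hat\Lambda$ directly from the defining group law \eqref{cocyclelattice}, and then match it against the Lorentzian pairing $\circ$ using bilinearity and the even/integral hypotheses. First I would fix the integral basis $\{\lambda_i\}_{i=1}^{m+n}$ used to define the cocycle $\epsilon$ in \eqref{epsilonaction}, and recall from \eqref{eq:elamemu} that ${\rm e}_{\lambda}{\rm e}_{\mu}=(-1)^{\epsilon(\lambda,\mu)}{\rm e}_{\lambda+\mu}$; applying this in both orders gives
\begin{equation}
(-1)^{c(\mu_1,\mu_2)}={\rm e}_{\mu_1}{\rm e}_{\mu_2}{\rm e}_{\mu_1}^{-1}{\rm e}_{\mu_2}^{-1}=(-1)^{\epsilon(\mu_1,\mu_2)-\epsilon(\mu_2,\mu_1)}.
\end{equation}
So the whole computation reduces to showing $\epsilon(\mu_1,\mu_2)-\epsilon(\mu_2,\mu_1)\equiv\mu_1\circ\mu_2\pmod 2$.

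Next I would expand $\mu_1=\sum_i c_i\lambda_i$, $\mu_2=\sum_j d_j\lambda_j$ with $c_i,d_j\in\Z$ and use the $\Z$-bilinearity of $\epsilon$ to write $\epsilon(\mu_1,\mu_2)-\epsilon(\mu_2,\mu_1)=\sum_{i,j}c_id_j\bigl(\epsilon(\lambda_i,\lambda_j)-\epsilon(\lambda_j,\lambda_i)\bigr)$, hence
\begin{equation}
(-1)^{\epsilon(\mu_1,\mu_2)-\epsilon(\mu_2,\mu_1)}=\prod_{i,j=1}^{m+n}(-1)^{c_id_j(\epsilon(\lambda_i,\lambda_j)-\epsilon(\lambda_j,\lambda_i))}.
\end{equation}
Then I would analyze a single factor using the case split in \eqref{epsilonaction}: for $i>j$ it equals $(-1)^{c_id_j\,\lambda_i\circ\lambda_j}$; for $i<j$ it equals $(-1)^{-c_id_j\,\lambda_i\circ\lambda_j}=(-1)^{c_id_j\,\lambda_i\circ\lambda_j}$ since $\lambda_i\circ\lambda_j\in\Z$ (integrality) and $(-1)^{-n}=(-1)^n$ for $n\in\Z$; and for $i=j$ it equals $1=(-1)^{c_id_i\,\lambda_i\circ\lambda_i}$ because $\lambda_i\circ\lambda_i\in2\Z$ (evenness). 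Multiplying these back together gives $\prod_{i,j}(-1)^{c_id_j\,\lambda_i\circ\lambda_j}=(-1)^{\mu_1\circ\mu_2}$, which is exactly $(-1)^{c(\mu_1,\mu_2)}$. Since both sides are elements of $\Z_2$, we conclude $c(\mu_1,\mu_2)=\mu_1\circ\mu_2\bmod 2\Z$.

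This argument is essentially a bookkeeping exercise once the group law is unwound, so there is no serious obstacle; the only point requiring care is keeping track of where the two hypotheses on $\Lambda$ actually enter — integrality is needed to make sense of $(-1)^{\lambda_i\circ\lambda_j}$ and to kill the sign in the $i<j$ case, and evenness is needed precisely to handle the diagonal $i=j$ terms. I would also note in passing that this lemma, combined with Proposition \ref{prop5.2.3flm}, immediately yields relation \eqref{eq:commutelm} used in the locality proof, which is the reason the lemma is stated here.
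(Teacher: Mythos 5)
Your proposal is correct and follows essentially the same route as the paper's own proof: reduce the commutator to $(-1)^{\epsilon(\mu_1,\mu_2)-\epsilon(\mu_2,\mu_1)}$, expand in the integral basis by bilinearity, and handle the $i>j$, $i<j$, $i=j$ cases using integrality and evenness exactly as the paper does. No gaps to flag.
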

\begin{prop}\label{prop:basischangecenextiso}
Let $\{\tilde{\lambda}_i\}_{i=1}^{m+n}$ be a basis  of $\Lambda$ different from $\{\lambda_i\}_{i=1}^{m+n}$. Then the cocycle $\tilde{\epsilon}:\Lambda\times\Lambda\longrightarrow\Z_2$ defined as in \eqref{epsilonaction} 
\begin{equation}
\tilde{\epsilon}(\tilde{\lambda}_i,\tilde{\lambda}_j)=\begin{cases}
    \tilde{\lambda}_i\circ\tilde{\lambda}_j,& i>j,\\0,&\text{otherwise},
\end{cases}    
\end{equation}
and extended to $\Lambda$ by $\Z$-bilinearity, determines a central extension equivalent to the one determined by $\epsilon$.
\end{prop}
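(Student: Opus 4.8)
The plan is to avoid constructing an explicit equivalence $\psi$ or an explicit $2$-coboundary relating $\epsilon$ and $\tilde\epsilon$, and instead to invoke the commutator-map criterion of Proposition \ref{prop5.2.3flm}: two central extensions of an abelian group by $\Z_2$ are equivalent precisely when they have the same commutator map. So it suffices to show that the commutator map of the central extension $\hat\Lambda$ determined by $\epsilon$ agrees with the commutator map of the central extension $\hat{\Lambda}'$ determined by $\tilde\epsilon$.

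First I would check that $\tilde\epsilon$ really is a $2$-cocycle: since it is defined on the basis $\{\tilde\lambda_i\}$ and extended by $\Z$-bilinearity, it is a $\Z$-bilinear map $\Lambda\times\Lambda\to\Z_2$, and every $\Z$-bilinear map satisfies the cocycle identity $\epsilon(a,b)+\epsilon(a+b,c)=\epsilon(b,c)+\epsilon(a,b+c)$ (both sides expand to $\epsilon(a,b)+\epsilon(a,c)+\epsilon(b,c)$). Hence $\tilde\epsilon$ determines an honest central extension $\hat{\Lambda}'=\Z_2\times\Lambda$ with multiplication twisted by $(-1)^{\tilde\epsilon}$, exactly as in \eqref{cocyclelattice}.

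The crux is then the observation that the computation carried out in the proof of Lemma \ref{lemma:commfunccenext} uses only that the defining family is an integral basis of $\Lambda$ and that $\Lambda$ is an even lattice — and neither property is special to $\{\lambda_i\}$. Running that computation verbatim with $\{\tilde\lambda_i\}$ in place of $\{\lambda_i\}$ (write $\mu_1=\sum_i c_i\tilde\lambda_i$, $\mu_2=\sum_j d_j\tilde\lambda_j$ with $c_i,d_j\in\Z$; use $(-1)^n=(-1)^{-n}$ for the $i<j$ terms and evenness of $\Lambda$ for the $i=j$ terms) gives that the commutator map of $\hat{\Lambda}'$ is $\tilde c(\mu_1,\mu_2)=\mu_1\circ\mu_2\bmod 2\Z$. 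By Lemma \ref{lemma:commfunccenext} the commutator map of $\hat\Lambda$ is the same function $c(\mu_1,\mu_2)=\mu_1\circ\mu_2\bmod 2\Z$. Thus $c=\tilde c$, and Proposition \ref{prop5.2.3flm} immediately yields that $\hat\Lambda$ and $\hat{\Lambda}'$ are equivalent central extensions, which is the claim.

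I do not expect a serious obstacle: the whole point is that the commutator map is determined by the Lorentzian form modulo $2$ and is therefore basis-independent, a fact already essentially isolated in Lemma \ref{lemma:commfunccenext}. The only step demanding any care is confirming that the case analysis ($i>j$, $i<j$, $i=j$) in that lemma's proof consumes nothing beyond integrality of the basis and evenness of the lattice, so that it transfers unchanged to the basis $\{\tilde\lambda_i\}$; a quick inspection shows this is so.
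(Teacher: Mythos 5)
Your proposal is correct and follows essentially the same route as the paper: the paper's proof likewise observes that Lemma \ref{lemma:commfunccenext} gives the commutator map $\mu_1\circ\mu_2 \bmod 2\Z$ for either basis (the argument using only integrality of the basis and evenness of $\Lambda$) and then concludes by Proposition \ref{prop5.2.3flm}. Your extra check that $\Z$-bilinearity implies the cocycle identity is a harmless elaboration of what the paper leaves implicit.
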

\begin{proof}
From Lemma \ref{lemma:commfunccenext}, the commutator maps of $\epsilon$ and $\tilde{\epsilon}$ are the same. Thus by Proposition \ref{prop5.2.3flm}, the two central extensions are equivalent.  
\end{proof}
\newpage
\section{Locality of product of multiple vertex operators}\label{app:genloc}
In this appendix, we show that that the product of multiple vertex operators of LLVOA and its modules exists and is local in the sense of the locality property \ref{item:locprop}. 
\par We first prove the existence and locality for product of three vertex operators. For three vectors ${\rm e}^{\lambda},{\rm e}^{\lambda '},{\rm e}^{\lambda '' }\in\C[\Lambda]$, we want to calculate the product 
\begin{equation}
 Y_{V_\Lambda}({\rm e}^{\lambda},x_1,\bar{x}_1)Y_{V_\Lambda}({\rm e}^{\lambda' },x_2,\bar{x}_2)Y_{V_\Lambda}({\rm e}^{\lambda'' },x_3,\bar{x}_3).      
\end{equation}
From \eqref{eq:veropprodnormord} and \eqref{eq:veropelamb}, we have 
\begin{equation}
\begin{split}
 &Y_{V_\Lambda}({\rm e}^\lambda,x_1,\bar{x}_1)Y_{V_\Lambda}({\rm e}^{\lambda ' },x_2,\bar{x}_2)Y_{V_\Lambda}({\rm e}^{\lambda '' },x_3,\bar{x}_3)\\& =\left(x_2-x_3\right)^{\langle\alpha',\alpha''\rangle}\left(\Bar{x}_2-\Bar{x}_3\right)^{\langle\beta',\beta''\rangle}Y_{V_\Lambda}({\rm e}^\lambda,x_1,\bar{x}_1)\exp\left(\int\alpha'(x_2)^-\right)\exp\left(\int\alpha''(x_3)^-\right)\\&\quad\exp\left(\int\alpha'(x_2)^+\right)\quad\exp\left(\int\alpha''(x_3)^+\right)\exp\left(\int\beta'(\bar{x}_2)^-\right)\exp\left(\int\beta''(\bar{x}_3)^-\right)\\&\quad\exp\left(\int\beta'(\bar{x}_2)^+\right)\exp\left(\int\beta''(\bar{x}_3)^+\right) {\rm e}_{\lambda'} {\rm e}_{\lambda''}x_2^{\alpha'}\Bar{x}_2^{\beta'} x_3^{\alpha''}\Bar{x}_3^{\beta''}\\&
 = \left(x_2-x_3\right)^{\langle\alpha',\alpha''\rangle}\left(\Bar{x}_2-\Bar{x}_3\right)^{\langle\beta',\beta''\rangle}  \exp\left(\int \alpha(x_1)^-\right)\exp\left(\int \alpha(x_1)^+\right)  \\  &\quad\exp\left(\int \beta(\Bar{x}_1)^-\right)\exp\left(\int \beta(\Bar{x}_1)^+\right){\rm e}_{\lambda} \, x_1^{\alpha}\Bar{x}_1^{\beta}\exp\left(\int\alpha'(x_2)^-\right)\exp\left(\int\alpha''(x_3)^-\right)\\&\quad\exp\left(\int\alpha'(x_2)^+\right)\quad\exp\left(\int\alpha''(x_3)^+\right)\exp\left(\int\beta'(\bar{x}_2)^-\right)\exp\left(\int\beta''(\bar{x}_3)^-\right)\\&\quad\exp\left(\int\beta'(\bar{x}_2)^+\right)\exp\left(\int\beta''(\bar{x}_3)^+\right) {\rm e}_{\lambda'} {\rm e}_{\lambda''}x_2^{\alpha'}\Bar{x}_2^{\beta'} x_3^{\alpha''}\Bar{x}_3^{\beta''}.
\end{split}    
\end{equation}
Now using \eqref{eq:exexalpalp+-} and \eqref{eq:xealp} we get 
\begin{equation}\label{eq:locthreeverop}
\begin{split}
 &Y_{V_\Lambda}({\rm e}^\lambda,x_1,\bar{x}_1)Y_{V_\Lambda}({\rm e}^{\lambda ' },x_2,\bar{x}_2)Y_{V_\Lambda}({\rm e}^{\lambda '' },x_3,\bar{x}_3)\\&=\left(x_1-x_2\right)^{\langle\alpha,\alpha'\rangle}\left(\Bar{x}_1-\Bar{x}_2\right)^{\langle\beta,\beta'\rangle} \left(x_1-x_3\right)^{\langle\alpha,\alpha''\rangle}\left(\Bar{x}_1-\Bar{x}_3\right)^{\langle\beta,\beta''\rangle}\left(x_2-x_3\right)^{\langle\alpha',\alpha''\rangle}\left(\Bar{x}_2-\Bar{x}_3\right)^{\langle\beta',\beta''\rangle}\\&\quad  \exp\left(\int \alpha(x_1)^-\right) \exp\left(\int\alpha'(x_2)^-\right)\exp\left(\int\alpha''(x_3)^-\right)\exp\left(\int \alpha(x_1)^+\right)\\&\quad \exp\left(\int\alpha'(x_2)^+\right)\quad\exp\left(\int\alpha''(x_3)^+\right)\exp\left(\int \beta(\Bar{x}_1)^-\right)\exp\left(\int\beta'(\bar{x}_2)^-\right)\\&\quad\exp\left(\int\beta''(\bar{x}_3)^-\right)\exp\left(\int \beta(\Bar{x}_1)^+\right)\exp\left(\int\beta'(\bar{x}_2)^+\right)\exp\left(\int\beta''(\bar{x}_3)^+\right)\\&\quad {\rm e}_{\lambda} \, {\rm e}_{\lambda'} {\rm e}_{\lambda''}x_1^{\alpha}\Bar{x}_1^{\beta}x_2^{\alpha'}\Bar{x}_2^{\beta'} x_3^{\alpha''}\Bar{x}_3^{\beta''}  \\&\equiv \left(x_1-x_2\right)^{\langle\alpha,\alpha'\rangle}\left(\Bar{x}_1-\Bar{x}_2\right)^{\langle\beta,\beta'\rangle} \left(x_1-x_3\right)^{\langle\alpha,\alpha''\rangle}\left(\Bar{x}_1-\Bar{x}_3\right)^{\langle\beta,\beta''\rangle}\left(x_2-x_3\right)^{\langle\alpha',\alpha''\rangle}\left(\Bar{x}_2-\Bar{x}_3\right)^{\langle\beta',\beta''\rangle} \\
 & \quad F(x_1,x_2,x_3)G(\Bar{x}_1,\Bar{x}_2,\Bar{x}_3).   
\end{split}    
\end{equation}
Substituting complex variables $x_1=z_1,x_2=z_2$ and $x_3=z_3$, it is easy to show that the right hand side of \eqref{eq:locthreeverop} is the expansion of the function 
\begin{equation}\label{eq:m3locthreeverop}
\begin{split}
&\exp\left(\langle\alpha,\alpha'\rangle\log(z_1-z_2)\right)\exp\left(\langle\beta,\beta'\rangle\log(\bar{z}_1-\bar{z}_2)\right)\exp\left(\langle\alpha,\alpha''\rangle\log(z_1-z_3)\right)\\&\exp\left(\langle\beta,\beta''\rangle\log(\bar{z}_1-\bar{z}_3)\right)\exp\left(\langle\alpha',\alpha''\rangle\log(z_2-z_3)\right)\exp\left(\langle\beta',\beta''\rangle\log(\bar{z}_2-\bar{z}_3)\right)\\&F(z_1,z_2,z_3)G(\bar{z}_1,\bar{z}_2,\Bar{z}_3)
\end{split}    
\end{equation}
in the region $|z_1|>|z_2|>|z_3|$. Following the same arguments as in Section \ref{sec:localproof}, we see that the three vertex operators satisfy locality for transpositions $(12),(23)\in S_3$. For the transposition $(13)\in S_3$, we get a sign $(-1)^{\lambda\circ\lambda'+\lambda'\circ\lambda''+\lambda\circ\lambda''}$ from the exchange of operators ${\rm e}_{\lambda''} \, {\rm e}_{\lambda'} {\rm e}_{\lambda}\to {\rm e}_{\lambda} \, {\rm e}_{\lambda'} {\rm e}_{\lambda''}$. This sign can be used to show that the functions appearing in \eqref{eq:locthreeverop} after the permutation $z_1\leftrightarrow z_3$ is the expansion of the function \eqref{eq:m3locthreeverop} in the region $|z_3|>|z_2|>|z_1|$. Since $S_3$ is generated by the three permutations $(12),(23),(13)$, the locality property holds for any permutation in $S_3$.\par The proof for an arbitrary number of vertex operator is entirely similar. \\\\ We now prove locality for general vectors of the form \eqref{eq:genvect}. We first want to calculate 
\begin{equation}
 Y_{V_\Lambda}(u,x_1,\bar{x}_1)Y_{V_\Lambda}(v,x_2,\bar{x}_2)Y_{V_\Lambda}(w,x_3,\bar{x}_3),      
\end{equation}
with vectors $u,v,w$ of the form \eqref{eq:genvect}:
\begin{equation}\label{eq:genvect_2}
\begin{split}
  & u =  \left( \alpha_{1}(-l_1)\cdot\alpha_{2}(-l_2)\cdots\alpha_{k}(-l_k) \cdot \beta_{1}(-\bar{l}_1)\cdot\beta_{2}(-\bar{l}_2)\cdots\beta_{\bar{k}}(-\bar{l}_{\bar{k}}) \right)  \otimes {\rm e}^{(\alpha,\beta)}  \, ,\\
 & v =  \left( \alpha'_{1}(-m_1)\cdot\alpha'_{2}(-m_2)\cdots\alpha'_{\ell}(-m_{\ell}) \cdot \beta'_{1}(-\bar{m}_1)\cdot\beta'_{2}(-\bar{m}_2)\cdots\beta'_{\bar{\ell}}(-\bar{m}_{\bar{\ell}}) \right)  \otimes {\rm e}^{(\alpha',\beta')} \, , \\
  & w =  \left( \alpha''_{1}(-n_1)\cdot\alpha''_{2}(-n_2)\cdots\alpha''_{m}(-n_m) \cdot \beta''_{1}(-\bar{n}_1)\cdot\beta''_{2}(-\bar{n}_2)\cdots\beta''_{\bar{m}}(-\bar{n}_{\bar{m}}) \right)  \otimes {\rm e}^{(\alpha'',\beta'')}  \, ,
\end{split}
\end{equation}
and we further use the notation that $\lambda = (\alpha, \beta), \lambda' = (\alpha', \beta'),$ and $\lambda'' = (\alpha'', \beta'')$. 
Using \eqref{generallocality1}, we have 
\begin{equation*}
\begin{split}
 & Y_{V_\Lambda}(u,x_1,\bar{x}_1)Y_{V_\Lambda}(v,x_2,\bar{x}_2)Y_{V_\Lambda}(w,x_3,\bar{x}_3)=\left(x_1-x_2\right)^{\langle\alpha,\alpha'\rangle}\left(\bar{x}_1-\bar{x}_2\right)^{\langle\beta,\beta'\rangle}Y_{V_\Lambda}(u,x_1,\bar{x}_1)\\&\times\exp\left(\int\alpha'(x_2)^-\right)\exp\left(\int\alpha''(x_3)^-\right)\exp\left(\int\beta'(\bar{x}_2)^-\right)\exp\left(\int\beta''(\bar{x}_3)^-\right)\\   
&
\times\typecolon\prod_{r=1}^\ell\left[\frac{1}{(m_r-1)!}\frac{d^{m_r-1}\alpha'_{r}(x_2)}{dx_2^{m_r-1}}+(-1)^{m_r-1}\left(\frac{\langle\alpha'',\alpha'_r\rangle}{(x_2-x_3)^{m_r}}-\frac{\langle\alpha'',\alpha'_r\rangle}{x_2^{m_r}}\right)\right]\\&\times\prod_{s=1}^{\bar{\ell}} 
\left[\frac{1}{(\bar{m}_s-1)!}\frac{d^{\bar{m}_s-1}\beta'_{s}(\bar{x}_2)}{d\bar{x}_2^{\bar{m}_s-1}}+(-1)^{\bar{m}_s-1}\left(\frac{\langle\beta'',\beta'_s\rangle}{(\bar{x}_2-\bar{x}_3)^{\bar{m}_s}}-\frac{\langle\beta'',\beta'_s\rangle}{\bar{x}_2^{\bar{m}_s}}\right)\right]
\typecolon \\
\end{split}
\end{equation*}
\begin{equation}
\begin{split}
&\times  \typecolon\prod_{p=1}^{m}\left[\frac{1}{(n_p-1)!}\frac{d^{n_p-1}\alpha_{p}''(x_3)}{dx_{3}^{n_p-1}}-\frac{\langle\alpha',\alpha_p''\rangle}{(x_2-x_3)^{n_p}}-(-1)^{n_p-1}\frac{\langle\alpha',\alpha_p''\rangle}{x_3^{n_p}}\right]\\&\times\prod_{q=1}^{\bar{m}} \left[\frac{1}{(\bar{n}_q-1)!}\frac{d^{\bar{n}_q-1}\beta_{q}''(\bar{x}_3)}{d\bar{x}_3
^{\bar{n}_q-1}}-\frac{\langle\beta',\beta_q''\rangle}{(\bar{x}_2-\bar{x}_3)^{\bar{n}_q}}-(-1)^{\bar{n}_q-1}\frac{\langle\beta',\beta_q''\rangle}{\bar{x}_3^{\bar{n}_q}}\right]\typecolon \\&\times\exp\left(\int\alpha'(x_2)^+\right) \exp\left(\int\beta'(\bar{x}_2)^+\right)\exp\left(\int\alpha''(x_3)^+\right)\exp\left(\int\beta''(\bar{x}_3)^+\right)\\ 
& \times {\rm e}_{\lambda'}{\rm e}_{\lambda''} x_2^{\alpha'}\Bar{x}_2^{\beta'} x_3^{\alpha''}\Bar{x}_3^{\beta''} \, .
\end{split}
\end{equation}
Now using the general expression for vertex operator \eqref{eq:genveropexp} and the relations \eqref{normalorderchange1}, \eqref{normalorderchange2}, \eqref{eq:alph0elamb}, and \eqref{eq:beta0elamb} successively on the product in normal order,  along with relations \eqref{eq:exexalpalp+-},\eqref{eq:xealp} and the formal variable identity \eqref{eq:multx1-x2}, we obtain 
\begin{equation}
\begin{split}
 & Y_{V_\Lambda}(u,x_1,\bar{x}_1)Y_{V_\Lambda}(v,x_2,\bar{x}_2)Y_{V_\Lambda}(w,x_3,\bar{x}_3)\\&=\left(x_1-x_2\right)^{\langle\alpha,\alpha'\rangle}\left(\Bar{x}_1-\Bar{x}_2\right)^{\langle\beta,\beta'\rangle} \left(x_1-x_3\right)^{\langle\alpha,\alpha''\rangle}\left(\Bar{x}_1-\Bar{x}_3\right)^{\langle\beta,\beta''\rangle}\left(x_2-x_3\right)^{\langle\alpha',\alpha''\rangle}\left(\Bar{x}_2-\Bar{x}_3\right)^{\langle\beta',\beta''\rangle}\\&\times\exp\left(\int\alpha(x_1)^-\right)\exp\left(\int\alpha'(x_2)^-\right)\exp\left(\int\alpha''(x_3)^-\right)\\&\times\exp\left(\int\beta(\bar{x}_1)^-\right)\exp\left(\int\beta'(\bar{x}_2)^-\right)\exp\left(\int\beta''(\bar{x}_3)^-\right)\\   
&
\times\typecolon\prod_{i=1}^{k}\left[\frac{1}{(l_i-1)!}\frac{d^{l_i-1}\alpha_{i}(x_1)}{dx_1^{l_i-1}}+(-1)^{l_i-1}\left(\frac{\langle\alpha',\alpha_i\rangle}{(x_1-x_2)^{l_i}}-\frac{\langle\alpha',\alpha_i\rangle}{x_1^{l_i}}+\frac{\langle\alpha'',\alpha_i\rangle}{(x_1-x_3)^{l_i}}-\frac{\langle\alpha'',\alpha_i\rangle}{x_1^{l_i}}\right)\right]\\&\times\prod_{j=1}^{\bar{k}} 
\left[\frac{1}{(\bar{l}_j-1)!}\frac{d^{\bar{l}_j-1}\beta_{j}(\bar{x}_1)}{d\bar{x}_1^{\bar{l}_j-1}}+(-1)^{\bar{l}_j-1}\left(\frac{\langle\beta',\beta_j\rangle}{(\bar{x}_1-\bar{x}_2)^{\bar{l}_j}}-\frac{\langle\beta',\beta_j\rangle}{\bar{x}_1^{\bar{l}_j}}+\frac{\langle\beta'',\beta_j\rangle}{(\bar{x}_1-\bar{x}_3)^{\bar{l}_j}}-\frac{\langle\beta'',\beta_j\rangle}{\bar{x}_1^{\bar{l}_j}}\right)\right]
\typecolon\\   
&
\times\typecolon\prod_{r=1}^\ell\left[\frac{1}{(m_r-1)!}\frac{d^{m_r-1}\alpha'_{r}(x_2)}{dx_2^{m_r-1}}+(-1)^{m_r-1}\left(\frac{\langle\alpha'',\alpha'_r\rangle}{(x_2-x_3)^{m_r}}-\frac{\langle\alpha'',\alpha'_r\rangle}{x_2^{m_r}}-\frac{\langle\alpha,\alpha'_r\rangle}{x_2^{m_r}}\right)-\frac{\langle\alpha,\alpha'_r\rangle}{(x_1-x_2)^{m_r}}\right]\\&\times\prod_{s=1}^{\bar{\ell}} 
\left[\frac{1}{(\bar{m}_s-1)!}\frac{d^{\bar{m}_s-1}\beta'_{s}(\bar{x}_2)}{d\bar{x}_2^{\bar{m}_s-1}}+(-1)^{\bar{m}_s-1}\left(\frac{\langle\beta'',\beta'_s\rangle}{(\bar{x}_2-\bar{x}_3)^{\bar{m}_s}}-\frac{\langle\beta'',\beta'_s\rangle}{\bar{x}_2^{\bar{m}_s}}-\frac{\langle\beta,\beta'_s\rangle}{\bar{x}_2^{\bar{m}_s}}\right)-\frac{\langle\beta,\beta'_s\rangle}{(\bar{x}_1-\bar{x}_2)^{\bar{m}_s}}\right]
\typecolon \\&\times  \typecolon\prod_{p=1}^{m}\left[\frac{1}{(n_p-1)!}\frac{d^{n_p-1}\alpha_{p}''(x_3)}{dx_{3}^{n_p-1}}-\frac{\langle\alpha',\alpha_p''\rangle}{(x_2-x_3)^{n_p}}-\frac{\langle\alpha,\alpha_p''\rangle}{(x_1-x_3)^{n_p}}-(-1)^{n_p-1}\left(\frac{\langle\alpha',\alpha_p''\rangle}{x_3^{n_p}}+\frac{\langle\alpha,\alpha_p''\rangle}{x_3^{n_p}}\right)\right]\\&\times\prod_{q=1}^{\bar{m}} \left[\frac{1}{(\bar{n}_q-1)!}\frac{d^{\bar{n}_q-1}\beta_{q}''(\bar{x}_3)}{d\bar{x}_3
^{\bar{n}_q-1}}-\frac{\langle\beta',\beta_q''\rangle}{(\bar{x}_2-\bar{x}_3)^{\bar{n}_q}}-\frac{\langle\beta,\beta_q''\rangle}{(\bar{x}_1-\bar{x}_3)^{\bar{n}_q}}-(-1)^{\bar{n}_q-1}\left(\frac{\langle\beta',\beta_q''\rangle}{\bar{x}_3^{\bar{n}_q}}+\frac{\langle\beta,\beta_q''\rangle}{\bar{x}_3^{\bar{n}_q}}\right)\right]\typecolon \\&\times\exp\left(\int\alpha(x_1)^+\right)\exp\left(\int\alpha'(x_2)^+\right)\exp\left(\int\alpha''(x_3)^+\right)\\&\times\exp\left(\int\beta(\bar{x}_1)^+\right) \exp\left(\int\beta'(\bar{x}_2)^+\right)\exp\left(\int\beta''(\bar{x}_3)^+\right)\\ 
& \times {\rm e}_{\lambda}{\rm e}_{\lambda'}{\rm e}_{\lambda''} x_1^{\alpha}\Bar{x}_1^{\beta}x_2^{\alpha'}\Bar{x}_2^{\beta'} x_3^{\alpha''}\Bar{x}_3^{\beta''} \, .
\end{split}
\end{equation}
To prove locality, we substitute formal variables with complex numbers $x_i\to z_i,\bar{x}_i\to \bar{z}_i,~i=1,2,3$. We can write 
\begin{equation}\label{eq:YV3fgprod}
\begin{split}
 & Y_{V_\Lambda}(u,z_1,\bar{z}_1)Y_{V_\Lambda}(v,z_2,\bar{z}_2)Y_{V_\Lambda}(w,z_3,\bar{z}_3)\\&=f(z_1,z_2,z_3,\bar{z}_1,\bar{z}_2,\Bar{z}_3) \times\exp\left(\int\alpha(z_1)^-\right)\exp\left(\int\alpha'(z_2)^-\right)\exp\left(\int\alpha''(z_3)^-\right)\\&\times\exp\left(\int\beta(\bar{z}_1)^-\right)\exp\left(\int\beta'(\bar{z}_2)^-\right)\exp\left(\int\beta''(\bar{z}_3)^-\right)\\   
&
\times\typecolon\prod_{i=1}^{k}\left[\frac{1}{(l_i-1)!}\frac{d^{l_i-1}\alpha_{i}(z_1)}{dz_1^{l_i-1}}+f_{1,i}(z_1,z_2,z_3)\right] \prod_{j=1}^{\bar{k}} 
\left[\frac{1}{(\bar{l}_j-1)!}\frac{d^{\bar{l}_j-1}\beta_{j}(\bar{z}_1)}{d\bar{z}_1^{\bar{l}_j-1}}+g_{1,j}(\bar{z}_1,\bar{z}_2,\bar{z}_3)\right]
\typecolon\\   
&
\times\typecolon\prod_{r=1}^\ell\left[\frac{1}{(m_r-1)!}\frac{d^{m_r-1}\alpha'_{r}(z_2)}{dz_2^{m_r-1}}+f_{2,r}(z_1,z_2,z_3)\right] \prod_{s=1}^{\bar{\ell}} 
\left[\frac{1}{(\bar{m}_s-1)!}\frac{d^{\bar{m}_s-1}\beta'_{s}(\bar{z}_2)}{d\bar{z}_2^{\bar{m}_s-1}}+g_{2,s}(\bar{z}_1,\bar{z}_2,\Bar{z}_3)\right]
\typecolon \\&\times  \typecolon\prod_{p=1}^{m}\left[\frac{1}{(n_p-1)!}\frac{d^{n_p-1}\alpha_{p}''(z_3)}{dz_{3}^{n_p-1}}+f_{3,p}(z_1,z_2,z_3)\right] \prod_{q=1}^{\bar{m}} \left[\frac{1}{(\bar{n}_q-1)!}\frac{d^{\bar{n}_q-1}\beta_{q}''(\bar{z}_3)}{d\bar{z}_3
^{\bar{n}_q-1}}+g_{3,q}(\bar{z}_1,\bar{z}_2,\Bar{z}_3)\right]\typecolon \\&\times\exp\left(\int\alpha(z_1)^+\right)\exp\left(\int\alpha'(z_2)^+\right)\exp\left(\int\alpha''(z_3)^+\right)\\&\times\exp\left(\int\beta(\bar{z}_1)^+\right) \exp\left(\int\beta'(\bar{z}_2)^+\right)\exp\left(\int\beta''(\bar{z}_3)^+\right){\rm e}_{\lambda}{\rm e}_{\lambda'}{\rm e}_{\lambda''} z_1^{\alpha}\Bar{z}_1^{\beta}z_2^{\alpha'}\Bar{z}_2^{\beta'} z_3^{\alpha''}\Bar{z}_3^{\beta''} \,,
\end{split}
\end{equation}
in the domain $|z_1|>|z_2|>|z_3|$,
where 
\begin{equation}
\begin{split}
&f(z_1,z_2,z_3,\bar{z}_1,\bar{z}_2,\Bar{z}_3) \\&=\exp\left(\langle\alpha,\alpha'\rangle\log(z_1-z_2)\right)\exp\left(\langle\beta,\beta'\rangle\log(\bar{z}_1-\bar{z}_2)\right)\exp\left(\langle\alpha,\alpha''\rangle\log(z_1-z_3)\right)\\&\quad\exp\left(\langle\beta,\beta''\rangle\log(\bar{z}_1-\bar{z}_3)\right)\exp\left(\langle\alpha',\alpha''\rangle\log(z_2-z_3)\right)\exp\left(\langle\beta',\beta''\rangle\log(\bar{z}_2-\bar{z}_3)\right), \\&f_{1,i}(z_1,z_2,z_3)=(-1)^{l_i-1}\left(\frac{\langle\alpha',\alpha_i\rangle}{\exp(l_i\log(z_1-z_2))}-\frac{\langle\alpha',\alpha_i\rangle}{\exp(l_i\log z_1)}\right.\\&\left.\hspace{6cm}+\frac{\langle\alpha'',\alpha_i\rangle}{\exp(l_i\log(z_1-z_3))}-\frac{\langle\alpha'',\alpha_i\rangle}{\exp(l_i\log z_1)}\right)\\&g_{1,j}(\bar{z}_1,\bar{z}_2,\bar{z}_3)=(-1)^{\bar{l}_j-1}\left(\frac{\langle\beta',\beta_j\rangle}{\exp(\bar{l}_j\log(\bar{z}_1-\bar{z}_2))}-\frac{\langle\beta',\beta_j\rangle}{\exp(\bar{l}_j\log \bar{z}_1)}\right.\\&\left.\hspace{6cm}+\frac{\langle\beta'',\beta_j\rangle}{\exp(\bar{l}_j\log(\bar{z}_1-\bar{z}_3))}-\frac{\langle\beta'',\beta_j\rangle}{\exp(\bar{l}_j\log \bar{z}_1)}\right), 
\end{split}
\end{equation}
\begin{equation}
\begin{split}
&f_{2,r}(z_1,z_2,z_3)=(-1)^{m_r-1}\left(\frac{\langle\alpha'',\alpha'_r\rangle}{\exp(m_r\log(z_2-z_3))}-\frac{\langle\alpha'',\alpha'_r\rangle}{\exp(m_r\log z_2)}-\frac{\langle\alpha,\alpha'_r\rangle}{\exp(m_r\log z_2)}\right)\\&\hspace{11cm}-\frac{\langle\alpha,\alpha'_r\rangle}{\exp(m_r\log(z_1-z_2))},
\\&g_{2,s}(\bar{z}_1,\bar{z}_2,\bar{z}_3)=(-1)^{\bar{m}_s-1}\left(\frac{\langle\beta'',\beta'_s\rangle}{\exp(\bar{m}_s\log(\bar{z}_2- \bar{z}_3))}-\frac{\langle\beta'',\beta'_s\rangle}{\exp(m_r\log(\bar{z}_2))}-\frac{\langle\beta,\beta'_s\rangle}{\exp(m_r\log(\bar{z}_2))}\right) \\
&\hspace{11cm}-\frac{\langle\beta,\beta'_s\rangle}{\exp(\bar{m}_s\log(\bar{z}_1-\bar{z}_2))},\\&f_{3,p}(z_1,z_2,z_3)=-\frac{\langle\alpha',\alpha_p''\rangle}{\exp(n_p\log(z_2-z_3))}-\frac{\langle\alpha,\alpha_p''\rangle}{\exp(n_p\log(z_1-z_3))}\\&\hspace{3cm}-(-1)^{n_p-1}\left(\frac{\langle\alpha',\alpha_p''\rangle}{\exp(n_p\log z_3)}+\frac{\langle\alpha,\alpha_p''\rangle}{\exp(n_p\log z_3)}\right),\\&g_{3,q}(\bar{z}_1,\bar{z}_2,\bar{z}_3)=-\frac{\langle\beta',\beta_q''\rangle}{\exp(\bar{n}_q\log(\bar{z}_2-\bar{z}_3))}-\frac{\langle\beta,\beta_q''\rangle}{\exp(\bar{n}_q\log(\bar{z}_1-\bar{z}_3))}\\&\hspace{3cm}-(-1)^{\bar{n}_q-1}\left(\frac{\langle\beta',\beta_q''\rangle}{\exp(\bar{n}_q\log\bar{z}_3)}+\frac{\langle\beta,\beta_q''\rangle}{\exp(\bar{n}_q\log\bar{z}_3)}\right).    
\end{split}    
\end{equation}
One can check that under any permutation $\sigma\in S_3$ of the vertex operators, the product is given by the right hand side of \eqref{eq:YV3fgprod} in the domain $|z_{\sigma(1)}|>|z_{\sigma(2)}|>|z_{\sigma(3)}|$.\par For more than three vertex operators, the locality can be proved in a similar way. 
\newpage
\section{Heisenberg algebras and their representations}\label{app:heisalgreps}
In this appendix, we prove some properties of Heisenberg algebras which is used in the paper. See \cite[Section 1.7]{Frenkel:1988xz} for definitions. We start with some results about Lie algebras.
\subsection{Preliminaries}
Let $U$ be a left $R$-module and $D=\mathrm{End}_R(U)$ be the ring of $R$-linear endomorphisms of $U$. One can consider $U$ as a left $D$-module by defining 
\begin{equation}
    \varphi\cdot u=\varphi(u),\quad \varphi\in D,~~u\in U \, .
\end{equation}
The $D$-linear (in)dependence of a subset of $U$ is defined in an obvious way.
A $D$-linear operator on $U$ is a map $f:U\longrightarrow U$ such that 
\begin{equation}
    f(\varphi\cdot u-v)=\varphi\cdot f(u)-f(v)=\varphi(f(u))-f(v),\quad\varphi\in D,~~u,v\in U \, .
\end{equation} 
\begin{thm}\emph{(Jacobson Density Theorem, \cite[Theorem 13.14]{isaacs})}\label{eq:Jac_Den_Thm}
Let $U$ be a simple left $R$-module and write $D=\operatorname{End}_R(U)$. Let $\alpha$ be any $D$-linear operator on $U$ and let $X \subseteq U$ be any finite $D$-linearly independent subset. Then there exists an element $r \in R$ such that $rx=\alpha x$ for all $x \in X$.    
\end{thm}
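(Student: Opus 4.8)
The plan is to prove the Jacobson Density Theorem by induction on $|X|$, using the semisimplicity of $U$ as the essential input. \textbf{Base case.} If $X=\{x\}$, then $D$-linear independence of a singleton just means $x\neq 0$, so $Rx$ is a nonzero $R$-submodule of the simple module $U$ and hence equals $U$; in particular $\alpha x\in U=Rx$, so some $r\in R$ satisfies $rx=\alpha x$.

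\textbf{Inductive step.} Assume the statement for $D$-linearly independent subsets of size $n-1$, and let $X=\{x_1,\dots,x_n\}$ be $D$-linearly independent. Set $I=\{r\in R : rx_i=0 \text{ for } i=1,\dots,n-1\}$, a left ideal of $R$. The key claim, and the step I would do first, is that $Ix_n=U$. Since $Ix_n$ is an $R$-submodule of the simple module $U$, it is either $0$ or $U$, so suppose for contradiction that $Ix_n=0$. Consider the $R$-module map $\phi\colon R\to U^{\,n-1}$, $r\mapsto(rx_1,\dots,rx_{n-1})$; its image $V$ is a submodule of $U^{\,n-1}$ and its kernel is exactly $I$. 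The assumption $Ix_n=0$ makes $(rx_1,\dots,rx_{n-1})\mapsto rx_n$ a well-defined $R$-homomorphism $V\to U$. Because $U$ is simple it is semisimple, hence so is the finite direct sum $U^{\,n-1}$, so $V$ is a direct summand and the map extends to $\psi\colon U^{\,n-1}\to U$. Every $R$-homomorphism $U^{\,n-1}\to U$ has the form $\psi(u_1,\dots,u_{n-1})=\sum_{i=1}^{n-1}\varphi_i(u_i)$ with $\varphi_i\in D=\operatorname{End}_R(U)$, via the identification $\operatorname{Hom}_R(U^{\,n-1},U)\cong D^{\,n-1}$. Evaluating at $r=1$ gives $x_n=\sum_{i=1}^{n-1}\varphi_i(x_i)$, contradicting the $D$-linear independence of $X$; hence $Ix_n=U$.

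\textbf{Conclusion of the induction.} By the inductive hypothesis applied to $\{x_1,\dots,x_{n-1}\}$ (which is $D$-linearly independent) there is $r_0\in R$ with $r_0x_i=\alpha x_i$ for $i=1,\dots,n-1$. Since $\alpha x_n-r_0x_n\in U=Ix_n$, pick $r_1\in I$ with $r_1x_n=\alpha x_n-r_0x_n$ and put $r=r_0+r_1$. For $i<n$ we have $r_1x_i=0$, so $rx_i=r_0x_i=\alpha x_i$, while $rx_n=r_0x_n+r_1x_n=\alpha x_n$; this $r$ is the required element, completing the induction.

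The routine parts are the manipulations with $I$ and the repeated appeal to simplicity of $U$. The one place demanding genuine care — the main obstacle — is the claim $Ix_n=U$, specifically the extension argument: one must use that $U^{\,n-1}$ is semisimple to split off the submodule $V=\phi(R)$, and combine this with the explicit description of $\operatorname{Hom}_R(U^{\,n-1},U)$ as $D^{\,n-1}$, in order to upgrade the ``$R$-level'' statement into the sharper $D$-linear dependence that contradicts the hypothesis on $X$.
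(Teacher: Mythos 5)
Your proof is correct: it is the classical density-theorem argument (annihilator ideal $I$ of $x_1,\dots,x_{n-1}$, semisimplicity of $U^{\,n-1}$ to split off $V=\phi(R)$, and the identification $\operatorname{Hom}_R(U^{\,n-1},U)\cong D^{\,n-1}$), and the induction closes exactly as you describe. The paper itself gives no proof — it simply cites Isaacs — and your argument is essentially that standard one; the only implicit hypothesis worth flagging is that $R$ is unital and $U$ unital (needed when you evaluate at $r=1$), which holds in the paper's application where $R=U(\mf{g}_1)$.
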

Finally, we will need a generalisation of Schur's lemma to countably-infinite dimensional representations.
\begin{lemma}\emph{(Dixmier's Lemma, \cite[Lemma 100]{libine2021introduction})}\label{eq:Dix_Lem}
Suppose $V$ is a vector space over
$\C$ of countable dimension and that $S\subset \mathrm{End}(V)$ acts irreducibly. If $T \in \mathrm{End}(V)$ commutes with every element of $S$,
then $T$ is a scalar multiple of the identity operator.    
\end{lemma}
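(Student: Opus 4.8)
The plan is to imitate the classical proof of Schur's lemma, but to deal with the fact that $T$ may a priori have no eigenvalue by invoking a cardinality argument. First I would note that $\operatorname{End}(V)$, regarded as an algebra over $\C$, contains the subalgebra $\C[T]$ generated by $T$, and that since $T$ commutes with every element of $S$, every rational function of $T$ that is defined also commutes with $S$. So the first step is to show $T$ has an eigenvalue in $\C$. Suppose not: then for every $\mu\in\C$ the operator $T-\mu\,\mathrm{id}$ is injective; I would argue it is also surjective, because its image is an $S$-invariant subspace (as $T-\mu\,\mathrm{id}$ commutes with $S$) — here one must be slightly careful, since irreducibility of the $S$-action gives that the only $S$-invariant subspaces are $0$ and $V$, and the image is nonzero as $V\neq 0$. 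Hence $T-\mu\,\mathrm{id}$ is invertible in $\operatorname{End}(V)$ for every $\mu\in\C$.

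The second step is the cardinality contradiction. If $T$ has no eigenvalue, the previous paragraph shows that for every $\mu\in\C$ the inverse $(T-\mu\,\mathrm{id})^{-1}$ exists in $\operatorname{End}(V)$. Fix any nonzero vector $v\in V$. I claim the family of vectors $\{(T-\mu\,\mathrm{id})^{-1}v : \mu\in\C\}$ is linearly independent over $\C$: a nontrivial linear relation among finitely many of them, cleared of denominators, would produce a nonzero polynomial $p(T)$ with $p(T)v=0$, and factoring $p$ over $\C$ together with invertibility of each $T-\mu\,\mathrm{id}$ forces $v=0$, a contradiction. Thus $V$ contains a linearly independent family indexed by $\C$, so $\dim_\C V\geq |\C|$, which is uncountable. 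This contradicts the hypothesis that $V$ has countable dimension. Therefore $T$ has an eigenvalue $\lambda\in\C$.

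The final step is routine: let $\lambda$ be an eigenvalue of $T$, so $\ker(T-\lambda\,\mathrm{id})\neq 0$. Since $T-\lambda\,\mathrm{id}$ commutes with every element of $S$, its kernel is an $S$-invariant subspace of $V$; being nonzero and $S$ acting irreducibly, it must be all of $V$. Hence $T=\lambda\,\mathrm{id}$, a scalar multiple of the identity, as claimed.

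The main obstacle is the middle step — producing the contradiction when $T$ has no eigenvalue. The key insight is that countable-dimensionality is exactly what is needed: one converts ``no eigenvalue'' into ``$\{(T-\mu)^{-1}v\}_{\mu\in\C}$ is uncountable and independent'', which collides with the dimension hypothesis. I expect the verification that this family is independent (via clearing denominators to get a polynomial annihilating $v$) to require a little care but no deep idea; everything else is a direct transcription of Schur's lemma using that invariant subspaces under a set acting irreducibly are trivial.
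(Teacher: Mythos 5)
Your proof is correct, and it is the standard argument for Dixmier's Lemma: if $T$ had no eigenvalue, each $T-\mu\,\mathrm{id}$ is injective with $S$-invariant (hence full) image, so the resolvent family $\{(T-\mu\,\mathrm{id})^{-1}v\}_{\mu\in\C}$ is an uncountable linearly independent set, contradicting countable dimension; then the eigenspace $\ker(T-\lambda\,\mathrm{id})$ is $S$-invariant and nonzero, hence all of $V$. The paper itself does not prove this lemma but only cites it (Lemma 100 of the reference), and your argument is essentially the proof given there, so there is nothing to reconcile.
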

\begin{prop}\label{prop:infdimtensrep}
Let $\mf{g}_1,\mf{g}_2$ be two complex Lie algebras (possibly countably infinite dimensional). Then every irreducible $(\mf{g}_1\oplus\mf{g}_2)$-module which contains an irreducible $\mf{g}_1$ or $\mf{g}_2$-module is isomorphic to the tensor product of two irreducible $\mf{g}_1,\mf{g}_2$-modules. Conversely, the tensor product of irreducible $\mf{g}_1,\mf{g}_2$-modules of countable dimension is an irreducible $(\mf{g}_1\oplus\mf{g}_2)$-module. 
\begin{proof}
Let $V$ be an irreducible $(\mf{g}_1\oplus\mf{g}_2)$-module and suppose $Y\subseteq V$ is an irreducible $\mf{g}_2$-module. $X=\text{Hom}_{\mf{g}_2}(Y,V)$ is the space of $\mf{g}_2$-linear maps from $Y$ to $V$, i.e.
\begin{equation}
\begin{split}
    X = & \{ \phi : Y \to V \, \lvert \, \phi(g_2\cdot v-w)=g_2\cdot \phi(v)-\phi(w),~\forall g_2 \in \mf{g}_2,v,w\in Y \} \, .
\end{split}
\end{equation}
Then $X$ is a $\mf{g}_1$-module under the bilinear map 
\begin{equation}
(g_1,\phi)\mapsto g_1\cdot\phi: y\mapsto g_1\cdot (\phi(y)) \, .
\end{equation} 
It is easy to see that $X\otimes Y$ is a $(\mf{g}_1\oplus\mf{g}_2)$-module with the action
\begin{equation}\label{eq:g1+g2actxoy}
    (g_1,g_2)\cdot (\phi\otimes y)=(g_1\cdot \phi_g)\otimes y+\phi\otimes (g_2\cdot y)\, .
\end{equation}
Moreover the natural map from $X\otimes Y$ to $V$
\begin{equation}\label{eq:XxYtoVmap}
    \phi\otimes y\mapsto \phi(y) \, , 
\end{equation}
is a $(\mf{g}_1\oplus\mf{g}_2)$-module map. We first show the map \eqref{eq:XxYtoVmap} is non-zero. Take $\phi = \text{id}:Y \to V$, which clearly lies in $X$, then $\text{id} \otimes y \mapsto \text{id}(y) = y$. Hence, the (non-zero) image of this $(\mf{g}_1\oplus\mf{g}_2)$-module map, \eqref{eq:XxYtoVmap}, is a $(\mf{g}_1\oplus\mf{g}_2)$-module. As $V$ is an irreducible $(\mf{g}_1\oplus\mf{g}_2)$-module, the image is entire $V$, and the map \eqref{eq:XxYtoVmap} is surjective. Finally, we show this map is injective. Indeed suppose $\phi\otimes v\mapsto \phi(v)=0$. Then it suffices to show that either $v=0$ or $\phi=0$. Suppose $v\neq 0$. By $\mf{g}_2$-linearity of $\phi$, we see that
\begin{equation}
    0=g_2\cdot\phi(v)=\phi(g_2\cdot v),\quad \forall g_2\in\mf{g}_2 \, .
\end{equation}
By irreducibility of $Y$, we see that $\phi=0$. This implies that $V\cong X\otimes Y$ as $(\mf{g}_1\oplus\mf{g}_2)$-modules, and that the latter is also an irreducible module. As $X\otimes Y$ is an irreducible $(\mf{g}_1\oplus\mf{g}_2)$-module, $X$ must be an irreducible $\mf{g}_1$ module. Similar arguments apply when $Y$ is an irreducible $\mf{g}_1$-module.
\par To prove the converse, we follow \cite{stex1}. We want to show that if $X,Y$ are irreducible $\mf{g}_1,\mf{g}_2$-modules, then $X\otimes Y$ with $(\mf{g}_1\oplus\mf{g}_2)$-action defined by 
\begin{equation}
  (g_1,g_2)\cdot (x\otimes y)=(g_1\cdot x)\otimes y+x\otimes (g_2\cdot y) \, , 
\end{equation}
is an irreducible $(\mf{g}_1\oplus\mf{g}_2)$-module. Suppose $M\subset X\otimes Y$ be a non-trivial $(\mf{g}_1\oplus\mf{g}_2)$-submodule. It suffices to show that $M$ contains non-trivial pure tensors and then the irreducibility of $X,Y$ will imply that $M=X\otimes Y$. Let us assume that $x \otimes y$ is a pure tensor in $M$. The irreducibility of $X,Y$ guarantees that $\mf{g}_1 \cdot x  = X$ and $\mf{g}_2 \cdot y = Y $, which guarantees that $x \otimes Y$ and $X \otimes y \subset M$. Consider any pure tensor $\tilde{x} \otimes \tilde{y}\in V$, we know that $\tilde{x} \otimes y$ belongs in $M$. Applying only $\mf{g}_2$ on this vector, we can show that  $\tilde{x} \otimes \tilde{y} \in M$, which implies that $M = X \otimes Y$, since pure tensors span $X \otimes Y$.

We now show that $M$ contains a pure tensor. Since any $\mf{g}$-module is also a $U(\mf{g})$-module, our strategy is to produce a pure tensor from an arbitrary vector in $M$ using the action of an element of $U(\mf{g}_1\oplus\mf{g}_2)$.  
Start with an arbitrary non-zero vector 
\begin{equation}
v=\sum_{i=1}^n x_i \otimes y_i \in M \, .
\end{equation}
Without loss of generality, we may choose $\left\{x_i\right\}$ to be linearly independent over $\mathbb{C}$ and $y_i \neq 0$ for all $i$. By Dixmier's Lemma (or Schur's lemma when $X,Y$ are finite dimensional), we have 
\begin{equation}\label{eq:dix_res}
\operatorname{End}_{U(\mf{g}_1)}\left(X\right)\cong\mathbb{C},\quad \operatorname{End}_{U(\mf{g}_2)}\left(Y\right)\cong\mathbb{C} \, .
\end{equation}
In the statement of Jacobson Density Theorem, choose $U = X, R = U(\mf{g}_1)$, hence $D = \text{End}_{U(\mf{g}_1)}(X) = \C$, due to \eqref{eq:dix_res}. Thus, let us take $\{x_1, x_2,\ldots,x_n \}$, which is a finite $D$-linearly independent subset. Consider a $\C$- linear map $\alpha$, such that 
\begin{equation}
\alpha x_i= \begin{cases}x_1 & i=1, \\ 0 & i>1.\end{cases}
\end{equation}
We can then use Jacobson Density Theorem to conclude that there exists $u \in U(\mf{g}_1)$ such that
\begin{equation}
    \alpha x_i = u x_i \, , 
\end{equation}
 $\forall \, 1 \leq i \leq n$. Hence, for $u \otimes 1 \in U(\mf{g}_1) \otimes U(\mf{g}_2)\cong U\left(\mathfrak{g}_1 \oplus \mathfrak{g}_2\right)$,
\begin{equation}
(u \otimes 1) v=x_1 \otimes y_1 \, ,  
\end{equation}
which is a pure tensor.
\end{proof}
\end{prop}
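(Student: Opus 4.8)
The plan is to treat the two directions separately, throughout using the factorization $U(\mf{g}_1 \oplus \mf{g}_2) \cong U(\mf{g}_1) \otimes U(\mf{g}_2)$, so that a $(\mf{g}_1 \oplus \mf{g}_2)$-module is precisely a vector space carrying commuting actions of $\mf{g}_1$ and $\mf{g}_2$.

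For the first direction, let $V$ be an irreducible $(\mf{g}_1\oplus\mf{g}_2)$-module containing an irreducible $\mf{g}_2$-submodule $Y$ (the case of an irreducible $\mf{g}_1$-submodule being symmetric). The device I would use is the multiplicity space $X := \operatorname{Hom}_{\mf{g}_2}(Y,V)$; since $\mf{g}_1$ commutes with $\mf{g}_2$, the rule $(g_1\cdot\phi)(y)=g_1\cdot\phi(y)$ makes $X$ a $\mf{g}_1$-module, and $X\otimes Y$ becomes a $(\mf{g}_1\oplus\mf{g}_2)$-module with $\mf{g}_1$ acting on the first factor and $\mf{g}_2$ on the second. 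I would then show that evaluation $\phi\otimes y\mapsto\phi(y)$ is a morphism of $(\mf{g}_1\oplus\mf{g}_2)$-modules $X\otimes Y\to V$ which is (i) nonzero, because the inclusion $\iota\colon Y\hookrightarrow V$ lies in $X$ and $\iota\otimes y\mapsto y$, hence surjective by irreducibility of $V$; and (ii) injective, by analysing the $\mf{g}_2$-module structure: $X\otimes Y$ is a direct sum of copies of the irreducible $Y$, so a nonzero kernel would contain some $\phi\otimes Y$ on which evaluation vanishes, forcing $\phi=0$. This gives $V\cong X\otimes Y$, and finally an invariant subspace of $X$ would yield one of $X\otimes Y\cong V$, so $X$ is itself irreducible.

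For the converse, let $X,Y$ be irreducible modules of countable dimension over $\mf{g}_1,\mf{g}_2$ and let $0\neq M\subseteq X\otimes Y$ be a $(\mf{g}_1\oplus\mf{g}_2)$-submodule. It suffices to exhibit one nonzero pure tensor $x\otimes y$ in $M$: then $\mf{g}_1\cdot x=X$ and $\mf{g}_2\cdot y=Y$ by irreducibility, so $x\otimes Y$ and $X\otimes y$ lie in $M$, and sweeping a general $\tilde x\otimes y\in M$ with $\mf{g}_2$ shows every pure tensor is in $M$, whence $M=X\otimes Y$. To produce such a tensor, start from any $v=\sum_{i=1}^n x_i\otimes y_i\in M$ with $\{x_i\}$ linearly independent over $\C$ and all $y_i\neq0$, and apply $u\otimes1\in U(\mf{g}_1)\otimes U(\mf{g}_2)\cong U(\mf{g}_1\oplus\mf{g}_2)$ for a $u\in U(\mf{g}_1)$ with $ux_1=x_1$ and $ux_i=0$ for $i>1$; then $(u\otimes1)v=x_1\otimes y_1$. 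The existence of $u$ is where the Jacobson Density Theorem (Theorem \ref{eq:Jac_Den_Thm}) enters, applied with $R=U(\mf{g}_1)$ and the simple module $X$, after Dixmier's Lemma (Lemma \ref{eq:Dix_Lem}) identifies $D=\operatorname{End}_{U(\mf{g}_1)}(X)$ with $\C$ so that ``$D$-linearly independent'' just means linearly independent over $\C$.

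The step I expect to be the real obstacle is this last invocation of Jacobson density: one must correctly line up the ingredients ($R=U(\mf{g}_1)$, $U=X$, $D=\operatorname{End}_{U(\mf{g}_1)}(X)$), use Dixmier's Lemma — which is exactly where the countable-dimension hypothesis is unavoidable, replacing the usual Schur's lemma — to see $D\cong\C$, and only then prescribe the target operator on the finite $\C$-independent set $\{x_i\}$. A lesser point of care is verifying that the evaluation map in the first direction respects the \emph{full} $(\mf{g}_1\oplus\mf{g}_2)$-action rather than just one factor's, which reduces to unwinding the module structure on $X\otimes Y$ and using that $\mf{g}_1$ and $\mf{g}_2$ act on complementary tensor slots.
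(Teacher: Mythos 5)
Your proposal follows essentially the same route as the paper: the multiplicity space $X=\operatorname{Hom}_{\mf{g}_2}(Y,V)$ with the evaluation map $\phi\otimes y\mapsto\phi(y)$ for the first direction, and, for the converse, the reduction to producing a pure tensor in a nonzero submodule via the Jacobson Density Theorem with $R=U(\mf{g}_1)$, $U=X$, after Dixmier's Lemma identifies $\operatorname{End}_{U(\mf{g}_1)}(X)\cong\C$. The only divergence is your injectivity argument for the evaluation map, which treats the kernel as a $\mf{g}_2$-submodule of a direct sum of copies of $Y$ rather than checking pure tensors as the paper does; this is a slightly more careful variant of the same step and does not change the overall approach.
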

\subsection{Modules of direct sum of Heisenberg algebras}
We now prove some basic results about the modules of direct sum of Heisenberg algebras. We will use the notations of \cite[Chapter 1]{Frenkel:1988xz} in this section.
\par For a $\Z$-graded vector space 
\begin{equation}
 \mf{g}=\bigoplus_{n\in\Z}\mf{g}_n \, ,   
\end{equation}
we will write 
\begin{equation}
    \mf{g}^+:=\bigoplus_{n>0}\mf{g}_n,\quad \mf{g}^-:=\bigoplus_{n<0}\mf{g}_n,\quad \mf{g}^0:=\mf{g}_0 \, .
\end{equation}
Let $\mf{g}_1,\mf{g}_2$ be two Heisenberg algebras with central elements \textbf{k} and $\bar{\textbf{k}}$ respectively.  
Analogous to the $\mathfrak{C}_{k}$ condition for a module of a Heisenberg algebra (see \cite[Page 23]{Frenkel:1988xz}), we define   
the $\mathfrak{C}_{k , \bar{k}}$ condition for an $(\R \times \R)$- graded module $V$ of the direct sum $\mf{g}_1 \oplus \mf{g}_2$ of two Heisenberg algebras $\mf{g}_1,\mf{g}_2$. We say that $V$ satisfies the $\mathfrak{C}_{k , \bar{k}}$ condition if  
\begin{enumerate}
    \item \textbf{k} and $\bar{\textbf{k}}$ act on $V$ by multiplication with $k$ and $\bar{k}$ respectively. 
    \item There exist $M_1,M_2 \in \R$, such that $V_{m,n} = 0$, if either $m > M_1$ and $n > M_2$.   
\end{enumerate}
The \textit{vacuum space} of $V$, denoted by $\Omega_{V} \subset V $, is a subspace of non-zero vectors such that any $v \in \Omega_V$ satisfies  $(\mf{g}_1^{+} \oplus \mf{g}_2^{+} )\cdot v = 0$. 
Let $M(k),M(\bar{k})$ denote the unique (up to isomorphism) irreducible module of $\mf{g}_1,\mf{g}_2$ respectively. In particular, we have \cite[Section 1.7]{Frenkel:1988xz} 
\begin{equation}
    M(k)\cong S(\mf{g}_1^-),\quad M(\bar{k})\cong S(\mf{g}_2^-) \, .
\end{equation}
Then we have the following proposition. 
\begin{prop}
Let $\mf{g}_1,\mf{g}_2$ be two Heisenberg algebras with central elements $\mathbf{k},\bar{\mathbf{k}}$ respectively. Then the following are true:
\begin{enumerate}
    \item  $M(k)\otimes M(\Bar{k})$ is the unique (up to isomorphism) irreducible $(\mf{g}_1\oplus\mf{g}_2)$-module satisfying condition $\mf{C}_{k,\Bar{k}}$.
    \item The vacuum space $\Omega_{M(k)\otimes M(\Bar{k})}$ is one-dimensional and 
\begin{equation}
    \Omega_{M(k)\otimes M(\Bar{k})}=\C(1_{M(k)}\otimes 1_{M(\bar{k})}) \, .
\end{equation}  
\item For any $(\mf{g}_1\oplus\mf{g}_2)$-module satisfying condition $\mf{C}_{k,\Bar{k}}$, the  $(\mf{g}_1\oplus\mf{g}_2)$-module generated by a vacuum vector is equivalent to $M(k)\otimes M(\Bar{k})$.  
\end{enumerate} 
\begin{proof}
(1) Any $(\mf{g}_1\oplus\mf{g}_2)$-module is in particular a $\mf{g}_2$-module and hence completely reducible by \cite[Proposition 1.7.2] {Frenkel:1988xz}. Thus it contains an irreducible $\mf{g}_2$-module. By Proposition \ref{prop:infdimtensrep}, every such irreducible $(\mf{g}_1\oplus\mf{g}_2)$-module is isomorphic to a tensor product $X\otimes Y$ where $X$, $Y$ are irreducible $\mf{g}_1,\mf{g}_2$-module respectively. By \cite[Proposition 1.7.2] {Frenkel:1988xz} we have $X\otimes Y\cong M(k)\otimes M(\Bar{k})$. \\\\(2) Since $M(k),M(\bar{k})$ satisfies the conditions $\mf{C}_k,\mf{C}_{\bar{k}}$ of \cite[Page 23]{Frenkel:1988xz}, it is clear that $M(k)\otimes M(\Bar{k})$ satisfies the condition $\mf{C}_{k,\Bar{k}}$. The vacuum space of $M(k)\otimes M(\Bar{k})$ is easily seen to be the tensor product of the vacuum spaces of $M(k)$ and $M(\Bar{k})$ respectively and hence is one-dimensional.\\\\(3) Let $V$ be any $(\mf{g}_1\oplus\mf{g}_2)$-module satisfying condition $\mf{C}_{k,\Bar{k}}$. Let $v\in V$ and consider the $(\mf{g}_1\oplus\mf{g}_2)$-submodule of $V$
\begin{equation}
    \{(g_1,g_2)\cdot v : (g_1,g_2)\in\mf{g}_1\oplus\mf{g}_2\} \, .
\end{equation}
It is easily seen that this gives an irreducible $(\mf{g}_1\oplus\mf{g}_2)$-module and is isomorphic to $M(k)\otimes M(\Bar{k})$  by (1).
\end{proof}
\end{prop}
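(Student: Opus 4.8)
The plan is to prove the three claims in sequence, reducing each one to facts already available: Proposition \ref{prop:infdimtensrep} on tensor decomposition of irreducibles of $\mf{g}_1\oplus\mf{g}_2$, and the standard Heisenberg representation theory of \cite[Proposition 1.7.2]{Frenkel:1988xz} (uniqueness and structure of $M(k)\cong S(\mf{g}_1^-)$, $M(\bar k)\cong S(\mf{g}_2^-)$). The key observation throughout is that any $(\mf{g}_1\oplus\mf{g}_2)$-module is \emph{a fortiori} a $\mf{g}_2$-module (and a $\mf{g}_1$-module), so the full reducibility results for single Heisenberg algebras apply.

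For part (1): first I would verify that $M(k)\otimes M(\bar k)$ satisfies $\mf{C}_{k,\bar k}$. The central elements act as $k$ and $\bar k$ by construction of the tensor action; and since $M(k)$ satisfies $\mf{C}_k$ there is $M_1$ with $M(k)_m=0$ for $m>M_1$, likewise $M_2$ for $M(\bar k)$, so the bigraded pieces $(M(k)\otimes M(\bar k))_{m,n}$ vanish for $m>M_1$ or $n>M_2$. For uniqueness: let $V$ be an irreducible module with $\mf{C}_{k,\bar k}$. Viewing $V$ as a $\mf{g}_2$-module, \cite[Proposition 1.7.2]{Frenkel:1988xz} (together with the grading bound, which supplies the $\mf{C}_{\bar k}$ hypothesis after restricting gradings appropriately) gives that $V$ contains an irreducible $\mf{g}_2$-submodule. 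Proposition \ref{prop:infdimtensrep} then forces $V\cong X\otimes Y$ with $X$ irreducible over $\mf{g}_1$, $Y$ irreducible over $\mf{g}_2$, both of countable dimension. On $X$, $\mathbf{k}$ acts as $k$ and the grading restriction descends, so $X\cong M(k)$ by the single-Heisenberg uniqueness; similarly $Y\cong M(\bar k)$. Hence $V\cong M(k)\otimes M(\bar k)$.

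For part (2): $\Omega_{M(k)}=\C\,1_{M(k)}$ and $\Omega_{M(\bar k)}=\C\,1_{M(\bar k)}$ by the single-Heisenberg theory. A vector $\sum_i x_i\otimes y_i$ (with $\{x_i\}$ linearly independent) is killed by $\mf{g}_1^+\oplus\mf{g}_2^+$ iff it is killed by $\mf{g}_1^+\otimes 1$ and by $1\otimes\mf{g}_2^+$ separately; running the usual linear-independence argument on each tensor factor shows each $x_i\in\Omega_{M(k)}$ and each $y_i\in\Omega_{M(\bar k)}$, whence the vacuum space is exactly $\C(1_{M(k)}\otimes 1_{M(\bar k)})$, one-dimensional. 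For part (3): given any $V$ with $\mf{C}_{k,\bar k}$ and a vacuum vector $v$, the submodule $U(\mf{g}_1\oplus\mf{g}_2)\cdot v$ inherits $\mf{C}_{k,\bar k}$; I would then show it is irreducible by an induced-module/PBW argument (the Heisenberg commutation relations let one reduce any element to the vacuum line using the annihilation operators, exactly as in the single-algebra case), so it is an irreducible module with $\mf{C}_{k,\bar k}$ and hence isomorphic to $M(k)\otimes M(\bar k)$ by part (1). The main obstacle I anticipate is the bookkeeping in part (1) around the grading hypotheses: the $\mf{C}_{k,\bar k}$ condition is stated with a bigrading and a ``$m>M_1$ \emph{and} $n>M_2$'' clause, so I need to be careful that restricting to a $\mf{g}_2$-module gives a genuine $\mf{C}_{\bar k}$-type bound (fixing the $\mf{g}_1$-grading, or passing to a suitable sub-object) before invoking \cite[Proposition 1.7.2]{Frenkel:1988xz}; once that is pinned down, everything else is a direct application of the cited results.
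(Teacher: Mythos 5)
Your plan is correct and follows essentially the same route as the paper: restrict to the $\mf{g}_2$-action and use complete reducibility to find an irreducible $\mf{g}_2$-submodule, invoke Proposition \ref{prop:infdimtensrep} to split the module as $X\otimes Y$, identify the factors with $M(k)$ and $M(\bar k)$ via the single-Heisenberg uniqueness, read off the vacuum space as the tensor product of the two one-dimensional vacuum spaces, and reduce part (3) to part (1) via the submodule generated by a vacuum vector. The extra care you flag about the bigrading hypothesis and the explicit irreducibility argument in (3) only fills in details the paper leaves implicit.
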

We now have the following theorem.
\begin{thm}\label{thm:g1+g2modheis}
Any $(\mf{g}_1\oplus\mf{g}_2)$-module is completely reducible and is isomorphic to copies of $M(k)\otimes M(\Bar{k})$. More precisely, for any such module $V$, the (well-defined) canonical linear map   
\begin{equation}\label{eq:Ckkbar_red}
\begin{split}   & f \, : \, U(\mf{g}_1\oplus\mf{g}_2) \otimes _{U(\mf{b}_1\oplus\mf{b}_2)}\Omega_V\longrightarrow V \, , \\&u\otimes v\mapsto u\cdot v,\quad u\in U(\mf{g}_1\oplus\mf{g}_2),v\in\Omega_V \, ,
\end{split}
\end{equation}
is a $(\mf{g}_1\oplus\mf{g}_2)$-module isomorphism. In particular, the linear map 
\begin{equation}
\begin{split}
    M(k)\otimes M(\Bar{k})\otimes_\C\Omega_V \, \cong&U(\mf{g}_1^-\oplus\mf{g}_2^-)\otimes_\C\Omega_V\longrightarrow V \, , \\&u\otimes v\mapsto u\cdot v,\quad u\in U(\mf{g}_1^-\oplus\mf{g}_2^-),v\in\Omega_V  \, ,
\end{split}    
\end{equation}
defines a $(\mf{g}_1\oplus\mf{g}_2)$-module isomorphism, $\Omega_V$ now regarded as a trivial $(\mf{g}_1\oplus\mf{g}_2)$-module. 
\end{thm}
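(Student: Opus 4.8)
The plan is to reduce everything to the corresponding statements for a single Heisenberg algebra, which are proved in \cite[Section 1.7]{Frenkel:1988xz}, together with the tensor-product dictionary established in Proposition \ref{prop:infdimtensrep} and in the preceding proposition. First I would check that the two maps in the statement are well-defined. For the first, $\mf{b}_i := \mf{g}_i^{+} \oplus \mf{g}_i^{0}$ acts on $\Omega_V$ — the $\mf{g}_i^{+}$ part by zero by definition of the vacuum space, and the central elements $\mathbf{k},\bar{\mathbf{k}}$ by the scalars $k,\bar k$ from condition $\mf{C}_{k,\bar k}$, so that the relative tensor product $U(\mf{g}_1\oplus\mf{g}_2)\otimes_{U(\mf{b}_1\oplus\mf{b}_2)}\Omega_V$ makes sense — and the map $u\otimes v\mapsto u\cdot v$ then descends because the $\mf{b}_i$-action on $V$ agrees with that on $\Omega_V$. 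The second map is just the restriction along the PBW isomorphism $U(\mf{g}_1\oplus\mf{g}_2)\cong U(\mf{g}_1^{-}\oplus\mf{g}_2^{-})\otimes U(\mf{b}_1\oplus\mf{b}_2)$, and $U(\mf{g}_1^{-}\oplus\mf{g}_2^{-})\cong U(\mf{g}_1^{-})\otimes U(\mf{g}_2^{-})\cong S(\mf{g}_1^{-})\otimes S(\mf{g}_2^{-})\cong M(k)\otimes M(\bar k)$ as vector spaces.

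Next I would prove surjectivity. The image of $f$ is a $(\mf{g}_1\oplus\mf{g}_2)$-submodule of $V$, so it suffices to show $\Omega_V\neq 0$ for $V\neq 0$ and that $V$ is generated by $\Omega_V$. Both follow from the grading-boundedness in condition $\mf{C}_{k,\bar k}$: given any nonzero homogeneous $v\in V_{m,n}$, repeatedly applying raising operators from $\mf{g}_1^{+}\oplus\mf{g}_2^{+}$ strictly increases the bidegree, so after finitely many steps (using the bound $M_1,M_2$) one lands in the vacuum space; running this argument over a homogeneous spanning set shows $U(\mf{g}_1^{-}\oplus\mf{g}_2^{-})\cdot\Omega_V = V$, which is exactly surjectivity of $f$. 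For injectivity I would argue as follows: decompose $\Omega_V$ into one-dimensional pieces (it is a trivial module), pick a basis $\{v_\alpha\}$ of $\Omega_V$, and observe that by the preceding proposition each cyclic submodule $U(\mf{g}_1\oplus\mf{g}_2)\cdot v_\alpha$ is irreducible and isomorphic to $M(k)\otimes M(\bar k)$, with $v_\alpha$ spanning its (one-dimensional) vacuum space. A relation $\sum_\alpha u_\alpha\cdot v_\alpha = 0$ with $u_\alpha\in U(\mf{g}_1^{-}\oplus\mf{g}_2^{-})$ can be projected, degree by degree, onto each isotypic component; using that the map $U(\mf{g}_1^{-}\oplus\mf{g}_2^{-})\to M(k)\otimes M(\bar k)$, $u\mapsto u\cdot 1$, is a linear isomorphism (the single-Heisenberg statement, tensored), one concludes each $u_\alpha = 0$.

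The cleanest way to package injectivity, and the step I expect to be the main obstacle, is establishing that the vacuum vectors $\{v_\alpha\}$ remain linearly independent "over $U(\mf{g}_1^{-}\oplus\mf{g}_2^{-})$" inside $V$ — equivalently, that distinct cyclic submodules $U(\mf{g}_1\oplus\mf{g}_2)\cdot v_\alpha$ are independent. This is really a Schur-type separation: if two such submodules had nonzero intersection, their intersection would be a nonzero submodule of an irreducible module, hence equal to both, forcing the vacuum spaces to coincide, contradicting that the $v_\alpha$ are a basis. To make this rigorous in the countably-infinite-dimensional setting I would invoke Dixmier's Lemma (Lemma \ref{eq:Dix_Lem}) exactly as in the proof of the converse in Proposition \ref{prop:infdimtensrep}, or alternatively quote \cite[Proposition 1.7.3 or its analogue]{Frenkel:1988xz} applied to $\mf{g}_1^{-}\oplus\mf{g}_2^{-}$. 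Once injectivity and surjectivity are in hand, $f$ is a $(\mf{g}_1\oplus\mf{g}_2)$-module isomorphism; restricting along the PBW decomposition gives the second displayed isomorphism and the statement that $V$ is a direct sum of $\dim\Omega_V$ copies of $M(k)\otimes M(\bar k)$, hence completely reducible. I would close by remarking that when $\Omega_V$ is infinite-dimensional one simply takes the direct sum over a basis, the argument being unchanged.
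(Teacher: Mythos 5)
Your surjectivity argument has a genuine gap, and it is exactly where the paper has to work hardest. Knowing that repeated application of raising operators from $\mf{g}_1^{+}\oplus\mf{g}_2^{+}$ eventually produces vacuum vectors (or zero) only shows $\Omega_V\neq 0$; it does not show $V=U(\mf{g}_1^{-}\oplus\mf{g}_2^{-})\cdot\Omega_V$. If you try to run it as a downward induction on the grading, the inductive step you actually need is: "if $x_{1i}\cdot v$ and $x_{2j}\cdot v$ lie in $\mathrm{Im}(f)$ for all $i,j>0$, then $v\in\mathrm{Im}(f)$" — and this is false as a formal implication; it is precisely the content that must be proved. The paper's proof handles it by passing to $V/\mathrm{Im}(f)$, taking a vacuum vector $[v]$ there, and then explicitly constructing $t\in\mathrm{Im}(f)$ with $x_{ki}\cdot t=x_{ki}\cdot v$ for all $i$ and $k=1,2$: one identifies each summand $U(\mf{g}_1\oplus\mf{g}_2)\otimes_{U(\mf{b}_1\oplus\mf{b}_2)}\C\omega_\gamma$ with a polynomial algebra in the $y_{1i},y_{2j}$, notes that the components $s^k_{i\gamma}$ of $x_{ki}\cdot v$ satisfy the "closedness" relations $x_{1i}\cdot s^1_{i'\gamma}=x_{1i'}\cdot s^1_{i\gamma}$ (and similarly for $k=2$), and then integrates — a formal Poincar\'e-lemma step solving $k\,\partial s/\partial y_{1i}=s^1_{i\gamma}$, using the grading bound to ensure only finitely many variables occur. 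Then $t-v$ is a genuine vacuum vector of $V$ lying outside $\mathrm{Im}(f)\supseteq\Omega_V$, a contradiction. Without some version of this potential construction (or an equivalent, e.g.\ a number-operator/complete-reducibility argument for a single Heisenberg algebra applied with care to the direct sum), your surjectivity claim does not go through.

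Your assessment of where the difficulty lies is therefore inverted: injectivity is the easy half. The paper disposes of it in a few lines — the kernel $K$ of $f$ is a graded submodule of the induced module satisfying $\mf{C}_{k,\bar k}$, so if nonzero it contains a vacuum vector, which must lie in $\Omega_V$ (the vacuum space of the induced module is exactly $1\otimes\Omega_V$), contradicting injectivity of $f$ on $\Omega_V$. Your proposed injectivity route via "projection onto isotypic components" also has a wrinkle: all the cyclic submodules $U(\mf{g}_1\oplus\mf{g}_2)\cdot v_\alpha$ are isomorphic to $M(k)\otimes M(\bar k)$, so they lie in a single isotypic component and there is no projection separating them; pairwise trivial intersection (your Schur-type observation) does not by itself give directness of the full sum. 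These points are repairable, but the surjectivity step is the missing idea you would need to supply.
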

\begin{proof}
We closely follow \cite{Frenkel:1988xz} for this proof. 
First, we show the $f$ in \eqref{eq:Ckkbar_red} is an injective map. Note that from the action of $f$, it is clear that $f$ is injective on $1\otimes\Omega_V\hookrightarrow U(\mf{g}_1\oplus\mf{g}_2)\otimes_{U(\mf{b}_1\oplus\mf{b}_2)}\Omega_V$. Let $K$ be the kernel of $f$. Then it is easy to see that $K$ is a $(\mf{g}_1 \oplus \mf{g}_2)$-submodule of $U(\mf{g}_1\oplus\mf{g}_2)\otimes_{U(\mf{b}_1\oplus\mf{b}_2)}\Omega_V$ and has a grading induced from $U(\mf{g}_1\oplus\mf{g}_2)\otimes_{U(\mf{b}_1\oplus\mf{b}_2)}\Omega_V$. Thus $K$ satisfies the condition
$\mathfrak{C}_{k,\bar{k}}$. It follows that it has a vacuum vector, say $v\in K$. But then $v\in \Omega_V$ because the vacuum space of $U(\mf{g}_1\oplus\mf{g}_2)\otimes_{U(\mf{b}_1\oplus\mf{b}_2)}\Omega_V$ is precisely $\Omega_V$. Since $f(v)=0$, it contradicts the fact that $f$ is injective on $\Omega_V$.

\bigskip 

We now prove the surjectivity of $f$. Suppose $V/\text{Im}(f)\neq 0$. Since $\text{Im}(f)$ is a $(\mf{g}_1 \oplus \mf{g}_2)$-submodule of $V$, $V/\text{Im}(f)$ is naturally a $(\mf{g}_1 \oplus \mf{g}_2)$-module and satisfies the condition $\mf{C}_{k,\bar{k}}$ since $V$ does. Then there exists a vacuum vector $w\in V/\text{Im}(f)$. Let $w=[v]$ for some $v\in V$. It follows that $v\not\in \text{Im}(f)$ and 
\begin{equation}
x_{1i}\cdot v,x_{2i}\cdot v\in\text{Im}(f),\quad i\in\Z_+    \, ,
\end{equation}
since $x_{1i}\cdot [v],x_{2i}\cdot [v]=0$. Moreover, due to the $\mf{C}_{k, \bar{k}}$ property, there exists $i_0,j_0\in\Z_+$ such that 
\begin{equation}\label{eq:xionv=0}
x_{1i}\cdot v,x_{2j}\cdot v=0~~\text{for all}~~i>i_0,j>j_0 \, .     
\end{equation}
We will now show that there exists $t\in\text{Im}(f)$ such that 
\begin{equation}\label{eq:x12ont}
    x_{ki}\cdot t= x_{ki}\cdot v,~~\text{for all}~~i\in\Z_+ \text{ and } k \in \{ 1,2\} \, .
\end{equation}
This will imply that $t-v$ is a vacuum vector but $t-v\in V\setminus\Omega_V$,  since $t-v\not\in \text{Im}(f)$ and $\Omega_V\subset\text{Im}(f)$, which is a contradiction. To this end, choose a basis $\{\omega_\gamma\}_{\gamma\in\Gamma}$ ($\Gamma$ an index set) of $\Omega_V$. Then by injectivity of $f$ and the first isomorphism theorem 
\begin{equation}\label{eq:Imfdecomp}
    \text{Im}(f)\cong\coprod_{\gamma\in\Gamma}U(\mf{g}_1\oplus\mf{g}_2)\otimes_{U(\mf{b}_1\oplus\mf{b}_2)}\C\omega_\gamma  \, .
\end{equation}
Let $s^1_{i\gamma},s^2_{j\gamma}$ be the component of $x_{1i}\cdot v,x_{2j}\cdot v$ respectively under this decomposition. Then for any $i,i',j,j'\in\Z_+$ we have
\begin{equation}
    x_{1i}x_{1i'}\cdot v=x_{1i'}x_{1i}\cdot v,\quad x_{2i}x_{2i'}\cdot v=x_{2i'}x_{2i}\cdot v.
\end{equation}
as the Lie Bracket is zero on $\mf{g}_1^{+}$ and $\mf{g}_2^{+}$. This implies that for all $\gamma\in\Gamma$ 
\begin{equation}\label{eq:x12ons12}
x_{1i}\cdot s^1_{i'\gamma}=x_{1i'}\cdot s^1_{i\gamma},\quad x_{2j}\cdot s^2_{j'\gamma}=x_{2j'}\cdot s^2_{j\gamma}.    
\end{equation}
It is also clear from \eqref{eq:xionv=0} that 
\begin{equation}
    s^1_{i\gamma}=s^2_{j\gamma}=0,\quad\text{for all}~~i>i_0,j>j_0 \, .
\end{equation}
 Moreover there is a finite subset $\Gamma_0\subset\Gamma$ such that 
 \begin{equation}
     s^1_{i\gamma}=s^2_{j\gamma}=0,\quad\text{for all}~~\gamma\in\Gamma\setminus\Gamma_0 \, ,
 \end{equation}
since any vector in Im$(f)$, in particular $x_{1i}\cdot v$ and $x_{2i}\cdot v$, has a finite decomposition in \eqref{eq:Imfdecomp}. Now, fix a $\gamma\in\Gamma_0$ and identify $U(\mf{g}_1\oplus\mf{g}_2)\otimes_{U(\mf{b}_1\oplus\mf{b}_2)}\C\omega_\gamma$ as the polynomial algebra over generators $y_{1i},y_{2i}$. Then \eqref{eq:x12ons12} implies that 
\begin{equation}\label{eq:pary12onsigamma}
    \frac{\partial}{\partial y_{1i}}s^1_{i'\gamma}=\frac{\partial}{\partial y_{1i'}}s^1_{i\gamma},\quad\frac{\partial}{\partial y_{2j}}s^2_{j'\gamma}=\frac{\partial}{\partial y_{2j'}}s^2_{j\gamma},\quad i,i',j,j'\in\Z_+ \, . 
\end{equation}
Since $s^1_{i\gamma},s^2_{j\gamma}=0$ for $i>i_0,j>j_0$, it is clear from \eqref{eq:pary12onsigamma} that $s^1_{i\gamma},s^2_{j\gamma}$ lie in the polynomial algebra generated by finitely many $y_{1i},y_{2j}$ respectively for $i\leq i_0,j\leq j_0$. Thus there exists $s^1,s^2$ in these algebras such that 
\begin{equation}
k\frac{\partial}{\partial y_{1i}}s^1=s^1_{i\gamma},\quad   \bar{k}\frac{\partial}{\partial y_{2i}}s^2=s^2_{i\gamma} \, , 
\end{equation}
for $i\leq i_0,j\leq j_0$ and hence for all $i,j\in\Z_+$. We then take $t^1_{\gamma}=s^1,t^2_\gamma=s^2$ and put 
\begin{equation}
    t:=\sum_{\gamma\in\Gamma}(t^1_{\gamma}+t^2_{\gamma}) \, .
\end{equation}
Note that $x_{2j}\cdot t^1_\gamma=x_{1i}\cdot t^2_\gamma=0$ and hence \eqref{eq:x12ont} holds.
\end{proof}
\newpage
\section{Proof of Conjecture \ref{conj:} for $m=n$ Case}\label{app:conjproof}
Suppose $\Lambda\subset\R^{m,m}$ is an even, self-dual Lorentzian lattice. We want to show that for any $f\in\text{Aut}(\Lambda)$ we have 
\begin{equation}
    f(\Lambda_1^0)=\Lambda_1^0,\quad f(\Lambda_2^0)=\Lambda_2^0. 
\end{equation}
We will prove this in a series of results. 
\begin{lemma}\label{lemma:appB}
Let $\Lambda$ and $\tilde{\Lambda}$ be two Lorentzian lattice related by an $\mathrm{O}(m,\R)\times\mathrm{O}(m,\R)$ transformation $O$, i.e.
\begin{equation}
   \mathcal{G}_{\tilde{\Lambda}}  = \mathcal{G}_\Lambda O  .
\end{equation}
Then we have 
\begin{equation}
    \mathrm{Aut}(\tilde{\Lambda})=O\mathrm{Aut}(\Lambda)O^{-1}.
\end{equation}
Moreover $\mathrm{Aut}(\Lambda)$ preserves $\Lambda_i^0$ if and only if $\mathrm{Aut}(\tilde{\Lambda})$ preserves $\tilde{\Lambda}_i^0$, where $i \in \{ 1,2 \}$.
\begin{proof}
Let $f\in\text{Aut}(\Lambda)$, then $\tilde{f}=OfO^{-1}\in\text{Aut}(\tilde{\Lambda})$. Thus $O\mathrm{Aut}(\Lambda)O^{-1}\subseteq\mathrm{Aut}(\tilde{\Lambda})$. The reverse containment is similar. 
\par
Now suppose Aut$(\tilde{\Lambda})$ preserves $\tilde{\Lambda}_1^0$. For any $(\alpha,0)\in\Lambda_1^0$, we have $f(\alpha,0)=(O^{-1}\tilde{f}O)(\alpha,0)$ for some $\tilde{f}\in\mathrm{Aut}(\tilde{\Lambda})$. Since $O,O^{-1}$ preserves $\langle\alpha,\alpha\rangle$ and $\langle\alpha,\alpha\rangle$ for any $(\alpha,\beta)\in\Lambda$ it is clear that 
\begin{equation}
    f(\alpha,0)=(\alpha',0)\in\Lambda_1^0.
\end{equation}
Similarly, $f$ preserves $\Lambda_2^0$. The converse is analogous.
\end{proof}
\end{lemma}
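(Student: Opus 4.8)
The plan is to work entirely inside $\mathrm{O}(m,m,\R)$, regarding both the transformation $O$ and every lattice automorphism as $\R$-linear isometries of $\R^{m,m}$; an automorphism of $\Lambda$ extends uniquely to such an isometry because an integral basis of $\Lambda$ is an $\R$-basis of $\R^{m,m}$. The first observation is that the generator relation $\mathcal{G}_{\tilde\Lambda}=\mathcal{G}_\Lambda O$ together with $O\in\mathrm{O}(m,\R)\times\mathrm{O}(m,\R)\subset\mathrm{O}(m,m,\R)$ exhibits $\lambda\mapsto\lambda O$ as an isometry of $\R^{m,m}$ carrying $\Lambda$ bijectively onto $\tilde\Lambda$ (the Lorentzian form is preserved since $O g_{m,m}O^{T}=g_{m,m}$); I will write this simply as $\tilde\Lambda=O(\Lambda)$. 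Once this is in place, the inner-product preservation of $O$ and $O^{-1}$ is automatic.

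For the first assertion I would prove both inclusions by conjugation. If $f\in\mathrm{Aut}(\Lambda)$ then $OfO^{-1}$ is a composition of isometries, hence an isometry, and
\begin{equation}
OfO^{-1}(\tilde\Lambda)=OfO^{-1}(O(\Lambda))=O(f(\Lambda))=O(\Lambda)=\tilde\Lambda,
\end{equation}
so $OfO^{-1}\in\mathrm{Aut}(\tilde\Lambda)$, giving $O\,\mathrm{Aut}(\Lambda)\,O^{-1}\subseteq\mathrm{Aut}(\tilde\Lambda)$. The reverse inclusion follows by applying the same argument to $O^{-1}$, which relates $\tilde\Lambda$ back to $\Lambda$; since conjugation by $O$ is a group homomorphism, this upgrades the set equality to a group isomorphism $\mathrm{Aut}(\Lambda)\cong\mathrm{Aut}(\tilde\Lambda)$.

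The decisive point for the second assertion---which the statement's phrasing partly hides---is that $O$ is \emph{block diagonal}: writing $O=\mathrm{diag}(O_1,O_2)$ with $O_i\in\mathrm{O}(m,\R)$, both $O$ and $O^{-1}=\mathrm{diag}(O_1^{-1},O_2^{-1})$ preserve each coordinate subspace $V_1=\R^m\times\{0\}$ and $V_2=\{0\}\times\R^m$. Recalling $\Lambda_i^0=\Lambda\cap V_i$ and $\tilde\Lambda_i^0=\tilde\Lambda\cap V_i$, bijectivity of $O$ then gives
\begin{equation}
O(\Lambda_i^0)=O(\Lambda)\cap O(V_i)=\tilde\Lambda\cap V_i=\tilde\Lambda_i^0,\qquad i=1,2.
\end{equation}
Assuming every $f\in\mathrm{Aut}(\Lambda)$ fixes $\Lambda_i^0$, I take an arbitrary $\tilde f\in\mathrm{Aut}(\tilde\Lambda)$, write $\tilde f=OfO^{-1}$ by the first part, and compute $\tilde f(\tilde\Lambda_i^0)=Of\bigl(O^{-1}(\tilde\Lambda_i^0)\bigr)=Of(\Lambda_i^0)=O(\Lambda_i^0)=\tilde\Lambda_i^0$; the converse is identical with the roles of $\Lambda,\tilde\Lambda$ and of $O,O^{-1}$ exchanged.

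There is no genuine analytic or algebraic obstacle here; the only step that must be handled with care is precisely the one flagged above. Preservation of norms alone does not force an automorphism (or $O$) to respect the splitting $\R^{m,m}=V_1\oplus V_2$, so the argument must explicitly invoke the block-diagonal form of $O$---equivalently, that $O$ lies in the subgroup $\mathrm{O}(m,\R)\times\mathrm{O}(m,\R)$ rather than in the full $\mathrm{O}(m,m,\R)$---to conclude $O(V_i)=V_i$ and hence $O(\Lambda_i^0)=\tilde\Lambda_i^0$. Everything else is bookkeeping with bijections of sets.
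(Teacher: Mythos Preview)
Your proof is correct and follows essentially the same route as the paper: conjugation for the first part, and for the second part the observation that $O\in\mathrm{O}(m,\R)\times\mathrm{O}(m,\R)$ respects the splitting $\R^{m,m}=V_1\oplus V_2$, hence carries $\Lambda_i^0$ to $\tilde\Lambda_i^0$. The paper phrases this last point as ``$O,O^{-1}$ preserve $\langle\alpha,\alpha\rangle$ and $\langle\beta,\beta\rangle$ separately'' rather than ``$O$ is block diagonal so $O(V_i)=V_i$,'' but these are equivalent (positive definiteness of the Euclidean form forces a vector with vanishing second-factor norm to lie in $V_1$); your formulation via $O(\Lambda_i^0)=O(\Lambda)\cap O(V_i)=\tilde\Lambda\cap V_i=\tilde\Lambda_i^0$ is arguably cleaner and makes the role of block-diagonality more transparent.
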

\begin{thm}\label{thm:lmabrellambS}
Let $\Lambda\subset\R^{m,m}$ be any even, self-dual Lorentzian lattice, then their exists an $\mathrm{O}(m,\R)\times\mathrm{O}(m,\R)$ transformation which relates $\Lambda$ to a Lorentzian lattice $\Lambda_S$ with generator matrix of the form 
\begin{equation}\label{eq:genmatGS}
\mathcal{G}_{\Lambda_S}=\begin{pmatrix}
\gamma^\star &\gamma^\star  \\\gamma\frac{B+\mathds{1}}{2}&\gamma \frac{B-\mathds{1}}{2}
\end{pmatrix} \, ,     
\end{equation}
where $B$ is an anti-symmetric matrix, $\gamma$ is the generator matrix for a lattice $\Gamma$ in $\R^{m}$, and $\gamma^{\star}$ is the generator matrix for the dual lattice $\Gamma^\star$. 
\begin{proof}
We will use the result of \cite[Appendix C]{Dymarsky:2020qom}   for the proof of this theorem. Let $\{e_i\}_{i=1}^{2m}$ be the standard basis of $\R^{m,m}$. Then we change the basis of $\R^{m,m}$ from standard basis to the basis $\{f_i\}_{i=1}^{2m}$:
\begin{equation}
\begin{split}
     f_i& =\frac{e_i+e_{m+i}}{\sqrt{2}}, \\
     f_{m+i}&=\frac{e_i-e_{m+i}}{\sqrt{2}},
\end{split}
\end{equation}
where $i \in \{1,\dots,m \}$. Let $\mathcal{G}_{\Lambda}$ and $\widetilde{\mathcal{G}}_{\Lambda}$ be the generator matrix for $\Lambda$ in the $\{e_i\}$ and $\{f_i\}$ basis respectively. Now from Appendix C of \cite{Dymarsky:2020qom}, by an $\mathrm{O}(m,\R)\times\mathrm{O}(m,\R)$ transformation, we can transform $\Lambda$ into the lattice $\Lambda_{S}$ with generator matrix\footnote{Note that in our convention, the rows of the generator matrix are basis for the lattice while in \cite{Dymarsky:2020qom} it is the columns.} 
\begin{equation}\label{eq:genmatGtildeS}
    \widetilde{\mathcal{G}}_{\Lambda_S}=\frac{1}{\sqrt{2}}\left(\begin{array}{c|c}
 2 \gamma^\star &0 \\
\hline \gamma B   & \gamma
\end{array}\right),
\end{equation}
in the $\{f_i\}$ basis for some antisymmetric matrix $B$. Changing the basis of $\R^{m,m}$ back to the standard basis $\{e_i\}$ amounts to right multiplying the generator matrix \eqref{eq:genmatGtildeS} by 
\begin{equation}
    \begin{pmatrix}
        \frac{\mathds{1}}{\sqrt{2}}&\frac{\mathds{1}}{\sqrt{2}}\\\frac{\mathds{1}}{\sqrt{2}}&-\frac{\mathds{1}}{\sqrt{2}}
    \end{pmatrix},
\end{equation}
where $\mathds{1}$ is the $m\times m$ identity matrix. This gives the generator matrix in \eqref{eq:genmatGS}.
\end{proof}
\end{thm}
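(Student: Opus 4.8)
The plan is to reduce everything to a transparent computation in the null (lightcone) frame and then to invoke a Lie-theoretic normal form for $\mathrm{O}(m,m,\R)$. First I would introduce the null basis $f_i = (e_i + e_{m+i})/\sqrt2$, $f_{m+i} = (e_i - e_{m+i})/\sqrt2$ for $1\le i\le m$, in which a direct computation gives $f_i\circ f_j = 0 = f_{m+i}\circ f_{m+j}$ and $f_i\circ f_{m+j}=\delta_{ij}$, so that the ambient bilinear form has Gram matrix $\eta = \left(\begin{smallmatrix}0&\mathds1\\\mathds1&0\end{smallmatrix}\right)$. Passing to this frame is a fixed linear change of coordinates implemented by right multiplication by $P = \frac{1}{\sqrt2}\left(\begin{smallmatrix}\mathds1&\mathds1\\\mathds1&-\mathds1\end{smallmatrix}\right)$; it leaves the notions ``even'' and ``self-dual'' intact and carries the subgroup $\mathrm{O}(m,\R)\times\mathrm{O}(m,\R)$ to the maximal compact subgroup $K$ of $\mathrm{O}(m,m,\R)_\eta$.

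Next, because $\Lambda$ is even, self-dual, and of signature $(m,m)$, Theorem \ref{thm:classlorlat} tells us its abstract isometry type is unique (it is $\cong U^{\oplus m}$, with Gram matrix $\eta$), so I can choose an integral basis of $\Lambda$ whose Gram matrix with respect to $\circ$ is exactly $\eta$. The coordinate matrix $R$ of that basis in the null frame then satisfies $R\,\eta\,R^T = \eta$, i.e. $R \in \mathrm{O}(m,m,\R)_\eta$. The problem thereby becomes purely group-theoretic: show that $R$ can be multiplied on the right by an element of $K = \mathrm{O}(m,\R)\times\mathrm{O}(m,\R)$ so as to land (possibly after a free $\mathrm{GL}(2m,\Z)$ change of lattice basis) on the geometric matrix $\widetilde{\mathcal G}_S(\gamma,B) = \frac{1}{\sqrt2}\left(\begin{smallmatrix}2\gamma^\star & 0\\\gamma B & \gamma\end{smallmatrix}\right)$.

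The key step is an Iwasawa/parabolic decomposition: I would factor $R = \widetilde{\mathcal G}_S(\gamma,B)\,O$ with $O\in K$, where the geometric factor is the block-triangular element stabilizing the maximal isotropic subspace spanned by the winding directions $f_{m+1},\dots,f_{2m}$, parameterized by a metric $G=\gamma\gamma^T$ (equivalently the Euclidean lattice $\Gamma$ with generator $\gamma$, dual generator $\gamma^\star=(\gamma^{-1})^T$) and an antisymmetric matrix $B$. This is precisely the content of \cite[Appendix C]{Dymarsky:2020qom}, which I would invoke; as a sanity check the dimension count $\dim\mathrm{O}(m,m,\R) - \dim K = (2m^2-m)-(m^2-m) = m^2$ matches the geometric moduli $\tfrac{m(m+1)}{2} + \tfrac{m(m-1)}{2} = m^2$, confirming that the geometric matrices exhaust the coset space $K\backslash \mathrm{O}(m,m,\R)$. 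Setting $\Lambda_S := \Lambda\,O^{-1}$ then yields a lattice with null-frame generator $R\,O^{-1} = \widetilde{\mathcal G}_S(\gamma,B)$, and I would verify directly that $\widetilde{\mathcal G}_S\,\eta\,\widetilde{\mathcal G}_S^T = \eta$ using $\gamma^\star\gamma^T=\mathds1$ and $B+B^T=0$, so that $\Lambda_S$ is again even and self-dual.

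Finally I would translate back to the standard frame by right-multiplying the null-frame generator by $P$ and compute $\widetilde{\mathcal G}_S(\gamma,B)\,P = \left(\begin{smallmatrix}\gamma^\star & \gamma^\star\\ \gamma\frac{B+\mathds1}{2} & \gamma\frac{B-\mathds1}{2}\end{smallmatrix}\right)$, which is exactly \eqref{eq:genmatGS}. The only genuine obstacle is the decomposition of the third paragraph, namely establishing that the compact subgroup $\mathrm{O}(m,\R)\times\mathrm{O}(m,\R)$ (rather than the full non-compact $\mathrm{O}(m,m,\R)$) already suffices to reach the geometric slice and that the geometric factor has precisely the stated block-triangular shape; once this is granted from \cite[Appendix C]{Dymarsky:2020qom}, the remaining ingredients are the classification input of Theorem \ref{thm:classlorlat} together with the two short matrix computations above.
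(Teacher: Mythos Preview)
Your proposal is correct and follows essentially the same approach as the paper: both pass to the null frame via the same orthogonal change of basis $P$, invoke \cite[Appendix C]{Dymarsky:2020qom} to obtain the block-triangular generator $\widetilde{\mathcal G}_S(\gamma,B)$ in that frame, and then multiply by $P$ to recover \eqref{eq:genmatGS}. Your version adds useful conceptual framing (the Lie-theoretic reading as an Iwasawa/parabolic decomposition, the dimension count, and the explicit use of Theorem~\ref{thm:classlorlat} to force the null-frame generator into $\mathrm{O}(m,m,\R)_\eta$), but the logical skeleton and the one nontrivial input---the external decomposition result---are identical to the paper's.
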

We now have the following important theorem.
\begin{thm}\label{thm:lambSaut}
Let $\Lambda_S\subset\R^{m,m}$ be the Lorentzian lattice with generator matrix $\mathcal{G}_{\Lambda_S}$ given in \eqref{eq:genmatGS}. Then $\mathrm{Aut}(\Lambda_S)$ preserves $(\Lambda_S)_i^0,~i=1,2$.
\begin{proof}
Let $\{\alpha_i\}_{i=1}^m$ be an integral basis of $\Gamma$ and $\{\alpha^\star_i\}_{i=1}^m$ be the basis of $\Gamma^\star$ dual to $\{\alpha_i\}_{i=1}^m$, i.e.
\begin{equation}\label{eq:dualbasisdot}
    \sum_{k=1}^m(\alpha_i)^k(\alpha^{\star}_j)^k=\delta_{i}^j \, , \quad \sum_{k=1}^m(\alpha^\star_k)^i(\alpha_k)^j=\delta_{i}^j.
\end{equation} 
Here the superscript $j$ over the vector denotes the $j$th component of the vector in the standard basis of $\R^m$.
A general vector $\lambda\in\Lambda_S$ can be written as $\lambda=(\alpha,\beta)$ where 
\begin{equation}\label{eq:genvecLambs}
\begin{split}
&\alpha^j= \sum_{i=1}^m m_i(\alpha^\star_i)^j+\frac{1}{2}\sum_{i=1}^m\left(\sum_{k=1}^mB_{jk}(\alpha_i)^k+(\alpha_i)^j\right)n_i, \\&\beta^j= \sum_{i=1}^m m_i(\alpha^\star_i)^j+\frac{1}{2}\sum_{i=1}^m\left(\sum_{k=1}^mB_{jk}(\alpha_i)^k-(\alpha_i)^j\right)n_i,
\end{split}
\end{equation}
and $m_i,n_i\in\Z$. In vector notation we have 
\begin{equation}
\begin{split}
&\vec{\alpha}=\vec{m}^T\gamma^\star+\vec{n}^T\left(\gamma\frac{B+\mathds{1}}{2}\right),\\&\vec{\beta}=\vec{m}^T\gamma^\star+\vec{n}^T\left(\gamma\frac{B-\mathds{1}}{2}\right).
\end{split}    
\end{equation}
We then have 
\begin{equation}
\begin{split}    \langle\alpha,\alpha\rangle&=\sum_{i,j=1}^m\left[m_im_j\langle\alpha^\star_i,\alpha^\star_j\rangle+2 \times \frac{1}{2}m_in_j\sum_{k,\ell=1}^mB_{k\ell}(\alpha_i^{\star})^k(\alpha_j)^\ell +2 \times \frac{1}{2}m_in_j\langle\alpha^\star_i,\alpha_j\rangle\right.\\
& \left.+\frac{1}{4}\sum_{k,\ell,p = 1 }^{m}n_in_pB_{jk}(\alpha_i)^kB_{j,\ell}(\alpha_p)^\ell + 2 \times \frac{1}{4}\sum_{k, \ell = 1 }^{m} n_i n_\ell B_{jk} (\alpha_i)^{k} (\alpha_l)^{j} +
\frac{1}{4}n_in_j\langle\alpha_i,\alpha_j\rangle\right]\\&=\vec{m}^T\mathbf{g}^{-1}\Vec{m}+\sum_{i,j,k,\ell,p,q=1}^mm_in_j(\alpha^\star_i)^p(\alpha^\star_q)^p(\alpha_q)^kB_{k\ell}(\alpha_j)^\ell + \vec{m}^T\Vec{n}\\&+\frac{1}{4}\sum_{i,j,k,\ell,p,u,v,s,t = 1 }^{m}n_in_p(\alpha^\star)^j_u(\alpha_u)^sB_{sk}(\alpha_i)^k(\alpha^\star)^j_v(\alpha_v)^tB_{t\ell}(\alpha_p)^\ell+\frac{1}{4}\vec{n}^T\mathbf{g}\Vec{n} 
\\&=\vec{m}^T\mathbf{g}^{-1}\Vec{m}+\Vec{m}^T\mathbf{bg}^{-1}\Vec{n}+\vec{m}^T\Vec{n}-\frac{1}{4}\Vec{n}^T\mathbf{b}\mathbf{g}^{-1}\mathbf{b}\Vec{n}+\frac{1}{4}\vec{n}^T\mathbf{g}\Vec{n}
\\&=\vec{m}^T\mathbf{g}^{-1}\Vec{m}+\Vec{m}^T\mathbf{bg}^{-1}\Vec{n}+\vec{m}^T\Vec{n}+\frac{1}{4}\vec{n}^T(\mathbf{g}-\mathbf{b}\mathbf{g}^{-1}\mathbf{b})\Vec{n},
\end{split}
\label{eq:normmvecnveclambs}
\end{equation}
where $\vec{m},\vec{n}\in\Z^m$ are column vectors, $\mathbf{g}_{ij}=\langle \alpha_i,\alpha_j\rangle$ and $\mathbf{g}^{\star}_{ij}=\mathbf{g}^{-1}_{ij}=\langle\alpha^\star_i,\alpha_j^\star\rangle$ are the Gram matrices of $\Gamma$ and $\Gamma^{\star}$ respectively and $\mathbf{b}_{ij}=(\alpha_i)^kB_{k\ell}(\alpha_j)^\ell$ is the antisymmetric matrix $B$ in the $\{\alpha_i\}$ basis of $\R^m$. In the last step we used \eqref{eq:dualbasisdot} and the fact that $B$ and $\mathbf{b}$ are antisymmetric. In terms of matrices we have
\begin{equation}\label{eq:matgbB}
\begin{split}
    \mathbf{g}=\gamma\gamma^T, & \quad \mathbf{g}^\star=\gamma^\star(\gamma^\star)^T=(\gamma^T)^{-1}\gamma^{-1}=\mathbf{g}^{-1},\\ & \mathbf{b}=\gamma B\gamma^T,\quad B=\gamma^{-1}\mathbf{b}\gamma^\star.
\end{split}    
\end{equation}
Similarly we have 
\begin{equation}
\langle\beta,\beta\rangle=\vec{m}^T\mathbf{g}^{-1}\Vec{m}+\Vec{m}^T\mathbf{bg}^{-1}\Vec{n}-\vec{m}^T\Vec{n}+\frac{1}{4}\vec{n}^T(\mathbf{g}-\mathbf{b}\mathbf{g}^{-1}\mathbf{b})\Vec{n}.    
\end{equation}
From \cite[Chapter 10]{Blumenhagen:2013fgp}, we have that the following transformations generate the group $\mathrm{O}(m,m,\Z)$:
\begin{equation}\label{eq:lambSauttrans}
\begin{aligned} 
(1)~~\vec{m} \leftrightarrow \vec{n} \quad \text { and } \quad 2\mathbf{g}^{-1} & \leftrightarrow \frac{1}{2}\left(\mathbf{g}-\mathbf{b g}^{-1} \mathbf{b}\right) \\ \mathbf{bg}^{-1} & \leftrightarrow-\mathbf{g}^{-1} \mathbf{b} \, ,\\(2) ~~\vec{m} \rightarrow \vec{m}-\mathbf{N} \vec{n} \quad \text{and} \quad &\mathbf{b} \rightarrow \mathbf{b}+2 \mathbf{N} \, , 
\end{aligned}    
\end{equation}
where $\mathbf{N}$ is an arbitrary anti-symmetric matrix with integer entries. These transformations must be understood in the following way: a transformation on the lattice can be implemented in two ways, first by action on the integer coordinates $(\vec{m}^T,\vec{n}^T)$ and second, by acting on the generator matrix $\mathcal{G}_{\Lambda_S}$ from the left. The transformations in \eqref{eq:lambSauttrans} is a composition of both of these. To show that these transformations indeed generate the automorphism group $\mathrm{O}_{\Lambda_S}(m,m,\Z)$ we need to show that these transformations leave the inner product invariant and preserves the lattice.  
It is easy to check that under these transformations the $\langle\alpha,\alpha\rangle$ and $\langle\beta,\beta\rangle$ are preserved \cite{Blumenhagen:2013fgp}. In particular these transformations preserve the Lorentzian inner product. We now show that these transformations preserve the lattice. Let us start with (1). It is clear that $\Vec{m}\leftrightarrow\vec{n}$ preserves the lattice. We now check that the transformation on the generator matrix preserves the lattice. First note that from \eqref{eq:lambSauttrans} we get\footnote{Note that we are assuming that \textbf{b} and $B$ are invertible in writing this transformation but one can also write the transformation in a nonsingular way even when \textbf{b} is not invertible \cite{Giveon:1994fu}. The manipulations below will also be modified in a nonsingular way.} 
\begin{equation}\label{eq:gbtrans}
\begin{aligned}
& \mathbf{g}^{-1} \leftrightarrow \frac{1}{4}\left(\mathbf{g}-\textbf{b} \mathbf{g}^{-1} \textbf{b}\right) \, , \\
& \mathbf{b} \leftrightarrow 4\left(\textbf{b}-\mathbf{g} \textbf{b}^{-1} \mathbf{g}\right)^{-1} \, . \\
&
\end{aligned}
\end{equation}
From these transformations, we want to find how $\gamma$ and $\gamma^\star$ transform. Using the fact that $\mathbf{b}$ is antisymmetric, we guess the transformation to be
\begin{equation}\label{eq:gammatransfrgb}
\begin{aligned}
& \gamma \leftrightarrow \pm 2\left(\gamma \pm \mathbf{b} \gamma^\star\right)^\star=:\pm\gamma_{\pm} \, ,\\ 
& \gamma^\star \leftrightarrow \pm \frac{1}{2}\left(\gamma \pm \mathbf{b} \gamma^\star\right)=:\pm\gamma^\star_{\pm} \, .
\end{aligned}
\end{equation} 
It is easy to check that \eqref{eq:gammatransfrgb} reproduces the first equation in \eqref{eq:gbtrans}. These are all the solutions to the first equation in \eqref{eq:lambSauttrans} but as we will show below, only two of them preserve the lattice. We also see that under the second equation of \eqref{eq:gbtrans} we have 
\begin{equation}\label{eq:Btransaut}
B=\gamma^{-1} \mathbf{b} \gamma^\star \leftrightarrow\left(\gamma+\mathbf{b} \gamma^\star\right)^{T}\left(\mathbf{b}-\mathbf{g} \mathbf{b}^{-1} \mathbf{g}\right)^{-1}\left(\gamma+\mathbf{b} \gamma^\star\right)=:B' \text {. }
\end{equation}
We claim that the following two transformations on the generator matrix preserve the lattice: 
\begin{equation}\label{eq:GlambStrans}
\begin{split}
(\text{i})~~&\mathcal{G}_{\Lambda_S}=\begin{pmatrix}
\gamma^\star &\gamma^\star  \\\gamma\frac{B+\mathds{1}}{2}&\gamma \frac{B-\mathds{1}}{2}
\end{pmatrix}\to\mathcal{G}^{\text{(i)}}_{\Lambda_S}:=\begin{pmatrix}
\gamma^\star_+ &-\gamma^\star_-  \\\gamma_+\frac{B'+\mathds{1}}{2}&-\gamma_- \frac{B'-\mathds{1}}{2}
\end{pmatrix},\\(\text{ii})~~&\mathcal{G}_{\Lambda_S}=\begin{pmatrix}
\gamma^\star &\gamma^\star  \\\gamma\frac{B+\mathds{1}}{2}&\gamma \frac{B-\mathds{1}}{2}
\end{pmatrix}\to\mathcal{G}^{\text{(ii)}}_{\Lambda_S}:=\begin{pmatrix}
-\gamma^\star_+ &\gamma^\star_-  \\-\gamma_+\frac{B'+\mathds{1}}{2}&\gamma_- \frac{B'-\mathds{1}}{2}
\end{pmatrix}.
\end{split}
\end{equation}
We now do some manipulations to prove our claim. We have 
\begin{equation}\label{eq:gam+starsim}
    \gamma_+^\star=\frac{1}{2}\left(\gamma+\textbf{b} \gamma^\star\right) =\frac{1}{2}\gamma\big(B+\mathds{1}\big) .
\end{equation}
Similarly
\begin{equation}\label{eq:gam-starsim}
    \gamma_-^\star=-\frac{1}{2}\gamma\big(B-\mathds{1}\big).
\end{equation}
Next we have 
\begin{equation}
\begin{split}
    \gamma_+\frac{B'+\mathds{1}}{2}&=2\left(\gamma+ \textbf{b} \gamma^\star\right)^\star\left[\frac{\left(\gamma+\textbf{b} \gamma^\star\right)^{T}\left(\textbf{b}-\mathbf{g} \textbf{b}^{-1} \mathbf{g}\right)^{-1}\left(\gamma+\textbf{b} \gamma^\star\right)+\mathds{1}}{2}\right]\\&=\left(\textbf{b}-\mathbf{g} \textbf{b}^{-1} \mathbf{g}\right)^{-1}\left(\gamma+\textbf{b} \gamma^\star\right)+\left(\gamma+\textbf{b} \gamma^\star\right)^\star.
\end{split}    
\end{equation}
Observe that 
\begin{equation}
\begin{aligned}
\left(\textbf{b}-\mathbf{g} \textbf{b}^{-1} \mathbf{g}\right)^{-1}\left(\gamma+\textbf{b} \gamma^\star\right) & =\left(\textbf{b}-\mathbf{g} \textbf{b}^{-1} \mathbf{g}\right)^{-1}(\gamma+\gamma B) \\
& =\left(\textbf{b}-\mathbf{g} \textbf{b}^{-1} \mathbf{g}\right)^{-1} \gamma(\mathds{1}+B) \\
& =\left(\gamma^{-1} \textbf{b}+\gamma^{-1} \mathbf{g} \textbf{b}^{-1} \mathbf{g}\right)^{-1}(\mathds{1}+B) \\
& =\left(B \gamma^{T}-\gamma^{T} \textbf{b}^{-1} \gamma \gamma^{T}\right)^{-1}(\mathds{1}+B) \\
& =\left(B \gamma^{T}-(\gamma B)^{-1} \gamma \gamma^{T}\right)^{-1}(\mathds{1}+B) \\
& =\left(\left(B-B^{-1}\right) \gamma^{T}\right)^{-1}(\mathds{1}+B) \\
& =\gamma^\star\left(B-B^{-1}\right)(\mathds{1}+B) .
\end{aligned}
\end{equation}
Next note that
\begin{equation}
B-B^{-1}=(\mathds{1}+B)\left(\mathds{1}-B^{-1}\right) .
\end{equation}
Then we get
\begin{equation}
\begin{aligned}
\left(\textbf{b}-\mathbf{g} \textbf{b}^{-1} \mathbf{g}\right)^{-1}\left(\gamma+\textbf{b} \gamma^\star\right) & =\gamma^\star\left(\mathds{1}-B^{-1}\right)^{-1} \\
& =\gamma^\star\left((B-\mathds{1}) B^{-1}\right)^{-1} \\
& =-\gamma^\star B(\mathds{1}-B)^{-1} .
\end{aligned}
\end{equation}
We also have
\begin{equation}
\begin{aligned}
\left(\gamma+\textbf{b} \gamma^\star\right)^\star & =\big(\gamma(\mathds{1}+B)\big)^\star \\
& =\gamma^\star(\mathds{1}-B)^{-1} .
\end{aligned}
\end{equation}
Putting all this together we get 
\begin{equation}\label{eq:gam+B'sim}
\gamma_+\frac{B'+\mathds{1}}{2}=\gamma^\star.    
\end{equation}
Similarly we have 
\begin{equation}
    \gamma_-\frac{B'-\mathds{1}}{2}=\gamma^\star\left[\left(B-B^{-1}\right)^{-1}(\mathds{1}-B)-(\mathds{1}+B)^{-1}\right].
\end{equation}
We now use
\begin{equation}
B-B^{-1}=-(\mathds{1}-B)\left(\mathds{1}+B^{-1}\right) \, , 
\end{equation}
to get
\begin{equation}\label{eq:gam-B'sim}
\gamma_-\frac{B'-\mathds{1}}{2}=-\gamma^\star.     
\end{equation}
From \eqref{eq:gam+starsim}, \eqref{eq:gam-starsim}, \eqref{eq:gam+B'sim} and \eqref{eq:gam-B'sim} we see that 
\begin{equation}
\mathcal{G}^{\text{(i)}}_{\Lambda_S}= \begin{pmatrix}
\gamma\frac{B+\mathds{1}}{2}&\gamma \frac{B-\mathds{1}}{2}\\\gamma^\star &\gamma^\star
\end{pmatrix} =\begin{pmatrix}
    0&\mathds{1}\\\mathds{1}&0
\end{pmatrix}\mathcal{G}_{\Lambda_S}  
\end{equation}
Since $\left(\begin{smallmatrix}
    0&\mathds{1}\\\mathds{1}&0
\end{smallmatrix}\right)$ is determinant $-1$ and integral, the transformation (i) of \eqref{eq:GlambStrans} preserves the lattice. Similarly we have 
\begin{equation}
\mathcal{G}^{\text{(ii)}}_{\Lambda_S}= \begin{pmatrix}
-\gamma\frac{B+\mathds{1}}{2}&\gamma \frac{B-\mathds{1}}{2}\\-\gamma^\star &\gamma^\star
\end{pmatrix} =\begin{pmatrix}
    0&-\mathds{1}\\-\mathds{1}&0
\end{pmatrix}\mathcal{G}_{\Lambda_S}    
\end{equation}
and hence preserves the lattice. The full transformation (1) of \eqref{eq:lambSauttrans} acting on the integer coordinates as well as the generator matrix can be written as a transformation acting only on the integer coordinates as follows:
\begin{equation}
\text{(i)}~~(\vec{m}^T,\Vec{n}^T)\to (\vec{m}^T,\Vec{n}^T),\quad \text{(ii)}~~(\vec{m}^T,\Vec{n}^T)\to (-\vec{m}^T,-\Vec{n}^T).    
\end{equation}
This generates a $\Z_2$ subgroup of the automorphism group $\mathrm{O}_{\Lambda_S}(m,m,\Z)$.
Obviously these transformations preserve $\langle
\alpha,\alpha\rangle$ and $\langle
\beta,\beta\rangle$ and hence the full Lorentzian norm. 
One could check directly that other choices in \eqref{eq:gammatransfrgb} does not preserve the lattice. 
\bigskip
Now, we will show that the second transformation in \eqref{eq:lambSauttrans} preserves the lattice too. Using the transformation of $\textbf{b} \to \textbf{b} + 2 \textbf{N} $, we obtain 
\begin{equation}\label{eq:2B_transform}
   \gamma B =  \mathbf{b} \gamma^\star \to  \mathbf{b} \gamma^\star + 2\textbf{N} \gamma^\star  = \gamma B  + 2\textbf{N} \gamma^\star \, .
\end{equation}
We can input the above relation into the generator matrix to see how it transforms:
\begin{equation}
\mathcal{G}_{\Lambda_S}=\begin{pmatrix}
\gamma^\star &\gamma^\star  \\\gamma\frac{B+\mathds{1}}{2}&\gamma \frac{B-\mathds{1}}{2}
\end{pmatrix}\to\mathcal{G}'_{\Lambda_S}:=\begin{pmatrix}
\gamma^\star & \quad \gamma^\star   \\\gamma \frac{B+\mathds{1}}{2} + \textbf{N}\gamma^\star& \quad \gamma \frac{B-\mathds{1}}{2} + \textbf{N}\gamma^\star
\end{pmatrix} \, .
\end{equation}
From the above equation it follows that 
\begin{equation}
    \mathcal{G}'_{\Lambda_S} = 
    \begin{pmatrix}
        \mathds{1} & 0  \\
        \mathbf{N} & \mathds{1} 
    \end{pmatrix}\mathcal{G}_{\Lambda_S} \, .
\end{equation}
Again since $\left(\begin{smallmatrix}
    \mathds{1}&0\\\mathbf{N}&\mathds{1}
\end{smallmatrix}\right)$ is unimodular and integral, this transforamtion preserves the lattice. 
The transformation on integer coordinates in (2) of \eqref{eq:lambSauttrans} can also be nicely represented as 
\begin{equation}
(\vec{m}^{T}, \vec{n}^{T}) \to      ((\vec{m}-\mathbf{N} \vec{n})^{T}, \vec{n}^{T}) = (\vec{m} +  \vec{n}^{T}\mathbf{N} , \vec{n}^{T}) = (\vec{m}^{T}, \vec{n}^{T})  \begin{pmatrix}
        \mathds{1} & 0  \\
        \textbf{N} & \mathds{1} 
    \end{pmatrix} \, ,
\end{equation}
where we have used that \textbf{N} is an anti-symmetric integral matrix. To get the complete transformation on the integer coordinates fixing without transforming the generator matrix, we compose these two transformations to see that 
\begin{equation}
    (\vec{m}^{T}, \vec{n}^{T}) \to (\vec{m}^{T}, \vec{n}^{T})\begin{pmatrix}
        \mathds{1} & 0  \\
        \textbf{N} & \mathds{1} 
    \end{pmatrix} \, \begin{pmatrix}
        \mathds{1} & 0  \\
        \textbf{N} & \mathds{1} 
    \end{pmatrix} \,  =  (\vec{m}^{T} + 2 \vec{n}^{T}\textbf{N}, \vec{n}^{T}) \, .
\end{equation}
As \textbf{N} is integral, we see that the transformed vector lies on the lattice again. Checking that this transformation preserves the norm is a straightforward computation. 

\bigskip 

Thus all automorphisms of $\Lambda_S$ will preserve $\langle\alpha,\alpha\rangle$ and $\langle\beta,\beta\rangle$. Now suppose $(\alpha,0)\in(\Lambda_S)_1^0$ maps to $(\alpha',\beta')\in\Lambda_S$ under some automorphism. Since $\langle\beta',\beta'\rangle=0$ and the norm on $\R^m$ is positive definite, we conclude that $\beta'=0$ and the automorphism preserves $(\Lambda_S)_1^0$. Similarly $(\Lambda_S)_2^0$ is also preserved under any automorphims of $\Lambda_S$.   
\end{proof}
\end{thm}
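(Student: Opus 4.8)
The plan is to exploit the explicit left/right splitting $\R^{m,m}=\R^{m}\oplus\R^{m}$ built into the generator matrix \eqref{eq:genmatGS}, and to show that \emph{every} automorphism of $\Lambda_S$ preserves not only the Lorentzian form but each of the two Euclidean norms $\langle\alpha,\alpha\rangle$ and $\langle\beta,\beta\rangle$ separately. Once that is established, the theorem is immediate: if $f\in\mathrm{Aut}(\Lambda_S)$ and $(\alpha,0)\in(\Lambda_S)_1^0$, then writing $f(\alpha,0)=(\alpha',\beta')$ we get $\langle\beta',\beta'\rangle=\langle 0,0\rangle=0$, and positive-definiteness of the Euclidean form on $\R^{m}$ forces $\beta'=0$, so $f$ preserves $(\Lambda_S)_1^0$; symmetrically it preserves $(\Lambda_S)_2^0$.

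First I would fix an integral basis $\{\alpha_i\}_{i=1}^{m}$ of the auxiliary Euclidean lattice $\Gamma$ and the dual basis $\{\alpha_i^\star\}$ of $\Gamma^\star$, and write a general $\lambda=(\alpha,\beta)\in\Lambda_S$ in integer coordinates $\vec m,\vec n\in\Z^{m}$ as $\vec\alpha=\vec m^{T}\gamma^\star+\vec n^{T}\bigl(\gamma\tfrac{B+\mathds{1}}{2}\bigr)$ and $\vec\beta=\vec m^{T}\gamma^\star+\vec n^{T}\bigl(\gamma\tfrac{B-\mathds{1}}{2}\bigr)$. Using $\gamma(\gamma^\star)^T=\mathds{1}$ and the antisymmetry of $\mathbf b:=\gamma B\gamma^T$ (equivalently of $B$), a direct computation gives the closed form
\[
\langle\alpha,\alpha\rangle=\vec m^{T}\mathbf g^{-1}\vec m+\vec m^{T}\mathbf b\mathbf g^{-1}\vec n+\vec m^{T}\vec n+\tfrac14\vec n^{T}(\mathbf g-\mathbf b\mathbf g^{-1}\mathbf b)\vec n,
\]
with $\mathbf g=\gamma\gamma^T$, and the same expression with $-\vec m^{T}\vec n$ in place of $+\vec m^{T}\vec n$ for $\langle\beta,\beta\rangle$; the terms linear in $B$ drop out by antisymmetry.

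Next I would invoke Theorem \ref{thm:autlamb} to identify $\mathrm{Aut}(\Lambda_S)$ with $\mathrm{O}_{\Lambda_S}(m,m,\Z)$, and use a known generating set of $\mathrm{O}(m,m,\Z)$ from the T-duality literature: the factorized duality $\vec m\leftrightarrow\vec n$ accompanied by $2\mathbf g^{-1}\leftrightarrow\tfrac12(\mathbf g-\mathbf b\mathbf g^{-1}\mathbf b)$ and $\mathbf b\mathbf g^{-1}\leftrightarrow-\mathbf g^{-1}\mathbf b$, together with integral $B$-shifts $\mathbf b\mapsto\mathbf b+2\mathbf N$ for $\mathbf N$ antisymmetric integral. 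For each generator I would do two checks: (i) a mechanical substitution into the displayed quadratic forms showing $\langle\alpha,\alpha\rangle$ and $\langle\beta,\beta\rangle$ are each invariant; (ii) that the transformation genuinely preserves $\Lambda_S$, which I would verify by writing its induced action on the generator matrix $\mathcal G_{\Lambda_S}$ as left multiplication by a unimodular integral matrix — e.g. $\left(\begin{smallmatrix}0&\mathds{1}\\\mathds{1}&0\end{smallmatrix}\right)$ for the duality and $\left(\begin{smallmatrix}\mathds{1}&0\\\mathbf N&\mathds{1}\end{smallmatrix}\right)$ for the $B$-shift — and the compensating action on $(\vec m,\vec n)$ as right multiplication by an integral matrix.

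The main obstacle I anticipate is the bookkeeping in check (ii) for the factorized duality: the geometric duality acts simultaneously on the moduli $(\mathbf g,\mathbf b)$ (hence on the generator matrix) and on the integer charges $(\vec m,\vec n)$, so exhibiting an honest element of $\mathrm{O}_{\Lambda_S}(m,m,\Z)$ requires composing the two pieces and tracking signs carefully. In particular, the duality has several matrix-square-root branches (the $\pm$ choices in $\gamma\leftrightarrow\pm 2(\gamma\pm\mathbf b\gamma^\star)^\star$, $\gamma^\star\leftrightarrow\pm\tfrac12(\gamma\pm\mathbf b\gamma^\star)$), and only two of them preserve the lattice; confirming this needs a short chain of algebraic identities relating $\gamma_\pm$, the transformed $B$, and $\gamma^\star$. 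A secondary subtlety is the non-invertible case for $\mathbf b$ (or $B$), where the duality must be written in a manifestly nonsingular form; I would handle this either by a density/continuity argument from the invertible case or by using the nonsingular presentation of the generators available in the literature. Modulo these verifications, the logical skeleton is simply: generators preserve both chiral norms $\Rightarrow$ all of $\mathrm{Aut}(\Lambda_S)$ does $\Rightarrow$ the null sublattices $(\Lambda_S)_1^0$ and $(\Lambda_S)_2^0$ are preserved.
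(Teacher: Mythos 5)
Your proposal is correct and follows essentially the same route as the paper: express the chiral and anti-chiral norms as explicit quadratic forms in the integer coordinates $(\vec m,\vec n)$, verify on the standard $\mathrm{O}(m,m,\Z)$ generators (factorized duality plus integral $B$-shifts) that both norms are separately preserved and that the induced action on $\mathcal{G}_{\Lambda_S}$ is left multiplication by a unimodular integral matrix, and then use positive-definiteness to conclude that $(\Lambda_S)_1^0$ and $(\Lambda_S)_2^0$ are preserved. The subtleties you flag (the sign branches of the duality and the non-invertible $\mathbf b$ case) are exactly the ones the paper resolves by the same algebraic manipulations and by citing the nonsingular presentation of the generators.
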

\bigskip
Combining Theorem \ref{thm:lambSaut}, Theorem \ref{thm:lmabrellambS} and Lemma \ref{lemma:appB} proves Conjecture \ref{conj:} for $m=n$ case. We remark that the methods of this Appendix clearly do not apply to $m\neq n$ case. One needs new tricks to prove the general result. 
\end{appendix}

\bibliography{main}
\end{document}